\numberwithin{equation}{section}
\setlist{leftmargin=*}
\declaretheoremstyle[
spaceabove=\medskipamount, spacebelow=\medskipamount,
headfont=\bfseries,
notefont=\bfseries\boldmath, notebraces={(}{)},
bodyfont=\itshape,
postheadspace=.5em
]{cursive}
\declaretheorem[style=cursive,name=Theorem,numberwithin=section]{theorem}
\declaretheorem[style=cursive,name=Lemma,numberlike=theorem]{lemma}
\declaretheorem[style=cursive,name=Proposition,numberlike=theorem]{proposition}
\declaretheorem[style=cursive,name=Problem,numberlike=theorem]{problem}
\declaretheoremstyle[
spaceabove=\medskipamount, spacebelow=\medskipamount,
headfont=\bfseries,
notefont=\bfseries\boldmath, notebraces={(}{)},
bodyfont=\rmfamily,
postheadspace=.5em
]{upright}
\declaretheorem[style=upright,name=Definition,qed=$\Diamond$,numberlike=theorem]{definition}
\declaretheorem[style=upright,name=Remark,qed=$\Diamond$,numberlike=theorem]{remark}
\declaretheorem[style=upright,qed=$\Diamond$,numberlike=theorem]{assumption}
\newcommand{\eps}{\varepsilon}
\newcommand{\Id}{\mathbf{1}}
\newcommand{\B}{\mathbb{B}}
\newcommand{\Beff}{\mathbb{B}\sub{eff}}
\newcommand{\C}{\mathbb{C}}
\newcommand{\R}{\mathbb{R}}
\newcommand{\Z}{\mathbb{Z}}
\newcommand{\T}{\mathbb{T}}
\newcommand{\Hi}{\mathcal{H}}
\newcommand{\Hf}{\mathcal{H}\sub{f}}
\newcommand{\U}{\mathcal{U}}
\newcommand{\A}{\mathcal{A}}
\newcommand{\BH}{\mathcal{B}(\mathcal{H})}
\newcommand{\scal}[2]{\left\langle #1, #2 \right\rangle}
\newcommand{\norm}[1]{\left\| #1 \right\|}
\newcommand{\bra}[1]{\left\langle #1 \right|}
\newcommand{\ket}[1]{\left| #1 \right\rangle}
\newcommand{\eu}{\mathrm{e}}
\newcommand{\iu}{\mathrm{i}}
\newcommand{\di}{\mathrm{d}}
\newcommand{\act}{\triangleleft}
\newcommand{\sub}[1]{_{\mathrm{#1}}}
\newcommand{\ie}{{\sl i.\,e.\ }}
\newcommand{\eg}{{\sl e.\,g.\ }}
\newcommand{\set}[1]{ \left\{  #1 \right\}}
\DeclareMathOperator{\tr}{tr}
\DeclareMathOperator{\Tr}{Tr}
\DeclareMathOperator{\Ran}{Ran}
\DeclareMathOperator{\Fr}{Fr}
\DeclareMathOperator{\Pf}{Pf}
\DeclareMathOperator{\Span}{Span}
\newcommand{\kk}{{\mathbf k}}
\title[Wannier functions and $\Z_2$ invariants in TRS topological insulators]{Wannier functions and $\Z_2$ invariants in time-reversal symmetric topological insulators}
\author{Horia D. Cornean \and Domenico Monaco \and Stefan Teufel}
\date{\today, arXiv version 2, accepted for publication in {\it Rev. Math. Phys.}} 
\begin{document}

\begin{abstract}
We provide a constructive proof of exponentially localized Wannier functions and related Bloch frames in $1$- and $2$-dimensional time-reversal symmetric (TRS) topological insulators. The construction is formulated in terms of periodic TRS families of projectors (corresponding, in applications, to the eigenprojectors on an arbitrary number of relevant energy bands), and is thus model-independent. The possibility to enforce also a TRS constraint on the frame is investigated. This leads to a topological obstruction in dimension $2$, related to $\Z_2$ topological phases. 

We review several proposals for $\Z_2$ indices that distinguish these topological phases, including the ones by Fu--Kane \cite{FuKane06}, Prodan \cite{Prodan11_PRB}, Graf--Porta \cite{GrafPorta13} and Fiorenza--Monaco--Panati \cite{FiorenzaMonacoPanati16}. We show that all these formulations are equivalent. In particular, this allows to prove a geometric formula for the  the $\Z_2$ invariant of $2$-dimensional TRS topological insulators, originally indicated in \cite{FuKane06}, which expresses it in terms of the Berry connection and the Berry curvature.

\bigskip

\noindent \textsc{Keywords.} Wannier functions, Bloch frames, fermionic time-reversal symmetry, topological insulators, $\Z_2$ invariants.
\end{abstract}

\maketitle

\tableofcontents


\section{Introduction} \label{sec:intro}

The last decade has witnessed the rise of a new class of materials, called \emph{topological insulators} \cite{HasanKane10, Fu11, Ando13, FruchartCarpentier13}. These materials, although insulating in the bulk, have the property of conducting (charge or spin) currents on their boundary, making them amenable to various types of applications in material science. The property which is more relevant to our analysis is that these conducting edge states are protected by the \emph{symmetries} of the quantum system under scrutiny. Furthermore, a principle known as the \emph{bulk-edge correspondence} states that the features of these modes should be accessible also through (topological) properties of the bulk of the material.

Elaborating on the seminal work by Altland and Zirnbauer \cite{AltlandZirnbauer97}, by now several classification schemes for these \emph{topological phases of quantum matter} have been proposed, using a rich variety of mathematical tools, ranging from $K$-theory \cite{Kitaev09, Ryu_et_al10, Thiang16, ProdanSchulz-Baldes16}, to homotopy theory \cite{KennedyGuggenheim15, KennedyZirnbauer16}, from functional analysis \cite{GrossmannSchulz-Baldes16}, to noncommutative geometry \cite{Prodan11_JPA, BourneCareyRennie15}. In this work, we focus on one particular class, which was among the first to be proposed and experimentally realized \cite{KaneMele05, FuKane06, BernevigHughesZhang06}, namely that of \emph{time-reversal symmetric} (or TRS, for short) topological insulators. These models for crystalline materials are characterized by the fact that, in the physically relevant energy interval, the system possess a time-reversal symmetry of \emph{fermionic} or \emph{odd} type, namely there exists an antiunitary operator $\theta$ which implements this symmetry and satisfies $\theta^2 = - \Id$ (as opposed to the \emph{bosonic} or \emph{even} case, where $\theta^2 = \Id$). In the Altland-Zirnbauer classification, this is known as the AII class.

The mathematical description of these materials is encoded in the features of the Hamiltonian $H$ of the quantum system. This operator is assumed to commute with a representation of a group of lattice translations $\Lambda$ on the one-particle Hilbert space, implementing the crystalline nature of the material: in $d$ dimensions, the group $\Lambda$ can be identified with the integer lattice $\Z^d$. This commutation relation leads to the well-known \emph{Bloch-Floquet-Zak reduction} (see \eg \cite{MonacoPanati15} for a concise review on the subject), which fibers the Hamiltonian over the crystal momentum $\kk \in \R^d / \Lambda^* \simeq \T^d$. In particular, the band-gap structure of the spectrum of $H$ can be reconstructed by the spectral properties of the fiber Hamiltonians $H(\kk)$. We focus on a physically relevant energy interval: for example, in solid state physics applications it is usually assumed that the Fermi energy $E\sub{F}$ lies in a gap of the Hamiltonian, and the relevant interval consists then of the bands below $E\sub{F}$. This allows to define the \emph{spectral projector} $P$ on this energy interval, and the associated fibre projectors $P(\kk)$.

Time-reversal symmetry is then implemented antiunitarily on the Hilbert space, and squares to the operator of multiplication by $-1$. Requiring that the Hamiltonian has this symmetry implies then the existence of an antiunitary operator $\theta$ on the fiber Hilbert space $\Hi$, on which the fiber Hamiltonian $H(\kk)$ acts, satisfying $\theta^2 = - \Id$ and
\[ H(-\kk) = \theta \, H(\kk) \, \theta^{-1}. \]
In particular, the time-reversal operator $\theta$ also intertwines the spectral projectors $P(\kk)$ and $P(-\kk)$. 

The study of effective models for the relevant quantum system in the selected energy interval requires the construction of a basis for the range of the spectral projection $P$. Since the Hamiltonian $H$ has absolutely continuous spectrum, as dictated by periodicity, one cannot rely on a set of eigenfunctions. However, an orthonormal set of ``orbital-like'' functions which can be used in their place was introduced by Wannier, and are now commonly used in theoretical and numerical solid state physics. The literature on \emph{Wannier functions} is by now very vast: we refer to \cite{CorneanHerbstNenciu15} and references therein for a recent discussion on the subject.

The construction of an orthonormal basis of exponentially localized Wannier functions is thus of uttermost importance in solid state physics problems. The problem can be reformulated in $\kk$-space as the existence of a set of \emph{real analytic} functions $\R^d \ni \kk \mapsto \Xi_a(\kk) \in \Hi$, $a \in \set{1, \ldots, \dim \Ran P(\kk)}$, which are \emph{periodic} and provide and orthonormal basis of the vector space $\Ran P(\kk)$. It was early realized \cite{Nenciu91} that the existence of such \emph{Bloch frames}%
\footnote{%
It should be emphasized that the terminology ``frame'' is used here as a synonym for ``moving reference systems'' rather than for ``linearly dependent generating system''. A Bloch frame provides indeed an \emph{orthonormal basis} in the range of the projector (compare Definition \ref{dfn:Bloch}).} %
is in general \emph{topologically obstructed}, as regularity and periodicity may compete against each other. This topological obstruction is encoded in the geometry of the \emph{Bloch bundle}, a vector bundle naturally associated to the family of projectors $\set{P(\kk)}_{\kk \in \R^d}$ \cite{Panati07}. From this geometric object one can extract a set of integers (\emph{Chern numbers}), whose vanishing allows the construction of exponentially localized Wannier functions. Time-reversal symmetry (either of bosonic or fermionic type) kills the topological obstruction whenever $d \le 3$, and gapped periodic quantum systems enjoying this symmetry indeed admit an orthonormal basis of exponentially localized Wannier functions. This was proved in \cite{Panati07, MonacoPanati15} by a careful and sophisticated analysis of the geometry of the Bloch bundle; it must be noticed, for the intent of the present work, that these existence result however fail to produce explicitly the required analytic and periodic Bloch frames, and consequently exponentially localized Wannier functions.

When TRS is of fermionic nature, however, a finer topological information can be extracted from these systems. This was first proposed by Fu and Kane \cite{FuKane06} for $2$-dimensional systems, with later generalizations to $3$-dimensional systems by Fu, Kane and Mele \cite{FuKaneMele07}. These authors associated \emph{$\Z_2$-valued indices} to TRS topological insulators, whose non-triviality characterizes in particular \emph{quantum spin Hall phases} \cite{KaneMele05, FuKane06}. After the proposal by Fu, Kane and Mele, the mathematical physics community embarked in the task of understanding the geometry of these invariants and their relation with observable currents present in these systems, also through the bulk-edge correspondence \cite{AvilaSchulz-BaldesVillegas-Blas12, GrafPorta13, Schulz-Baldes13, DeNittisGomi15}.

A geometric interpretation of the Fu--Kane $\Z_2$ index in terms of obstruction theory has recently been proposed in \cite{FiorenzaMonacoPanati16}. There it is recognized that the existence of continuous and periodic Bloch frames which are also compatible with TRS is in general topologically obstructed, and that this obstruction can be encoded in a $\Z_2$-valued topological invariant $\delta$ associated to the family of projectors $\set{P(\kk)}_{\kk \in \R^2}$; this invariant is then shown to coincide numerically with the Fu--Kane index. Although the Fiorenza--Monaco--Panati invariant from \cite{FiorenzaMonacoPanati16} characterizes up to unitary equivalence the family of projectors to which it is associated (inside the class of TRS Bloch bundles), its definition is rather involved and indirect, and an expression for $\delta$ which explicitly depends only on the family of projectors itself is still missing.

We are now in position to state the goals of the present analysis. We are first of all interested in giving a \emph{constructive proof} of the existence of exponentially localized Wannier functions in gapped periodic quantum systes with a \emph{fermionic} TRS in dimension $d \le 2$. We discuss also the possibility of imposing a natural compatibility condition with the time-reversal operator, and the relation of the latter with the $\Z_2$ invariants. We will prove that analytic and periodic frames can indeed be constructed in dimension $d \le 2$, and we will encounter a topological obstruction in $d=2$ when we require also a TRS property for the frame to hold (see Theorem \ref{thm:MainResults_BB}). No such obstruction is instead present in $d=1$, thus recovering and generalizing previous results formulated in the bosonic setting (see \eg \cite{NenciuNenciu98} and references therein). The $2$-dimensional construction procedure can go through provided the \emph{Graf--Porta $\Z_2$ index} \cite{GrafPorta13} vanishes: the latter was introduced as a ``bulk invariant'' for $2$-dimensional TRS topological insulators, as a fundamental step in the proof of the bulk-edge correspondence for this class of materials. As the topological obstruction was encoded in \cite{FiorenzaMonacoPanati16} in the $\Z_2$ invariant $\delta$, it is natural to ask how these two quantities relate. We prove that indeed they agree numerically (compare Theorem \ref{thm:MainResults_Z2}). The equality of the two $\Z_2$ indices produces a fruitful interaction, leading to a deeper understanding of their shared properties: the Graf-Porta index becomes manifestly a topological invariant of the quantum system, while in turn the topological obstruction $\delta$ is brought into contact with the actual process of detection of topological phases in real experimental setups in view of the bulk-edge correspondence proved in \cite{GrafPorta13}. While well beyond the scope of this paper, the investigation of a direct connection of the obstruction-theoretic invariant with surface states, which are moreover robust against disorder and impurities which break microscopically the symmetries of the system, remains a stimulating line of research for the future.

Moreover, from its equality with the Graf-Porta index we are able to deduce an expression for the $\Z_2$ invariant $\delta$ which depends explicitly on the family of projections $\set{P(\kk)}_{\kk \in \R^2}$ to which it is associated, via its \emph{Berry connection} and \emph{Berry curvature} (see Theorem \ref{thm:MainResults_delta}). This formula was found already by Fu and Kane in the original paper on the $\Z_2$ index \cite[Eqn.~(A8)]{FuKane06}. Our proof complies with the geometric character of the problem, in that it uses geometric objects naturally associated to the family of projections (mainly the \emph{parallel transport} relative to the Berry connection).

The paper is structured as follows. First, we formulate precisely our main results, which were sketched above, and set some notation in Section \ref{sec:main}. We then reformulate in Section \ref{sec:d=d} the problem of the construction of analytic, periodic, and possibly TRS Bloch frames for a $d$-dimensional family of projectors $\set{P(\kk)}_{\kk \in \R^d}$, whose properties are modelled after the ones of the eigenprojectors of the Hamiltonian of a gapped periodic quantum system with fermionic TRS, in terms of an equivalent problem for particular families of unitary matrices. The latter emerge from the construction of a $d$-dimensional frame as follows: Given an input frame at $k_1=0$, this is parallel transported along the direction $k_1$, and the failure of this procedure to produce a frame which is periodic with respect to $k_1$ is measured by a family of unitary matrices $\set{\alpha(\kk_\perp)}_{\kk_\perp \in \R^{d-1}}$ depending on the other $d-1$ coordinates. These matrices are again periodic and satisfy a TRS condition. The possibility to construct a periodic and TRS frame is then equivalent to the possibility of ``rotating'' this family $\alpha$ to the identity matrix, by preserving its properties.

We then specialize and solve the problem in dimension $d=1$ in Section \ref{sec:d=1}, and in dimension $d=2$ in Section \ref{sec:d=2}.  The general algorithm that we propose consists of the following steps:
\begin{itemize}
 \item if the family $\alpha$ admits a ``good logarithm'', \ie if it can be written as $\alpha(\kk_\perp) = \eu^{\iu \, h(\kk_\perp)}$ with a continuous, periodic, and possibly TRS logarithm $h(\kk_\perp) = h(\kk_\perp)^*$, then it is possible to rotate it to the identity via the procedure described in Proposition \ref{prop:GoodLog};
 \item when instead the family does not admit such a logarithm, then approximate it with a family which does, and has the same properties (continuity, periodic, TRS); then apply the procedure illustrated in Proposition \ref{prop:AlmostLog} to rotate it to the identity.
\end{itemize}
The problem is solved by constructing explicitly these ``good logarithms''; some technical results which are necessary for this construction are contained in Appendix \ref{sec:ExtraDegen}. The first of the above cases is realized in $d=1$, while the second occurs in $d=2$, and may require further topological constraints to be satisfied. We recognize in particular that the topological obstruction to the existence of a ``good logarithm'' in $d=2$ is encoded in the Graf--Porta $\Z_2$ index \cite{GrafPorta13}, for which we prove several useful properties, including the fact that it characterizes completely the homotopy class of a continuous, periodic and TRS family of unitary matrices. We stress once again that all the steps in this construction are explicit and operational.

The realization that the Graf--Porta index is a topological obstruction gives us the opportunity to discuss and review several approaches to the formulation of $\Z_2$ invariants distinguishing the different topological phases in $2$-dimensional TRS topological insulators, and in particular in quantum spin Hall systems. As was already mentioned above, we prove in Section \ref{sec:Z2} the equivalence between the Graf--Porta and the Fiorenza--Monaco--Panati invariants \cite{GrafPorta13, FiorenzaMonacoPanati16}. This allows us to show that the $\Z_2$ invariant can be expressed in geometric terms as a function of the family of projectors, as explained above.

\subsection{Comparison with the literature}

We conclude the Introduction with a comparison of our methods with similar approaches present in the literature. 

As a first observation, we would like to point out that constructions of Wannier functions are not new, also in the mathematical physics community. After the early attempts in dimension $d=1$, constructive proofs of their existence also in $d>1$ were recently proposed (see \cite{CorneanHerbstNenciu15, FiorenzaMonacoPanati16_B, CancesLevittPanatiStoltz16} and references therein), when \emph{bosonic} time-reversal symmetry is present. The situation is substantially different in this case, as no topological obstructions arise, and exponentially localized Wannier function which are moreover \emph{real-valued} (\ie invariant under $\theta := C$, the complex conjugation operator) can be constructed. 

The literature on the case of fermionic TRS, at least as the Wannier function issue is concerned, is less developed. There are of course a few exceptions, on which we would like to comment now. Apart from identifying the geometric nature of the $\Z_2$ invariants for $2$- and $3$-dimensional TRS topological insulators, the work by Fiorenza, Monaco and Panati \cite{FiorenzaMonacoPanati16} provides a construction for smooth $d$-dimensional Bloch frames which enjoy periodicity and TRS whenever $d \le 3$, when such obstructions vanish. However, a construction of a Bloch frame which does \emph{not} satisfy TRS, being still smooth and periodic, is missing in \cite{FiorenzaMonacoPanati16}. In contrast, the method employed \eg by Soluyanov and Vanderbilt \cite{SoluyanovVanderbilt12} (see also \cite{WinklerSoluyanovTroyer16} for more recent developments) constructs smooth periodic frames in $d=2$, also in the non-trivial topological phase, but fails to produce TRS frames when the topological obstruction vanishes. The parallel transport method that we employ is used also in \cite{SoluyanovVanderbilt12}, but their construction is limited to the two-band case (\ie in our language when $\dim \Ran P(\kk) = 2$), while our proofs work in arbitrary rank.

Finally, we would like to point out the work by Prodan \cite{Prodan11_PRB}, where the $\Z_2$ invariant is also expressed in terms of the matrices associated to parallel transport. While the formula presented in \cite{Prodan11_PRB} is manifestly gauge-independent, its geometric interpretation is more hidden, since it requires evaluation of certain quantities at high-symmetry points (in the spirit of the ``Pfaffian'' formulation of the Fu--Kane index \cite{FuKane06}). The formula we derive for the $\Z_2$ invariant is instead truly geometric, as it depends only on quantities (the integrals of the Berry connection and Berry curvature) which can be expressed directly in terms of the relevant family of projectors. The relation of the Prodan index with the other formulations will be discussed further in Section \ref{sec:TRIM}.

\bigskip 

\noindent \textbf{Acknowledgments.} The authors would like to thank D.~Fiorenza, G.~Panati, and H.~Schulz-Baldes for stimulating discussions. The financial support from the Danish Council for Independent Research $|$ Natural Sciences within Grant 4181-00042 and from the German Science Foundation (DFG) within the GRK 1838 ``Spectral theory and dynamics of quantum systems'' is gratefully acknowledged.



\section{Statement of the problem and main results} \label{sec:main}

\subsection{Statement of the problem} \label{sec:problem}

Inspired by gapped periodic quantum systems with fermionic time-reversal symmetry in the Bloch-Floquet-Zak representation%
\footnote{The reduction of the problem from periodic time-reversal symmetric Hamiltonians to families of projectors as in Assumption \ref{assum:proj} is given \eg in \cite[Sec.~1.2]{CorneanHerbstNenciu15}.}%
, we study families of projectors satisfying the hypotheses collected below in Assumption \ref{assum:proj}. In what follows, we let $\Hi$ be a separable Hilbert space with scalar product $\scal{\cdot}{\cdot}$, $\BH$ denote the algebra of bounded linear operators on $\Hi$, and $\U(\Hi)$ the group of unitary operators on $\Hi$.

\begin{assumption} \label{assum:proj}
The family of orthogonal projectors $\set{P(\kk)}_{\kk \in \R^d} \subset \BH$, $P(\kk) = P(\kk)^2 = P(\kk)^*$, enjoys the following properties:
\begin{enumerate}[label=$(\mathrm{P}_\arabic*)$,ref=$(\mathrm{P}_\arabic*)$]
 \item \label{item:smooth} \emph{analyticity}: the map $\R^d \ni \kk \mapsto P(\kk) \in \BH$ is real analytic;
\item \label{item:periodic} \emph{periodicity}: the map $\kk \mapsto P(\kk)$ is $\Z^d$-periodic, namely 
\[ P(\kk + \mathbf{n}) = P(\kk) \quad \text{for all } \mathbf{n} \in \Z^d; \]
\item \label{item:TRS} \emph{time-reversal symmetry}: the map $\kk \mapsto P(\kk)$ is time-reversal symmetric (TRS), \ie there exists an antiunitary
operator%
\footnote{Recall that a surjective antilinear operator $K \colon \Hi \to \Hi$ is called \emph{antiunitary} if $\scal{K \Xi_1}{K \Xi_2} = \scal{\Xi_2}{\Xi_1}$ for all $\Xi_1, \Xi_2 \in \Hi$.} %
$\theta \colon \Hi \to \Hi$, called the \emph{time-reversal operator}, such that 
\[ \theta^2 = - \Id_{\Hi} \quad \text{and} \quad P(-\kk) = \theta P(\kk) \theta^{-1}. \qedhere\] 
\end{enumerate}
\end{assumption}

For a family of projectors satisfying Assumption \ref{assum:proj}, it follows from \ref{item:smooth} that the rank $m$ of the projectors $P(\kk)$ is constant in $\kk$. We will assume that $m < + \infty$; property \ref{item:TRS} then gives that $m$ must be even. Indeed, the formula
\begin{equation} \label{eqn:symplectic}
(\Xi_1, \Xi_2) := \scal{\theta \Xi_1}{\Xi_2} \quad \text{for } \Xi_1, \Xi_2 \in \Hi
\end{equation}
defines a bilinear, skew-symmetric, non-degenerate form on $\Hi$; its restriction to the subspace $\Ran P(\mathbf{0}) \subset \Hi$, which is invariant under the action of $\theta$ in view of \ref{item:TRS}, is then a \emph{symplectic form}, and a symplectic vector space is necessarily even-dimensional.

The goal of our analysis will be to construct, whenever possible, a \emph{continuous symmetric Bloch frame} for the family $\set{P(\kk)}_{\kk \in \R^d}$, which we define now.

\begin{definition}[(Symmetric) Bloch frame] \label{dfn:Bloch}
Let $\set{P(\kk)}_{\kk \in \R^d}$ be a family of projectors satisfying Assumption \ref{assum:proj}. A \emph{Bloch frame} for $\set{P(\kk)}_{\kk \in \R^d}$ is a collection of maps $\R^d \ni \kk \mapsto \Xi_{a}(\kk) \in \Hi$, $a \in \set{1, \ldots, m}$, such that for all $\kk \in \R^d$ the set $\Xi(\kk) := \set{\Xi_1(\kk), \ldots, \Xi_m(\kk)}$ is an orthonormal basis spanning $\Ran P(\kk)$. A Bloch frame is called
\begin{enumerate}[label=$(\mathrm{F}_\arabic*)$,ref=$(\mathrm{F}_\arabic*)$]
 \item \label{item:F1} \emph{continuous} if all functions $\Xi_a \colon \R^d \to \Hi$, $a \in \set{1, \ldots, m}$, are continuous;
 \item \label{item:F2} \emph{periodic} if 
\[ \Xi_a(\kk + \mathbf{n}) = \Xi_a(\kk) \quad \text{for all } \kk \in \R^d, \: \mathbf{n} \in \Z^d, \: a \in \set{1, \ldots, m}; \]
\item \label{item:F3} \emph{time-reversal symmetric} (TRS) if
\begin{equation} \label{eqn:TRS}
\Xi_b(-\kk) = \sum_{a = 1}^{m} [\theta \Xi_a(\kk)] \eps_{ab} \quad \text{for all } \kk \in \R^d, \: b \in \set{1, \ldots, m}
\end{equation}
for some unitary and skew-symmetric matrix $\eps = (\eps_{ab})_{1 \le a,b \le m} \in U(m) \cap \bigwedge^2 \C^m$.
\end{enumerate}

A Bloch frame which is both periodic and time-reversal symmetric is called \emph{symmetric}.
\end{definition}

\begin{remark}[The reshuffling matrix $\eps$] \label{remark:eps}
The necessity of a reshuffling matrix $\eps = (\eps_{ab})_{1 \le a,b \le m}$ in the definition of TRS Bloch frames is evident if one looks at the $\theta$-invariant subspace at $\kk = \mathbf{0}$. Indeed, since the restriction of \eqref{eqn:symplectic} to $\Ran P(\mathbf{0})$ is a symplectic form, the vectors $\Xi \in \Ran P(\mathbf{0})$ and $\theta \Xi$ are orthogonal to each other. In particular, $\Xi = \theta \Xi$ implies $\Xi = 0$; this prevents to set the (naive) definition of TRS Bloch frame as $\Xi_a(-\kk) = \theta \Xi_a(\kk)$.

The skew-symmetry of $\eps$ follows also as a self-consistency condition for \eqref{eqn:TRS}. Indeed, by applying the antilinear operator $\theta$ to both sides of \eqref{eqn:TRS} we obtain
\[ \theta \Xi_b(-\kk)  = \sum_{a=1}^{m} [\theta^2 \Xi_a(\kk)] \overline{\eps_{ab}} = - \sum_{a=1}^{m} \Xi_a(\kk) \, \overline{\eps_{ab}} = - \sum_{a,c=1}^{m} [\theta \Xi_c(-\kk)] \eps_{ca} \, \overline{\eps_{ab}}. \]
By taking the scalar product of both sides of the above equation with $\theta \Xi_d(-\kk)$, and using the fact that
\[ \scal{\theta \Xi_d(-\kk)}{\theta \Xi_b(-\kk)} = \scal{\Xi_b(-\kk)}{\Xi_d(-\kk)} = \delta_{d,b} \]
because $\theta$ is antiunitary and $\Xi(-\kk)$ is an orthonormal basis of $\Ran P(-\kk)$, we deduce
\[ \delta_{d,b} = - \sum_{c=1}^{m} \delta_{d,c} \left(\eps \overline{\eps} \right)_{cb} = - \left(\eps \overline{\eps} \right)_{db} \]
or, in matrix form, $\eps \overline{\eps} = - \Id$. By unitarity of $\eps$, however, we also have $\eps \eps^* = \Id$: the two equalities then imply $\overline{\eps} = - \eps^*$, or $\eps = - \eps^t$, as wanted.

Notice that, according to \cite[Theorem 7]{Hua44}, the matrix $\eps$, being unitary and skew-symmetric, can be put in the form
\begin{equation} \label{eqn:eps=J}
\begin{pmatrix} 0 & 1 \\ -1 & 0 \end{pmatrix} \oplus \cdots \oplus \begin{pmatrix} 0 & 1 \\ -1 & 0 \end{pmatrix}
\end{equation}
in a suitable orthonormal basis. Thus, there is no loss of generality in assuming that $\eps$ is already of this form.
\end{remark}

We are now in position to state our goal: we tackle the following

\begin{problem} \label{pbl:Bloch}
Given a family of projectors $\set{P(\kk)}_{\kk \in \R^d}$ satisfying Assumption \ref{assum:proj} above, \emph{construct} (if possible) a \emph{continuous and periodic} (respectively \emph{symmetric}) Bloch frame for $\set{P(\kk)}_{\kk \in \R^d}$.
\end{problem}

We stress that here we seek for a \emph{constructive argument} that exhibits the required Bloch frames explicitly. Existence results, especially concerning \emph{periodic} frames in $d \le 3$, already exist in the literature \cite{Panati07, MonacoPanati15}. Notice that the existence of a continuous, periodic \emph{and TRS} Bloch frame is in general \emph{topologically obstructed} \cite{FiorenzaMonacoPanati16}, depending on the dimension $d$. In particular, in $d=2$ such obstruction can be encoded in a \emph{$\Z_2$-valued topological invariant} $\delta \in \Z_2$. We will come back to this point later on.

\subsection{Main results}

The intent of this paper is twofold. The first goal is to \emph{perform an explicit construction} of Bloch frames which are periodic and, possibly, also time-reversal symmetric, at least when topological obstructions vanish. The second goal is to shed some light and draw some connections between several proposals present in the literature on the $\Z_2$ invariants of TRS topological insulators.

The first set of results contained in this paper is summarized in the following statement.

\begin{theorem} \label{thm:MainResults_BB}
Let $\set{P(\kk)}_{\kk \in \R^d}$ satisfy Assumption \ref{assum:proj}.
\begin{enumerate}
 \item \label{item:Result_d=1} Let $d=1$. Then a continuous symmetric Bloch frame for $\set{P(k)}_{k \in \R}$ exists and can be constructed.
 \item \label{item:Result_d=2} Let $d=2$. 
 \begin{enumerate}[label=(\alph*),ref=(\alph*)]
  \item \label{item:Result_d=2_I=0} Assume that $\mathcal{I} = 0 \in \Z_2$, where $\mathcal{I}$ is defined in \eqref{eqn:rueda}. Then a continuous symmetric Bloch frame for $\set{P(\kk)}_{\kk \in \R^2}$ exists and can be constructed.
  \item \label{item:Result_d=2_beta} Conversely, if a continuous symmetric Bloch frame exists, then $\mathcal{I} = 0 \in \Z_2$.
 \end{enumerate}
 \item \label{item:Result_d=2_noTRS} Let $d=2$. Then a continuous \emph{periodic} Bloch frame for $\set{P(\kk)}_{\kk \in \R^2}$ \emph{always} exists and can be constructed.
\end{enumerate}
\end{theorem}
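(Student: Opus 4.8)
\textbf{Proof strategy for Theorem \ref{thm:MainResults_BB}.}

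The overall plan is to reduce the $d$-dimensional Bloch frame problem to a problem about families of unitary matrices, following the program sketched in the Introduction, and then solve each dimension separately. I would begin with part \eqref{item:Result_d=1}, the case $d=1$. Start from an arbitrary orthonormal basis of $\Ran P(0)$; since the restriction of the symplectic form \eqref{eqn:symplectic} to $\Ran P(0)$ is symplectic, Remark \ref{remark:eps} lets me choose this basis so that the time-reversal constraint \eqref{eqn:TRS} at $k=0$ holds with $\eps$ already in the normal form \eqref{eqn:eps=J}. I would then define $\Xi_a(k)$ for $k \in [0,1]$ by parallel transport with respect to the Berry connection $-P(k)\,\partial_k P(k)$, which produces an analytic orthonormal frame of $\Ran P(k)$ on the interval. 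The frame so obtained need not be periodic: the obstruction is a single unitary matrix $\alpha \in U(m)$ relating $\Xi(1)$ to $\Xi(0)$. Because $P$ is TRS and the transport commutes appropriately with $\theta$ (this is the key computation: parallel transport intertwines with the time-reversal involution, so the $k$ and $-k$ halves of the construction are forced to be compatible), this matrix $\alpha$ turns out to be symmetric, $\alpha = \alpha^t$ (in the basis fixed by $\eps$). A symmetric unitary always admits a symmetric logarithm $\alpha = \eu^{\iu h}$, $h = h^t = h^*$; then replacing $\Xi_a(k)$ by $\sum_b \Xi_b(k)\,(\eu^{-\iu k h})_{ba}$ fixes periodicity while preserving analyticity, and the symmetry of $h$ ensures the TRS relation \eqref{eqn:TRS} is preserved along the whole interval. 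This is the mechanism of Proposition \ref{prop:GoodLog}. I expect the only delicate point here to be verifying the precise covariance of parallel transport under $\theta$ at the matching points $k=0$ and $k=1/2$, where $\theta$-invariant fibers force Kramers-type degeneracies.

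For part \eqref{item:Result_d=2}, I would carry out the analogous reduction in the $k_1$-direction: parallel-transport an input symmetric frame at $k_1 = 0$ (which exists on the line $\{k_1 = 0\}$ by the $d=1$ result applied to the family $\{P(0, k_2)\}$) along $k_1$, and record the failure of periodicity in $k_1$ as a family of unitaries $\{\alpha(k_2)\}_{k_2 \in \R}$, continuous, $\Z$-periodic in $k_2$, and satisfying a TRS relation $\alpha(-k_2) = \eps\,\alpha(k_2)^t\,\eps^{-1}$ (or the analogous involution coming from \eqref{eqn:TRS}). The problem of producing a symmetric $2$d frame is then equivalent to continuously deforming $\{\alpha(k_2)\}$ to the constant identity family within the class of such symmetric loops of unitaries. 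The obstruction to doing this is exactly a $\Z_2$ invariant $\mathcal I$ — one reads it off by comparing, e.g., the sign of a Pfaffian-type quantity of $\alpha$ at the two fixed points $k_2 = 0$ and $k_2 = 1/2$ against a winding/determinant quantity over the interval, which is the definition referenced as \eqref{eqn:rueda}. When $\mathcal I = 0$, the family $\alpha$ is homotopic (through symmetric periodic loops) to one admitting a ``good logarithm'', and Propositions \ref{prop:GoodLog} and \ref{prop:AlmostLog} rotate it to the identity; pulling this deformation back through the parallel transport produces the desired symmetric frame, giving \eqref{item:Result_d=2_I=0}. Conversely, \eqref{item:Result_d=2_beta} follows because a symmetric frame would force $\alpha \equiv \Id$ up to the allowed gauge, and $\mathcal I$ is by construction a homotopy invariant of $\{\alpha(k_2)\}$ that vanishes on the identity; hence its non-triviality is a genuine obstruction.

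For part \eqref{item:Result_d=2_noTRS}, I would run the same parallel-transport reduction but now \emph{drop} the requirement that the frame be TRS, keeping only continuity and periodicity. The family $\{\alpha(k_2)\}$ is then just a continuous periodic loop in $U(m)$, with no symmetry constraint. The relevant obstruction to rotating such a loop to the identity while keeping periodicity is the winding number of $\det \alpha(k_2)$ around $U(1)$ — a $\Z$-valued Chern-type number. But in the TRS setting this number is forced to vanish: the symmetry relation on $\alpha$ (inherited from property \ref{item:TRS} even though we are not imposing it on the frame) pairs $k_2$ with $-k_2$ and makes $\det\alpha(-k_2) = \overline{\det\alpha(k_2)}$, so the total winding is zero. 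With vanishing winding, $\alpha$ admits a continuous periodic logarithm $\alpha(k_2) = \eu^{\iu h(k_2)}$ with $h(k_2) = h(k_2)^*$ (not necessarily symmetric), and the procedure of Proposition \ref{prop:GoodLog}, applied without the TRS bookkeeping, rotates it to the identity; pulling back gives a continuous periodic frame. Here the main obstacle is making sure the logarithm $h(k_2)$ can be chosen \emph{analytically}, not merely continuously, when the eigenvalues of $\alpha(k_2)$ cross $-1$; this is handled by the approximation-plus-desingularization technique of Appendix \ref{sec:ExtraDegen}, replacing $\alpha$ by a real-analytic family with the same winding (zero) and a spectral gap away from $-1$, then taking the principal logarithm.
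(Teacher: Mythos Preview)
Your overall architecture matches the paper's: parallel transport to reduce the frame problem to a problem about matching matrices $\alpha$, then rotate $\alpha$ to the identity via a logarithm. Two points need correction.

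First, in $d=1$ the TRS condition on the matching matrix is not $\alpha = \alpha^t$ but $\eps\,\alpha = \alpha^t\,\eps$ (see \eqref{eqn:alpha_TRS1d}). This is not the same as symmetry, and the relevant structural fact is Kramers degeneracy (Lemma \ref{lemma:Kramers}) rather than a symmetric logarithm. The paper avoids even working with $\alpha$ here: it writes $T(1,0) = \eu^{\iu M}$ at the operator level with $M$ commuting with $\theta$, and sets $W(k) = T(k,0)\,\eu^{-\iu k M}$, which is manifestly periodic and $\theta$-covariant. Your matrix-level version can be made to work, but the symmetry you need on $h$ is $\eps\,h = h^t\,\eps$, not $h = h^t$.

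Second, and more seriously, in part \eqref{item:Result_d=2_noTRS} the claim ``with vanishing winding, $\alpha$ admits a continuous periodic logarithm'' is false. The winding of $\det\alpha$ controls null-homotopy in $U(m)$, but null-homotopic loops need not have continuous periodic logarithms: take $\alpha(k_2) = \mathrm{diag}(\eu^{2\pi\iu k_2}, \eu^{-2\pi\iu k_2})$, whose determinant is identically $1$ but whose eigenvalues each wind once around the circle, so no fixed branch cut works and no periodic self-adjoint $h$ with $\alpha = \eu^{\iu h}$ exists. Your proposed fix---approximating by a family with $-1$ in the resolvent---fails for the same reason: any uniform approximant of this $\alpha$ will still have an eigenvalue crossing $-1$. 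The paper's actual mechanism (Propositions \ref{prop:GenericForm}\eqref{item:alphagap}, \ref{prop:BranchCut}\eqref{item:alphagap_cut}, \ref{prop:ApproxLog}) is to approximate $\alpha$ by a family $\alpha\sub{gap}$ with \emph{non-degenerate} spectrum, and then exploit the gaps between consecutive eigenvalues to build a \emph{$k_2$-dependent} branch cut $\eu^{\iu\phi\sub{gap}(k_2)}$ lying in the resolvent for every $k_2$. Only then does the Cayley transform yield a good logarithm, and Proposition \ref{prop:AlmostLog} closes the gap between $\alpha$ and $\alpha\sub{gap}$. The issue is existence of the logarithm, not its regularity.

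Finally, your converse argument for \eqref{item:Result_d=2_beta} is too loose: a symmetric frame does not force $\alpha \equiv \Id$. What it gives (Proposition \ref{prop:rotate}) is a family $\beta(k_1,k_2)$ with $\beta(0,\cdot) \equiv \Id$ and $\alpha(k_2) = \beta(-1/2,k_2)\,\beta(1/2,k_2)^{-1}$; the paper then observes that $k_1 \mapsto \deg[\det\beta(k_1,\cdot)]$ is integer-valued and continuous, hence constant, equal to $0$ at $k_1=0$ and to $\mathcal{I}(\alpha)$ at $k_1=1/2$ via \eqref{eqn:IntegralRueda_c}.
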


Part \ref{item:Result_d=1} of Theorem \ref{thm:MainResults_BB} is proved in Section \ref{sec:d=1}. Parts \ref{item:Result_d=2} and \ref{item:Result_d=2_noTRS} of Theorem \ref{thm:MainResults_BB} are proved in Section \ref{sec:d=2}. As was already noted and commented in the Introduction, the existence results from this statements, as far as symmetric frames are concerned, are not new \cite{Panati07,MonacoPanati15}, and recently also constructive proofs have been provided \cite{FiorenzaMonacoPanati16}. The methods employed in our proof, however, are different. Moreover, the paper contains the first constructive proof of continuous and \emph{periodic} Bloch frames in $d=2$ in presence of \emph{fermionic} TRS (for the bosonic setting, see \cite{FiorenzaMonacoPanati16_B, CorneanHerbstNenciu15, CancesLevittPanatiStoltz16}). As was argued in the Introduction, this translates in real space to the construction of (TRS) \emph{exponentially localized Wannier functions} for TRS topological insulators (see Theorem \ref{thm:Wannier} below).

The statement of Theorem \ref{thm:MainResults_BB}(\ref{item:Result_d=2}) involves the $\Z_2$ index $\mathcal{I}$, which was defined in \cite{GrafPorta13} in the context of tight-binding Hamiltonians modelling periodic quantum systems in presence of fermionic time-reversal symmetry. This index is associated to families of unitary matrices $\set{\alpha(k)}_{k \in \R}$ which are continuous, $\Z$-periodic and \emph{time-reversal symmetric}, in the sense that $\eps \, \alpha(k) = \alpha(-k)^t \, \eps$ (compare Assumption \ref{assum:alpha}). We sketched in the Introduction how such families of matrices appear in the present context: more details can be found in Section \ref{sec:d=d}. We prove in Section \ref{sec:GrafPorta} several alternative formulations of the original definition (compare \eqref{eqn:rueda}) of $\mathcal{I}$ from \cite{GrafPorta13}, which are more suited to our construction provided in Section \ref{sec:d=2}. With these we are also able to prove that $\mathcal{I} \in \Z_2$ gives a \emph{complete homotopy invariant} for continuous families of unitary matrices which are $\Z$-periodic and TRS, as in the following statement.

\begin{theorem} \label{thm:MainResults_GPhomotopy}
Two continuous, $\Z$-periodic, and TRS families of unitary matrices are homotopic (through a continuous homotopy of families with the same properties) if and only if their $\Z_2$ indices $\mathcal{I}$ coincide. In particular, one such family $\alpha$ is null-homotopic, in the sense specified above, if and only if $\mathcal{I}(\alpha) = 0 \in \Z_2$.
\end{theorem}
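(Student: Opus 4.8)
The statement asserts that the assignment $\set{\alpha(k)}_{k\in\R}\mapsto\mathcal I(\alpha)\in\Z_2$ descends to a bijection from the set of homotopy classes of continuous, $\Z$-periodic, TRS families of unitary matrices (through such families) onto $\Z_2$. I would split the argument into the ``only if'' direction, which is the homotopy invariance of $\mathcal I$, and the ``if'' direction, which is the injectivity of $\mathcal I$ on homotopy classes and is the substantial part; surjectivity will come for free.

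For homotopy invariance the plan is to use one of the closed-form expressions for $\mathcal I$ established in Section~\ref{sec:GrafPorta}, of the shape $(-1)^{\mathcal I(\alpha)}=\Pf(\eps\,\alpha(0))\,\Pf(\eps\,\alpha(1/2))^{-1}\cdot\sqrt{\det\alpha(1/2)/\det\alpha(0)}$, where the square root is fixed by the continuous lift of $k\mapsto\det\alpha(k)$ on $[0,\tfrac12]$. At the time-reversal invariant points $k\in\set{0,\tfrac12}$ the matrix $\eps\,\alpha(k)$ is skew-symmetric and invertible, so the Pfaffians are defined and nonzero; the whole right-hand side then depends continuously on $\alpha$ in the uniform topology and lies in $\set{+1,-1}$. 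Hence $\mathcal I$ is locally constant on the space of admissible families, and therefore constant along every admissible homotopy, which is the ``only if'' direction. Moreover $\mathcal I(\Id)=0$, whereas for $\alpha_*(k):=\diag(\eu^{2\pi\iu k},\eu^{-2\pi\iu k})\oplus\Id_{m-2}$, which satisfies Assumption~\ref{assum:alpha}, the same formula gives $\mathcal I(\alpha_*)=1$; so $\mathcal I$ is onto $\Z_2$.

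For the ``if'' direction it then suffices to show that the space of admissible families has exactly two path components. I would first identify it with a path space: restriction to $[0,\tfrac12]$ sends $\alpha$ to a path $p$ in $\U(m)$ with $p(0),p(1/2)\in\mathcal S:=\set{U\in\U(m):\eps U=-(\eps U)^t}$, and conversely every such path extends to a unique admissible family by setting $\alpha(-k):=\eps^{-1}\,\alpha(k)^t\,\eps$ and continuing $\Z$-periodically, the compatibility at $k\in\set{0,\tfrac12}$ being exactly the condition $p(0),p(1/2)\in\mathcal S$. Admissible homotopies correspond to homotopies of such paths with endpoints free to move inside $\mathcal S$, so $\pi_0$ of the space of admissible families equals $\pi_0$ of $\mathcal P(\U(m);\mathcal S,\mathcal S)$. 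Via $U\mapsto\eps U$ the set $\mathcal S$ is the set of skew-symmetric unitary matrices, which by Hua's theorem (invoked for $\eps$ itself in Remark~\ref{remark:eps}) is a single $\U(m)$-orbit under unitary congruence and hence connected, so $\pi_0(\mathcal S\times\mathcal S)=0$. From the endpoint-evaluation fibration $\Omega\U(m)\to\mathcal P(\U(m);\mathcal S,\mathcal S)\to\mathcal S\times\mathcal S$ one obtains $\pi_0(\mathcal P(\U(m);\mathcal S,\mathcal S))\cong\pi_1(\U(m))/J$, where $J\subseteq\pi_1(\U(m))\cong\Z$ is the image of the inclusion-induced map $\pi_1(\mathcal S)\to\pi_1(\U(m))$. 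The crux is to show $J=2\Z$: on the one hand $U\mapsto\Pf(\eps U)$ is a continuous function on $\mathcal S$ with $\Pf(\eps U)^2=\det\eps\cdot\det U$, so $\det|_{\mathcal S}$ admits a continuous square root and every loop in $\mathcal S$ has even determinant winding; on the other hand $\phi\mapsto\diag(\eu^{\iu\phi},\eu^{\iu\phi})\oplus\Id_{m-2}$, $\phi\in[0,2\pi]$, is a loop in $\mathcal S$ of determinant winding $2$. Thus the space of admissible families has exactly two path components, and since $\mathcal I$ distinguishes the classes of $\Id$ and $\alpha_*$ it is the bijection onto $\Z_2$; in particular $\alpha$ is null-homotopic if and only if it lies in the class of $\Id$, i.e.\ if and only if $\mathcal I(\alpha)=0$.

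The step I expect to be the main obstacle is precisely the $\mathcal I=1$ rigidity, i.e.\ the identity $J=2\Z$ (equivalently: that any two admissible families with $\mathcal I=1$ are homotopic, not merely that $\mathcal I$ detects the trivial class). A more hands-on substitute, closer to the constructive spirit of the paper, is to deform a given $\alpha$ with $\mathcal I(\alpha)=1$ until it agrees with $\alpha_*$ on neighbourhoods of $k\in\set{0,\tfrac12}$ — possible since $\mathcal S$ is connected — thereby reducing to matching the two remaining arcs of $\alpha$ and $\alpha_*$ in $\U(m)$ relative to their endpoints, which is unobstructed once a single integer winding agrees, and that winding is exactly what $\mathcal I$ records. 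Finally, the half ``$\mathcal I(\alpha)=0\Rightarrow\alpha$ null-homotopic'' can also be read off directly from Propositions~\ref{prop:GoodLog} and~\ref{prop:AlmostLog}: one produces a continuous, $\Z$-periodic, symmetric logarithm $h$ with $\alpha(k)=\eu^{\iu h(k)}$ and uses the homotopy $\alpha_t(k)=\eu^{\iu\,t\,h(k)}$.
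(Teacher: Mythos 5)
Your proposal is correct, but it follows a genuinely different route from the paper. The paper's proof of the ``if'' direction is constructive and algebraic: given $\alpha_0,\alpha_1$ with equal indices, it factorizes $\alpha_i(k)=\eps^{-1}\gamma_i(-k)^t\eps\,\gamma_i(k)$ via Lemma~\ref{lemma:factor}, forms the ``quotient'' family with $\gamma=\gamma_1\gamma_0^*$, uses the additivity of the degree formula \eqref{eqn:IntegralRueda_c} to see that this family has index $0$, and then invokes the already-proven Theorem~\ref{thm:beta} to obtain a $\beta$ from which an \emph{explicit} homotopy between $\alpha_0$ and $\alpha_1$ is written down; homotopy invariance itself is Lemma~\ref{lemma:Ihomotopy}, via \eqref{eqn:IntegralRueda_a}. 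You instead compute $\pi_0$ of the space of admissible families abstractly: the identification with paths in $U(m)$ whose endpoints lie in $\mathcal{S}=\set{U : \eps U \text{ skew}}$ is correct (the consistency of the reflected, periodized extension at $k\in\set{0,1/2}$ is exactly the endpoint condition), the evaluation map is a fibration, $\mathcal{S}$ is connected by Hua's theorem, and the Pfaffian square-root trick ($\det U=\Pf(\eps U)^2$ on $\mathcal{S}$) together with the loop $\eu^{\iu\phi}\,\Id_2\oplus\Id_{m-2}$ pins the image of $\pi_1(\mathcal{S})$ in $\pi_1(U(m))\simeq\Z$ down to exactly $2\Z$, so there are precisely two classes, separated by $\mathcal{I}$ through the representatives $\Id$ and $\alpha_*$ (your computation $\mathcal{I}(\alpha_*)=1$ is right). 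Your homotopy-invariance argument via the endpoint/Pfaffian formula is equivalent to the paper's \eqref{eqn:-1I} combined with $\det\gamma(k_*)=\Pf(\eps\,\alpha(k_*))$, and there is no circularity since those results rest only on Proposition~\ref{prop:IntegralRueda}. What each approach buys: the paper's argument stays within its constructive program and exhibits the homotopy explicitly, reusing Theorem~\ref{thm:beta}; yours is conceptually cleaner and more standard-topological, computing $\pi_0\simeq\Z_2$ of the whole space in one stroke (and making transparent \emph{why} the winding is halved: the Pfaffian is a global square root of $\det$ on $\mathcal{S}$), at the price of being non-constructive.

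Two small caveats, neither a gap. The closing claim that Propositions~\ref{prop:GoodLog} and~\ref{prop:AlmostLog} yield a single continuous, periodic, TRS logarithm $h$ with $\alpha=\eu^{\iu h}$ when $\mathcal{I}(\alpha)=0$ overstates what is proved there: the construction gives $\alpha(k)=\eu^{\iu h\sub{gen}(k)/2}\,\eu^{\iu\widetilde h(k)}\,\eu^{\iu h\sub{gen}(k)/2}$ (equivalently a family $\beta$), not a single exponential; a null-homotopy still follows immediately, e.g.\ $\alpha_t(k)=\eu^{\iu t h\sub{gen}(k)/2}\,\eu^{\iu t\widetilde h(k)}\,\eu^{\iu t h\sub{gen}(k)/2}$ or $\alpha_t(k)=\beta(-t/2,k)\,\beta(t/2,k)^{-1}$ as in the remark following Remark~\ref{rmk:beta0}, so this is harmless and in any case redundant given your main argument. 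Likewise the ``hands-on substitute'' for the $\mathcal{I}=1$ rigidity is only a sketch (the relative-endpoint matching of the two arcs hides the same winding computation), but since the fibration argument is complete this does not affect the proof.
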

Theorem \ref{thm:MainResults_GPhomotopy} is proved in Section \ref{sec:Itopology}. Notice that \emph{all} continuous, periodic and TRS families of unitary matrices are null-homotopic, but the homotopy will break TRS if $\mathcal{I} = 1 \in \Z_2$ (compare the discussion at the beginning of Section \ref{sec:GrafPorta}). The content of the above statement was already observed in \cite[proof of Prop.~5]{AvilaSchulz-BaldesVillegas-Blas12}, where only a sketch of the proof was provided: our independent argument produces also an explicit homotopy between two families with the same index.

Coming back to Problem \ref{pbl:Bloch}, our proof then shows how the condition $\mathcal{I} = 0 \in \Z_2$ characterizes $2$-dimensional families of projectors which admit a continuous and symmetric Bloch frame. A number of other $\Z_2$ invariants have been proposed in the literature, as was already discussed in the Introduction. After the seminal work of Fu and Kane \cite{FuKane06}, a manifestly gauge-invariant formulation of the index was proposed by Prodan \cite{Prodan11_PRB}. More recently, using an obstruction-theoretical approach similar to the one employed in the present work, a true \emph{topological invariant} for $2$-dimensional families of projectors as in Assumption \ref{assum:proj} was defined in \cite{FiorenzaMonacoPanati16} by Fiorenza, Monaco and Panati. It is natural to ask how these different formulations are related to each other. This question is answered by the following

\begin{theorem} \label{thm:MainResults_Z2}
For a $2$-dimensional family of projectors satisfying Assumption \ref{assum:proj}, the Fu--Kane index $\Delta \in \Z_2$, the Fiorenza--Monaco--Panati invariant $\delta \in \Z_2$, the Graf--Porta index $\mathcal{I} \in \Z_2$ and the Prodan invariant $\xi \in \Z_2$ agree:
\[ \Delta = \delta = \mathcal{I} = \xi \quad \in \Z_2. \]
\end{theorem}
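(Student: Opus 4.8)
The plan is to prove the chain of equalities $\Delta = \delta = \mathcal{I} = \xi$ by establishing each identification separately, using the parallel-transport reduction of Section \ref{sec:d=d} as the common ground on which all four invariants can be compared. The key insight is that all of these $\Z_2$ numbers can be read off from the family of unitary matrices $\set{\alpha(\kk_\perp)}$ (or its restriction to the one-dimensional edge of the effective Brillouin zone), so once we know that $\mathcal{I}$ is a \emph{complete} homotopy invariant of such families (Theorem \ref{thm:MainResults_GPhomotopy}), it suffices to show that each of the other three invariants is (a) well-defined and (b) homotopy-invariant on continuous, $\Z$-periodic, TRS families of unitaries, and (c) takes the value $1$ on at least one representative family with $\mathcal{I} = 1$. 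By Theorem \ref{thm:MainResults_GPhomotopy} there are exactly two homotopy classes, so (a)--(c) pin down the invariant completely and force agreement with $\mathcal{I}$.

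Concretely, I would proceed as follows. First, $\delta = \mathcal{I}$: the Fiorenza--Monaco--Panati invariant $\delta$ is, by its definition in \cite{FiorenzaMonacoPanati16}, the obstruction to extending a TRS Bloch frame from the boundary of a fundamental domain to its interior; by Theorem \ref{thm:MainResults_BB}\ref{item:Result_d=2}, the vanishing of $\mathcal{I}$ is \emph{equivalent} to the existence of a continuous symmetric Bloch frame, hence to the vanishing of $\delta$, and since both live in $\Z_2$ they must coincide. Second, $\Delta = \delta$: this equality is essentially already contained in \cite{FiorenzaMonacoPanati16}, where $\delta$ was shown to reproduce numerically the Fu--Kane index; I would cite that result and, if a self-contained argument is wanted, re-derive it by evaluating both at the time-reversal-invariant momenta, observing that the Fu--Kane Pfaffian formula computes precisely the winding/parity data that $\delta$ encodes. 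Third, $\xi = \mathcal{I}$: Prodan's invariant is expressed in \cite{Prodan11_PRB} through the parallel-transport matrices at high-symmetry points; I would show that Prodan's formula is a gauge-independent, homotopy-invariant function of the family $\set{\alpha(k)}_{k \in \R}$, and then evaluate it on an explicit generating family (e.g.\ the rank-$2$ family $\alpha(k) = \diag(\eu^{\iu \pi k}, \eu^{-\iu \pi k})$ suitably TRS-ized, for which $\mathcal{I} = 1$ by the computations of Section \ref{sec:GrafPorta}) to see that $\xi = 1$ there as well.

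The main obstacle, I expect, will be the third step: reconciling Prodan's invariant with $\mathcal{I}$. Prodan's definition involves a product of overlap determinants along a path together with square-root/Pfaffian choices at the four time-reversal-invariant momenta of $\T^2$, and it is stated in a formalism (discrete $\kk$-mesh, numerically-oriented) rather different from the continuous obstruction-theoretic language used here. The work is to (i) rewrite Prodan's formula in the continuum as a well-defined quantity attached to the matrices $\alpha$, (ii) verify its invariance under the allowed homotopies — the delicate point being the consistency of the square-root branch choices at the TRIM — and (iii) track signs carefully through the dualization $\eps \, \alpha(k) = \alpha(-k)^t \, \eps$ so that the comparison with $\mathcal{I}$ does not pick up a spurious factor. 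Once the formula is recognized as a homotopy invariant of TRS families, Theorem \ref{thm:MainResults_GPhomotopy} does the rest. A secondary, more routine obstacle is bookkeeping: making sure the reduction from the $2$-dimensional family of projectors to the one-dimensional family $\alpha$ transports each of $\Delta, \delta, \xi$ to the correspondingly reduced object without introducing normalization ambiguities, which I would handle by fixing once and for all the $\eps$ in the normal form \eqref{eqn:eps=J} and the base frame at $k_1 = 0$, exactly as in Section \ref{sec:d=d}.
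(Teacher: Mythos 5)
Your argument for $\delta=\mathcal{I}$ is valid, and it is a genuinely different route from the paper's: you combine the two ``if and only if'' characterizations (the equivalence $\mathcal{I}=0 \Leftrightarrow$ existence of a continuous symmetric frame from Theorem \ref{thm:MainResults_BB}(2), and the equivalence $\delta=0 \Leftrightarrow$ existence of such a frame, imported from \cite{FiorenzaMonacoPanati16}) and conclude because both indices take values in $\Z_2$. The paper instead proves Theorem \ref{thm:GP_vs_FMP} by a direct computation: with the parallel-transport frame one has $U(0,k_2)\equiv\Id$, so $\deg([\det U])$ reduces to the integral along $k_1=1/2$, which via \eqref{eqn:gamma_vs_U} is literally the degree of $\det\gamma$ entering \eqref{eqn:IntegralRueda_c}. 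The direct identification is what later powers the geometric formula; your softer argument establishes the equality but yields no formula. For $\Delta=\delta$ you essentially cite \cite{FiorenzaMonacoPanati16}; that is correct as far as it goes, but it bypasses what the paper actually does, namely an independent, self-contained proof obtained by showing that $\delta$ equals the Fu--Kane expression (A8): starting from \eqref{eqn:IntegralRueda_b}, the term $\tr(\alpha^*\partial_{k_2}\alpha)$ is converted into the Berry curvature through the variation-of-constants formula for the parallel transport, and the endpoint term into Berry-connection integrals via the $d=1$ results (Theorem \ref{thm:deltaGeom}); your fallback ``evaluate both at the TRIM'' sketch is too vague to replace this.

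The genuine gap is in the step $\xi=\mathcal{I}$. Prodan's invariant is \emph{not} a functional of the matching family $\set{\alpha(k_2)}$ alone: by \eqref{eqn:gammahat}--\eqref{eqn:Prodan} it involves the matrices $\widehat{\gamma}(k_*)$ representing the \emph{half-period} transport $P(0,k_*)\,T_{k_*}(0,-1/2)\,P(-1/2,k_*)$ at $k_*\in\set{0,1/2}$, data which is simply not contained in $\alpha(k_2)=$ (matrix of) $T_{k_2}(1,0)$. Consequently ``show Prodan's formula is a homotopy-invariant function of $\alpha$ and evaluate on a generator'' is not even well-posed until one has re-expressed $\xi$ through $\alpha$ -- and that re-expression is the whole content of the proof: periodicity and TRS of the transport give $\alpha(k_*)=\widehat{\gamma}(k_*)\,\eps\,\widehat{\gamma}(k_*)^t\,\eps^{-1}$, so $\eps^{-1}\widehat{\gamma}(k_*)\,\eps$ is an admissible $\gamma(k_*)$ in \eqref{eqn:gammas}, and then \eqref{eqn:Prodan} coincides term by term with the TRIM formula \eqref{eqn:-1I} for $(-1)^{\mathcal{I}(\alpha)}$, the square-root branches being fixed by the continuous even argument of Lemma \ref{lemma:detalpha=1}. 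Once this is done no appeal to Theorem \ref{thm:MainResults_GPhomotopy} or to a generating example is needed; without it, your plan stalls exactly at the point you yourself flag as the main obstacle. (Minor point: your candidate generator $\diag(\eu^{\iu\pi k},\eu^{-\iu\pi k})$ is not $\Z$-periodic; a correct representative with $\mathcal{I}=1$ is e.g.\ $\diag(\eu^{2\pi\iu k},\eu^{-2\pi\iu k})$.)
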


Each of the equalities in the above statement will be derived in Section \ref{sec:Z2} (compare Theorems~\ref{thm:GP_vs_FMP}, \ref{thm:deltaGeom} and \ref{thm:GP=Prodan}).

The first equality in the statement of Theorem \ref{thm:MainResults_Z2} was proved in \cite[Thm.~5]{FiorenzaMonacoPanati16}; besides, the relation $\Delta = \mathcal{I} \in \Z_2$ was already shown in \cite[Prop.~7.6]{GrafPorta13}. Our strategy is different. We first give a direct proof of the equality between the Graf--Porta and the Fiorenza--Monaco--Panati invariants (Theorem~\ref{thm:GP_vs_FMP}). In view of this equality, we are able to prove independently the relation between the invariant $\delta$ and the Fu--Kane index $\Delta$, in its formulation given by \cite[Eqn.~(A8)]{FuKane06}, making moreover our proof of Theorem~\ref{thm:MainResults_Z2} self-contained. As an interesting byproduct, which we also list among the main results of the paper, we obtain a \emph{geometric expression} for the topological invariant $\delta \in \Z_2$ which contains only the Berry connection and the Berry curvature associated to the family of projectors (compare Theorem~\ref{thm:deltaGeom}).

\begin{theorem} \label{thm:MainResults_delta}
Let $\set{P(\kk)}_{\kk \in \R^2}$ be a $2$-dimensional family of projectors as in Assumption \ref{assum:proj}. Then the associated $\Z_2$ invariant $\delta$ can be computed as
\[ \delta = \frac{1}{2\pi} \iint_{\B\sub{half}} \mathcal{F} - \frac{1}{2\pi} \left( \oint_{\Gamma_{1/2}} \A - \oint_{\Gamma_0} \A \right) \bmod 2, \]
where:
\begin{itemize}
\item $\B\sub{half} := \set{(k_1, k_2) \in \R^2: 0 \le k_1 \le 1, \: 0 \le k_2 \le 1/2}$, $\Gamma_0 = \set{(k_1,k_2) \in \B\sub{half} : k_2=0}$, $\Gamma_{1/2} = \set{(k_1,k_2) \in \B\sub{half} : k_2=1/2}$ (positively oriented with respect to $k_1$);
\item $\mathcal{F}$ is the \emph{Berry curvature $2$-form} \eqref{eqn:BerryCurvature} associated to $\set{P(\kk)}_{\kk \in \R^2}$;
\item $\A$ is the \emph{Berry connection $1$-form} \eqref{eqn:ABerry} computed with respect to a continuous and \emph{symmetric} Bloch frame.
\end{itemize}
\end{theorem}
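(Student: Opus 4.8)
\emph{Outline of the proof.} The plan is to read the geometric formula off from the Graf--Porta index, using the equality $\delta = \mathcal{I}$ established in Theorem~\ref{thm:GP_vs_FMP}. Recall the construction of Section~\ref{sec:d=d}: one starts from a continuous and symmetric Bloch frame $\Phi$ on the TRS-invariant line $\set{k_1 = 0}$ (available by Theorem~\ref{thm:MainResults_BB}(\ref{item:Result_d=1})), parallel-transports it along the $k_1$-direction by means of the Berry connection to a continuous frame $\widetilde\Xi$ on $\R^2$, and records the failure of $1$-periodicity in $k_1$ through a continuous, $\Z$-periodic and TRS family of unitary matrices $\set{\alpha(k_2)}_{k_2 \in \R}$, so that $\mathcal{I} = \mathcal{I}(\alpha)$. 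The symmetric Bloch frame $\Xi$ of the statement is related to $\widetilde\Xi$ by a continuous gauge transformation $U$ which is $1$-periodic in $k_1$, namely $\alpha(k_2) = U(0,k_2)^{-1}U(1,k_2)$.

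\emph{Step 1: reduction to Berry phases at the symmetry lines.} Since $\widetilde\Xi$ is parallel-transported in $k_1$, the $dk_1$-component of its matrix-valued Berry connection vanishes, so $\det\alpha(k_2)$ equals the determinant of the holonomy of the horizontal loop $\Gamma_{k_2} := [0,1]\times\set{k_2}$ for the Berry connection, i.e.\ $\det\alpha(k_2) = \exp\big(-\iu\oint_{\Gamma_{k_2}}\A\big)$ with $\A$ the abelian Berry connection computed in the symmetric frame $\Xi$ (the transition $U$ being $1$-periodic in $k_1$ does not affect this loop integral). Unwinding the definition \eqref{eqn:rueda} of $\mathcal{I}(\alpha)$ --- and its reformulations from Section~\ref{sec:GrafPorta} --- through this identity, the winding of $k_2 \mapsto \det\alpha(k_2)$ over $[0,1/2]$ contributes the difference $\oint_{\Gamma_{1/2}}\A - \oint_{\Gamma_0}\A$, while the Pfaffians $\Pf(\eps^{-1}\alpha(0))$, $\Pf(\eps^{-1}\alpha(1/2))$ at the endpoints --- where, thanks to the \emph{symmetry} of $\Xi$, the matrices $\alpha(0)$, $\alpha(1/2)$ are symplectic unitaries and hence $\oint_{\Gamma_0}\A, \oint_{\Gamma_{1/2}}\A \in \pi\Z$ --- supply the canonical square roots that turn the continuously-continued square root of $\det\alpha$ into a $\pm 1$. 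Carrying out this bookkeeping yields
\[ \delta \;=\; \mathcal{I} \;=\; \frac{1}{\pi}\left(\oint_{\Gamma_0}\A - \oint_{\Gamma_{1/2}}\A\right) \bmod 2. \]

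\emph{Step 2: the Stokes rewriting and comparison with Fu--Kane.} Applying Stokes' theorem to the abelian Berry curvature $\mathcal{F} = d\A$ on the rectangle $\B\sub{half}$, and using that $\Xi$ is $1$-periodic in $k_1$ so that the two vertical edges cancel, one gets $\iint_{\B\sub{half}}\mathcal{F} = \oint_{\Gamma_0}\A - \oint_{\Gamma_{1/2}}\A$. Writing $\tfrac1\pi(\oint_{\Gamma_0}\A - \oint_{\Gamma_{1/2}}\A) = \tfrac{1}{2\pi}(\oint_{\Gamma_0}\A - \oint_{\Gamma_{1/2}}\A) + \tfrac{1}{2\pi}(\oint_{\Gamma_0}\A - \oint_{\Gamma_{1/2}}\A)$ and substituting for the first summand turns the identity of Step 1 into
\[ \delta \;=\; \frac{1}{2\pi}\iint_{\B\sub{half}}\mathcal{F} \;-\; \frac{1}{2\pi}\left(\oint_{\Gamma_{1/2}}\A - \oint_{\Gamma_0}\A\right) \bmod 2, \]
which is exactly the formula written down by Fu and Kane in \cite[Eqn.~(A8)]{FuKane06} for their $\Z_2$ index $\Delta$; this also delivers the last equality $\delta = \Delta$ of Theorem~\ref{thm:MainResults_Z2}. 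Independence of the right-hand side of the chosen symmetric frame (modulo $2$) follows from Theorem~\ref{thm:MainResults_GPhomotopy}, since two symmetric frames differ by a continuous, $\Z$-periodic, TRS family of unitary matrices, which carries $\mathcal{I} = 0$.

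\emph{The main obstacle.} The differential-geometric manipulations (Stokes, differentiation of the loop integral, $d\A = \mathcal{F}$) are routine; the delicate point is the sign-and-branch bookkeeping in Step 1 --- matching the continuously-continued square root of $\det\alpha$ to the two canonical Pfaffians at the high-symmetry momenta and verifying that the unavoidable $2\pi$-valued ambiguities reorganize into the stated combination rather than shifting its parity. This is precisely where the hypothesis that the frame be \emph{symmetric}, and not merely continuous and periodic, is indispensable: it is what forces $\alpha(0)$ and $\alpha(1/2)$ to be symplectic, hence what pins down the relevant square roots and makes the Berry phases $\oint_{\Gamma_0}\A$, $\oint_{\Gamma_{1/2}}\A$ $\pi$-integral.
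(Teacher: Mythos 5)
Your reduction starts from the same point as the paper ($\delta = \mathcal{I}(\alpha)$ via Theorem \ref{thm:GP_vs_FMP}, plus the holonomy identity $\det\alpha(k_2)=\exp(-\iu\oint_{\Gamma_{k_2}}\A)$ of Proposition \ref{prop:detalphaBerry}), but the decisive step of Step 1 --- ``the winding of $k_2\mapsto\det\alpha(k_2)$ over $[0,1/2]$ contributes $\oint_{\Gamma_{1/2}}\A-\oint_{\Gamma_0}\A$'' --- is exactly where the argument breaks. The holonomy identity fixes $\oint_{\Gamma_{k_2}}\A$ only modulo $2\pi$, so the continuously accumulated winding $\int_0^{1/2}(\det\alpha)^{-1}(\det\alpha)'\,\di k_2$ agrees with $-\bigl(\oint_{\Gamma_{1/2}}\A-\oint_{\Gamma_0}\A\bigr)$ only up to an integer multiple of $2\pi$ of uncontrolled parity --- and after dividing by $2\pi$ and reducing mod $2$ that integer is precisely what is at stake. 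The correct identification of the winding term is with the curvature flux $\iint_{\B\sub{half}}\mathcal{F}$; this is the analytic heart of the paper's proof (the trace identity \eqref{eqn:A->S}, the variation formula for parallel transport of Lemma \ref{lemma:variation_of_constants}, and \eqref{eqn:BerryCurv}), and nothing in your outline replaces it. Your Stokes step in Step 2 has the same defect: $\iint_{\B\sub{half}}\mathcal{F}=\oint_{\Gamma_0}\A-\oint_{\Gamma_{1/2}}\A$ needs a continuous frame on $\B\sub{half}$, periodic in $k_1$, whose restrictions to $\Gamma_0$ and $\Gamma_{1/2}$ are the \emph{symmetric} frames used to compute $\A$ there; such an extension is in general obstructed (the obstruction integer is the same one you dropped), and with an arbitrary periodic frame Stokes yields the loop integrals of a \emph{different} connection, off from the symmetric-frame ones by $2\pi$ times winding numbers of uncontrolled parity. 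In fact your intermediate formula $\delta=\frac1\pi\bigl(\oint_{\Gamma_0}\A-\oint_{\Gamma_{1/2}}\A\bigr)\bmod 2$ is false: a TRS-preserving deformation of the family supported in a small strip around $\Gamma_0$ leaves $\delta$ unchanged but moves $\oint_{\Gamma_0}\A$ continuously (the paper stresses before Proposition \ref{prop:detalphaBerry} that these Berry phases are \emph{not} invariants), so the right-hand side cannot constantly equal $\delta$; in the true formula it is the curvature term that compensates.

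The ``bookkeeping'' facts you invoke to close Step 1 are also incorrect. TRS at the invariant loops gives $\eps\,\alpha(0)=\alpha(0)^t\,\eps$, i.e.\ $\eps\,\alpha(0)$ is skew-symmetric --- it does \emph{not} make $\alpha(0)$ symplectic --- and $\det\alpha(0)$ is an arbitrary phase, so $\oint_{\Gamma_0}\A,\oint_{\Gamma_{1/2}}\A\notin\pi\Z$ in general; with symmetric frames they are only well defined mod $4\pi$ (Remark \ref{rmk:Gauge}). What \emph{is} pinned at the TRS loops are the square roots $\det\gamma(0),\det\gamma(1/2)$ of \eqref{eqn:gammas}/Lemma \ref{lemma:factor1d}, and the paper converts that endpoint term of \eqref{eqn:IntegralRueda_b} into the difference of boundary Berry phases through the one-dimensional Proposition \ref{prop:BerryConnection} (\eqref{eqn:gammaBerry1d}) applied on $\Gamma_0$ and $\Gamma_{1/2}$. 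So the correct pairing is: winding term $\leftrightarrow$ curvature flux (parallel-transport computation), Pfaffian/endpoint term $\leftrightarrow$ boundary Berry phases (1d result); your outline inverts this pairing and rests on a quantization that does not hold, leaving a genuine gap at the central step.
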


In particular, our derivation of the above formula employs directly the family of projections, without any reference to the so-called sewing matrix and the associated Pfaffian formulation of the Fu--Kane index (compare Section~\ref{sec:TRIM}).

\begin{remark}[Analytic Bloch frames] \label{rmk:AnalyticBB}
Since the family of projectors $\set{P(\kk)}_{\kk \in \R^d}$ satisfies the analyticity assumption \ref{item:smooth}, one may ask whether \emph{real analytic} Bloch frames can be constructed for $\set{P(\kk)}_{\kk \in \R^d}$. Arguing as in \cite[Lemma 2.3]{CorneanHerbstNenciu15}, one can indeed show that whenever a \emph{continuous} Bloch frame exists, than it can be easily modified into a \emph{real analytic} one; the procedure preserves the symmetries (periodicity and TRS) whenever they are required. The result of Theorem \ref{thm:MainResults_BB} can therefore be strengthened and gives the existence of real analytic Bloch frames with the prescribed symmetry properties.
\end{remark}

\subsection{Relation with Wannier bases}

Following the discussion in the Introduction, we deduce the consequences that Theorem \ref{thm:MainResults_BB} (combined with the observation in Remark \ref{rmk:AnalyticBB}) has in the context of \emph{Wannier bases} for periodic Hamiltonians. For simplicity, we will consider the framework of continuous spin-$1/2$ Hamiltonians, but we stress that the result actually depends only on the \emph{symmetries} of the system (periodicity and fermionic TRS), and hence can be applied as well to discrete or tight-binding Hamiltonians, as for example the Fu--Kane Hamiltonian \cite{FuKane06} of quantum spin Hall insulators. We set first of all the following

\begin{definition}[Wannier basis] \label{dfn:Wannier}
Let $H$ be a $\Z^d$-periodic Hamiltonian on the spin-$1/2$ single-particle Hilbert space $L^2(\R^d) \otimes \C^2$, and assume that $H$ commutes with the time-reversal operator
\[ \Theta := (\Id \otimes \eu^{-\iu \pi \sigma_2/2}) C = - \iu \left( \Id \otimes \sigma_2 \right) C, \qquad \sigma_2 = \begin{pmatrix} 0 & -\iu \\ \iu & 0 \end{pmatrix}, \]
where $\sigma_2$ is the second Pauli matrix and $C$ denotes the complex-conjugation operator. Let $\Sigma \subset \sigma(H)$ be given by the union of a finite number $m$ of energy bands, which are well separated from the rest of the spectrum of $H$. Denote by $P_\Sigma$ the spectral projector of $H$ associated to $\Sigma$. Then a set of functions $\set{w_a}_{1 \le a \le m} \subset L^2(\R^d) \otimes \C^2$ is called a \emph{Wannier basis} if the following hold:
\begin{itemize}
 \item the set $\set{w_a(\cdot \, - \mathbf{n}) : 1 \le a \le m, \mathbf{n} \in \Z^d}$ gives an \emph{orthonormal basis} of $\Ran P_\Sigma$;
 \item the functions $w_a \in L^2(\R^d) \otimes \C^2$ are \emph{exponentially localized} for all $a \in \set{1, \ldots, m}$, namely there exists $b_0>0$ such that
 \[ \int_{\R^d} \eu^{b \, |\mathbf x|} \, \left| w_a(\mathbf x) \right|^2 \, \di \mathbf x < + \infty \quad \text{for all } b \in [0, b_0). \]
\end{itemize}

The Wannier basis is called \emph{time-reversal symmetric} (TRS) if moreover
\[ w_b(\mathbf x) = \sum_{a=1}^{m} \left[ \Theta w_a(\mathbf x) \right] \, \eps_{a b} \quad \text{for all } \mathbf x \in \R^d, \: b \in \set{1, \ldots, m}, \]
where $\eps = (\eps_{ab})_{1 \le a,b \le m}$ is a unitary and skew-symmetric marix.
\end{definition}

Notice that we include exponential localization as a defining property of a Wannier basis. Also observe that, choosing $\eps$ in the form \eqref{eqn:eps=J}, the TRS condition for a Wannier basis reads explicitly
\[ \begin{pmatrix} w_{2j-1}^\uparrow(\mathbf x) \\ w_{2j-1}^\downarrow(\mathbf x) \end{pmatrix} = \begin{pmatrix} \overline{w_{2j}^\downarrow(\mathbf x)} \\[5pt] \overline{w_{2j}^\uparrow(\mathbf x)} \end{pmatrix}, \quad w_a(\mathbf x) = \begin{pmatrix} w_a^\uparrow(\mathbf x) \\ w_a^\downarrow(\mathbf x) \end{pmatrix} \in \C^2, \quad j \in \set{1, \ldots, m/2}. \]

The strenghtened version of Theorem \ref{thm:MainResults_BB} argued in Remark \ref{rmk:AnalyticBB} now implies the following

\begin{theorem} \label{thm:Wannier}
Let $H$ be an operator as in Definition \ref{dfn:Wannier}, and let $\Sigma$ be an isolated set of $m$ energy bands.
\begin{enumerate}
 \item \label{item:Result_d=1} Let $d=1$. Then a TRS Wannier basis for $\Sigma$ exists and can be constructed.
 \item \label{item:Result_d=2} Let $d=2$. Then a Wannier basis for $\Sigma$ exists and can be constructed. If moreover $\mathcal{I} = 0 \in \Z_2$, where $\mathcal{I}$ is defined in \eqref{eqn:rueda}, then this Wannier basis can be constructed so that it is also TRS.
\end{enumerate}
\end{theorem}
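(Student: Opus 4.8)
The plan is to transfer the momentum-space construction of Theorem~\ref{thm:MainResults_BB} to position space via the Bloch--Floquet--Zak transform, using the strengthening to \emph{real analytic} frames recorded in Remark~\ref{rmk:AnalyticBB}. First I would recall the unitary Bloch--Floquet--Zak transform $\U$, which identifies $L^2(\R^d) \otimes \C^2$ with the space of covariant (in the Zak convention, genuinely $\Z^d$-periodic) $L^2_{\mathrm{loc}}$ maps $\R^d \ni \kk \mapsto \psi(\kk) \in \Hi$, $\Hi := L^2(\T^d) \otimes \C^2$, that are square-integrable over a fundamental cell. Under $\U$ the periodic Hamiltonian $H$ becomes fibered, $\U H \U^{-1} = \int_{\T^d}^{\oplus} H(\kk)\,\di\kk$, and the spectral projector $P_\Sigma$ becomes multiplication by a family of fiber projectors $P(\kk)$. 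Because $\Sigma$ consists of $m$ bands isolated from the rest of the spectrum, the Riesz formula $P(\kk) = \tfrac{1}{2\pi\iu}\oint_{\mathcal C}(H(\kk)-z)^{-1}\,\di z$ together with analytic perturbation theory shows that $\kk \mapsto P(\kk)$ is real analytic (in fact extends holomorphically to a complex strip $\set{\kk + \iu \mathbf y : |\mathbf y| < b_0}$, by the gap condition and a Combes--Thomas bound) and $\Z^d$-periodic; moreover $[H,\Theta] = 0$ together with $\Theta^2 = -\Id$ produces an antiunitary $\theta$ on $\Hi$ with $\theta^2 = -\Id$ and $P(-\kk) = \theta P(\kk) \theta^{-1}$. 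This is precisely the reduction carried out in \cite[Sec.~1.2]{CorneanHerbstNenciu15}, so $\set{P(\kk)}_{\kk \in \R^d}$ satisfies Assumption~\ref{assum:proj}.

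Next I would apply Theorem~\ref{thm:MainResults_BB} in the version strengthened by Remark~\ref{rmk:AnalyticBB}: for $d=1$ this yields a real analytic, periodic and TRS Bloch frame $\set{\Xi_a}_{a=1}^{m}$; for $d=2$ it always yields a real analytic and periodic one, which is in addition TRS when $\mathcal I = 0 \in \Z_2$. In all cases ``real analytic'' here means, as in the proof of \cite[Lemma~2.3]{CorneanHerbstNenciu15}, that each $\Xi_a$ extends to a bounded $\Hi$-valued holomorphic map on a strip $\set{\kk + \iu \mathbf y : |\mathbf y| < b_1}$ for some $b_1 > 0$, continuous up to the boundary.

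I then define the candidate Wannier functions as the inverse transforms $w_a := \U^{-1}\Xi_a \in L^2(\R^d) \otimes \C^2$, equivalently as the Fourier-type coefficients of $\kk \mapsto \Xi_a(\kk)$, and check three properties. \emph{Orthonormality of the translates:} $\U$ intertwines translation by $\mathbf n \in \Z^d$ with multiplication by $\eu^{2\pi\iu\, \mathbf n \cdot \kk}$, so Parseval over the fundamental cell, combined with the fact that $\set{\Xi_a(\kk)}_{a=1}^{m}$ is for every $\kk$ an orthonormal basis of $\Ran P(\kk)$, gives that $\set{w_a(\cdot - \mathbf n) : 1 \le a \le m, \mathbf n \in \Z^d}$ is orthonormal with closed span $\Ran P_\Sigma$. \emph{Exponential localization:} holomorphy of $\Xi_a$ on the strip of width $b_1$ together with the Paley--Wiener theorem gives $\int_{\R^d} \eu^{b|\mathbf x|}\,|w_a(\mathbf x)|^2\,\di\mathbf x < \infty$ for every $b \in [0,b_1)$. \emph{Time-reversal symmetry:} when the frame is TRS, pushing $\Xi_b(-\kk) = \sum_a [\theta\Xi_a(\kk)]\eps_{ab}$ through $\U^{-1}$, and using that $\U$ conjugates $\Theta$ to the fiberwise action of $\theta$ composed with $\kk \mapsto -\kk$, turns it into $w_b(\mathbf x) = \sum_a [\Theta w_a(\mathbf x)]\eps_{ab}$, with the same unitary skew-symmetric $\eps$.

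The only genuinely technical point — hence the expected main obstacle — is the exponential-localization step: one must ensure that the analytic continuation furnished by the argument of \cite[Lemma~2.3]{CorneanHerbstNenciu15} is uniform on a full strip (bounded there, with good boundary behaviour) so that the Paley--Wiener estimate applies. Since the analyticity hypothesis \ref{item:smooth} already forces $P(\kk)$ itself onto a complex strip and the frame regularization of \emph{loc.\ cit.} is performed on such a strip, this is handled by invoking that lemma; everything else is a routine translation of properties across the unitary $\U$, and the theorem follows.
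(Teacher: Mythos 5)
Your proposal is correct and follows essentially the same route as the paper's own proof: reduce via the Bloch--Floquet--Zak transform to a family of projectors satisfying Assumption~\ref{assum:proj} (as in \cite[Sec.~2.1]{CorneanHerbstNenciu15}), apply Theorem~\ref{thm:MainResults_BB} strengthened by Remark~\ref{rmk:AnalyticBB}, and transport the resulting real analytic (possibly TRS) Bloch frame back to position space, with orthonormality, exponential localization (Paley--Wiener from analyticity on a strip), and the TRS relation carried through $\U^{-1}$ exactly as in the paper's computation.
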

\begin{proof}
The Bloch-Floquet-Zak transform
\[ \U \colon L^2(\R^d) \otimes \C^2 \to \int_{\T^d} \Hf \, \di \kk, \quad \Hf := L^2(\T^d) \otimes \C^2, \]
fibers the spectral projector $P_\Sigma$ of $H$ associated to the spectral subset $\Sigma \subset \sigma(H)$ along the crystal momentum $\kk \in \R^d$. Moreover, it is easily verified that $\U \, \Theta \, \U^{-1}$ acts on the space of fixed crystal momentum $\kk$ as
\[ (\U \, \Theta \, \U^{-1} \Xi)(\kk, \mathbf x) = (\theta \Xi)(-\kk, \mathbf x), \quad \Xi(\kk, \cdot) \in \Hf, \]
where $\theta \colon \Hf \to \Hf$ is the antiunitary operator given by
\[ \theta := (\Id_{\Hf} \otimes \eu^{-\iu \pi \sigma_2/2}) C = - \iu \left( \Id_{\Hf} \otimes \sigma_2 \right) C. \]
It follows then that the family of fiber projectors $\set{P_\Sigma(\kk)}_{\kk \in \R^d}$ is unitarily equivalent to a family of projectors $\set{P(k)}_{\kk \in \R^d}$ satisfying Assumption \ref{assum:proj} (compare \cite[Sec.~2.1]{CorneanHerbstNenciu15}).

Applying Theorem \ref{thm:MainResults_BB} and Remark \ref{rmk:AnalyticBB} to the latter, we obtain real analytic periodic Bloch frames, which may be also TRS depending on the dimension and on the vanishing of the topological obstruction $\mathcal{I}$. When mapped back to $\Ran P_\Sigma(\kk)$ by the above-mentioned unitary transformation, and then to $\Ran P_\Sigma \subset L^2(\R^d) \otimes \C^2$ via the Bloch-Floquet-Zak antitransform, these produce the desired Wannier bases. Notice indeed that the orthonormality of the Wannier basis follows from the orthonormality of the Bloch frame, and its exponential localization from the real analyticity in $\kk$ of the Bloch frame. As far as the TRS condition is concerned, if $\set{\Xi_a(\kk)}_{1 \le a \le m}$ is an orthonormal basis in $\Ran P_\Sigma(\kk)$ satisfying
\[ \Xi_b(-\kk) = \sum_{a=1}^{m} \left[\theta \Xi_a(\kk)\right] \, \eps_{ab}, \quad \kk \in \R^d, \: b \in \set{1, \ldots, m}, \]
then the associated Wannier functions
\[ w_b(\mathbf x) := \left(\U^{-1} \Xi_b\right)(\mathbf x) = \frac{1}{(2 \pi)^{d/2}} \int_{\R^d} \eu^{2 \pi \iu \kk \cdot \mathbf x} \, \Xi_b(\kk, \mathbf x) \, \di \kk \]
satisfies
\begin{align*}
w_b(\mathbf x) & = \frac{1}{(2 \pi)^{d/2}} \int_{\R^d} \eu^{2 \pi \iu \kk \cdot \mathbf x} \left( \sum_{a=1}^{m} \left[\theta \Xi_a(-\kk, \mathbf x)\right] \, \eps_{ab} \right) \di \kk \\
& = \sum_{a=1}^{m} \left[ \Theta \left( \frac{1}{(2 \pi)^{d/2}} \int_{\R^d} \eu^{-2 \pi \iu \kk \cdot \mathbf x} \, \Xi_a(-\kk, \mathbf x) \, \di \kk \right) \right] \eps_{ab} \\
& = \sum_{a=1}^{m} \left[ \Theta w_a(\mathbf x) \right] \eps_{ab}.
\end{align*}
The Wannier basis is then also TRS, and this concludes the proof.
\end{proof}

\subsection{Right action of unitary matrices on frames} \label{sec:unitary}

Before starting to attack Problem~\ref{pbl:Bloch}, we introduce some further notation. Let $\Fr(m, \Hi)$ denote the set of \emph{$m$-frames}, namely $m$-tuples of orthonormal vectors in $\Hi$. If $\Xi = \set{\Xi_1, \ldots, \Xi_m}$ is an $m$-frame, then we can obtain a new frame in $\Fr(m,\Hi)$ by means of a unitary matrix $M \in U(m)$, setting
\[ (\Xi \act M)_b := \sum_{a = 1}^{m} \Xi_a M_{ab}, \quad b \in \set{1, \ldots, m}. \]
This defines a free right action of $U(m)$ on $\Fr(m,\Hi)$.

Moreover, we can extend the action of the time-reversal operator $\theta \colon \Hi \to \Hi$ to $m$-frames, by setting
\[ (\theta \Xi)_a := \theta \Xi_a \qquad \text{for } \Xi = \set{\Xi_1, \ldots, \Xi_m} \in \Fr(m,\Hi). \]
Notice that, by the antilinearity of $\theta$, one has
\[ \theta (\Xi \act M) = (\theta \Xi) \act \overline{M}, \quad \text{for all } \Xi \in \Fr(m,\Hi), \: M \in U(m), \]
where $\overline{M}$ denotes the complex conjugate matrix.

We can recast properties \ref{item:F2} and \ref{item:F3} for a Bloch frame in this notation as
\begin{equation} \label{eqn:FramePeriodic}
\Xi(\kk + \mathbf{n}) = \Xi(\kk), \quad \text{for all } \kk \in \R^d, \:\mathbf{n} \in \Z^d,
\end{equation}
and
\begin{equation} \label{eqn:FrameTR}
\Xi(-\kk) = \theta \Xi(\kk) \act \eps, \quad \text{for all } \kk \in \R^d.
\end{equation}

Observe that also the action of a unitary operator $W \in \U(\Hi)$ can be extended component-wise to frames, setting
\[ (W \, \Xi)_a := W \, \Xi_a \qquad \text{for } \Xi = \set{\Xi_1, \ldots, \Xi_m} \in \Fr(m,\Hi). \]
This action commutes with the free right action of $U(m)$ defined above: $(W \Xi) \act M = W (\Xi \act M)$.

\subsection{Degree of a family of unitary matrices} \label{sec:DegreeUnitary}

We collect here some auxiliary results that will be used several times in the rest of the paper. These results concern the \emph{topological degree} of a family of unitary matrices defined on a circle $\T$, or equivalently of a periodic family of unitary matrices, when we identify a period with $\T$.

Recall \cite[Thm.~17.3.1]{DubrovinNovikovFomenko85} that the homotopy class of a continuous map $\varphi \colon \T \to \T$ of the circle onto itself identifies an element in the homotopy group $\pi_1(\T) \simeq \Z$. The integer associated to such homotopy class is called its \emph{degree} (or \emph{winding number}), denoted by $\deg([\varphi]) \in \Z$. Since any continuous map is homotopic to a smooth map, this integer can be computed via the integral Cauchy formula \cite[\S~13.4(b)]{DubrovinNovikovFomenko85}
\begin{equation} \label{eqn:degree}
\deg([\varphi]) = \frac{1}{2 \pi \iu} \, \oint_{\T} \varphi(z)^{-1} \, \partial_z \varphi(z) \, \di z,
\end{equation}
where $z$ is a variable running on the torus $\T$. Notice that, since $\deg$ is a group homomorphism, we have that $\deg([\overline{\varphi}]) = \deg([\varphi^{-1}]) = - \deg([\varphi])$.

Similarly, the homotopy class of a periodic map $\beta \colon \R \to U(m)$, $\beta(k) = \beta(k+1)$, selects an element in the homotopy group $\pi_1(U(m))$. It can be shown \cite[Ch.~8, Sec.~12]{Husemoller94} that the latter group is isomorphic to $\Z$, via the map $\pi_1(\det) \colon \pi_1(U(m)) \to \pi_1(U(1))$. This means that the homotopy class of a periodic map $\beta \colon \R \to U(m)$ is characterized by an integer, which is the degree of its determinant: $\deg([\det \beta]) \in \Z$.

We now give an alternative formula for this integer. We appeal to the following
\begin{lemma} \label{lemma:DegUnitary}
Let $\beta \colon \R \to U(m)$ be a smooth $\Z$-periodic map. Define
\[ B(k) := \det \beta(k), \quad D(k) := \tr \left( \beta(k)^* \beta'(k) \right), \]
where $\tr(\cdot)$ denotes the trace in $\C^m$ and the prime means derivative with respect to $k$. Then
\[ B(k)^{-1} \, B'(k) = D(k). \]
\end{lemma}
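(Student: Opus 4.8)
The plan is to prove the identity $B(k)^{-1} B'(k) = D(k)$ by a direct computation using Jacobi's formula for the derivative of a determinant. Recall that for any smooth family of matrices $k \mapsto A(k)$ one has
\[ \frac{\di}{\di k} \det A(k) = \det A(k) \cdot \tr\!\left( A(k)^{-1} A'(k) \right) \]
wherever $A(k)$ is invertible. Applying this with $A(k) = \beta(k)$, which is unitary and hence invertible with $\beta(k)^{-1} = \beta(k)^*$, gives immediately
\[ B'(k) = \frac{\di}{\di k} \det \beta(k) = \det \beta(k) \cdot \tr\!\left( \beta(k)^* \beta'(k) \right) = B(k) \, D(k). \]
Since $B(k) = \det \beta(k)$ is a nonvanishing scalar (indeed $|B(k)| = 1$ as $\beta(k)$ is unitary), we may divide through by $B(k)$ to conclude $B(k)^{-1} B'(k) = D(k)$, as claimed.

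Alternatively, if one prefers not to invoke Jacobi's formula as a black box, I would establish it in this special case by writing $\beta(k) = \exp(X(k))$ locally (possible since $U(m)$ is a Lie group and $k \mapsto \beta(k)$ is smooth, at least on small intervals after a constant unitary shift) and using $\det \exp(X) = \exp(\tr X)$, or by differentiating the multilinear expansion of the determinant along its rows and recognizing the result as $\det\beta \cdot \tr(\beta^{-1}\beta')$. Both routes are routine. Periodicity of $\beta$ plays no role in the pointwise identity itself; it is only relevant for the subsequent interpretation of the integral of $D(k)$ as a winding number via \eqref{eqn:degree}.

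There is no real obstacle here: the statement is essentially a restatement of Jacobi's formula specialized to a unitary family, and the only mild care needed is to note that $\det\beta(k) \neq 0$ so that the division is legitimate — which is automatic for unitary matrices. The lemma is a bookkeeping device that will feed into expressing $\deg([\det\beta])$ as $\frac{1}{2\pi\iu}\oint_{\T} D(k)\,\di k$, matching the form of \eqref{eqn:degree} with $\varphi = \det\beta$.
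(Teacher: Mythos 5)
Your proof is correct and amounts to the same computation as the paper's: both are just Jacobi's formula $\partial_k \det\beta = \det\beta \cdot \tr(\beta^{-1}\beta')$ specialized to a unitary family, where $\beta^{-1}=\beta^*$ and $\det\beta\neq 0$ makes the division legitimate. The only cosmetic difference is that the paper derives the needed instance inline, expanding $B(k+h)=B(k)\det\bigl[\Id+\beta(k)^*(\beta(k+h)-\beta(k))\bigr]$ to first order via $\det(\Id+H)=1+\tr(H)+\mathcal{O}(H^2)$, whereas you invoke Jacobi's formula as a known result (and sketch its proof as a fallback).
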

\begin{proof}
By unitarity of $\beta(k)$ we have that
\[ B(k+h) = B(k) \, \det \left( \beta(k)^* \, \beta(k+h) \right) = B(k) \, \det \left[ \Id + \beta(k)^* \left( \beta(k+h) - \beta(k) \right) \right]. \]
Since $\det (\Id+ H) = 1+ \tr(H) + \mathcal{O}(H^2)$ we get
\[ B(k+h) - B(k) = h \, B(k) \, D(k) + o(h). \]
This concludes the proof.
\end{proof}

It then follows that the degree of a continuous periodic map $\beta \colon \R \to U(m)$ can be also computed as
\begin{equation} \label{eqn:DegUn}
\deg([\det \beta]) = \frac{1}{2 \pi \iu} \oint_{\T} \tr \left( \beta(z)^* \partial_{z} \beta(z) \right) \, \di z.
\end{equation}

\goodbreak


\section{Symmetric Bloch frames: generalities} \label{sec:d=d}

We begin in this Section the analysis of Problem \ref{pbl:Bloch}, concerning the existence of continuous and periodic (or symmetric) Bloch frame for a $d$-dimensional family of projectors $\set{P(\kk)}_{\kk \in \R^d}$ satisfying Assumption \ref{assum:proj}. We take an approach which is suited to proceed inductively on the dimension $d$, namely to use an input frame in dimension $d-1$ to construct one in dimension~$d$. The main tool we will use to this aim is the \emph{parallel transport} associated to the family of projectors, which will be reviewed in a moment. After that, we will reduce Problem \ref{pbl:Bloch} on Bloch frames to an equivalent stament in terms of families of $m \times m$ unitary matrices. The latter will be then analyzed in the next Sections for $d \le 2$.

\subsection{Parallel transport} \label{sec:parallel}

As a starting point, we recall the definition of \emph{parallel transport} associated to a family of projectors $\set{P(\kk)}_{\kk \in \R^d}$ acting on an Hilbert space $\Hi$.

For $\kk = (k_1, \kk_\perp) \in \R^d$ (with $k_1 \in \R$ and $\kk_\perp = (k_2, \ldots, k_d) \in \R^{d-1}$), define
\begin{equation} \label{eqn:A_kperp(k1)}
A_{\kk_\perp}(k_1) := \iu \left[ \partial_{k_1} P(k_1, \kk_\perp), P(k_1, \kk_\perp) \right].
\end{equation}
Then $A_{\kk_\perp}(k_1)$ defines a self-adjoint operator on $\Hi$. For fixed $k_1' \in \R$, the solution to the problem
\begin{equation} \label{eqn:parallel_def}
\iu \, \partial_{k_1} T_{\kk_\perp}(k_1, k_1') = A_{\kk_\perp}(k_1) \, T_{\kk_\perp} (k_1, k_1'), \quad T_{\kk_\perp}(k_1', k_1') = \Id,
\end{equation}
defines a family of unitary operators, called the \emph{parallel transport unitaries}. The family satisfies the properties listed in the following result.

\begin{proposition} \label{prop:parallel}
Let $\set{P(\kk)}_{\kk \in \R^d}$ be a family of orthogonal projectors acting on an Hilbert space $\Hi$. Then the family of parallel transport unitaries $\set{T_{\kk_\perp}(k_1, k_1')}_{k_1, k_1' \in \R, \: \kk_\perp \in \R^{d-1}}$ defined in \eqref{eqn:parallel_def} satisfies the following properties:
\begin{enumerate}[label=$(\mathrm{T}_\arabic*)$,ref=$(\mathrm{T}_\arabic*)$]
 \item \label{item:T-smooth} if $\set{P(\kk)}_{\kk \in \R^d}$ satisfies \ref{item:smooth}, then for fixed $k_1' \in \R$ the map $\R^d \ni \kk = (k_1, \kk_\perp) \mapsto T_{\kk_\perp}(k_1, k_1') \in \U(\Hi)$ is real analytic;
 \item \label{item:T-periodic} if $\set{P(\kk)}_{\kk \in \R^d}$ satisfies \ref{item:periodic}, then for all $k_1, k_1' \in \R$ and $\kk_\perp \in \R^{d-1}$ 
 \[ T_{\kk_\perp}(k_1+1, k_1'+1) = T_{\kk_\perp}(k_1, k_1') \]
 and
 \[ T_{\kk_\perp + \mathbf{n}_\perp}(k_1, k_1') = T_{\kk_\perp}(k_1, k_1') \quad  \text{for } \mathbf{n}_\perp \in \Z^{d-1}; \]
 \item \label{item:T-TRS} if $\set{P(\kk)}_{\kk \in \R^d}$ satisfies \ref{item:TRS}, then for all $k_1, k_1' \in \R$ and $\kk_\perp \in \R^{d-1}$ 
 \[ T_{-\kk_\perp}(-k_1, -k_1') = \theta \, T_{\kk_\perp}(k_1, k_1') \, \theta^{-1}; \]
 \item \label{item:T-group} the group properties
 \[ T_{\kk_\perp}(k_1, k_1') \, T_{\kk_\perp}(k_1', k_1'') = T_{\kk_\perp}(k_1, k_1''), \quad T_{\kk_\perp}(k_1, k_1')^{-1} = T_{\kk_\perp}(k_1', k_1) \]
 hold for all $k_1, k_1', k_1'' \in \R$ and all $\kk_\perp \in \R^{d-1}$;
 \item \label{item:T-intertwine} the intertwining property
 \[ P(k_1, \kk_\perp) = T_{\kk_\perp}(k_1, k_1') \, P(k_1', \kk_\perp) \, T_{\kk_\perp}(k_1, k_1')^{-1} \]
 holds for all $k_1, k_1' \in \R$ and $\kk_\perp \in \R^{d-1}$ .
\end{enumerate}
\end{proposition}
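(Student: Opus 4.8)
The plan is to derive all six properties from elementary linear-ODE theory, the single nontrivial input being that the generator $A_{\kk_\perp}(k_1)$ in \eqref{eqn:A_kperp(k1)} is self-adjoint. Since $P(\kk) = P(\kk)^*$ gives $(\partial_{k_1}P)^* = \partial_{k_1}P$, and since $\overline{\iu} = -\iu$, one computes $A_{\kk_\perp}(k_1)^* = -\iu\,[\partial_{k_1}P,P]^* = -\iu\,(-[\partial_{k_1}P,P]) = A_{\kk_\perp}(k_1)$. Hence \eqref{eqn:parallel_def} is a linear ODE with a self-adjoint generator depending continuously on $k_1$ (indeed real-analytically on $(k_1,\kk_\perp)$ under \ref{item:smooth}); standard existence and uniqueness for linear ODEs gives a unique global solution, and differentiating $T_{\kk_\perp}(k_1,k_1')^*\,T_{\kk_\perp}(k_1,k_1')$ while using $A^*=A$ shows it equals $\Id$ for all $k_1$, i.e. the solution is unitary. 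Property \ref{item:T-smooth} is then the classical fact that the solution of an ODE with real-analytic coefficients and initial datum depends real-analytically on the evolution variable and on parameters. The rest of the proof rests on one principle: two maps $k_1 \mapsto W_i(k_1)$ solving the same linear ODE $\iu\,\partial_{k_1}W = B(k_1)\,W$ and agreeing at one value of $k_1$ coincide everywhere.

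For \ref{item:T-periodic}, property \ref{item:periodic} yields $A_{\kk_\perp}(k_1+1) = A_{\kk_\perp}(k_1)$ and $A_{\kk_\perp + \mathbf{n}_\perp}(k_1) = A_{\kk_\perp}(k_1)$ for $\mathbf{n}_\perp \in \Z^{d-1}$; then $k_1 \mapsto T_{\kk_\perp}(k_1+1,k_1'+1)$ solves \eqref{eqn:parallel_def} with the same generator and equals $\Id$ at $k_1 = k_1'$, so by uniqueness it equals $T_{\kk_\perp}(k_1,k_1')$, and likewise for the transverse shift. For \ref{item:T-group}, fix $k_1''$; the map $k_1 \mapsto T_{\kk_\perp}(k_1,k_1')\,T_{\kk_\perp}(k_1',k_1'')$ solves \eqref{eqn:parallel_def} and takes the value $T_{\kk_\perp}(k_1',k_1'')$ at $k_1 = k_1'$, hence coincides with $T_{\kk_\perp}(k_1,k_1'')$; taking $k_1'' = k_1$ gives $T_{\kk_\perp}(k_1,k_1')\,T_{\kk_\perp}(k_1',k_1) = \Id$, and unitarity (or just invertibility) turns this into $T_{\kk_\perp}(k_1,k_1')^{-1} = T_{\kk_\perp}(k_1',k_1)$.

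For the intertwining property \ref{item:T-intertwine}, I would show that $R(k_1) := T_{\kk_\perp}(k_1,k_1')^{-1}\,P(k_1,\kk_\perp)\,T_{\kk_\perp}(k_1,k_1')$ is independent of $k_1$. Writing $P' := \partial_{k_1}P$ and using $\partial_{k_1}(T^{-1}) = \iu\,T^{-1}A$ (which follows from $\partial_{k_1}T = -\iu AT$), a direct computation gives $\partial_{k_1}R = T^{-1}\big(\iu\,[A,P] + P'\big)\,T$. One then evaluates the bracket using only $P^2 = P$: differentiating gives $P'P + PP' = P'$, which upon left-multiplication by $P$ yields $PP'P = 0$; together these give $[[P',P],P] = P'P + PP' - 2PP'P = P'$, so $[A,P] = \iu\,[[P',P],P] = \iu\,P'$ and hence $\iu[A,P] + P' = -P' + P' = 0$. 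Thus $R$ is constant, equal to $R(k_1') = P(k_1',\kk_\perp)$, which rearranges to \ref{item:T-intertwine}.

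The property I expect to be the main obstacle is the time-reversal symmetry \ref{item:T-TRS}, because it combines the antilinearity of $\theta$ with the $k_1 \mapsto -k_1$ reparametrization and one must keep the signs straight. The first step is to differentiate $P(-k_1,-\kk_\perp) = \theta\,P(k_1,\kk_\perp)\,\theta^{-1}$ (valid by \ref{item:TRS}) in $k_1$ — legitimate since $\theta$ is a fixed bounded real-linear operator and only real increments occur — obtaining $(\partial_1 P)(-k_1,-\kk_\perp) = -\theta\,(\partial_1 P)(k_1,\kk_\perp)\,\theta^{-1}$. Plugging this and \ref{item:TRS} into \eqref{eqn:A_kperp(k1)}, and using $[\theta X\theta^{-1},\theta Y\theta^{-1}] = \theta[X,Y]\theta^{-1}$ together with $\theta(\iu\,\cdot)\theta^{-1} = -\iu\,\theta(\cdot)\theta^{-1}$, the two sign changes cancel and one gets $A_{-\kk_\perp}(-k_1) = \theta\,A_{\kk_\perp}(k_1)\,\theta^{-1}$ (which is again self-adjoint, so the uniqueness principle still applies). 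Finally I would verify that $U(k_1) := \theta\,T_{\kk_\perp}(k_1,k_1')\,\theta^{-1}$ and $V(k_1) := T_{-\kk_\perp}(-k_1,-k_1')$ both solve $\iu\,\partial_{k_1}W = -\theta\,A_{\kk_\perp}(k_1)\,\theta^{-1}\,W$ with $W(k_1') = \Id$ — for $U$ the sign flip in the generator comes from the antilinearity of $\theta$ (conjugation turns $-\iu A$ into $+\iu\,\theta A\theta^{-1}$), for $V$ from the chain rule in $-k_1$ — so by uniqueness $U = V$, which is exactly \ref{item:T-TRS}.
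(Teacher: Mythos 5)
Your proposal is correct and follows essentially the same route as the paper, which defers the standard properties to the literature and proves $(\mathrm{T}_3)$ exactly as you do: establish $A_{-\kk_\perp}(-k_1) = \theta\,A_{\kk_\perp}(k_1)\,\theta^{-1}$ from antilinearity and then invoke uniqueness for the common Cauchy problem, your other items being the same standard ODE-uniqueness arguments (with the correct algebra $PP'P=0$, $P'P+PP'=P'$ for the intertwining). The only cosmetic point is that $T^*T=\Id$ alone gives an isometry in infinite dimensions; surjectivity, hence unitarity, follows from the right inverse $T_{\kk_\perp}(k_1,k_1')\,T_{\kk_\perp}(k_1',k_1)=\Id$ that you prove in the group property, so the argument as a whole is complete.
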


Even though these results are relatively standard, a proof of all these properties can be found for example in \cite{FreundTeufel16} or in \cite[Sec.~2.6]{CorneanHerbstNenciu15}. Only a slight modification of the argument presented there is needed to prove \ref{item:T-TRS}, allowing for the fact that $\theta^2 = - \Id$ (contrary to the assumption $\theta^2 = \Id$ adopted in \cite{CorneanHerbstNenciu15}). Indeed, one can immediately check using \ref{item:TRS} that
\[ A_{-\kk_\perp}(-k_1) = \theta \, A_{\kk_\perp}(k_1) \, \theta^{-1}, \]
where $A_{\kk_\perp}(k_1)$ is defined in \eqref{eqn:A_kperp(k1)}. The above implies in particular that both sides of \ref{item:T-TRS} satisfy the same Cauchy problem.

\subsection{Matching matrices} \label{sec:matching}

We return now to the problem of finding continuous and symmetric Bloch frames. Let $\set{P(\kk)}_{\kk \in \R^d}$ be a family of projectors satisfying Assumption \ref{assum:proj}, and let $m$ denote its rank. The restriction $\set{P(0,\kk_\perp)}_{\kk_\perp \in \R^{d-1}}$ is then a $(d-1)$-dimensional family of projectors which also satisfies Assumption \ref{assum:proj}. Assume that we have constructed a continuous and periodic (respectively symmetric) Bloch frame $\Xi = \set{\Xi(0, \kk_\perp)}_{\kk_\perp \in \R^{d-1}}$ for such a family. Define now
\begin{equation} \label{eqn:Psi}
\Psi(k_1, \kk_\perp) := T_{\kk_\perp}(k_1,0) \, \Xi(0, \kk_\perp), \quad \kk = (k_1, \kk_\perp) \in \R^d.
\end{equation}
Upon this definition, $\Psi(\kk)$ gives an orthonormal basis in $\Ran P(\kk)$ by the intertwining property \ref{item:T-intertwine} and depends analytically on $\kk$ in view of \ref{item:T-smooth}; moreover, this frame is also periodic in $\kk_\perp$ by \ref{item:T-periodic}. Whenever $\Xi$ is TRS, then also $\Psi$ is, in view of \ref{item:T-TRS}. 

However, $\Psi(\kk)$ fails in general to satisfy periodicity in $k_1$. This can be seen as follows. Since $\Psi(k_1 + 1, \kk_\perp)$ and $\Psi(k_1, \kk_\perp)$ are frames in $\Ran P(k_1 + 1, \kk_\perp) = \Ran P(k_1, \kk_\perp)$ (by \ref{item:periodic}), they must differ by the action of a unitary matrix:
\begin{equation} \label{eqn:alpha}
\Psi(k_1 + 1, \kk_\perp) = \Psi(k_1, \kk_\perp) \act \alpha(\kk_\perp).
\end{equation}
The $m \times m$ matrix $\alpha(\kk_\perp)$ is indeed independent of $k_1$ as can be seen by computing its entries. First of all, notice that
\begin{align*}
\Psi(k_1 + 1, \kk_\perp) & = T_{\kk_\perp}(k_1 + 1,0) \, \Xi(0, \kk_\perp) = T_{\kk_\perp}(k_1+1, 1) \, T_{\kk_\perp}(1,0) \, \Xi(0, \kk_\perp) \\ 
& = T_{\kk_\perp}(k_1,0) \, T_{\kk_\perp}(1,0) \, \Xi(0, \kk_\perp)
\end{align*}
where we used \ref{item:T-group} in the second equality and \ref{item:T-periodic} in the last. By unitarity of $T_{\kk_\perp}(k_1,0)$, we obtain then
\begin{align*}
\alpha(\kk_\perp)_{ab} & = \scal{\Psi_a(k_1, \kk_\perp)}{\Psi_b(k_1+1,\kk_\perp)} \\
& = \scal{T_{\kk_\perp}(k_1,0) \, \Xi_a(0, \kk_\perp)}{T_{\kk_\perp}(k_1,0) \, T_{\kk_\perp}(1,0) \, \Xi_b(0, \kk_\perp)} \\
& = \scal{\Xi_a(0, \kk_\perp)}{T_{\kk_\perp}(1,0) \, \Xi_b(0, \kk_\perp)}.
\end{align*}
An equivalent formulation of the above relation, and a possible alternative definition of the matrix $\alpha$, is
\begin{equation} \label{eqn:alpha_vs_T}
T_{\kk_\perp}(1,0) \, \Xi(0, \kk_\perp) = \Xi(0, \kk_\perp) \act \alpha(\kk_\perp).
\end{equation}
The matrix $\alpha(\kk_\perp)$ then expresses how the input frame $\Xi(0, \kk_\perp)$ must be unitarily rotated to match its parallel-transported version $T_{\kk_\perp}(1,0) \, \Xi(0, \kk_\perp)$ along one full period in $k_1$. Hence, we call $\alpha(\kk_\perp)$ the \emph{matching matrix}. As was already noticed, it measures the failure of the Bloch frame $\Psi(k_1, \kk_\perp)$ to be periodic with respect to $k_1$, while enjoying all the other properties (being real analytic, periodic in $\kk_\perp$, and, possibly, TRS).

We list the main properties which are satisfied by the family of the matching matrices $\alpha$ in the next Proposition.

\begin{proposition} \label{prop:alpha}
The $m \times m$ matrix $\alpha(\kk_\perp)$, defined by the relation \eqref{eqn:alpha}, is unitary. The family of matrices $\alpha = \set{\alpha(\kk_\perp)}_{\kk_\perp \in \R^{d-1}}$ depends analytically on $\kk_\perp$ and is $\Z^{d-1}$-periodic. If $\Xi$ is also TRS, then $\alpha$ satisfies 
\begin{equation} \label{eqn:CS'}
\eps \, \alpha(\kk_\perp) = \alpha(-\kk_\perp)^t \, \eps.
\end{equation}
\end{proposition}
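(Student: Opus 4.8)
\textbf{Proof plan for Proposition \ref{prop:alpha}.}

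The plan is to establish each of the three claimed properties of $\alpha$ directly from its characterization in \eqref{eqn:alpha_vs_T}, namely $T_{\kk_\perp}(1,0)\,\Xi(0,\kk_\perp) = \Xi(0,\kk_\perp)\act\alpha(\kk_\perp)$, together with the corresponding properties of the input frame $\Xi$ (continuity/analyticity, periodicity, TRS) and of the parallel transport unitaries from Proposition \ref{prop:parallel}. Since the right action of $U(m)$ on $\Fr(m,\Hi)$ is free, the matrix $\alpha(\kk_\perp)$ is uniquely determined by \eqref{eqn:alpha_vs_T}, and in fact has entries $\alpha(\kk_\perp)_{ab} = \scal{\Xi_a(0,\kk_\perp)}{T_{\kk_\perp}(1,0)\,\Xi_b(0,\kk_\perp)}$ as computed just above the statement; this explicit formula is the main workhorse for all three points.

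First, unitarity: both $\set{\Xi_a(0,\kk_\perp)}_a$ and $\set{T_{\kk_\perp}(1,0)\Xi_a(0,\kk_\perp)}_a$ are orthonormal bases of the same subspace $\Ran P(1,\kk_\perp) = \Ran P(0,\kk_\perp)$ (using \ref{item:periodic} and the intertwining property \ref{item:T-intertwine}), so the change-of-basis matrix $\alpha(\kk_\perp)$ is automatically in $U(m)$; concretely, $\sum_c \overline{\alpha_{ca}}\,\alpha_{cb} = \scal{T_{\kk_\perp}(1,0)\Xi_a}{T_{\kk_\perp}(1,0)\Xi_b} = \delta_{ab}$ by unitarity of $T_{\kk_\perp}(1,0)$ and orthonormality of $\Xi(0,\kk_\perp)$. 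Second, analyticity and $\Z^{d-1}$-periodicity follow by inspecting the entry formula: $\kk_\perp \mapsto \Xi_a(0,\kk_\perp)$ is real analytic (since $\Xi$ is a continuous Bloch frame, strengthened to analytic as in Remark \ref{rmk:AnalyticBB}, or more simply because one takes $\Xi$ analytic from the start of the induction), $\kk_\perp \mapsto T_{\kk_\perp}(1,0)$ is real analytic by \ref{item:T-smooth}, and both are $\Z^{d-1}$-periodic in $\kk_\perp$ by \ref{item:F2} for $\Xi$ and by the second periodicity relation in \ref{item:T-periodic}; the scalar product of analytic, respectively periodic, $\Hi$-valued maps is analytic, respectively periodic.

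Third, and this is the step requiring the most care, the TRS relation \eqref{eqn:CS'}. Assuming $\Xi$ is TRS, i.e. $\Xi(-\kk_\perp) = \theta\Xi(0,\kk_\perp)\act\eps$ by \eqref{eqn:FrameTR} restricted to $k_1 = 0$, I would compute $\alpha(-\kk_\perp)$ from \eqref{eqn:alpha_vs_T}: apply $T_{-\kk_\perp}(1,0)$ to $\Xi(0,-\kk_\perp)$, rewrite $\Xi(0,-\kk_\perp) = \theta\Xi(0,\kk_\perp)\act\eps^{-1}$ (equivalently using $\eps^{-1} = -\eps = \overline{\eps}^{\,t}$ and the skew-symmetry/unitarity of $\eps$ from Remark \ref{remark:eps}), use property \ref{item:T-TRS} in the form $T_{-\kk_\perp}(1,0) = \theta\,T_{\kk_\perp}(-1,0)\,\theta^{-1}$, and then convert $T_{\kk_\perp}(-1,0)$ into $T_{\kk_\perp}(1,0)^{-1}$ composed with a period shift via \ref{item:T-group} and \ref{item:T-periodic}. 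Carefully tracking the antilinearity of $\theta$ (which turns $\act M$ into $\act\overline{M}$, as recorded in Section \ref{sec:unitary}) and applying \eqref{eqn:alpha_vs_T} once more on the $\kk_\perp$ side yields a relation of the form $\alpha(-\kk_\perp) = \eps\,\alpha(\kk_\perp)^t\,\eps^{-1}$ (possibly after taking a transpose), which rearranges to \eqref{eqn:CS'}. The main obstacle is purely bookkeeping: keeping the complex conjugates, transposes, and the sign $\theta^2 = -\Id$ consistent through the chain of substitutions, since a single misplaced conjugate would give $\alpha(-\kk_\perp)$ rather than $\alpha(-\kk_\perp)^t$; alternatively, one can sidestep this by working directly with the entry formula $\alpha_{ab}(\kk_\perp) = \scal{\Xi_a(0,\kk_\perp)}{T_{\kk_\perp}(1,0)\Xi_b(0,\kk_\perp)}$, substituting the TRS relations for $\Xi$ and $T$ and using antiunitarity $\scal{\theta u}{\theta v} = \scal{v}{u}$ to flip the inner product, which makes the emergence of the transpose in \eqref{eqn:CS'} transparent.
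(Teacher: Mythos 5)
Your plan is correct and follows essentially the same route as the paper: unitarity, analyticity and $\Z^{d-1}$-periodicity are read off from the entry formula for $\alpha(\kk_\perp)$ together with Proposition \ref{prop:parallel}, and \eqref{eqn:CS'} comes from exactly the chain of substitutions the paper performs at the level of frames (TRS of $\Xi$ at $k_1=0$, then \ref{item:T-TRS}, \ref{item:T-periodic}, \ref{item:T-group}, the antilinearity rule $\theta(\Xi \act M) = \theta\Xi \act \overline{M}$, and freeness of the $U(m)$-action to identify the matrices). The only thing to fix is a bookkeeping slip in your substitution step: \eqref{eqn:FrameTR} at $k_1=0$ gives $\Xi(0,-\kk_\perp) = \theta\Xi(0,\kk_\perp) \act \eps$, equivalently $\theta\Xi(0,\kk_\perp) = \Xi(0,-\kk_\perp) \act \eps^{-1}$, not $\Xi(0,-\kk_\perp) = \theta\Xi(0,\kk_\perp) \act \eps^{-1}$ as written there.
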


In the following, we will say that a family $\alpha$ of $m \times m$ unitary matrices is \emph{time-reversal symmetric} (or TRS for short) if it satisfies \eqref{eqn:CS'}. 

\begin{proof}[Proof of Proposition \ref{prop:alpha}]
Most of the properties follow at once from the properties of the parallel transport unitaries $T_{\kk_\perp}(1,0)$ via \eqref{eqn:alpha_vs_T}. We prove only \eqref{eqn:CS'}. First, observe that \eqref{eqn:alpha_vs_T} implies
\[ T_{-\kk_\perp}(1,0)^* \, \Xi(0,-\kk_\perp) = \Xi(0,-\kk_\perp) \act \alpha(-\kk_\perp)^* \]
as can be seen at once from the equivalent formulation in terms of the matrix entries of $\alpha(-\kk_\perp)$. With this, we can compute
\begin{align*}
\theta \Xi(0,-\kk_\perp) \act (\eps \, \alpha(\kk_\perp)) & = \Xi(0,\kk_\perp) \act \alpha(\kk_\perp) \\
 & = T_{\kk_\perp}(1,0) \, \Xi(0,\kk_\perp) = T_{\kk_\perp}(1,0) \, \theta \Xi(0,-\kk_\perp) \act \eps \\
 & = \theta T_{-\kk_\perp}(-1,0) \, \Xi(0,-\kk_\perp) \act \eps = \theta T_{-\kk_\perp}(0,1) \, \Xi(0,-\kk_\perp) \act \eps \\ 
 & = \theta T_{-\kk_\perp}(1,0)^* \, \Xi(0,-\kk_\perp) \act \eps = \theta \left( \Xi(0,-\kk_\perp) \act \alpha(-\kk_\perp)^* \right) \act \eps \\
 & = \theta \Xi(0,-\kk_\perp) \act \left( \alpha(-\kk_\perp)^t \, \eps \right).
\end{align*}
By the freeness of the $U(m)$-action on $m$-frames, we deduce that
\[ \eps \, \alpha(\kk_\perp) = \alpha(-\kk_\perp)^t \, \eps \]
which is exactly \eqref{eqn:CS'}.
\end{proof}

\subsection{Enforcing periodicity in $k_1$}

We now come back to Problem \ref{pbl:Bloch} for the projectors $\set{P(\kk)}_{\kk \in \R^d}$. The frame $\Psi(\kk)$ defined in \eqref{eqn:Psi} needs to be modified in order to enforce periodicity in $k_1$. Clearly, if $\alpha(\kk_\perp)$ were the identity matrix, then $\Psi(\kk)$ would be already $\Z$-periodic also in $k_1$. The strategy is then to look for a family of unitary matrices which ``rotates'' $\alpha$ to the identity, while ensuring that the properties of the frame $\Psi$ (namely continuity, periodicity in $\kk_\perp$ and, if required, TRS) are preserved. More formally, we have the following result.

\begin{proposition} \label{prop:rotate}
Let $\Psi$ be the Bloch frame defined in \eqref{eqn:Psi}, and let $\alpha$ be as in \eqref{eqn:alpha}. Assume that there exists a continuous family $\set{\beta(\kk) = \beta(k_1, \kk_\perp)}_{\kk = (k_1, \kk_\perp) \in \R^d}$ of $m \times m$ unitary matrices, which is $\Z^{d-1}$-periodic with respect to $\kk_\perp$ and satisfies
\begin{equation} \label{eqn:alpha_vs_beta}
\beta(k_1, \kk_\perp)^{-1} \, \alpha(\kk_\perp) \, \beta(k_1 + 1, \kk_\perp) = \Id.
\end{equation}
Then 
\begin{equation} \label{eqn:beta}
\Xi(\kk) := \Psi(\kk) \act \beta(\kk)
\end{equation}
defines a continuous and $\Z^d$-periodic Bloch frame for $\set{P(\kk)}_{k \in \R^d}$. If moreover $\Psi$ is TRS and $\set{\beta(\kk)}_{\kk \in \R^d}$ satisfies also
\begin{equation} \label{eqn:beta-TRS}
\beta(-\kk) = \eps^{-1} \, \overline{\beta(\kk)} \, \eps,
\end{equation}
then $\Xi$ is also TRS.

Conversely, if $\Xi$ is a continuous and $\Z^d$-periodic Bloch frame for $\set{P(\kk)}_{k \in \R^d}$, then the family of unitary matrices $\set{\beta(\kk)}_{\kk \in \R^d}$ defined by \eqref{eqn:beta}, \ie
\[ \beta(\kk)_{ab} := \scal{\Psi_a(\kk)}{\Xi_b(\kk)}, \quad a, b \in \set{1, \ldots, m}, \]
is continuous, $\Z^{d-1}$-periodic with respect to $\kk_\perp$, with $\kk = (k_1, \kk_\perp)$, and satisfies \eqref{eqn:alpha_vs_beta}. If $\Xi$ and $\Psi$ are both also TRS, then $\set{\beta(\kk)}_{\kk \in \R^d}$ satisfies also \eqref{eqn:beta-TRS}.
\end{proposition}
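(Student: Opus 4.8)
The plan is to verify each asserted property of $\Xi$ (resp.\ of $\beta$) directly from the definitions, using the properties of $\Psi$ collected in Section \ref{sec:matching} together with the behaviour of the right action $\act$ under $U(m)$ and under $\theta$ recalled in Section \ref{sec:unitary}. First I would treat the forward direction. Continuity of $\Xi = \Psi \act \beta$ is immediate from continuity of $\Psi$ and of $\beta$ and from the joint continuity of the map $(\Xi, M) \mapsto \Xi \act M$; that $\Xi(\kk)$ is an orthonormal basis of $\Ran P(\kk)$ follows because $\Psi(\kk)$ is such a basis and $\beta(\kk) \in U(m)$ acts by a unitary change of basis inside $\Ran P(\kk)$. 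For $\Z^d$-periodicity one checks two things: periodicity in $\kk_\perp$ is inherited from the $\Z^{d-1}$-periodicity of both $\Psi$ (by \ref{item:T-periodic}, via \eqref{eqn:Psi}) and $\beta$; periodicity in $k_1$ is the point of the hypothesis \eqref{eqn:alpha_vs_beta}, since
\[ \Xi(k_1+1,\kk_\perp) = \Psi(k_1+1,\kk_\perp) \act \beta(k_1+1,\kk_\perp) = \Psi(k_1,\kk_\perp) \act \bigl( \alpha(\kk_\perp)\,\beta(k_1+1,\kk_\perp) \bigr) \]
using \eqref{eqn:alpha}, and the bracketed matrix equals $\beta(k_1,\kk_\perp)$ by \eqref{eqn:alpha_vs_beta}; hence $\Xi(k_1+1,\kk_\perp) = \Psi(k_1,\kk_\perp)\act\beta(k_1,\kk_\perp) = \Xi(k_1,\kk_\perp)$.

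Next I would do the TRS part of the forward direction. Assuming $\Psi$ is TRS, i.e.\ $\Psi(-\kk) = \theta\Psi(\kk)\act\eps$ as in \eqref{eqn:FrameTR}, one computes, using $\theta(\Xi \act M) = (\theta\Xi)\act\overline{M}$ and the commutativity of the $U(m)$-action,
\[ \Xi(-\kk) = \Psi(-\kk)\act\beta(-\kk) = \bigl(\theta\Psi(\kk)\act\eps\bigr)\act\beta(-\kk) = \theta\Psi(\kk)\act\bigl(\eps\,\beta(-\kk)\bigr), \]
while
\[ \theta\Xi(\kk)\act\eps = \theta\bigl(\Psi(\kk)\act\beta(\kk)\bigr)\act\eps = \bigl(\theta\Psi(\kk)\act\overline{\beta(\kk)}\bigr)\act\eps = \theta\Psi(\kk)\act\bigl(\overline{\beta(\kk)}\,\eps\bigr). \]
By freeness of the $U(m)$-action these coincide iff $\eps\,\beta(-\kk) = \overline{\beta(\kk)}\,\eps$, which is exactly \eqref{eqn:beta-TRS} rewritten (multiply on the left by $\eps^{-1}$); so $\Xi$ is TRS.

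For the converse, given a continuous $\Z^d$-periodic $\Xi$, one defines $\beta(\kk)_{ab} := \scal{\Psi_a(\kk)}{\Xi_b(\kk)}$; since $\Psi(\kk)$ and $\Xi(\kk)$ are both orthonormal bases of the same space $\Ran P(\kk)$, $\beta(\kk)$ is the unitary change-of-basis matrix and $\Xi = \Psi\act\beta$. Continuity and $\Z^{d-1}$-periodicity of $\beta$ follow from those of $\Psi$ and $\Xi$ by taking scalar products. Relation \eqref{eqn:alpha_vs_beta} is obtained by comparing, at $k_1+1$, the identity $\Xi(k_1+1,\kk_\perp) = \Xi(k_1,\kk_\perp)$ with $\Xi(k_1+1,\kk_\perp) = \Psi(k_1+1,\kk_\perp)\act\beta(k_1+1,\kk_\perp) = \Psi(k_1,\kk_\perp)\act\bigl(\alpha(\kk_\perp)\beta(k_1+1,\kk_\perp)\bigr)$ and $\Xi(k_1,\kk_\perp) = \Psi(k_1,\kk_\perp)\act\beta(k_1,\kk_\perp)$, again invoking freeness of the action; this gives $\alpha(\kk_\perp)\beta(k_1+1,\kk_\perp) = \beta(k_1,\kk_\perp)$, equivalent to \eqref{eqn:alpha_vs_beta}. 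Finally, if both $\Xi$ and $\Psi$ are TRS, reversing the computation of the previous paragraph yields \eqref{eqn:beta-TRS}. There is no genuine obstacle here — the statement is essentially a bookkeeping lemma — but the one point requiring a little care is the consistent use of the \emph{right} action and the conjugation rule $\theta(\Xi\act M) = (\theta\Xi)\act\overline{M}$, together with the fact that freeness of the action is what licenses cancelling frames to equate matrices; keeping the order of factors straight in \eqref{eqn:alpha_vs_beta} and \eqref{eqn:beta-TRS} is the only place an error could slip in.
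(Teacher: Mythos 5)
Your proposal is correct and follows essentially the same route as the paper's proof: the same chain of equalities, using \eqref{eqn:alpha} together with \eqref{eqn:alpha_vs_beta} for periodicity in $k_1$, the rule $\theta(\Psi \act M) = (\theta\Psi)\act\overline{M}$ together with freeness of the $U(m)$-action for the TRS equivalence, and the same identities read backwards for the converse. The only (harmless) difference is that you spell out the converse in more detail than the paper, which simply remarks that it follows by the same computations.
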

\begin{proof}
Unitarity of $\beta(\kk)$ guarantees that $\Xi(\kk)$, defined by \eqref{eqn:beta}, is still an orthonormal basis in $\Ran P(\kk)$. Moreover, continuity and periodicity of $\Xi(\kk) = \Xi(k_1, \kk_\perp)$ with respect to $\kk_\perp$ are clear, so we only need to prove the other two symmetry properties. We start from periodicity in $k_1$. In view of \eqref{eqn:alpha_vs_beta} and \eqref{eqn:alpha} we have
\begin{align*}
\Xi(k_1+1,\kk_\perp) & = \Psi(k_1+1,\kk_\perp) \act \beta(k_1+1,\kk_\perp) = \Psi(k_1,\kk_\perp) \act (\alpha(\kk_\perp) \, \beta(k_1+1,\kk_\perp) ) \\
& = \Psi(k_1,\kk_\perp) \act \beta(k_1, \kk_\perp) = \Xi(k_1,\kk_\perp)
\end{align*}
so that $\Xi(\kk)$ is periodic with respect to $k_1$. If \eqref{eqn:beta-TRS} holds, then it yields TRS for $\Xi$, since then 
\begin{align*}
\Xi(-\kk) & = \Psi(-\kk) \act \beta(-\kk) = \theta \Psi(\kk) \act (\eps \, \beta(-\kk) ) \\
& = \theta \Psi(\kk) \act (\overline{\beta(\kk)} \, \eps) = \theta ( \Psi(\kk) \act \beta(\kk) ) \act \eps \\
& = \theta \Xi(\kk) \act \eps.
\end{align*}

By going through the same chain of equalities as above, also the converse statement can be proved.
\end{proof}

\begin{remark} \label{rmk:beta0}
Notice that, if $\set{\beta(\kk)}_{\kk \in \R^d}$ as in the statement of Proposition \ref{prop:rotate} exists, then 
\[ \widetilde{\beta}(k_1, \kk_\perp) := \beta(0, \kk_\perp)^{-1} \, \beta(k_1, \kk_\perp) \]
satisfies its same properties, namely it is continuous, $\Z^{d-1}$-periodic in $\kk_\perp$, satisfies \eqref{eqn:alpha_vs_beta} and, possibly, \eqref{eqn:beta-TRS}. Thus, without loss of generality we may always assume that $\beta(0, \kk_\perp) \equiv \Id$ for all $\kk_\perp \in \R^{d-1}$. At the level of frames, this means that $\Xi(k_1, \kk_\perp)$ as in \eqref{eqn:beta} coincides with the ``input'' frame $\Xi(0, \kk_\perp)$ at $k_1 = 0$.
\end{remark}

\begin{remark}
If a family $\beta$ of unitary matrices as in the statement of Proposition \ref{prop:rotate} exists, then it can be used to continuously deform the family $\alpha$ of matching matrices to the identity. Indeed, setting
\[ \alpha_s(\kk_\perp) := \beta(-s/2, \kk_\perp) \, \beta(s/2, \kk_\perp)^{-1}, \quad \kk_\perp \in \R^{d-1}, \: s \in [0,1], \]
defines a continuous homotopy of $\alpha$ (at $s=1$, compare \ref{eqn:alpha_vs_beta}) to $\Id$ (at $s=0$), which is $\Z^{d-1}$-periodic in $\kk_\perp$ and TRS whenever $\beta$ satisfies \ref{eqn:beta-TRS}.
\end{remark}

Problem \ref{pbl:Bloch} on finding a continuous and $\Z^d$-periodic (respectively symmetric) Bloch frame is thus reduced by Proposition \ref{prop:rotate} to the following one.

\begin{problem} \label{pbl:alpha}
Given a family $\set{\alpha(\kk_\perp)}_{\kk_\perp \in \R^{d-1}}$ of unitary matrices which is real analytic and $\Z^{d-1}$-periodic in $\kk_\perp$, as well as possibly TRS, \emph{construct} (if possible) a \emph{continuous} family $\set{\beta(\kk) = \beta(k_1, \kk_\perp)}_{\kk = (k_1, \kk_\perp) \in \R^d}$ which is \emph{$\Z^{d-1}$-periodic} in $\kk_\perp$ and satisfies \eqref{eqn:alpha_vs_beta}, as well as possibly \eqref{eqn:beta-TRS}.
\end{problem}

Our Theorem \ref{thm:MainResults_BB} can then be reformulated purely in terms of TRS families of unitary matrices as in the following statement, which is then of independent interest.

\begin{theorem} \label{thm:MainResults_Beta}
Let $\set{\alpha(\kk_\perp)}_{\kk_\perp \in \R^{d-1}}$ be a continuous, $\Z^{d-1}$-periodic, and TRS family of $m \times m$ unitary matrices.
\begin{enumerate}
 \item \label{item:BetaResult_d=1} Let $d=1$. Then a continuous family $\set{\beta(k)}_{k \in \R}$ of unitary matrices satisfying \eqref{eqn:alpha_vs_beta} and \eqref{eqn:beta-TRS} exists and can be constructed.
 \item \label{item:BetaResult_d=2} Let $d=2$. 
 \begin{enumerate}[label=(\alph*),ref=(\alph*)]
  \item \label{item:BetaResult_d=2_I=0} Assume that $\mathcal{I}(\alpha) = 0 \in \Z_2$, where $\mathcal{I}(\alpha)$ is defined in \eqref{eqn:rueda}. Then a continuous family $\set{\beta(\kk) = \beta(k_1, k_2)}_{\kk = (k_1, k_2) \in \R^2}$ of unitary matrices which is $\Z$-periodic in $k_2$ and satisfies \eqref{eqn:alpha_vs_beta} and \eqref{eqn:beta-TRS} exists and can be constructed.
  \item \label{item:BetaResult_d=2_beta} Conversely, if a family $\beta$ as in \ref{item:BetaResult_d=2_I=0} exists, then $\mathcal{I}(\alpha) = 0 \in \Z_2$.
 \end{enumerate}
 \item \label{item:BetaResult_d=2_noTRS} Let $d=2$. Then a continuous family $\set{\beta(\kk) = \beta(k_1, k_2)}_{\kk = (k_1, k_2) \in \R^2}$ of unitary matrices which is $\Z$-periodic in $k_2$ and satisfies \eqref{eqn:alpha_vs_beta} \emph{always} exists and can be constructed.
\end{enumerate}
\end{theorem}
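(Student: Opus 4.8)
The strategy is to solve Problem \ref{pbl:alpha} by the two-step scheme announced in the Introduction: either the TRS family $\alpha$ admits a ``good logarithm'' — a continuous, $\Z^{d-1}$-periodic $h(\kk_\perp) = h(\kk_\perp)^*$ with $\alpha = \eu^{\iu h}$ and (when TRS is required) $h(-\kk_\perp) = -\eps^{-1}\,\overline{h(\kk_\perp)}\,\eps$ — in which case Proposition \ref{prop:GoodLog} produces the desired $\beta$ by the explicit formula $\beta(k_1,\kk_\perp) = \eu^{-\iu k_1 h(\kk_\perp)}$ (one checks directly that this is continuous, $\Z^{d-1}$-periodic in $\kk_\perp$, satisfies \eqref{eqn:alpha_vs_beta}, and, from the symmetry of $h$, satisfies \eqref{eqn:beta-TRS}); or $\alpha$ does not admit such a logarithm, and one first approximates $\alpha$ by a family $\alpha_0$ that does — preserving continuity, periodicity, TRS, and lying in a neighbourhood where $\alpha\alpha_0^{-1}$ has a good logarithm — and then composes the two resulting $\beta$'s, the composition being handled by Proposition \ref{prop:AlmostLog}. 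So the whole theorem reduces to the \emph{existence of good logarithms}, which is exactly what Sections \ref{sec:d=1} and \ref{sec:d=2} will construct explicitly.

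For part \ref{item:BetaResult_d=1} ($d=1$), $\alpha$ is a single unitary matrix satisfying $\eps\,\alpha = \alpha^t\,\eps$. First I would diagonalize $\alpha$ and pick a branch of the logarithm; the issue is only to choose the branch cut so that the resulting self-adjoint $h$ with $\eu^{\iu h} = \alpha$ also satisfies $h = -\eps^{-1}\overline h\,\eps$. The TRS relation on $\alpha$ forces its spectrum to be symmetric about the real axis and pairs up eigenspaces via $\eps\overline{(\cdot)}$, so one can choose eigenvalues of $h$ in $(-\pi,\pi]$ symmetrically on the conjugate pairs; the only delicate point is the possible eigenvalue $-1$ of $\alpha$ (forcing eigenvalue $\pi$ of $h$), where the Kramers pairing on the $-1$-eigenspace must be used to split it into a ``$+\pi$ half'' and a ``$-\pi$ half'' compatibly with $\eps$ — this is where the auxiliary degeneracy results of Appendix \ref{sec:ExtraDegen} enter. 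Since there is no base direction $\kk_\perp$, no topology can obstruct this, so a good logarithm always exists and Proposition \ref{prop:GoodLog} finishes the case.

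For parts \ref{item:BetaResult_d=2} and \ref{item:BetaResult_d=2_noTRS} ($d=2$), $\kk_\perp = k_2$ runs over $\R$ and $\alpha = \set{\alpha(k_2)}$ is a continuous, $\Z$-periodic, TRS \emph{loop} of unitaries. The obstruction to a good symmetric logarithm on the whole circle is genuinely topological: it is detected by $\mathcal{I}(\alpha)\in\Z_2$. The plan is to exploit TRS: at the symmetry points $k_2 = 0$ and $k_2 = 1/2$ the matrices $\alpha(0),\alpha(1/2)$ are themselves ``symmetric'' in the $d=1$ sense and each admit a good logarithm by part \ref{item:BetaResult_d=1}; one then interpolates over the half-period $[0,1/2]$, and the remaining half $[1/2,1]$ is forced by the TRS relation \eqref{eqn:CS'}. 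The two halves glue to a genuine periodic logarithm iff a certain winding-number-type quantity — precisely the Graf--Porta index $\mathcal{I}$, via the equivalent formulation proved in Section \ref{sec:GrafPorta} and the degree formula \eqref{eqn:DegUn} — vanishes; that gives \ref{item:BetaResult_d=2_I=0}, and tracking the winding backwards gives the converse \ref{item:BetaResult_d=2_beta}. For \ref{item:BetaResult_d=2_noTRS}, one simply drops the requirement \eqref{eqn:beta-TRS}: then $\pi_1(U(m))\simeq\Z$ can always be killed because the obstruction lives in a $\Z_2$ that is the ``TRS shadow'' of a vanishing integer class, so $\beta$ exists unconditionally (equivalently: every continuous periodic TRS loop is null-homotopic as in Theorem \ref{thm:MainResults_GPhomotopy}, the homotopy merely breaking TRS when $\mathcal{I}=1$).

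The main obstacle I anticipate is \textbf{the construction of the good logarithm in $d=2$ and the identification of its obstruction with $\mathcal{I}$}: the matrix logarithm is not globally single-valued, so one must cut the circle at the two TRS points, build symmetric logarithms there, interpolate across one half with full control of continuity and the self-adjointness/TRS constraints simultaneously, and then prove that the mismatch upon closing up is exactly $\mathcal{I}(\alpha)\bmod 2$ — this last step requires the alternative formulas for $\mathcal{I}$ and careful bookkeeping of the degenerate eigenvalue $-1$ along the loop (handled by Appendix \ref{sec:ExtraDegen}), together with the approximation argument of Proposition \ref{prop:AlmostLog} to reduce to the case where $\alpha(k_2)$ never hits $-1$ except possibly at the symmetry points.
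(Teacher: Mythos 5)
Your overall architecture (reduce everything to the existence of a ``good logarithm'' and then invoke Propositions \ref{prop:GoodLog} and \ref{prop:AlmostLog}) is indeed the paper's announced scheme, but two of your concrete steps do not work as described. In $d=1$ your treatment of the eigenvalue $-1$ is backwards: since $\eps\,\alpha=\alpha^t\,\eps$ is equivalent to $\Theta\,\alpha\,\Theta^{-1}=\alpha^{*}$ with $\Theta=\eps C$, \emph{every} spectral projection of $\alpha$ commutes with $\Theta$ (this is the computation behind Lemma \ref{lemma:Kramers}), so $h:=\sum_{\lambda}\mathrm{Arg}(\lambda)\,\Pi_\lambda$ with any fixed branch, say $\mathrm{Arg}\in(-\pi,\pi]$, already satisfies \eqref{eqn:hTRS}; there is nothing delicate at $\lambda=-1$. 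Splitting the $-1$-eigenspace into a ``$+\pi$ half'' and a ``$-\pi$ half'' exchanged by the Kramers pairing would give $\Theta h\Theta^{-1}=-h$ on that subspace and destroy \eqref{eqn:beta-TRS}; moreover Appendix \ref{sec:ExtraDegen} plays no role in $d=1$ (in the paper the logarithm is obtained by choosing $M$ with $T(1,0)=\eu^{\iu M}$ commuting with $\theta$ via spectral calculus). Note also a sign slip: the symmetry of the logarithm is $h(-\kk_\perp)=\eps^{-1}\,\overline{h(\kk_\perp)}\,\eps$, without your minus sign.

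In $d=2$ your plan (symmetric logarithms at $k_2=0,1/2$, interpolation over $[0,1/2]$, reflection by TRS) leaves unproven exactly the step that carries all the content, and which you yourself flag: that a continuous self-adjoint logarithm on $[0,1/2]$ matching the prescribed symmetric endpoint branches exists precisely when $\mathcal{I}(\alpha)=0$. This is not mere bookkeeping: eigenvalue crossings and $2\pi$-windings make even the half-period interpolation nontrivial (a branch cut gets squeezed out whenever a spectral gap closes), and identifying the obstruction with \eqref{eqn:rueda} is the heart of the matter. The paper supplies precisely these missing pieces by a different mechanism: it first replaces $\alpha$ by norm-close approximants $\alpha\sub{gap}$, $\alpha\sub{gen}$ with generic spectrum (Proposition \ref{prop:GenericForm}, proved in Appendix \ref{sec:ExtraDegen}), then constructs a continuous, periodic, \emph{even} branch cut avoiding the spectrum --- and it is exactly here, via the intersection-number reading of the index (Propositions \ref{prop:BranchCut} and \ref{prop:intersect}, together with Proposition \ref{prop:closeness}), that $\mathcal{I}(\alpha)=0$ is used --- and finally passes through the Cayley transform (Propositions \ref{prop:Cayley} and \ref{prop:ApproxLog}) and Proposition \ref{prop:AlmostLog}. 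Your converse \ref{item:BetaResult_d=2_beta} is only gestured at; the paper's argument is short but concrete: normalize $\beta(0,\cdot)\equiv\Id$ (Remark \ref{rmk:beta0}), note that $k_1\mapsto\deg([\det\beta(k_1,\cdot)^{-1}])$ is continuous and integer-valued, hence constant, vanishes at $k_1=0$, and at $k_1=1/2$ equals $\deg([\det\gamma])$, which computes $\mathcal{I}(\alpha)$ by \eqref{eqn:IntegralRueda_c}. Finally, your justification of part \ref{item:BetaResult_d=2_noTRS} through null-homotopy is both vague and partially circular: Theorem \ref{thm:MainResults_GPhomotopy} is deduced in the paper \emph{from} the present theorem, and null-homotopy of the loop $\alpha$ does not by itself produce a constructed $\beta$; the paper instead uses $\alpha\sub{gap}$ and a branch cut threaded between two adjacent non-degenerate eigenvalue branches, which requires no topological hypothesis.
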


In the next two Sections, we will prove the above result and perform the construction of the family $\beta$ for $d \le 2$, and thus produce continuous and periodic (respectively symmetric) Bloch frames. The construction for $d=3$ will be investigated elsewhere \cite{CorneanMonaco}. Notice how, when $d=2$, the construction of a family $\beta$ satisfying also \eqref{eqn:beta-TRS} requires a further $\Z_2$ index $\mathcal{I}(\alpha) \in \Z_2$ to vanish, corresponding to the topological obstruction to the existence of a continuous symmetric Bloch frame \cite{GrafPorta13,FiorenzaMonacoPanati16}, as stated in Theorem \ref{thm:MainResults_BB}(\ref{item:Result_d=2}). The connection between these two (topological) obstructions will be investigated more thoroughly in Section \ref{sec:Z2}.

\subsection{Construction of the matrix $\beta$}

Coming back for a moment to the general case of arbitrary dimension $d$, we illustrate here a series of situations in which Problem \ref{pbl:alpha} admits a positive answer. The strategy of the proof of Theorem \ref{thm:MainResults_Beta} will then be to reduce ourselves to these ``favourable'' situations.

\subsubsection{When $\alpha$ has a logarithm}

A solution to Problem \ref{pbl:alpha} can be produced in any dimension, provided the family of matrices $\alpha$ admits a ``good logarithm'', as detailed in the next result.

\begin{proposition} \label{prop:GoodLog}
Let $\alpha = \set{\alpha(\kk_\perp)}_{\kk_\perp \in \R^{d-1}}$ be as in the statement of Problem \ref{pbl:alpha}. Assume that $\alpha(\kk_\perp)$ is in the form 
\begin{equation} \label{eqn:alphalog}
\alpha(\kk_\perp) = \alpha\sub{log}(\kk_\perp) := \eu^{\iu \, h(\kk_\perp)},
\end{equation}
where $h = \set{h(\kk_\perp)}_{\kk_\perp \in \R^{d-1}}$ is a continuous and $\Z^{d-1}$-periodic family of self-adjoint matrices:
\begin{equation} \label{eqn:h_properties}
h(\kk_\perp) = h(\kk_\perp)^* = h(\kk_\perp + \mathbf{n}_\perp), \quad \kk_\perp \in \R^{d-1}, \: \mathbf{n}_\perp \in \Z^{d-1}.
\end{equation}
Then there exists a continuous family $\beta\sub{log} = \set{\beta\sub{log}(\kk) = \beta\sub{log}(k_1, \kk_\perp)}_{\kk = (k_1, \kk_\perp) \in \R^d}$ of unitary matrices which is periodic with respect to $\kk_\perp$ and satisfies \eqref{eqn:alpha_vs_beta}. Moreover, $\beta\sub{log}$ can be explicitly constructed.

If moreover $h$ can be chosen such that 
\begin{equation} \label{eqn:hTRS}
\eps \, h(\kk_\perp) = h(-\kk_\perp)^t \, \eps, \quad \kk_\perp \in \R^{d-1},
\end{equation}
then $\beta\sub{log}$ can be constructed so as to satisfy also \eqref{eqn:beta-TRS}.
\end{proposition}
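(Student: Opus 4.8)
The plan is to build $\beta\sub{log}(k_1, \kk_\perp)$ by a linear interpolation in $k_1$ of the logarithm $h$. Concretely, I would set
\[ \beta\sub{log}(k_1, \kk_\perp) := \eu^{-\iu \, k_1 \, h(\kk_\perp)}. \]
This is manifestly continuous in $\kk = (k_1, \kk_\perp)$ (since $h$ is continuous and matrix exponentiation is continuous) and $\Z^{d-1}$-periodic in $\kk_\perp$ (since $h$ is). The verification of \eqref{eqn:alpha_vs_beta} is then a one-line computation: because $h(\kk_\perp)$ commutes with itself, the factors combine as
\[ \beta\sub{log}(k_1, \kk_\perp)^{-1} \, \alpha(\kk_\perp) \, \beta\sub{log}(k_1 + 1, \kk_\perp) = \eu^{\iu \, k_1 \, h(\kk_\perp)} \, \eu^{\iu \, h(\kk_\perp)} \, \eu^{-\iu \, (k_1+1) \, h(\kk_\perp)} = \Id. \]
Note also that $\beta\sub{log}(0, \kk_\perp) = \Id$, so this construction is already in the normalized form of Remark \ref{rmk:beta0}.

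For the TRS part, I would check that the hypothesis \eqref{eqn:hTRS} on $h$, namely $\eps \, h(\kk_\perp) = h(-\kk_\perp)^t \, \eps$, propagates to $\beta\sub{log}$ through \eqref{eqn:beta-TRS}. Rewriting \eqref{eqn:hTRS} as $\eps^{-1} h(-\kk_\perp)^t \eps = h(\kk_\perp)$ and using that $\overline{h(-\kk_\perp)} = h(-\kk_\perp)^{*\, t} = h(-\kk_\perp)^t$ since $h$ is self-adjoint, one gets $\eps^{-1} \, \overline{h(-\kk_\perp)} \, \eps = h(\kk_\perp)$, equivalently $\overline{h(-\kk_\perp)} = \eps \, h(\kk_\perp) \, \eps^{-1}$. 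Since conjugation by the fixed matrix $\eps$ and complex conjugation both intertwine with matrix exponentiation (the former because it is an algebra automorphism, the latter because $\overline{\eu^{\iu X}} = \eu^{-\iu \overline{X}}$), one computes
\[ \beta\sub{log}(-\kk) = \eu^{\iu \, k_1 \, h(-\kk_\perp)} = \overline{\eu^{-\iu \, k_1 \, \overline{h(-\kk_\perp)}}} = \overline{\eu^{-\iu \, k_1 \, \eps \, h(\kk_\perp) \, \eps^{-1}}} = \overline{\eps \, \eu^{-\iu \, k_1 \, h(\kk_\perp)} \, \eps^{-1}} = \overline{\eps} \, \overline{\beta\sub{log}(\kk)} \, \overline{\eps^{-1}}. \]
Since $\eps$ is unitary and skew-symmetric, $\overline{\eps} = -\eps^* = -\eps^{-1} = (-\eps)^{-1}$ and $\overline{\eps^{-1}} = (\overline{\eps})^{-1} = -\eps$, and the two sign factors cancel, giving $\beta\sub{log}(-\kk) = \eps^{-1} \, \overline{\beta\sub{log}(\kk)} \, \eps$, which is precisely \eqref{eqn:beta-TRS}. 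Alternatively one can absorb the sign ambiguity by noting that in the normalization \eqref{eqn:eps=J}, $\overline{\eps} = \eps^{-1} = -\eps$, making the bookkeeping transparent.

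I do not expect a genuine obstacle here: the whole proposition reduces to writing down the obvious $k_1$-linear interpolation and checking two short identities, both of which follow from the elementary fact that matrix exponentiation intertwines with any algebra (anti)automorphism, here conjugation by $\eps$ and complex conjugation. The only point requiring a little care is the sign bookkeeping with $\eps$ in the TRS computation, stemming from $\overline{\eps} = - \eps^{-1}$; this is best handled by working in the normal form \eqref{eqn:eps=J} for $\eps$, as licensed by Remark \ref{remark:eps}, where the signs visibly cancel in pairs. The real content of the paper lies in the later sections, where one must actually produce such a good logarithm $h$ (trivially in $d=1$, and only up to approximation and the vanishing of $\mathcal{I}$ in $d=2$); the present proposition is merely the clean statement that, once a good logarithm is available, rotating $\alpha$ to the identity is immediate.
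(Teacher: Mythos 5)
Your proof is correct and follows essentially the same route as the paper: the paper also sets $\beta\sub{log}(k_1,\kk_\perp)=\eu^{-\iu k_1 h(\kk_\perp)}$ (stated on $[-1/2,1/2]$ and extended by $\beta\sub{log}(k_1+1,\kk_\perp)=\alpha(\kk_\perp)^{-1}\beta\sub{log}(k_1,\kk_\perp)$, which in this commuting situation reproduces exactly your global formula), and verifies \eqref{eqn:beta-TRS} by the same conjugation computation using $\overline{h(-\kk_\perp)}=\eps\,h(\kk_\perp)\,\eps^{-1}$ and $\overline{\eps}=-\eps^{-1}$. Your sign bookkeeping with $\eps$ is accurate, so there is nothing to fix.
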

\begin{proof}
The equality in \eqref{eqn:alphalog} is clearly equivalent to
\begin{equation} \label{eqn:1st_log}
\eu^{-\iu \, h(\kk_\perp) /2} \, \alpha\sub{log}(\kk_\perp) \, \eu^{-\iu \, h(\kk_\perp) /2} = \Id.
\end{equation}
A family of matrices $\set{\beta\sub{log}(\kk) = \beta\sub{log}(k_1, \kk_\perp)}_{\kk = (k_1, \kk_\perp) \in \R^d}$ which is periodic in $\kk_\perp$ and satisfies \eqref{eqn:alpha_vs_beta} is then obtained by setting
\[ \beta\sub{log}(k_1, \kk_\perp) := \eu^{- \iu \, k_1 \, h(\kk_\perp)}, \quad k_1 \in \left[-1/2, 1/2\right], \: \kk_\perp \in \R^{d-1}, \]
and extending this definition to all $k_1 \in \R$ by using repeatedly
\begin{equation} \label{eqn:ext_beta}
\beta\sub{log}(k_1 + 1, \kk_\perp) := \alpha(\kk_\perp)^{-1} \, \beta\sub{log}(k_1,\kk_\perp).
\end{equation}
Continuity in $k_1$ of this extension follows directly from \eqref{eqn:1st_log}; moreover, $\beta\sub{log}$ satisfies \eqref{eqn:alpha_vs_beta} and periodicity in $\kk_\perp$ by construction. 

If $h$ satisfies also \eqref{eqn:hTRS}, then $\beta\sub{log}$ enjoys the property \eqref{eqn:beta-TRS}, since in view of the self-adjointness of $h(\kk_\perp)$
\begin{equation} \label{eqn:betalogTRS} 
\eps^{-1} \, \overline{\beta(\kk)} \, \eps = \eps^{-1} \, \overline{\eu^{- \iu \, k_1 \, h(\kk_\perp)}} \, \eps = \eu^{\iu \, k_1 \, [ \eps^{-1} \, \overline{h(\kk_\perp)} \, \eps]} = \eu^{- \iu \, (-k_1) \, h(-\kk_\perp)} = \beta(-\kk).
\end{equation}
\end{proof}

We will see in Section \ref{sec:d=1} that the above situation is exactly the one satisfied by the family of matching matrices associated of a $1$-dimensional family of projectors (compare Section \ref{sec:matching1d}). Thus Proposition \ref{prop:GoodLog} will prove Theorem \ref{thm:MainResults_Beta}(\ref{item:BetaResult_d=1}).

A useful way to actually produce a ``good logarithm'' as in the statement of the above Proposition is via the \emph{Cayley transform}, as we will now detail. This relates the existence of a logarithm with some spectral properties of the family of unitary matrices.

\begin{proposition}[Cayley transform] \label{prop:Cayley}
Let $\set{\widetilde{\alpha}(\kk_\perp)}_{\kk_\perp \in \R^{d-1}}$ be a family of unitary matrices which is continuous and $\Z^{d-1}$-periodic. Assume that $-1$ lies in the resolvent set of $\widetilde{\alpha}(\kk_\perp)$ for all $\kk_\perp \in \R^{d-1}$. Then $\widetilde{\alpha}(\kk_\perp)$ is the form \eqref{eqn:alphalog}, namely there exists a family $\set{\widetilde{h}(\kk_\perp)}_{\kk_\perp \in \R^{d-1}}$ of self-adjoint matrices which is continuous, $\Z^{d-1}$-periodic and such that for all $\kk_\perp \in \R^{d-1}$ and $\mathbf{n}_\perp \in \Z^{d-1}$
\[ \widetilde{\alpha}(\kk_\perp) = \eu^{\iu \, \widetilde{h}(\kk_\perp)}, \quad \widetilde{h}(\kk_\perp) = \widetilde{h}(\kk_\perp)^* = \widetilde{h}(\kk_\perp+ \mathbf{n}_\perp). \]

If moreover $\widetilde{\alpha}(\kk_\perp)$ is TRS, \ie it satisfies \eqref{eqn:CS'}, then the above family of self-adjoint matrices can be chosen so that
\[ \eps \, \widetilde{h}(\kk_\perp) = \widetilde{h}(-\kk_\perp)^t \, \eps. \]
\end{proposition}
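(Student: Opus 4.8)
The plan is to obtain $\widetilde h$ directly by holomorphic functional calculus, using that the spectral hypothesis is exactly what makes a fixed branch of the logarithm available along the whole family; no input from the earlier sections is needed. Write $\mathrm{Log}$ for the principal branch of the complex logarithm, holomorphic on the open set $\C \setminus (-\infty,0]$. Since $\widetilde{\alpha}(\kk_\perp)$ is unitary, its spectrum lies on the unit circle, and the hypothesis that $-1$ lies in the resolvent set of $\widetilde{\alpha}(\kk_\perp)$ gives $\sigma(\widetilde{\alpha}(\kk_\perp)) \subset \set{z \in \C : |z| = 1, \: z \ne -1} \subset \C \setminus (-\infty,0]$ for every $\kk_\perp$ (indeed $-1$ is the only point of the unit circle on $(-\infty,0]$). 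Hence I set
\[ \widetilde{h}(\kk_\perp) := -\iu \, \mathrm{Log}\big(\widetilde{\alpha}(\kk_\perp)\big) = \frac{1}{2\pi \iu} \oint_{\Gamma_{\kk_\perp}} \big(-\iu \, \mathrm{Log}\, z\big)\, \big(z\,\Id - \widetilde{\alpha}(\kk_\perp)\big)^{-1} \, \di z, \]
where $\Gamma_{\kk_\perp}$ is any positively oriented contour in $\C \setminus (-\infty,0]$ winding once around $\sigma(\widetilde{\alpha}(\kk_\perp))$.

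Next I would verify the asserted properties. Using the spectral decomposition of the normal matrix, $\widetilde{\alpha}(\kk_\perp) = \sum_j \eu^{\iu \varphi_j}\,\Pi_j$ with orthogonal projections $\Pi_j$ and $\varphi_j \in (-\pi,\pi)$, one gets $\widetilde{h}(\kk_\perp) = \sum_j \varphi_j\, \Pi_j$, which is manifestly self-adjoint and satisfies $\eu^{\iu\, \widetilde{h}(\kk_\perp)} = \sum_j \eu^{\iu\varphi_j}\,\Pi_j = \widetilde{\alpha}(\kk_\perp)$. The $\Z^{d-1}$-periodicity $\widetilde{h}(\kk_\perp + \mathbf{n}_\perp) = \widetilde{h}(\kk_\perp)$ is immediate from that of $\widetilde{\alpha}$. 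For continuity in $\kk_\perp$ I would argue locally: near a fixed $\kk_\perp^0$ the spectra stay inside one fixed contour $\Gamma^0 \subset \C\setminus(-\infty,0]$, so there $\widetilde{h}(\kk_\perp)$ is given by the above integral with $\Gamma^0$ in place of $\Gamma_{\kk_\perp}$, and $\kk_\perp \mapsto (z\,\Id - \widetilde{\alpha}(\kk_\perp))^{-1}$ is continuous uniformly for $z \in \Gamma^0$; the same representation shows that $\widetilde{h}$ is real analytic whenever $\widetilde{\alpha}$ is.

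Finally, the TRS statement. Assume $\eps\, \widetilde{\alpha}(\kk_\perp) = \widetilde{\alpha}(-\kk_\perp)^t\, \eps$, i.e.\ $\widetilde{\alpha}(-\kk_\perp)^t = \eps\, \widetilde{\alpha}(\kk_\perp)\, \eps^{-1}$. The two facts I need are that the functional calculus commutes with transposition and with conjugation by the fixed invertible matrix $\eps$; both follow by transposing (resp.\ conjugating) the resolvent identities $(z\Id - A^t)^{-1} = \big((z\Id - A)^{-1}\big)^t$ and $(z\Id - SAS^{-1})^{-1} = S(z\Id - A)^{-1}S^{-1}$ inside the contour integral, using that $\sigma(A^t) = \sigma(SAS^{-1}) = \sigma(A)$ (so the same contour is legitimate) and that $A^t = \widetilde{\alpha}(-\kk_\perp)^t$ is again unitary with spectrum avoiding $-1$. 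Then
\[ \widetilde{h}(-\kk_\perp)^t = -\iu\,\mathrm{Log}\big(\widetilde{\alpha}(-\kk_\perp)\big)^t = -\iu\,\mathrm{Log}\big(\widetilde{\alpha}(-\kk_\perp)^t\big) = -\iu\,\mathrm{Log}\big(\eps\,\widetilde{\alpha}(\kk_\perp)\,\eps^{-1}\big) = \eps\,\widetilde{h}(\kk_\perp)\,\eps^{-1}, \]
which rearranges to $\eps\, \widetilde{h}(\kk_\perp) = \widetilde{h}(-\kk_\perp)^t\, \eps$, as required. The construction is thus essentially a one-line application of the holomorphic functional calculus; there is no real obstacle, and the only point demanding (routine) care is precisely this last compatibility of $\mathrm{Log}(\cdot)$ with transposition, which is where the skew structure encoded by $\eps$ enters.
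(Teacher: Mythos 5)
Your proof is correct, and it reaches the same $\widetilde h$ as the paper by a more direct route. The paper's argument (which gives the proposition its name) first passes to the self-adjoint Cayley transform $s(\kk_\perp) = \iu\,(\Id - \widetilde\alpha(\kk_\perp))(\Id + \widetilde\alpha(\kk_\perp))^{-1}$, checks continuity, periodicity and the relation $\eps\, s(\kk_\perp) = s(-\kk_\perp)^t\,\eps$ at the level of $s$, and then defines $\widetilde h(\kk_\perp)$ by integrating $\log\bigl(\tfrac{1+\iu z}{1-\iu z}\bigr)$ against the resolvent of $s(\kk_\perp)$ along a single contour enclosing the real spectrum of $s$ (a global contour exists because, by periodicity and continuity, the spectrum of $\widetilde\alpha$ stays uniformly away from $-1$). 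You instead apply the Riesz holomorphic functional calculus with the principal branch of the logarithm directly to the unitary family; since $\widetilde\alpha = (\Id+\iu s)(\Id - \iu s)^{-1}$ and the principal branch is the one selected on the image of the real axis under $z \mapsto \tfrac{1+\iu z}{1-\iu z}$, your $\widetilde h = -\iu\,\mathrm{Log}\,\widetilde\alpha$ coincides with the paper's. What your route buys is economy: no auxiliary self-adjoint matrix, self-adjointness and $\eu^{\iu\widetilde h}=\widetilde\alpha$ read off from the spectral decomposition, continuity from a local fixed-contour argument (note this only needs simple connectedness of $\C\setminus(-\infty,0]$ to guarantee the winding-number conditions of the functional calculus, and it sidesteps any uniformity-in-$\kk_\perp$ issue), and TRS from the elementary compatibility of the contour integral with transposition and conjugation by $\eps$. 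What the paper's route buys is that all branch-cut bookkeeping is done for a matrix with real spectrum, where the choice of contour and of $\log$ is arguably more transparent, and the symmetry check is performed once on $s$ rather than on the functional calculus itself. Both treatments of the TRS constraint are routine and equivalent in content.
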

\begin{proof}
We will employ the Cayley transform method (compare \cite[Sec.~2.7.3]{CorneanHerbstNenciu15}) to construct the logarithm $\widetilde{h}$. The matrix
\[ s(\kk_\perp) := \iu \, \left( \Id - \widetilde{\alpha}(\kk_\perp) \right) \, \left( \Id + \widetilde{\alpha}(\kk_\perp) \right)^{-1} \]
is self-adjoint, depends continuously on $\kk_\perp$ and is $\Z$-periodic; moreover, if $\widetilde{\alpha}$ satisfies \eqref{eqn:CS'} then
\[ \eps \, s(\kk_\perp) = s(-\kk_\perp)^t \, \eps. \]
One also immediately verifies that
\[ \widetilde{\alpha}(\kk_\perp) = \left( \Id + \iu \, s(\kk_\perp) \right) \, \left( \Id - \iu \, s(\kk_\perp) \right)^{-1}. \]

Let $\mathcal{C}$ be a closed, positively-oriented contour in the complex plane which encircles the real spectrum of $s(\kk_\perp)$ for all $\kk_\perp \in \R$. Let $\log(\cdot)$ denote the choice of the complex logarithm corresponding to the branch cut on the negative real semi-axis. Then
\[ \widetilde{h}(\kk_\perp) := \frac{1}{2 \pi} \oint_\mathcal{C} \log \left( \frac{1 + \iu \, z}{1 - \iu \, z} \right) \left( s(\kk_\perp) - z \right)^{-1} \, \di z \]
obeys all the required properties.
\end{proof}

\subsubsection{The two-step logarithm}

Clearly, not all families of unitary matrices $\alpha$ admit a continuous and periodic logarithm, that is, they are not all in the form $\alpha\sub{log}$ as in \eqref{eqn:alphalog}. Roughly speaking, this is related to the fact that their eigenvalues can cross and wind around the unit circle, thus preventing a continuous choice of a branch cut for the logarithm which lies always in the resolvent set of $\alpha(\kk_\perp)$.

Using the Cayley transform (Proposition \ref{prop:Cayley}), however, we can construct a family $\beta$ as required in Problem \ref{pbl:alpha} for a family $\alpha$ which is close to one in the form \eqref{eqn:alphalog}.

\begin{proposition} \label{prop:AlmostLog}
Let $\alpha = \set{\alpha(\kk_\perp)}_{\kk_\perp \in \R^{d-1}}$ be as in the statement of Problem \ref{pbl:alpha}. Assume that $\alpha$ satisfies 
\begin{equation} \label{eqn:almost_alphalog}
\sup_{\kk_\perp \in \R^{d-1}} \norm{\alpha(\kk_\perp) - \alpha\sub{log}(\kk_\perp)} < 2, \quad \alpha\sub{log}(\kk_\perp) := \eu^{\iu \, h(\kk_\perp)},
\end{equation}
where $h = \set{h(\kk_\perp)}_{\kk_\perp \in \R^{d-1}}$ is a continuous $\Z^{d-1}$-periodic family of self-adjoint matrices as in \eqref{eqn:h_properties}. Then there exists a continuous family $\beta = \set{\beta(\kk) = \beta(k_1, \kk_\perp)}_{\kk = (k_1, \kk_\perp) \in \R^d}$ of unitary matrices which is periodic with respect to $\kk_\perp$ and satisfies \eqref{eqn:alpha_vs_beta}. Moreover, $\beta\sub{log}$ can be explicitly constructed.

If moreover $\alpha$ is TRS in the sense of \eqref{eqn:CS'} and $h$ can be chosen such that \eqref{eqn:hTRS} holds, then $\beta\sub{log}$ can be constructed so as to satisfy also \eqref{eqn:beta-TRS}.
\end{proposition}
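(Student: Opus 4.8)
The plan is to reduce to Proposition~\ref{prop:GoodLog} by ``symmetrically factoring out'' the logarithmic part of $\alpha$. First I would set $\alpha\sub{log}^{1/2}(\kk_\perp):=\eu^{\iu\,h(\kk_\perp)/2}$, which is continuous and $\Z^{d-1}$-periodic, and define
\[
\gamma(\kk_\perp):=\eu^{-\iu\,h(\kk_\perp)/2}\,\alpha(\kk_\perp)\,\eu^{-\iu\,h(\kk_\perp)/2}.
\]
Since $\eu^{\iu\,h(\kk_\perp)/2}$ is unitary, $\norm{\gamma(\kk_\perp)-\Id}=\norm{\alpha(\kk_\perp)-\alpha\sub{log}(\kk_\perp)}<2$ by \eqref{eqn:almost_alphalog}, so $-1$ is never an eigenvalue of $\gamma(\kk_\perp)$, i.e.\ it lies in its resolvent set. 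Moreover, if $\alpha$ satisfies \eqref{eqn:CS'} and $h$ satisfies \eqref{eqn:hTRS}, then using $\eps^t=-\eps$ one gets from \eqref{eqn:hTRS} the identity $\eps^{-1}h(\kk_\perp)^t\eps=h(-\kk_\perp)$, from which one checks directly that $\eu^{\iu\,h/2}$, and hence the symmetric conjugate $\gamma$, again satisfy \eqref{eqn:CS'}. Applying the Cayley transform (Proposition~\ref{prop:Cayley}) to $\gamma$ then yields a continuous, $\Z^{d-1}$-periodic, self-adjoint family $g$ with $\gamma(\kk_\perp)=\eu^{\iu\,g(\kk_\perp)}$, which in the TRS case can be chosen so that $\eps\,g(\kk_\perp)=g(-\kk_\perp)^t\eps$. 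Thus $\alpha(\kk_\perp)=\eu^{\iu\,h(\kk_\perp)/2}\,\eu^{\iu\,g(\kk_\perp)}\,\eu^{\iu\,h(\kk_\perp)/2}$ is a product of three ``good logarithms''.

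Next I would build $\beta$ on the slab $k_1\in[-\tfrac12,\tfrac12]$ and extend it by periodicity-with-defect, following the template of the proof of Proposition~\ref{prop:GoodLog}. For $k_1\in[0,\tfrac12]$ set
\[
\beta(k_1,\kk_\perp):=
\begin{cases}
\eu^{-2\iu\,k_1\,h(\kk_\perp)}, & k_1\in[0,\tfrac14],\\
\eu^{-\iu\,h(\kk_\perp)/2}\,\eu^{-\iu(4k_1-1)\,g(\kk_\perp)/2}, & k_1\in[\tfrac14,\tfrac12],
\end{cases}
\]
so that $\beta(0,\kk_\perp)=\Id$ and $\beta(\tfrac12,\kk_\perp)=V(\kk_\perp):=\eu^{-\iu\,h(\kk_\perp)/2}\,\eu^{-\iu\,g(\kk_\perp)/2}$. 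For $k_1\in[-\tfrac12,0]$ set $\beta(k_1,\kk_\perp):=\eps^{-1}\,\overline{\beta(-k_1,-\kk_\perp)}\,\eps$ in the TRS case (both halves agree at $k_1=0$); in the non-TRS case one instead joins $\Id$ at $k_1=0$ to $\eu^{\iu h/2}\eu^{\iu g/2}=\alpha(\kk_\perp)V(\kk_\perp)$ at $k_1=-\tfrac12$ by the mirror two-leg path. Finally extend by $\beta(k_1+n,\kk_\perp):=\alpha(\kk_\perp)^{-n}\,\beta(k_1,\kk_\perp)$ for $k_1\in[-\tfrac12,\tfrac12]$ and $n\in\Z$.

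It then remains to verify the required properties. Periodicity in $\kk_\perp$ and \eqref{eqn:alpha_vs_beta} are immediate from the extension rule. Continuity across the slab boundaries holds precisely when $\beta(\tfrac12,\kk_\perp)=\alpha(\kk_\perp)^{-1}\beta(-\tfrac12,\kk_\perp)$; in the TRS case $\beta(-\tfrac12,\kk_\perp)=\eps^{-1}\overline{V(-\kk_\perp)}\,\eps$, so this reduces to the single identity $\alpha(\kk_\perp)\,V(\kk_\perp)=\eps^{-1}\,\overline{V(-\kk_\perp)}\,\eps$, which follows from $\alpha=\eu^{\iu h/2}\eu^{\iu g}\eu^{\iu h/2}$ and the symmetry identities $\eps^{-1}h(-\kk_\perp)^t\eps=h(\kk_\perp)$, $\eps^{-1}g(-\kk_\perp)^t\eps=g(\kk_\perp)$ by a short computation (both sides equal $\eu^{\iu h(\kk_\perp)/2}\eu^{\iu g(\kk_\perp)/2}$); in the non-TRS case one just checks the mirror path closes up. Lastly, \eqref{eqn:beta-TRS} holds on the slab by construction and propagates to all $k_1$ because \eqref{eqn:CS'} for $\alpha$ translates into $\alpha(-\kk_\perp)^{-1}=\eps^{-1}\overline{\alpha(\kk_\perp)}\,\eps$, which is compatible with the extension rule.

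The main obstacle is the time-reversal bookkeeping in the last step. A naive one-sided factorization $\alpha=\alpha\sub{log}\cdot\gamma$ still produces a ``good logarithm'' for $\gamma$ via Cayley, but that $\gamma$ fails \eqref{eqn:CS'}, and the associated $\beta$ does not satisfy \eqref{eqn:beta-TRS}. It is the symmetric factorization $\alpha=\alpha\sub{log}^{1/2}\,\gamma\,\alpha\sub{log}^{1/2}$ together with the midpoint value $V=\eu^{-\iu h/2}\eu^{-\iu g/2}$ that makes the compatibility identity $\alpha\,V=\eps^{-1}\overline{V(-\,\cdot\,)}\,\eps$ valid, and hence lets the TRS constraint survive the whole construction; getting exactly these choices right is the crux of the proof.
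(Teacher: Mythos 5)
Your proposal is correct and follows essentially the same route as the paper: the key step in both is the symmetric conjugation $\widetilde{\alpha}(\kk_\perp) = \eu^{-\iu h(\kk_\perp)/2}\,\alpha(\kk_\perp)\,\eu^{-\iu h(\kk_\perp)/2}$, which by \eqref{eqn:almost_alphalog} keeps $-1$ in the resolvent set and (being manifestly TRS when $\alpha$ and $h$ are) allows the Cayley transform of Proposition \ref{prop:Cayley} to produce the second periodic, possibly TRS logarithm. The only difference is cosmetic: instead of your two-leg reparametrized path with a reflection on $[-1/2,0]$, the paper simply sets $\beta(k_1,\kk_\perp) := \eu^{-\iu k_1 h(\kk_\perp)}\,\eu^{-\iu k_1 \widetilde{h}(\kk_\perp)}$ on $k_1 \in [-1/2,1/2]$ and extends via \eqref{eqn:ext_beta}, which makes both the matching condition \eqref{eqn:alpha_vs_beta} and the TRS check \eqref{eqn:beta-TRS} one-line computations, exactly as in \eqref{eqn:betalogTRS}.
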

\begin{proof}
The closeness condition \eqref{eqn:almost_alphalog} implies that
\[ \sup_{\kk_\perp \in \R^{d-1}} \norm{\widetilde{\alpha}(\kk_\perp) - \Id} < 2, \quad \text{where} \quad \widetilde{\alpha}(\kk_\perp) := \eu^{-\iu \, h(\kk_\perp)/2} \, \alpha(\kk_\perp) \, \eu^{-\iu \, h(\kk_\perp)/2}. \]
In particular, $-1$ lies the resolvent of $\widetilde{\alpha}(\kk_\perp)$ for all $\kk_\perp \in \R^{d-1}$, and Proposition \ref{prop:Cayley} applies. It follows that there exists a continuous and $\Z^{d-1}$-periodic family $\widetilde{h} = \set{\widetilde{h}(\kk_\perp)}_{\kk_\perp \in \R^{d-1}}$ such that
\[ \eu^{-\iu \, \widetilde{h}(\kk_\perp)/2} \, \eu^{-\iu \, h(\kk_\perp)/2} \, \alpha(\kk_\perp) \, \eu^{-\iu \, h(\kk_\perp)/2} \, \eu^{-\iu \, \widetilde{h}(\kk_\perp)/2} = \Id. \]
Set now
\[ \beta(k_1, \kk_\perp) := \eu^{- \iu \, k_1 \, h(\kk_\perp)} \, \eu^{- \iu \, k_1 \, \widetilde{h}(\kk_\perp)}, \quad k_1 \in \left[-1/2, 1/2\right], \: \kk_\perp \in \R^{d-1}. \]
Arguing similarly to what we did in the proof of Proposition \ref{prop:GoodLog}, we can use again \eqref{eqn:ext_beta} to extend this definition to all $k_1 \in \R$ and obtain a family of matrices $\set{\beta(\kk)}_{\kk \in \R^2}$ with all the properties required in the statement. 

If the family $h$ satisfies also \eqref{eqn:hTRS} and $\alpha$ is TRS, then $\widetilde{\alpha}$ is TRS as well, as one can immediately check. The Cayley transform preserves this symmetry, and $\widetilde{h}$ then also satisfies \eqref{eqn:hTRS}. An argument analogous to \eqref{eqn:betalogTRS} then yields that $\beta$ enjoys the property \eqref{eqn:beta-TRS}.
\end{proof}

The above Proposition covers the case of the family of matching matrices associated to a $2$-dimensional family of projectors, as we will discuss in Section \ref{sec:d=2}. Indeed, we prove that in that case the family of matrices $\alpha$ is \emph{arbitrarily close} to a family of matrices $\alpha\sub{gap}$ which indeed admits a continuous and periodic logarithm. This will prove Theorem \ref{thm:MainResults_Beta}(\ref{item:BetaResult_d=2_noTRS}). Moreover, if the $\Z_2$ index $\mathcal{I}(\alpha)$ from \eqref{eqn:rueda} vanishes, then the latter logarithm can be chosen to satisfy also \eqref{eqn:hTRS}, yielding a family $\beta$ enjoying \eqref{eqn:alpha_vs_beta} and \eqref{eqn:beta-TRS}; the vanishing of the index is also a necessary condition for the existence of the latter family $\beta$. This will prove also Theorem \ref{thm:MainResults_Beta}(\ref{item:BetaResult_d=2}) and conclude the proof of Theorem \ref{thm:MainResults_BB}.

\goodbreak


\section{Symmetric Bloch frames in $d=1$} \label{sec:d=1}

This Section contains the proof of Theorem \ref{thm:MainResults_BB}(\ref{item:Result_d=1}), which is equivalent to Theorem \ref{thm:MainResults_Beta}(\ref{item:BetaResult_d=1}) via Proposition \ref{prop:rotate} as shown in the previous Section. We will see that the construction of $1$-dimensional continuous and symmetric Bloch frames can be expressed purely in terms of parallel transport, and leads automatically a \emph{real analytic} frame.

\subsection{Symmetric Bloch frames in $d=0$} \label{sec:d=0}

Let $\set{P(k)}_{k \in \R}$ be a family of projectors satisfying Assumption \ref{assum:proj}. The inductive construction of a continuous symmetric Bloch frame that we presented in Section \ref{sec:d=d} requires to focus first on the subspace $\Ran P(0) \subset \Hi$. We construct a TRS (\ie symplectic, compare \eqref{eqn:symplectic}) frame in this space as follows.

Let $\Xi_1 \in \Ran P(0)$ be any unit vector, and define $\Xi_2 := \theta \Xi_1$. As was already observed (compare Remark \ref{remark:eps}), the antiunitarity of $\theta$ implies that $\Xi_2$ is again of unit length and is orthogonal to $\Xi_1$.

If $m=2$, then $\Xi(0) := \set{\Xi_1, \Xi_2}$ constitutes an orthonormal basis in $\Ran P(0)$ satisfying $\Xi(0) = \theta \Xi(0) \act \eps$, with $\eps$ in the form \eqref{eqn:eps=J}. If $m > 2$, instead, choose a unit vector $\Xi_3$ in $\Span \{\Xi_1,$ $\Xi_2\}^\perp \subset \Ran P(0)$. Define $\Xi_4 := \theta \Xi_3$; as before, $\Xi_4$ and $\Xi_3$ are orthogonal to each other, and in addition $\Xi_4$ is also orthogonal to $\Xi_1$ and $\Xi_2$ since
\begin{gather*}
\scal{\Xi_4}{\Xi_1} = \scal{\theta \Xi_3}{\Xi_1} = \scal{\theta\Xi_1}{\theta^2 \Xi_3} = - \scal{\Xi_2}{\Xi_3} = 0, \\
\scal{\Xi_4}{\Xi_2} = \scal{\theta \Xi_3}{\Xi_2} = \scal{\theta\Xi_2}{\theta^2 \Xi_3} = - \scal{\Xi_1}{\Xi_3} = 0.
\end{gather*}
By iterating this procedure, we arrive at an orthonormal basis $\Xi(0) := \{ \Xi_1, \ldots,$ $\Xi_m\}$ satisfying 
\[ \Xi(0) = \theta \Xi(0) \act \eps. \]

\subsection{Extension by parallel transport to $d=1$}

We now want to parallel transport this frame along the real line $k \in \R$. To do so, we first notice that by the spectral theorem there exists $M = M^* \in \BH$ with spectrum in $(-\pi, \pi]$ such that $T(1, 0) = \eu^{\iu M}$, where $T(\cdot, \cdot)$ denotes the parallel transport introduced in Section \ref{sec:parallel}. Notice that the relation \ref{item:T-TRS} for $T(1,0)$ reads $\theta \, T(1,0) \, \theta^{-1} = T(1,0)^*$, which in particular implies that the projector-valued spectral measure of $T(1,0)$ commutes with $\theta$. It then follows by spectral calculus that $M$ can be chosen to also commute with $\theta$ (for details, see \cite[Sec.~2.6]{CorneanHerbstNenciu15}). We introduce then the unitary-valued map defined by
\begin{equation} \label{eqn:parallel}
W(k) := T(k, 0) \eu^{- \iu k M} \quad \in \U(\Hi).
\end{equation}
This map enjoys a number of properties, which follow from the corresponding ones satisfied by the parallel transport unitaries, namely:
\begin{enumerate}[label=$(\mathrm{W}_\arabic*)$,ref=$(\mathrm{W}_\arabic*)$]
 \item \label{item:W-smooth} the map $\R \ni k \mapsto W(k) \in \U(\Hi)$ is real analytic;
 \item \label{item:W-periodic} the map $k \mapsto W(k)$ is $\Z$-periodic, \ie $W(k+1) = W(k)$ for $k \in \R$;
 \item \label{item:W-TRS} the map $k \mapsto W(k)$ satisfies $W(-k) = \theta \, W(k) \, \theta^{-1}$.
\end{enumerate}

Next, define
\begin{equation} \label{eqn:Xi1d}
\Xi(k) := W(k) \, \Xi(0), \quad k \in \R,
\end{equation}
where $W(k) \in \U(\Hi)$ is defined in \eqref{eqn:parallel}. We observe first of all that each element $\Xi_a(k)$ of the frame lies in $\Ran P(k)$, in view of the intertwining property \ref{item:T-intertwine} and the fact that $T(1,0) = \eu^{\iu M}$ commutes with $P(1) = P(0)$ (and hence so does $M$). Moreover, properties \ref{item:W-smooth}, \ref{item:W-periodic} and \ref{item:W-TRS} imply that $\Xi(k)$ forms an orthonormal basis, depending analytically on $k$, which is periodic and TRS. Indeed, for example
\[ \theta \Xi(k) \act \eps = \theta \left( W(k) \, \Xi(0) \right) \act \eps = W(-k) (\theta \Xi(0) \act \eps) = W(-k) \Xi(0) = \Xi(-k). \]

This solves the Problem stated in Section \ref{sec:problem} for $d=1$, and proves Theorem \ref{thm:MainResults_BB}(\ref{item:Result_d=1}).

\subsection{Matching matrix in $d=1$, and a geometric index} \label{sec:matching1d}

The relation of the above construction to the language of matching matrices, introduced in the previous Section, can be understood from \eqref{eqn:alpha_vs_T}, which reads now
\begin{equation} \label{eqn:alpha_vs_T1d}
T(1,0) \, \Xi(0) = \Xi(0) \act \alpha.
\end{equation}
We see that in this case there is only a single matching matrix $\alpha$, which represents the restriction of the parallel transport unitary $T(1,0)$ to $\Ran P(0) = \Ran P(1)$ in the basis $\Xi(0)$. The matrix $\alpha$ then admits a logarithm, chosen with respect to any branch cut which does not intersect the unit circle in one of its eigenvalues: $\alpha = \eu^{\iu \, h}$, $h = h^*$. It is immediately recognized that the self-adjoint matrix $h$ is then the matrix associated to the restriction to $\Ran P(0)$ of the operator $M=M^*$ such that $T(1,0) = \eu^{\iu \, M}$, again computed with respect to the basis $\Xi(0)$. The family of matrices defined by $\beta(k) := \eu^{- \iu \, k \, h}$, as in the proof of Proposition \ref{prop:GoodLog}, then ``corrects'' the parallel transport $T(k,0)$ to enforce periodicity in $k$, in the sense of Proposition \ref{prop:rotate}. This also proves Theorem \ref{thm:MainResults_Beta}(\ref{item:BetaResult_d=1}).

Notice that, in this context, the TRS condition \eqref{eqn:CS'} for the matching matrix $\alpha$ reads
\begin{equation} \label{eqn:alpha_TRS1d}
\eps \, \alpha = \alpha^t \, \eps.
\end{equation}
This property implies, as is well known, that its eigenvalues have even multiplicity. We recall the proof of this fact in the following Lemma.

\begin{lemma}[Kramers degeneracy] \label{lemma:Kramers}
Let $\alpha \in U(m)$ satisfy \eqref{eqn:alpha_TRS1d}. Then $\alpha$ has \emph{Kramers degeneracy}, namely all its eigenvalues are even degenerate. 
\end{lemma}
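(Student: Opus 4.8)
The plan is to exploit the intertwining relation $\eps\,\alpha = \alpha^t\,\eps$ together with the explicit normal form \eqref{eqn:eps=J} for $\eps$, turning the statement into an elementary assertion about the eigenspaces of $\alpha$. First I would rewrite the TRS condition in the form $\alpha^t = \eps\,\alpha\,\eps^{-1}$, and then, using $\eps\,\overline\eps = -\Id$ (equivalently $\eps^{-1} = -\overline\eps = -\eps^*$ since $\eps$ is unitary and skew-symmetric, as recalled in Remark~\ref{remark:eps}), convert this into a relation between $\alpha$ and its complex conjugate: taking the transpose of both sides and using $(\alpha^t)^t = \alpha$ gives $\alpha = (\eps\,\alpha\,\eps^{-1})^t = (\eps^{-1})^t\,\alpha^t\,\eps^t = -\eps^{-1}(\eps\,\alpha\,\eps^{-1})(-\eps) = \eps^{-1}\alpha\,\eps\cdot(\text{consistency})$, so more usefully I pass through $\overline{\alpha}$: from $\alpha^* = \alpha^{-1}$ (unitarity) and $\alpha^t = \eps\,\alpha\,\eps^{-1}$ one gets $\overline{\alpha} = (\alpha^t)^* {}^{t}$... concretely the clean route is: $\overline{\alpha} = \overline{\eps^{-1}\alpha^t\eps}$ is not needed; instead combine unitarity $\overline\alpha = (\alpha^{-1})^t = (\alpha^t)^{-1} = \eps\,\alpha^{-1}\,\eps^{-1} = \eps\,\alpha^*\,\eps^{-1}$.

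The key step is then the following: let $\lambda$ be an eigenvalue of $\alpha$ with unit-norm eigenvector $v$, i.e.\ $\alpha v = \lambda v$. Since $\alpha$ is unitary, $|\lambda|=1$, so $\alpha^* v = \overline\lambda v = \lambda^{-1} v$. I claim $w := \eps\,\overline v$ is a second eigenvector of $\alpha$ for the \emph{same} eigenvalue $\lambda$, and is orthogonal to $v$. For the eigenvalue claim, apply $\alpha$ to $w$ and use $\alpha\,\eps = \eps\,\overline{\alpha}$, which follows from $\overline\alpha = \eps^{-1}\alpha\,\eps$ — and this last identity is exactly what the TRS condition \eqref{eqn:alpha_TRS1d} gives after combining with unitarity as above. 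Then $\alpha w = \alpha\,\eps\,\overline v = \eps\,\overline\alpha\,\overline v = \eps\,\overline{\alpha v} = \eps\,\overline{\lambda v} = \overline\lambda\,\eps\,\overline v$; this would give eigenvalue $\overline\lambda$, so I must instead form the conjugate-linear map $v \mapsto \eps\,\overline v$ applied to $\overline v$ of an eigenvector, i.e.\ take $w := \eps\,\overline v$ with $\overline\alpha\,\overline v = \overline\lambda\,\overline v$ so $\alpha w = \eps\,\overline\alpha\,\overline v = \overline\lambda\, w$ — hmm, that is $\overline\lambda$ again. The correct pairing partner is obtained by noting $\overline{\alpha v}=\overline\lambda\,\overline v$ means $\overline v$ is a $\overline\lambda$-eigenvector of $\overline\alpha$; but $\overline\alpha = \eps^{-1}\alpha\eps$, so $\eps\,\overline v$ is a $\overline\lambda$-eigenvector of $\alpha$. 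Thus $v$ ($\lambda$) and $\eps\,\overline v$ ($\overline\lambda$) pair eigenvalue $\lambda$ with $\overline\lambda$; applying this twice ($v \mapsto \eps\overline v \mapsto \eps\,\overline{\eps\,\overline v} = \eps\,\overline\eps\, v = -v$) shows the map has no fixed lines and squares to $-\Id$, hence restricted to any eigenspace $\Ker(\alpha-\lambda\Id)$ — note $\eps\,\overline{(\cdot)}$ maps the $\lambda$-eigenspace to the $\overline\lambda$-eigenspace; when $\lambda = \overline\lambda$ (i.e.\ $\lambda = \pm 1$) it is a fixed-point-free antilinear involution-up-to-sign on that eigenspace, forcing even dimension, and when $\lambda \neq \overline\lambda$ the $\lambda$- and $\overline\lambda$-eigenspaces are isomorphic (conjugate-linearly), but that alone does not force each to be even-dimensional.

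So the honest structure of the proof is: the antilinear map $J := \eps\circ(\text{complex conjugation})$ satisfies $J^2 = -\Id$ and $J\alpha J^{-1} = \alpha$ (from \eqref{eqn:alpha_TRS1d} and unitarity). Hence $J$ preserves every $\alpha$-eigenspace $E_\lambda$ (since $J$ commutes with $\alpha$, it maps $\Ker(\alpha - \lambda) \to \Ker(\alpha - \overline{\lambda}\,{}^{?})$ — need to check: $J$ is antilinear, so $\alpha(Jv) = J(\alpha v)$ wait $J\alpha = \alpha J$ antilinearly means $\alpha(Jv) = J(\alpha v) = J(\lambda v) = \overline\lambda\,Jv$; again $\overline\lambda$). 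The cleanest fix, and the one I would actually write, is to observe that the eigenspaces $E_\lambda$ and $E_{\overline\lambda}$ have the same dimension via the antilinear bijection $J$, and separately that on $E_\lambda \oplus E_{\overline\lambda}$ (a $J$-invariant space since $J^2=-\Id$ and $J$ swaps the summands, or fixes $E_\lambda$ when $\lambda=\pm1$) the restriction of $J$ is a fixed-point-free antilinear map with $J^2 = -\Id$, exactly as in the symplectic-form argument after \eqref{eqn:symplectic}; such a space carries a symplectic structure $(\cdot,\cdot) := \langle J\cdot,\cdot\rangle$ and is therefore even-dimensional. Combined with $\dim E_\lambda = \dim E_{\overline\lambda}$, this yields that each individual eigenspace is even-dimensional, which is the assertion. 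The main obstacle is precisely this bookkeeping with the antilinear map — making sure one tracks whether $J$ sends $E_\lambda$ to $E_\lambda$ or to $E_{\overline\lambda}$, and handling the real-eigenvalue case $\lambda = \pm 1$ uniformly with the complex case; once the map $J = \eps\,C$ is isolated and its two properties $J^2 = -\Id$, $\alpha J = J\alpha$ are recorded, everything reduces to the symplectic dimension-parity fact already used in Remark~\ref{remark:eps}.
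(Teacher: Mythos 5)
There is a genuine gap, and it sits precisely in the intertwining relation you assign to the antilinear map $J=\eps\, C$. From \eqref{eqn:alpha_TRS1d} and unitarity you correctly derive $\overline\alpha = \eps\,\alpha^{-1}\,\eps^{-1}$; but this says $J\alpha J^{-1}=\alpha^{-1}$ (equivalently $\alpha J = J\alpha^{-1}$), \emph{not} $J\alpha J^{-1}=\alpha$ as asserted in your final paragraph. The difference is the whole point of Kramers degeneracy: with the correct relation, if $\alpha v=\lambda v$ then $\alpha(Jv)=J(\alpha^{-1}v)=J(\overline\lambda\, v)=\lambda\, Jv$, because the antilinearity of $J$ conjugates the eigenvalue and $\overline{\lambda^{-1}}=\lambda$ on the unit circle. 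Hence $J$ preserves \emph{each} eigenspace $E_\lambda=\Ker(\alpha-\lambda\Id)$, and since $J^2=-\Id$ and $\scal{Jv}{v}=0$ (antiunitarity plus $J^2=-\Id$), the vectors of $E_\lambda$ come in orthogonal $J$-related pairs, so every eigenvalue is even degenerate — this is exactly the paper's proof. Your first computation goes astray at the same spot: you use $\alpha\,\eps=\eps\,\overline\alpha$, i.e.\ $\overline\alpha=\eps^{-1}\alpha\,\eps$, which does not follow from \eqref{eqn:alpha_TRS1d}; with $\eps$ in the form \eqref{eqn:eps=J} the correct identity is $\alpha\,\eps=\eps\,\alpha^{t}$, which gives $\alpha(\eps\overline v)=\eps\,\overline{\alpha^{*}v}=\lambda\,(\eps\overline v)$ directly — same eigenvalue, not $\overline\lambda$.

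Because of the wrong relation you are led to believe $J$ swaps $E_\lambda$ and $E_{\overline\lambda}$, and the fallback you then propose — $\dim E_\lambda=\dim E_{\overline\lambda}$ together with even-dimensionality of $E_\lambda\oplus E_{\overline\lambda}$ — does not prove the Lemma: for $\lambda\neq\pm 1$ both eigenspaces could a priori be one-dimensional, which is compatible with both facts yet contradicts the claimed even degeneracy. You in fact note this obstruction yourself two sentences earlier, so the closing assertion that "each individual eigenspace is even-dimensional" is a non sequitur. The repair is not better bookkeeping between $E_\lambda$ and $E_{\overline\lambda}$ but the correct algebra: for a \emph{unitary} matrix, condition \eqref{eqn:alpha_TRS1d} makes $J$ pair each eigenvalue with itself, after which the symplectic parity argument of Remark \ref{remark:eps} applies inside each $E_\lambda$ separately, exactly as in the paper.
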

\begin{proof}
Let $\Theta$ be the antiunitary operator $\Theta := \eps C$, where $C$ is the complex conjugation in $\C^m$ with respect to the frame in which $\eps$ has the form \eqref{eqn:eps=J}. Notice that $\Theta^2 = - \Id$, and that, in view of the unitarity of $\alpha$, \eqref{eqn:alpha_TRS1d} reads
\[ \Theta \, \alpha^{-1} \, \Theta^{-1} = \alpha. \]
Finally, let $\mathbf{v} \in \C^m$ be an eigenvector for $\alpha$ of eigenvalue $\lambda \in U(1)$. Then
\[ \alpha \, \Theta \, \mathbf{v} = \Theta \alpha^{-1} \, \mathbf{v} = \Theta \, \left(\lambda^{-1} \, \mathbf{v}\right) = \lambda \, \Theta \, \mathbf{v} \]
so that also $\Theta \, \mathbf{v}$ is an eigenvector of eigenvalue $\lambda$. Moreover using the properties of $\Theta$
\[ \scal{\Theta \, \mathbf{v}}{\mathbf{v}} = \scal{\Theta \, \mathbf{v}}{\Theta (\Theta \, \mathbf{v})} = - \scal{\Theta \, \mathbf{v}}{\mathbf{v}} \quad \Longrightarrow \quad \scal{\Theta \, \mathbf{v}}{\mathbf{v}} = 0 \]
so that $\Theta \, \mathbf{v}$ is also orthogonal (and in particular linearly independent) to $\mathbf{v}$. This shows that eigenvectors of $\alpha$ come in $\Theta$-related pairs, as wanted.
\end{proof}

An alternative way to formulate Kramers degeneracy is in the form of a factorization property, which hold for TRS unitary matrices.

\begin{lemma} \label{lemma:factor1d}
A unitary matrix $\alpha \in U(m)$ satisfies \eqref{eqn:alpha_TRS1d} if and only if there exists $\gamma \in U(m)$ such that
\[ \alpha = \eps^{-1} \, \gamma^t \, \eps \, \gamma. \]
\end{lemma}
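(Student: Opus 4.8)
The plan is to prove the two implications separately: ``$\Leftarrow$'' is essentially a one-line computation, whereas ``$\Rightarrow$'' requires exhibiting the matrix $\gamma$, and that is where the only real content sits.

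For ``$\Leftarrow$'', suppose $\alpha = \eps^{-1}\,\gamma^t\,\eps\,\gamma$ with $\gamma \in U(m)$. Using only the skew-symmetry of $\eps$ (so $\eps^t = -\eps$ and $(\eps^{-1})^t = -\eps^{-1}$), one computes $\eps\,\alpha = \gamma^t\,\eps\,\gamma$ on one side and, taking transposes in the defining relation, $\alpha^t = \gamma^t\,\eps^t\,\gamma\,(\eps^{-1})^t = \gamma^t\,\eps\,\gamma\,\eps^{-1}$, hence $\alpha^t\,\eps = \gamma^t\,\eps\,\gamma$ on the other; comparing the two gives \eqref{eqn:alpha_TRS1d}. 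I would present this as a short displayed calculation.

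For ``$\Rightarrow$'', first rewrite \eqref{eqn:alpha_TRS1d} in the equivalent form $\alpha^t = \eps\,\alpha\,\eps^{-1}$ (legitimate since $\eps$ is invertible). The idea is then to take $\gamma$ to be a \emph{unitary square root} of $\alpha$. Since $\alpha \in U(m)$ is a finite matrix, its spectrum is a finite subset of the unit circle, so one may choose a function $f$ on $\sigma(\alpha)$ with $f(\lambda)^2 = \lambda$ and $|f(\lambda)| = 1$ for each eigenvalue $\lambda$ (for instance $f(\eu^{\iu\varphi}) := \eu^{\iu\varphi/2}$ for $\varphi \in (-\pi,\pi]$), realize $f$ by an interpolating polynomial $p$, and set $\gamma := p(\alpha)$. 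Diagonalizing $\alpha$ unitarily shows $\gamma \in U(m)$ and $\gamma^2 = \alpha$. Since $p$ is a polynomial, one also gets $\gamma^t = p(\alpha)^t = p(\alpha^t) = p(\eps\,\alpha\,\eps^{-1}) = \eps\,p(\alpha)\,\eps^{-1} = \eps\,\gamma\,\eps^{-1}$, using the transposition relation for $\alpha$ together with the fact that conjugation by $\eps$ commutes with polynomial functional calculus. Combining, $\eps^{-1}\,\gamma^t\,\eps\,\gamma = \gamma\cdot\gamma = \alpha$, which is the claimed factorization.

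The one point worth guarding against is largely psychological: one might worry that no square root exists because there is no continuous square root on the whole unit circle, but this is irrelevant here — $\alpha$ is a single matrix with finitely many eigenvalues, and the branch may be chosen independently at each of them. (If a more concrete construction is preferred, one can alternatively invoke the Kramers degeneracy of $\alpha$ provided by Lemma~\ref{lemma:Kramers}: diagonalize $\alpha$ in a Kramers orthonormal basis $\{v_j, \Theta v_j\}$ with $\Theta := \eps C$, let $\gamma$ act as the eigenvalue $\lambda_j$ on each $v_j$ and as the identity on each $\Theta v_j$, and verify $\eps^{-1}\,\gamma^t\,\eps\,\gamma = \alpha$ directly on that basis, using the identity $\eps^{-1}\gamma^t\eps = \Theta\,\gamma^{-1}\,\Theta^{-1}$, valid for every $\gamma \in U(m)$.)
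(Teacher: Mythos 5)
Your proof is correct and follows essentially the same route as the paper: the paper also takes $\gamma$ to be the principal-branch unitary square root of $\alpha$ (writing $\alpha = \eu^{\iu h}$ with $\sigma(h) \subset (-\pi,\pi]$ and setting $\gamma := \eu^{\iu h/2}$, citing \cite[Lemma~1]{FiorenzaMonacoPanati16} for the verification), which is exactly your choice of branch $f(\eu^{\iu\varphi}) = \eu^{\iu\varphi/2}$. Your polynomial-interpolation argument for $\gamma^t = \eps\,\gamma\,\eps^{-1}$ and the explicit ``$\Leftarrow$'' computation simply supply the details the paper delegates to the reference.
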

\begin{proof}
The proof can be found in \cite[Lemma~1]{FiorenzaMonacoPanati16} (see also \cite{Hua44,Schulz-Baldes15}). Assuming \eqref{eqn:alpha_TRS1d}, the matrix $\gamma$ can be constructed as follows: Writing $\alpha = \eu^{\iu \, h}$ with the spectrum of $h$ in $(-\pi, \pi]$, then $\gamma := \eu^{\iu \, h/2}$ satisfies the above condition.
\end{proof}

We want to show now that the determinant of $\alpha$ and of the matrix $\gamma$ appearing in Lemma \ref{lemma:factor1d}, or rather their arguments, can be considered as \emph{geometric $\bmod \: 2$ indices} of the $1$-dimensional family of projectors $\set{P(k)}_{k \in \R}$ as in Assumption \ref{assum:proj} to which they are associated. This will exploit the connection of $\alpha$ with the parallel transport operator $T(1,0)$ given by \eqref{eqn:alpha_vs_T}. Even though these quantities are \emph{not} topological invariants for the latter (in the sense that their values can change if the family undergoes a continuous deformation), they can be still expressed in geometric terms by means of the \emph{Berry connection} associated to the projectors. Moreover, the quantity $\det \gamma$ will play a crucial r\^{o}le in the evaluation of the $\Z_2$ invariant for $2$-dimensional families of projectors (see \eg \eqref{eqn:IntegralRueda_b}).

\begin{proposition} \label{prop:detalphaBerry}
Let $\set{P(k)}_{k \in \R}$ satisfy \ref{item:smooth} and \ref{item:periodic} from Assumption \ref{assum:proj}. Let $\Xi^{\mathrm{per}}(0)$ be any orthonormal basis in $\Ran P(0)$, and define 
\[ \Xi^{\mathrm{per}}(k) := W(k) \, \Xi^{\mathrm{per}}(0), \quad k \in \R, \]
where $W(k)$ is as in \eqref{eqn:parallel}. Then $\Xi^{\mathrm{per}}$ gives a real analytic and periodic Bloch frame for $\set{P(k)}_{k \in \R}$. Moreover, if $\alpha \in U(m)$ is defined via \eqref{eqn:alpha_vs_T1d}, then
\begin{equation} \label{eqn:det=Berry}
\det \left( \alpha \right) = \exp \left(- \iu \oint_\T \A\right),
\end{equation}
where $\T := \R / \Z$ is the \emph{Brillouin torus} and $\A$ is the \emph{trace of the Berry connection $1$-form}
\begin{equation} \label{eqn:ABerry}
\A := - \iu \sum_{a=1}^{m} \scal{\Xi^{\mathrm{per}}_a(k)}{\partial_k \Xi^{\mathrm{per}}_a(k)} \di k.
\end{equation} 
\end{proposition}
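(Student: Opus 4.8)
The plan is to exploit the defining feature of the parallel transport $T(\cdot,0)$ --- namely that it produces a frame whose (matrix-valued) Berry connection vanishes identically --- so that the entire contribution to $\oint_\T\A$ comes from the ``twist'' $\eu^{-\iu kM}$ built into $W$ in \eqref{eqn:parallel}, whose holonomy around the torus is precisely $\det\alpha$.

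First I would dispose of the preliminary assertions. By \ref{item:T-smooth} and functional calculus the map $W$ of \eqref{eqn:parallel} is real analytic; it is $\Z$-periodic because $T(k+1,0) = T(k,0)\,T(1,0) = T(k,0)\,\eu^{\iu M}$ (using \ref{item:T-periodic} and \ref{item:T-group}), whence $W(k+1) = T(k,0)\,\eu^{\iu M}\,\eu^{-\iu(k+1)M} = W(k)$. Since $M$ is a function of $T(1,0)$, which commutes with $P(1)=P(0)$ by \ref{item:T-intertwine} and \ref{item:periodic}, $M$ commutes with $P(0)$; hence $\eu^{-\iu kM}$ preserves $\Ran P(0)$ and, by \ref{item:T-intertwine}, $\Xi^{\mathrm{per}}(k) = W(k)\,\Xi^{\mathrm{per}}(0)$ is an orthonormal basis of $\Ran P(k)$, so that $\Xi^{\mathrm{per}}$ is a real analytic and periodic Bloch frame. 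Moreover, directly from \eqref{eqn:alpha_vs_T1d}, $\alpha$ is the matrix of $T(1,0)|_{\Ran P(0)} = \eu^{\iu M}|_{\Ran P(0)}$ in the basis $\Xi^{\mathrm{per}}(0)$; writing $h$ for the (self-adjoint) $m\times m$ matrix of $M|_{\Ran P(0)}$ in that basis, this reads $\alpha = \eu^{\iu h}$, so that $\det\alpha = \eu^{\iu\,\tr h}$.

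For the core computation, put $\Psi_a(k) := T(k,0)\,\Xi^{\mathrm{per}}_a(0)$. Differentiating \eqref{eqn:parallel_def} (in its $d=1$ form) gives $\partial_k\Psi_a(k) = -\iu\,A(k)\,\Psi_a(k)$ with $A(k) = \iu[\partial_k P(k),P(k)]$ as in \eqref{eqn:A_kperp(k1)}; since $P(k)\,\Psi_a(k) = \Psi_a(k)$ one has $A(k)\,\Psi_a(k) = \iu\,(\Id-P(k))\,\partial_k P(k)\,\Psi_a(k)$, which lies in $\Ran(\Id-P(k))$ and is therefore orthogonal to every $\Psi_b(k)\in\Ran P(k)$; thus $\scal{\Psi_a(k)}{\partial_k\Psi_b(k)} = 0$ for all $a,b$. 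On the other hand $\Xi^{\mathrm{per}}(k) = \Psi(k)\act\eu^{-\iu kh}$, because $W(k)\,\Xi^{\mathrm{per}}_b(0) = T(k,0)\,\eu^{-\iu kM}\,\Xi^{\mathrm{per}}_b(0)$ and $\eu^{-\iu kM}$ restricted to $\Ran P(0)$ has matrix $\eu^{-\iu kh}$ in the basis $\Xi^{\mathrm{per}}(0)$. By the change-of-frame formula $\mathcal{A}^{\Psi\act u} = u^{*}\mathcal{A}^{\Psi}u - \iu\,u^{*}\,\di u$ for the Berry connection, together with the vanishing just established,
\[ \sum_{a=1}^{m}\scal{\Xi^{\mathrm{per}}_a(k)}{\partial_k\Xi^{\mathrm{per}}_a(k)} = \tr\!\big(\eu^{\iu kh}\,\partial_k\eu^{-\iu kh}\big) = -\iu\,\tr h. \]
Substituting this into \eqref{eqn:ABerry} yields $\A = -\tr h\;\di k$, hence $\oint_\T\A = -\tr h$ and $\exp\!\big(-\iu\oint_\T\A\big) = \eu^{\iu\,\tr h} = \det\alpha$, which is \eqref{eqn:det=Berry}.

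The only step calling for genuine care is the vanishing of the matrix-valued Berry connection of the parallel-transported frame $\Psi$ (the algebraic identity that $A(k)$ maps $\Ran P(k)$ into its orthogonal complement); everything else is routine functional calculus and the abelian change-of-frame formula. I would also be careful to match the sign conventions in \eqref{eqn:A_kperp(k1)}, \eqref{eqn:parallel_def} and \eqref{eqn:ABerry}, since the statement is an equality of two unit complex numbers and a single stray sign would invalidate it.
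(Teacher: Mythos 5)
Your proposal is correct and is essentially the paper's own computation: both rest on the decomposition $W(k)=T(k,0)\,\eu^{-\iu k M}$, the fact that $[\partial_k P(k),P(k)]$ maps $\Ran P(k)$ into its orthogonal complement (so the parallel-transport factor contributes nothing to the trace of the Berry connection), and reading off $-\tr h=-\Tr(P(0)MP(0))$ from the twist $\eu^{-\iu k h}$. The only difference is organizational — the paper differentiates $W(k)$ directly and checks that the diagonal $G(k)$-terms vanish, while you record the full vanishing of the connection matrix of the parallel-transported frame $\Psi$ and then invoke the abelian change-of-frame formula — so the two arguments coincide in substance.
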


The equality in \eqref{eqn:det=Berry} is known in the gauge theory community as a formula evaluating a \emph{Wilson loop} \cite{KarpMansouriRno00, FiorenzaSatiSchreiber15}, and is present also in the condensed matter literature on Berry's phase \cite{Berry84}. We give below an independent proof, more suited to our language of Bloch frames.

\begin{proof}[Proof of Proposition \ref{prop:detalphaBerry}]
When $\set{P(k)}_{k \in \R}$ satisfies \ref{item:smooth} and \ref{item:periodic}, then the family of parallel transport operators $\set{T(k,0)}_{k \in \R}$ satisfies all the properties in Proposition \ref{prop:parallel} except for \ref{item:T-TRS}, and consequently the family of unitary operators $\set{W(k)}_{k \in \R}$ satisfies \ref{item:W-smooth} and \ref{item:W-periodic}. This yields that indeed $\Xi^{\mathrm{per}}$ is a real analytic and periodic Bloch frame for $\set{P(k)}_{k \in \R}$.

As for \eqref{eqn:det=Berry}, with the above notation $T(1,0) = \eu^{\iu M}$ and $\alpha = \eu^{\iu \, h}$, we deduce at once that
\[ \det \left( \alpha \right) = \eu^{\iu \, \tr(h)} = \eu^{\iu \Tr \left( P(0) \, M \, P(0) \right)} \]
where $\tr(\cdot)$ denotes the trace in $\C^m \simeq \Ran P(0)$ and $\Tr(\cdot)$ the trace in the Hilbert space $\Hi$. Thus, in order to prove \eqref{eqn:det=Berry}, we just need to show that
\begin{equation} \label{eqn:TrM}
\Tr \left( P(0) \, M \, P(0) \right) = - \oint_\T \A.
\end{equation}

In order to do so, define first of all $G(k) := [\partial_k P(k), P(k)]$, so that $\partial_k T(k,0) = G(k) \, T(k,0)$ (compare \eqref{eqn:parallel_def}). We compute the derivative of the unitary operator $W(k)$ defined in \eqref{eqn:parallel} as follows:
\[ \partial_k W(k) = \left( \partial_k T(k,0) \right) \eu^{- \iu k M} - \iu T(k,0) \, M \eu^{- \iu k M} = G(k) W(k) - \iu W(k) M. \]
Plugging this in \eqref{eqn:Xi1d} yields
\[ \partial_k \Xi^{\mathrm{per}}_a(k) = \left( \partial_k W(k) \right) \Xi^{\mathrm{per}}_a(0) = G(k) \Xi^{\mathrm{per}}_a(k) - \iu W(k) M \Xi^{\mathrm{per}}_a(0), \quad a \in \set{1, \ldots, m}. \]

From the above relation, we compute
\begin{align*}
\scal{\Xi^{\mathrm{per}}_a(k)}{\partial_k \Xi^{\mathrm{per}}_a(k)} & = \scal{\Xi^{\mathrm{per}}_a(k)}{G(k) \Xi^{\mathrm{per}}_a(k)} - \iu \scal{W(k) \Xi^{\mathrm{per}}_a(0)}{W(k) M \Xi^{\mathrm{per}}_a(0)} \\
& = \scal{\Xi^{\mathrm{per}}_a(k)}{G(k) \Xi^{\mathrm{per}}_a(k)} - \iu \scal{\Xi^{\mathrm{per}}_a(0)}{M \Xi^{\mathrm{per}}_a(0)},
\end{align*}
since $W(k)$ is unitary. The first term on the right-hand side of the above equality actually vanishes, since
\begin{align*}
\scal{\Xi^{\mathrm{per}}_a(k)}{[\partial_k P(k), P(k)] \, \Xi^{\mathrm{per}}_a(k)} & = \scal{\Xi^{\mathrm{per}}_a(k)}{\left(\partial_k P(k)\right) P(k) \, \Xi^{\mathrm{per}}_a(k)} \\
& \quad - \scal{\Xi^{\mathrm{per}}_a(k)}{ P(k) \left( \partial_k P(k)\right) \, \Xi^{\mathrm{per}}_a(k)} = 0
\end{align*}
as $P(k) \, \Xi^{\mathrm{per}}_a(k) = \Xi^{\mathrm{per}}_a(k)$ and $P(k) = P(k)^*$. Since $\Xi^{\mathrm{per}}(0)$ forms an orthonormal basis in $\Ran P(0)$, we obtain then
\[ \sum_{a=1}^{m} \scal{\Xi^{\mathrm{per}}_a(k)}{\partial_k \Xi^{\mathrm{per}}_a(k)} = - \iu \, \Tr \left( P(0) \, M \, P(0) \right). \]
By integrating both sides of the above equality with respect to $k$ over one period, we get exactly \eqref{eqn:TrM}. This concludes the proof of \eqref{eqn:det=Berry}.
\end{proof}

\begin{proposition} \label{prop:TRSBerry}
Let $\set{P(k)}_{k \in \R}$ be as in Assumption \ref{assum:proj}. Let $\Xi^{\mathrm{TRS}}$ be any real analytic and symmetric Bloch frame for $\set{P(k)}_{k \in \R}$. Compute the trace of the Berry connection $\A$ as in \eqref{eqn:ABerry}, using the frame $\Xi^{\mathrm{TRS}}$. Then
\begin{equation} \label{eqn:BerryEven}
\oint_\T \A = 2 \int_{[0,1/2]} \A.
\end{equation}
\end{proposition}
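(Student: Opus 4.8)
The plan is to prove that the coefficient of $\A$ is an \emph{even} function of $k$ when $\A$ is computed with respect to a symmetric frame; the claimed identity \eqref{eqn:BerryEven} is then immediate from $\Z$-periodicity. Concretely, set $f(k) := -\iu\sum_{a=1}^{m}\scal{\Xi^{\mathrm{TRS}}_a(k)}{\partial_k\Xi^{\mathrm{TRS}}_a(k)}$, so that $\A = f(k)\,\di k$ and $\oint_\T\A = \int_0^1 f(k)\,\di k$. Differentiating $\scal{\Xi^{\mathrm{TRS}}_a(k)}{\Xi^{\mathrm{TRS}}_a(k)}=1$ shows that each $\scal{\Xi^{\mathrm{TRS}}_a(k)}{\partial_k\Xi^{\mathrm{TRS}}_a(k)}$ is purely imaginary, so $f$ is real-valued; this observation is used at the very end.

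The first step is to differentiate the TRS relation. Since $\theta$ is a fixed, bounded, additive and $\R$-homogeneous operator it commutes with $\partial_k$, so differentiating $\Xi^{\mathrm{TRS}}_b(-k) = \sum_{a}[\theta\,\Xi^{\mathrm{TRS}}_a(k)]\,\eps_{ab}$ in $k$ yields $(\partial_k\Xi^{\mathrm{TRS}}_b)(-k) = -\sum_{a}[\theta\,(\partial_k\Xi^{\mathrm{TRS}}_a)(k)]\,\eps_{ab}$. The second step is to substitute both identities into the expression for $f(-k)$ and use the antiunitarity of $\theta$ in the form $\scal{\theta\,\xi}{\theta\,\eta}=\scal{\eta}{\xi}$, together with the (anti)linearity of $\scal{\cdot}{\cdot}$ in its two entries:
\begin{align*}
f(-k) &= -\iu\sum_b\scal{\,\textstyle\sum_c[\theta\,\Xi^{\mathrm{TRS}}_c(k)]\,\eps_{cb}}{\,-\textstyle\sum_a[\theta\,(\partial_k\Xi^{\mathrm{TRS}}_a)(k)]\,\eps_{ab}} \\
&= \iu\sum_{a,b,c}\overline{\eps_{cb}}\,\eps_{ab}\,\scal{(\partial_k\Xi^{\mathrm{TRS}}_a)(k)}{\Xi^{\mathrm{TRS}}_c(k)}.
\end{align*}
Carrying out the sum over $b$ gives $\sum_b\eps_{ab}\,\overline{\eps_{cb}}=(\eps\eps^*)_{ac}=\delta_{ac}$ by unitarity of $\eps$, so that $f(-k)=\iu\sum_a\scal{(\partial_k\Xi^{\mathrm{TRS}}_a)(k)}{\Xi^{\mathrm{TRS}}_a(k)}=\iu\,\overline{\sum_a\scal{\Xi^{\mathrm{TRS}}_a(k)}{(\partial_k\Xi^{\mathrm{TRS}}_a)(k)}}$; since the sum inside is purely imaginary, complex-conjugation simply reinstates the sign that produces $f(k)$, and therefore $f(-k)=f(k)$.

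The final step is bookkeeping with the integral. By $\Z$-periodicity of $f$ one has $\oint_\T\A=\int_0^1 f=\int_{-1/2}^{1/2}f$, and the evenness just established gives $\int_{-1/2}^{0}f(k)\,\di k=\int_0^{1/2}f(-k)\,\di k=\int_0^{1/2}f(k)\,\di k$; hence $\oint_\T\A=2\int_0^{1/2}f(k)\,\di k=2\int_{[0,1/2]}\A$, which is \eqref{eqn:BerryEven}. The argument is essentially a single computation; the only genuinely delicate point is keeping track of which entry of $\scal{\cdot}{\cdot}$ carries the complex conjugate when the reshuffling matrices $\eps$ are pulled out of the pairing. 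Once that is handled correctly, unitarity of $\eps$ does the rest, and no use of the normal form \eqref{eqn:eps=J} for $\eps$ is required.
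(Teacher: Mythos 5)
Your proof is correct and follows essentially the same route as the paper: differentiate the TRS relation \eqref{eqn:TRS}, use the antiunitarity of $\theta$ and the unitarity of $\eps$ to show that the integrand of $\A$ is an even (and periodic) function of $k$, and then split the integral over $\T \simeq [-1/2,1/2]$. Your final step via realness of $f$ and complex conjugation is just a repackaging of the paper's identity $-\scal{\partial_k \Xi^{\mathrm{TRS}}_a}{\Xi^{\mathrm{TRS}}_a} = \scal{\Xi^{\mathrm{TRS}}_a}{\partial_k \Xi^{\mathrm{TRS}}_a}$, obtained by differentiating the normalization.
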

\begin{proof}
Let
\[ \omega(k) := \sum_{a=1}^{m} \scal{\Xi^{\mathrm{TRS}}_a(k)}{\partial_k \Xi^{\mathrm{TRS}}_a(k)}, \]
so that $\A = - \iu \, \omega(k) \, \di k$. The equality stated in \eqref{eqn:BerryEven} is implied at once by the fact that $\omega$ is an even and periodic function of $k$, since $\T \simeq [-1/2,1/2]$. The symmetry of $\omega$ descends from the fact that the Bloch frame $\Xi^{\mathrm{TRS}}$, which is used to compute $\A$, is time-reversal symmetric. Indeed, we begin by observing that
\begin{align*}
\theta \, \partial_k \Xi^{\mathrm{TRS}}(k) & = \theta \, \lim_{h \to 0} \frac{\Xi^{\mathrm{TRS}}(k+h) - \Xi^{\mathrm{TRS}}(k)}{h} = \lim_{h \to 0} \frac{\theta \, \Xi^{\mathrm{TRS}}(k+h) - \theta \, \Xi^{\mathrm{TRS}}(k)}{h} \\
& = \lim_{h \to 0} \frac{\Xi^{\mathrm{TRS}}(-k-h) \act \eps^{-1} - \Xi^{\mathrm{TRS}}(k) \act \eps^{-1}}{h} = - (\partial_k \Xi^{\mathrm{TRS}})(-k) \act \eps^{-1}
\end{align*}
or equivalently
\[ (\partial_k \Xi^{\mathrm{TRS}}_b)(-k) = - \sum_{c=1}^{m} [\theta \, \partial_k \Xi^{\mathrm{TRS}}_c(k)] \, \eps_{cb}. \]
By taking the scalar product of both sides of the above equality with $\Xi^{\mathrm{TRS}}_b(-k)$, we obtain
\begin{align*}
\scal{\Xi^{\mathrm{TRS}}_b(-k)}{(\partial_k \Xi^{\mathrm{TRS}}_b)(-k)} & = - \sum_{a,c=1}^{m} \scal{[\theta \, \Xi^{\mathrm{TRS}}_a(k)] \, \eps_{ab}}{[\theta \, \partial_k \Xi^{\mathrm{TRS}}_c(k)] \, \eps_{cb}} \\
& = - \sum_{a,c=1}^{m} \overline{\eps_{ab}} \, \scal{\theta \, \Xi^{\mathrm{TRS}}_a(k)}{\theta \, \partial_k \Xi^{\mathrm{TRS}}_c(k)} \, \eps_{cb} \\
& = - \sum_{a,c=1}^{m} \scal{\partial_k \Xi^{\mathrm{TRS}}_c(k)}{\Xi^{\mathrm{TRS}}_a(k)} \, \eps_{cb} \, (\eps^*)_{ba}.
\end{align*}

Summing over $b \in \set{1,\ldots,m}$ and employing the unitarity of $\eps$ yields
\[ \omega(-k) = - \sum_{a,c=1}^{m} \scal{\partial_k \Xi^{\mathrm{TRS}}_c(k)}{\Xi^{\mathrm{TRS}}_a(k)} \, \sum_{b=1}^{m} \eps_{cb} \, (\eps^*)_{ba} = - \sum_{a=1}^{m} \scal{\partial_k \Xi^{\mathrm{TRS}}_a(k)}{\Xi^{\mathrm{TRS}}_a(k)} = \omega(k) \]
where is the last equality we used the fact that
\[ - \scal{\partial_k \Xi^{\mathrm{TRS}}_a(k)}{\Xi^{\mathrm{TRS}}_a(k)} = \scal{\Xi^{\mathrm{TRS}}_a(k)}{\partial_k \Xi^{\mathrm{TRS}}_a(k)}, \quad a \in \set{1, \ldots, m}, \]
as can be easily derived by differentiating the equality $\scal{\Xi^{\mathrm{TRS}}_a(k)}{\Xi^{\mathrm{TRS}}_a(k)}=1$. 
\end{proof}

Combining the above two results, we obtain

\begin{proposition} \label{prop:BerryConnection}
Let $\set{P(k)}_{k \in \R}$ be as in Assumption \ref{assum:proj}, and let $\Xi$ be the real analytic and symmetric Bloch frame defined in \eqref{eqn:Xi1d}. Compute the trace of the Berry connection $\A$ as in \eqref{eqn:ABerry}, using the frame $\Xi$. Also, let $\alpha \in U(m)$ be defined as in \eqref{eqn:alpha_vs_T1d}, and compute $\gamma \in U(m)$ as in Lemma \ref{lemma:factor1d}. Then we have
\begin{equation} \label{eqn:gammaBerry1d}
2 \left( \frac{1}{2 \pi \iu} \log \det \gamma \right) \equiv \frac{1}{2 \pi} \oint_{\T} \A \bmod 2.
\end{equation}
\end{proposition}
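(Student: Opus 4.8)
The plan is to read off \eqref{eqn:gammaBerry1d} by combining the two facts already established about the matching matrix $\alpha$ of \eqref{eqn:alpha_vs_T1d}: the determinant formula \eqref{eqn:det=Berry} of Proposition~\ref{prop:detalphaBerry} and the factorization $\alpha = \eps^{-1}\gamma^t\eps\gamma$ of Lemma~\ref{lemma:factor1d}.

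First I would note that the symmetric Bloch frame $\Xi$ defined in \eqref{eqn:Xi1d} is exactly of the form to which Proposition~\ref{prop:detalphaBerry} applies, since $\Xi(k) = W(k)\,\Xi(0)$ with $W(k)$ as in \eqref{eqn:parallel} and $\Xi(0)$ an orthonormal basis of $\Ran P(0)$. Taking $\Xi^{\mathrm{per}} := \Xi$ there and evaluating the trace of the Berry connection $\A$ of \eqref{eqn:ABerry} on this very frame, \eqref{eqn:det=Berry} gives $\det\alpha = \exp\bigl(-\iu\oint_\T\A\bigr)$. Next, since $\Xi$ is time-reversal symmetric, its matching matrix $\alpha$ satisfies the TRS relation \eqref{eqn:alpha_TRS1d} --- this is the $d=1$ instance of Proposition~\ref{prop:alpha} --- so Lemma~\ref{lemma:factor1d} applies and provides $\gamma\in U(m)$, which may be taken in the explicit form $\gamma = \eu^{\iu h/2}$ when $\alpha = \eu^{\iu h}$ with the spectrum of $h$ contained in $(-\pi,\pi]$. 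Taking determinants in $\alpha = \eps^{-1}\gamma^t\eps\gamma$ and using $\det(\eps^{-1})\,\det(\eps) = 1$ together with $\det(\gamma^t) = \det\gamma$ yields $\det\alpha = (\det\gamma)^2$, and hence $(\det\gamma)^2 = \exp\bigl(-\iu\oint_\T\A\bigr)$.

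At this point I would take logarithms, divide by $2\pi\iu$ and reduce modulo $2$. The one place requiring care --- and really the only subtlety of the proof --- is the bookkeeping of $2\pi$-periodicities: the quantity $\frac{1}{2\pi\iu}\log\det\gamma$ is defined only modulo $1$, so $2\,\frac{1}{2\pi\iu}\log\det\gamma$ is defined modulo $2$, whereas arguing abstractly from $(\det\gamma)^2 = \exp\bigl(-\iu\oint_\T\A\bigr)$ alone would only pin it down modulo $1$. To obtain the sharper statement modulo $2$ one should instead use the \emph{explicit} square root $\det\gamma = \eu^{\iu\,\tr(h)/2}$ furnished by Lemma~\ref{lemma:factor1d}, together with the exact identity $\tr(h) = \Tr\bigl(P(0)\,M\,P(0)\bigr) = -\oint_\T\A$ established inside the proof of Proposition~\ref{prop:detalphaBerry} (relation \eqref{eqn:TrM}). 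Substituting $\tr(h)$ for $-\oint_\T\A$ and reducing modulo $2$ then gives \eqref{eqn:gammaBerry1d}. Apart from this modulus bookkeeping the argument is entirely routine.
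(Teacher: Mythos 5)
Your argument is essentially the paper's own proof: both rest on the explicit square root $\det\gamma=\eu^{\iu\,\tr(h)/2}$ from Lemma \ref{lemma:factor1d} together with the exact identity $\tr(h)=\Tr\left(P(0)\,M\,P(0)\right)=-\oint_\T\A$ of \eqref{eqn:TrM}, and you correctly isolate the only real subtlety, namely that the abstract relation $(\det\gamma)^2=\det\alpha$ would pin the left-hand side down only modulo $1$. The paper additionally passes through \eqref{eqn:BerryEven} to rewrite things via $\int_{[0,1/2]}\A$, but that step is applied and then undone, so omitting it (as you do) changes nothing of substance; your final reduction matches the paper's, including its handling of the overall sign modulo $2$.
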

\begin{proof}
The frame $\Xi$ in \eqref{eqn:Xi1d} is real analytic, periodic and TRS, so both the above Propositions apply. From \eqref{eqn:TrM} we can compute $\det \gamma = \det(\eu^{\iu \, h/2}) = \eu^{\iu \, \tr(h)/2} = \eu^{\iu \, \Tr(P(0) \, M \, P(0))/2}$, yielding
\[ \det \gamma = \exp \left( - \frac{\iu}{2} \, \oint_{\T} \A \right). \]
Since \eqref{eqn:BerryEven} holds, we have
\[  \det \gamma = \exp \left( -\iu \, \int_{[0, 1/2]} \A \right). \]
It follows that
\[ 2 \left( \frac{1}{2 \pi \iu} \log \det \gamma \right) \equiv 2 \left( \frac{1}{2 \pi} \int_{[0, 1/2]} \A \right) \equiv \frac{1}{2 \pi} \oint_{\T} \A \bmod 2 \]
using again \eqref{eqn:BerryEven}. This completes the proof.
\end{proof}

\begin{remark}[Gauge dependence of the Berry connection] \label{rmk:Gauge}
Notice that, although $\A$ is a \emph{gauge-dependent} quantity, the expression $\exp( - \iu \oint_\T \A)$ is not (as would {\it a posteriori} follow from \eqref{eqn:det=Berry}, where the left-hand side is independent of the choice of gauge). Indeed, if a (periodic) unitary gauge $\beta: \T \to U(m)$ is applied to the frame $\Xi$, \ie $\Xi \to \Xi \act \beta$, then $\A$ changes according to
\begin{equation} \label{eqn:BerryGauge}
\A \to \A + \iu \tr( \beta(k)^* \di \beta(k) ).
\end{equation}
Since $(2 \pi \iu)^{-1} \oint_\T \tr( \beta(k)^* \di \beta(k) )$ computes the degree of the map $\det \beta \colon \T \to U(1)$ (see \eqref{eqn:DegUn}), it is an integer, and we have that the quantity $\exp( - \iu \oint_\T \A)$ remains unchanged.

We remark that, when both periodicity and the TRS condition are enforced on the frame which is used to compute the (trace of the) Berry connection, then $\oint_\T \A$ is actually well-defined modulo $4 \pi$. Indeed, choosing a unitary gauge $\beta$ which preserves periodicity and TRS of the frame $\Xi$ is equivalent to choosing a map $\beta \colon \R \to U(m)$ such that
\begin{equation} \label{eqn:gammaperTR}
\beta(k) = \beta(k+1) \quad \text{and} \quad \beta(-k) = \eps^{-1} \, \overline{\beta(k)} \, \eps, \qquad k \in \R,
\end{equation}
as one can immediately verify by asking that both $\Xi(k)$ and $\Xi(k) \act \beta(k)$ satisfy \eqref{eqn:FramePeriodic} and \eqref{eqn:FrameTR} (compare the proof of Proposition \ref{prop:rotate}). The degree of the determinant of a map $\beta \colon \T \to U(m)$ satisfying \eqref{eqn:gammaperTR} must then be necessarily even. Indeed, the second equality in \eqref{eqn:gammaperTR} yields $\det \beta(-k) = (\det \beta(k))^{-1}$, and by differentiating 
\[ (\partial_k \det \beta)(-k) = (\det \beta(k))^{-2} \, (\partial_k \det \beta)(k) \]
or equivalently
\[ (\det \beta(-k))^{-1} \, (\partial_k \det \beta)(-k) = (\det \beta(k))^{-1} \, (\partial_k \det \beta)(k). \]
By integrating on $\T$ we then obtain
\begin{equation} \label{eqn:deggamma}
\begin{aligned}
\deg([\det \beta]) & = \frac{1}{2 \pi \iu} \oint_\T (\det \beta(k))^{-1} \, (\partial_k \det \beta)(k) \, \di k \\
& = 2 \left( \frac{1}{2 \pi \iu} \int_{0}^{1/2} (\det \beta(k))^{-1} \, (\partial_k \det \beta)(k) \, \di k \right).
\end{aligned}
\end{equation}

Notice now that, evaluating the second equality in \eqref{eqn:gammaperTR} at $k = 0$ and $k=1/2$ and using periodicity, we obtain
\[ \beta(k_*) = \eps^{-1} \, \overline{\beta(k_*)} \, \eps, \quad k_* \in \set{0,1/2}. \]
By putting $\eps$ in the symplectic form \eqref{eqn:eps=J}, we realize that the above condition states that $\beta(0)$ and $\beta(1/2)$ are actually \emph{symplectic matrices}, and thus unimodular: $\det \beta(0) = \det \beta(1/2) = 1$. The term inside the brackets on the right-hand side of \eqref{eqn:deggamma} computes then the degree of the \emph{periodic} function $\det \beta : [0,1/2] \to U(1)$, and is thus an integer. We conclude as wanted that $\deg([\det \beta]) \in 2 \Z$, which by \eqref{eqn:BerryGauge} implies that $\oint_\T \A$ is well-defined up to $(2 \cdot 2 \pi) \Z$. This is what makes the right-hand side of \eqref{eqn:gammaBerry1d} well-defined $\bmod \: 2$.
\end{remark}

\begin{remark}
The condition \eqref{eqn:alpha_TRS1d}, which holds for the matrix $\alpha$, can be restated as the skew-symmetry of the matrix $\eps \, \alpha$, since $\eps^t = - \eps$. The determinant of the matrix $\gamma$ as in Lemma \ref{lemma:factor1d}, which appears on the left-hand side of \eqref{eqn:gammaBerry1d}, can then be also expressed in a more intrinsic form (that is, directly in terms of $\alpha$) as
\[ \det \gamma = \Pf (\eps \, \alpha), \]
where $\Pf(\cdot)$ denotes the \emph{Pfaffian} of a skew-symmetric matrix. This follows from the properties of the Pfaffian, and the way it changes under similarity tranformations: indeed
\[ \Pf (\eps \, \alpha) = \Pf (\gamma^t \, \eps \, \gamma) = (\det \gamma) \, (Pf \, \eps) = \det \gamma \]
as $\Pf \eps = 1$.
\end{remark}

\goodbreak


\section{Symmetric Bloch frames in $d=2$} \label{sec:d=2}

We switch now to the $2$-dimensional case. Let $\set{P(\kk)}_{\kk \in \R^2}$ be a family of projectors satisfying Assumption \ref{assum:proj}. The restriction $\set{P(0,k_2)}_{k_2 \in \R}$ is then a $1$-dimensional family of projectors which also satisfies Assumption \ref{assum:proj}. We have showed in Section \ref{sec:d=1} how to construct a real analytic symmetric Bloch frame $\Xi(0, k_2)$ for such a family. As we discussed in Section \ref{sec:matching}, this leads to the definition of a family $\alpha = \set{\alpha(k_2)}_{k_2 \in \R}$ of matching matrices, given by
\begin{equation} \label{eqn:alpha_vs_Tk2}
T_{k_2}(1,0) \, \Xi(0, k_2) = \Xi(0, k_2) \act \alpha(k_2).
\end{equation}
The properties of the family of matching matrices, stated in Proposition \ref{prop:alpha}, are summarized in the following

\begin{assumption} \label{assum:alpha}
The family $\alpha = \set{\alpha(k_2)}_{k_2 \in \R}$ of $m \times m$ unitary matrices enjoys the following properties:
\begin{enumerate}[label=$(\mathrm{A}_\arabic*)$,ref=$(\mathrm{A}_\arabic*)$]
\item \label{item:alpha_analytic} the map $\R \ni k_2 \mapsto \alpha(k_2) \in U(m)$ is real analytic;
\item \label{item:alpha_periodic} the map $k_2 \mapsto \alpha(k_2)$ is $\Z$-periodic;
\item \label{item:alpha_TRS} for all $k_2 \in \R$
\[ \eps \, \alpha(k_2) = \alpha(-k_2)^t \, \eps \]
holds. \qedhere
\end{enumerate}
\end{assumption}

Recall how, in Section \ref{sec:d=d}, our original Problem \ref{pbl:Bloch} concerning the construction of a Bloch frame for $\set{P(\kk)}_{\kk \in \R^2}$ which is continuous and symmetric was reduced to Problem \ref{pbl:alpha} for a family of matrices $\alpha$ as in Assumption \ref{assum:alpha}.

\subsection{Kramers degeneracy and factorization of matching matrices} \label{sec:matching2d}

We deduce here a few properties that follow from the above conditions.

Notice first of all that \ref{item:alpha_TRS} can be restated at $k=0$ and $k=1/2$ as
\begin{equation} \label{eqn:alpha_TRS0}
\eps \, \alpha(0) = \alpha(0)^t \, \eps \quad \text{and} \quad \eps \, \alpha(1/2) = \alpha(1/2)^t \, \eps
\end{equation}
due to the periodicity \ref{item:alpha_periodic} of $\alpha$. In particular, the matrices $\alpha(0)$ and $\alpha(1/2)$ have Kramers degeneracy in view of Lemma \ref{lemma:Kramers}, and moreover Lemma \ref{lemma:factor1d} applies to both, yielding the existence of two matrices $\gamma(0), \gamma(1/2) \in U(m)$ such that
\begin{equation} \label{eqn:gammas}
\alpha(0) = \eps^{-1} \, \gamma(0)^t \, \eps \, \gamma(0) \quad \text{and} \quad \alpha(1/2) = \eps^{-1} \, \gamma(1/2)^t \, \eps \, \gamma(1/2).
\end{equation}

A stronger form of Kramers degeneracy is implied by the following factorization result for the family of matching matrices. This results generalizes Lemma \ref{lemma:factor1d} to periodic and TRS \emph{families} of unitary matrices, and will be useful also in the discussion of the $\Z_2$ index associated to such families (see Section \ref{sec:GrafPorta} below).

\begin{lemma} \label{lemma:factor}
Let $\set{\alpha(k_2)}_{k_2 \in \R}$ be as in Assumption \ref{assum:alpha}. There exists a family $\set{\gamma(k_2)}_{k_2 \in \R}$ of unitary matrices which is continuous, $\Z$-periodic, and such that
\begin{equation} \label{eqn:factor}
\alpha(k_2) = \eps^{-1} \, \gamma(-k_2)^t \, \eps \, \gamma(k_2).
\end{equation}

Conversely, if there exists a family $\set{\gamma(k_2)}_{k_2 \in \R}$ with all the above properties, then the family $\alpha$ defined by \eqref{eqn:factor} is continuous and satisfies \ref{item:alpha_periodic} and \ref{item:alpha_TRS}.
\end{lemma}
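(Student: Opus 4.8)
The plan is to prove the converse implication first (it is a direct computation) and then build $\gamma$ for the direct implication by a gluing argument over a fundamental domain, using the two fixed points $k_2=0$ and $k_2=1/2$ of the involution $k_2\mapsto -k_2$ (mod $\Z$) as anchors.

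\emph{Converse direction.} Given a continuous, $\Z$-periodic family $\set{\gamma(k_2)}$, the family $\alpha$ defined by \eqref{eqn:factor} is manifestly continuous and $\Z$-periodic, so \ref{item:alpha_periodic} holds. For \ref{item:alpha_TRS} one uses the skew-symmetry $\eps^t=-\eps$ established in Remark \ref{remark:eps}: then $\alpha(-k_2)^t\,\eps = (\eps^{-1}\gamma(k_2)^t\eps\gamma(-k_2))^t\,\eps = \gamma(-k_2)^t\,\eps\,\gamma(k_2)\,\eps^{-1}\eps = \gamma(-k_2)^t\,\eps\,\gamma(k_2) = \eps\,\alpha(k_2)$, which is exactly \eqref{eqn:CS'}.

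\emph{Direct direction.} First, by \eqref{eqn:alpha_TRS0} the single matrices $\alpha(0)$ and $\alpha(1/2)$ satisfy \eqref{eqn:alpha_TRS1d}, so Lemma \ref{lemma:factor1d} supplies $\gamma(0),\gamma(1/2)\in U(m)$ with $\alpha(0)=\eps^{-1}\gamma(0)^t\eps\gamma(0)$ and $\alpha(1/2)=\eps^{-1}\gamma(1/2)^t\eps\gamma(1/2)$. Since $U(m)$ is path-connected, choose any continuous path $s\mapsto\gamma(s)$, $s\in[0,1/2]$, joining these two values; this defines $\gamma$ on $[0,1/2]$. Then, for $s\in[0,1/2]$, set
\[ \gamma(-s) := \eps^{-1}\,\overline{\gamma(s)}\,\alpha(s)^t\,\eps, \]
which is the unique unitary matrix making \eqref{eqn:factor} hold at $k_2=s$ (here one uses $\overline{\gamma(s)}=(\gamma(s)^t)^{-1}$ because $\gamma(s)$ is unitary). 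Finally, extend $\gamma$ from the fundamental domain $[-1/2,1/2]$ to all of $\R$ by $\Z$-periodicity. By construction \eqref{eqn:factor} holds on $[0,1/2]$; plugging in the definition of $\gamma(-s)$ and using \ref{item:alpha_TRS} (which gives $\alpha(-s)=\eps^{-1}\alpha(s)^t\eps$) shows \eqref{eqn:factor} on $[-1/2,0]$, and hence on all of $\R$ since both sides of \eqref{eqn:factor} are $\Z$-periodic.

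The only delicate point — and the step I expect to be the crux — is the consistency of this construction at the fixed points: continuity of $\gamma$ at $0$ and well-definedness of the periodic extension at $1/2$ require $\gamma(k_*)=\eps^{-1}\overline{\gamma(k_*)}\,\alpha(k_*)^t\,\eps$ for $k_*\in\set{0,1/2}$, and a short manipulation (conjugating by $\eps$ and invoking \eqref{eqn:alpha_TRS0}) shows that this identity is \emph{equivalent} to the factorization $\alpha(k_*)=\eps^{-1}\gamma(k_*)^t\eps\gamma(k_*)$. Thus one cannot pick arbitrary unitaries at the fixed points; they must be genuine ``square roots'' of $\alpha(k_*)$ in the sense of Lemma \ref{lemma:factor1d}, and it is precisely that lemma (e.g. $\gamma=\eu^{\iu h/2}$ from $\alpha=\eu^{\iu h}$) which guarantees their existence. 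Everything else is linear algebra with $\eps^t=-\eps$ together with path-connectedness of $U(m)$; in particular no analyticity of $\alpha$ is used, consistent with the fact that $\gamma$ is only asserted to be continuous.
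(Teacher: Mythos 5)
Your proof is correct and follows essentially the same route as the paper's: anchor $\gamma(0),\gamma(1/2)$ via Lemma \ref{lemma:factor1d} (i.e.\ \eqref{eqn:gammas}), interpolate on $[0,1/2]$, define $\gamma$ on $[-1/2,0]$ by the reflection formula forced by \eqref{eqn:factor} (your $\gamma(-s)=\eps^{-1}\,\overline{\gamma(s)}\,\alpha(s)^t\,\eps$ is exactly the transpose of the paper's $\left(\eps\,\alpha(-k_2)\,\widetilde{\gamma}(-k_2)^{-1}\,\eps^{-1}\right)^t$), and verify continuity at $0$ and matching at $\pm 1/2$ using \eqref{eqn:gammas}. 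The only cosmetic difference is that the paper writes the interpolation explicitly as $\gamma(0)\,\eu^{2\iu k_2 N}$ instead of appealing to path-connectedness of $U(m)$, in keeping with its constructive aims.
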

\begin{proof}
Let $\gamma(0), \gamma(1/2) \in U(m)$ be as in \eqref{eqn:gammas}. Write $\gamma(0)^{-1} \, \gamma(1/2) = \eu^{\iu \, N}$, with $N=N^*$, and define
\[ \widetilde{\gamma}(k_2) := \gamma(0) \, \eu^{2 \, \iu \,k_2 \, N}, \quad k_2 \in \left[0, 1/2 \right]. \]
The above family of matrices then interpolates between $\gamma(0)$ and $\gamma(1/2)$ as $k_2$ runs from $0$ to $1/2$. We can extend the definition of $\gamma(k_2)$ for $k_2 \in [-1/2,0]$ by setting
\[ \gamma(k_2) := \begin{cases} 
\widetilde{\gamma}(k_2) & \text{if } k_2 \in \left[ 0, 1/2 \right], \\
\left(\eps \, \alpha(-k_2) \, \widetilde{\gamma}(-k_2)^{-1} \, \eps^{-1} \right)^t & \text{if } k_2 \in \left[ -1/2, 0 \right].
\end{cases} \]
This extension is continuous at $k_2 = 0$, since in view of the first equality in \eqref{eqn:gammas} and the fact that $\widetilde{\gamma}(0) = \gamma(0)$ we have
\[ \left(\eps \, \alpha(0) \, \widetilde{\gamma}(0)^{-1} \, \eps^{-1} \right)^t = \left(\gamma(0)^t \, \eps \, \gamma(0) \, \widetilde{\gamma}(0)^{-1} \, \eps^{-1} \right)^t = \widetilde{\gamma}(0). \]
Notice moreover that, arguing similarly, one obtains also that $\gamma(-1/2) = \gamma(1/2)$ in view of the second equality in \eqref{eqn:gammas}. Thus, we can extend continuously $\gamma(k_2)$ in a periodic fashion to all $k_2 \in \R$. The required property \eqref{eqn:factor} is then satisfied by construction, owing to the periodicity \ref{item:alpha_periodic} of $\alpha$.

The converse statement is of immediate verification.
\end{proof}

\subsection{Interlude: The Graf--Porta index $\mathcal{I}$} \label{sec:GrafPorta}

From a topological viewpoint, it seems that families of matrices as in Assumption \ref{assum:alpha} are all trivial. Indeed, if $\alpha$ satisfies the latter Assumption, then \ref{item:alpha_TRS} implies that the function $k_2 \mapsto \det \alpha(k_2)$ is even, and hence its degree is zero, as can be realized at once from the integral formula \eqref{eqn:degree}. In particular, any such family of matrices is null-homotopic, \ie it can be continuously deformed to the constant family of matrices $\alpha_0(k_2) \equiv \Id$.

In \cite{GrafPorta13}, however, Graf and Porta introduced a $\Z_2$-valued index associated to families of unitary matrices as in Assumption \ref{assum:alpha}. We will recognize in the next Subsections that the vanishing of this index is equivalent to the existence of a family of matrices $\beta$ which solves Problem \ref{pbl:alpha}, and that it identifies the homotopy class of the family whenever a TRS constraint is required. It is then useful to recall its definition and main properties, for the reader's convenience.

The construction in \cite{GrafPorta13} goes as follows. Let $\set{P(\kk)}_{\kk \in \R^2}$ be a family of projectors satisfying Assumption \ref{assum:proj}. Given any continuous Bloch frame $\Psi(\kk)$ which satisfies TRS and periodicity in $k_2$, a matching matrix $\alpha(k_2)$ is defined by comparing the values of $\Psi$ at $k_1 = -1/2$ and $k_1 = 1/2$, like in \eqref{eqn:alpha}. Such family of unitary matrices then satisfies Assumption \ref{assum:alpha}, implying in particular Kramers degeneracy of $\alpha(0)$ and $\alpha(1/2)$ (Lemma \ref{lemma:Kramers}).

Choose a continuous labelling
\[ \sigma(\alpha(k_2)) = \set{\lambda_1(k_2), \ldots, \lambda_m(k_2)}, \quad k_2 \in \left[0,1/2\right] \] 
of the eigenvalues of $\alpha(k_2)$ (this is possible in view of \cite[Thm.~II.5.2]{Kato66}). We interpret $\lambda_i$ as the $i$-th coordinate of a continuous map $\Lambda \colon [0,1/2] \to U(1)^m$. We choose also a continuous extension $\widetilde{\Lambda} \colon [-1/2,0] \to U(1)^m$ of $\Lambda$, in such a way that $\widetilde{\Lambda}(0) = \Lambda(0)$ and $\widetilde{\Lambda}(-1/2) = \Lambda(1/2)$, and each $\widetilde{\lambda}_i(k_2)$ is ``even degenerate'' (\ie it coincides with some $\widetilde{\lambda}_j(k_2)$ with $j \ne i$) for all $k_2 \in [-1/2,0]$. The map $\widetilde{\Lambda} \, \sharp \, \Lambda \colon [-1/2,1/2] \to U(1)^m$, defined by 
\[ \widetilde{\Lambda} \, \sharp \, \Lambda(k_2) := \begin{cases}
\widetilde{\Lambda}(k_2) & \text{if } k_2 \in [-1/2,0], \\
\Lambda(k_2) & \text{if } k_2 \in [0,1/2],
\end{cases} \]
is then periodic, and so is the product $\det \widetilde{\Lambda} \, \sharp \, \Lambda(k_2) := \prod_{i=1}^{m} (\widetilde{\Lambda} \, \sharp \, \Lambda)_i(k_2)$ of all its coordinates. As any periodic map with values in $U(1)$, it admits a \emph{topological degree} (as was discussed in Section \ref{sec:DegreeUnitary}), an integer which can be evaluated through \eqref{eqn:degree}. It is proved in \cite{GrafPorta13} that 
\begin{equation} \label{eqn:rueda}
\mathcal{I}(\alpha) := \deg([\det \widetilde{\Lambda} \, \sharp \, \Lambda]) \bmod 2
\end{equation}
is a well-defined $\Z_2$ index, \ie the parity of the integer $\deg([\det \widetilde{\Lambda} \, \sharp \, \Lambda])$ does not depend on the chosen ``even degenerate'' extension $\widetilde{\Lambda}$.

We next prove a different formula for the Graf--Porta $\Z_2$ index, expressed purely in terms of the family of matrices $\set{\alpha(k_2)}_{k_2 \in \R}$ from which it is computed.

\begin{proposition} \label{prop:IntegralRueda}
Let $\set{\alpha(k_2)}_{k_2 \in \R}$ be as in Assumption \ref{assum:alpha}. Then its $\Z_2$ index $\mathcal{I}(\alpha)$, defined as in \eqref{eqn:rueda}, equals
\begin{subequations}
\begin{align}
\mathcal{I}(\alpha) & = \frac{1}{2 \pi \iu} \left( \int_{0}^{1/2} A(k_2)^{-1} \, A'(k_2) \, \di k_2 + 2 \log \frac{\prod_{j=1}^{n} \eu^{\iu \, \mu_j}}{\prod_{j=1}^{n} \eu^{\iu \, \nu_j}} \right) \mod 2 \label{eqn:IntegralRueda_a} \\
& = \frac{1}{2 \pi \iu} \left( \int_{0}^{1/2} A(k_2)^{-1} \, A'(k_2) \, \di k_2 + 2 \log \frac{C(0)}{C(1/2)} \right) \mod 2 \label{eqn:IntegralRueda_b} \\
& = \frac{1}{2 \pi \iu} \int_{-1/2}^{1/2} C(k_2)^{-1} \, C'(k_2) \, \di k_2 \mod 2 \label{eqn:IntegralRueda_c}
\end{align}
\end{subequations}
where:
\begin{itemize}
\item $A(k_2) := \det \alpha(k_2)$, and the prime denotes derivative with respect to $k_2$;
\item $\set{\eu^{\iu \, \mu_j}}_{1\le j\le n}$, $n := m/2$, (respectively $\set{\eu^{\iu \, \nu_j}}_{1\le j\le n}$) is the set of the doubly degenerate eigenvalues (repeated according to multiplicity) of the matrix $\alpha(0)$ (respectively $\alpha(1/2)$);
\item $C(k_2) := \det \gamma(k_2)$, with $\gamma(k_2)$ as in \eqref{eqn:factor}.
\end{itemize}
\end{proposition}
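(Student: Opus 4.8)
The plan is to realise the integer $\deg([\det\widetilde\Lambda\,\sharp\,\Lambda])$ defining $\mathcal{I}(\alpha)$ in \eqref{eqn:rueda} as a winding number, compute it as the integral of a logarithmic derivative, split that computation at $k_2=0$, and identify the two pieces. Concretely, I would use that for a continuous $\Z$-periodic map $f\colon\R\to U(1)$ one has $\deg([f])=\tfrac{1}{2\pi\iu}\oint_\T f^{-1}f'$ (formula \eqref{eqn:degree}), that more generally $\tfrac{1}{2\pi\iu}\int_a^b f^{-1}f'$ records the change of $\tfrac{1}{2\pi}\arg f$ along a path $f\colon[a,b]\to U(1)$, and that this is additive under concatenation; any mild lack of smoothness is harmless, since $\deg$ is a homotopy invariant, $\det\alpha$ is real analytic, and $\widetilde\Lambda$ may be chosen piecewise smooth. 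Applied to $f=\det\widetilde\Lambda\,\sharp\,\Lambda$ and split at $k_2=0$ this gives
\[ \mathcal{I}(\alpha)\equiv\frac{1}{2\pi\iu}\Big(\int_{0}^{1/2}\partial_{k_2}\log\big(\textstyle\prod_{i=1}^{m}\lambda_i(k_2)\big)\,\di k_2+\int_{-1/2}^{0}\partial_{k_2}\log\big(\textstyle\prod_{i=1}^{m}\widetilde\lambda_i(k_2)\big)\,\di k_2\Big)\bmod 2. \]

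On $[0,1/2]$ we have $\prod_i\lambda_i(k_2)=\det\alpha(k_2)=A(k_2)$, so the first term is $\tfrac{1}{2\pi\iu}\int_0^{1/2}A^{-1}A'$. On $[-1/2,0]$ the even degeneracy of $\widetilde\Lambda$ lets me group the $\widetilde\lambda_i$ into $n$ pairs of coinciding eigenvalue functions, so $\prod_{i=1}^{m}\widetilde\lambda_i=\big(\prod_{j=1}^{n}\widetilde\mu_j\big)^{2}$ for a continuous choice $\widetilde\mu_j$ of one eigenvalue per pair; hence the winding of $\prod_i\widetilde\lambda_i$ is twice that of $\prod_j\widetilde\mu_j$, namely $\tfrac{2}{2\pi\iu}\log\big(\prod_j\widetilde\mu_j(0)/\prod_j\widetilde\mu_j(-1/2)\big)$. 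Since $\widetilde\Lambda(0)=\Lambda(0)=\sigma(\alpha(0))$ and $\alpha(0)$ has Kramers degeneracy (Lemma~\ref{lemma:Kramers}), the half-product $\prod_j\widetilde\mu_j(0)$ is independent of the chosen representatives and equals $\prod_j\eu^{\iu\mu_j}$; likewise $\prod_j\widetilde\mu_j(-1/2)=\prod_j\eu^{\iu\nu_j}$. As changing the branch of the logarithm alters $\tfrac{2}{2\pi\iu}(\cdot)$ by an even integer, this yields \eqref{eqn:IntegralRueda_a}. To pass to \eqref{eqn:IntegralRueda_b} I would invoke the explicit $\gamma$ of Lemma~\ref{lemma:factor}: at $k_2=0$ it is the matrix of Lemma~\ref{lemma:factor1d}, $\gamma(0)=\eu^{\iu h(0)/2}$ with $\alpha(0)=\eu^{\iu h(0)}$ and $\sigma(h(0))\subset(-\pi,\pi]$; since $h(0)$ inherits Kramers degeneracy, $\tr h(0)=2\sum_j\mu_j$, so $C(0)=\det\gamma(0)=\eu^{\iu\sum_j\mu_j}=\prod_j\eu^{\iu\mu_j}$, and symmetrically $C(1/2)=\prod_j\eu^{\iu\nu_j}$. (One should also note that $C(0)/C(1/2)$ and $\deg([\det\gamma])\bmod 2$ are independent of the particular $\gamma$ satisfying \eqref{eqn:factor}: two such differ by a unitary gauge $u$ obeying the relation \eqref{eqn:beta-TRS}, for which $\det u(0)=\det u(1/2)=1$ and $\deg([\det u])\in2\Z$ by the argument in Remark~\ref{rmk:Gauge}.)

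Finally, \eqref{eqn:IntegralRueda_c} follows from the factorization \eqref{eqn:factor}: taking determinants gives $A(k_2)=C(k_2)\,C(-k_2)$, hence $A^{-1}A'(k_2)=C'(k_2)/C(k_2)-C'(-k_2)/C(-k_2)$; integrating over $[0,1/2]$ and substituting $k_2\mapsto-k_2$ in the second summand yields $\int_0^{1/2}A^{-1}A'=\int_0^{1/2}C^{-1}C'-\int_{-1/2}^{0}C^{-1}C'$, while $2\log\big(C(0)/C(1/2)\big)=-2\int_0^{1/2}C^{-1}C'$. Adding, $\int_0^{1/2}A^{-1}A'+2\log\big(C(0)/C(1/2)\big)=-\int_{-1/2}^{1/2}C^{-1}C'$; since $\tfrac{1}{2\pi\iu}\int_{-1/2}^{1/2}C^{-1}C'=\deg([\det\gamma])\in\Z$, the sign is immaterial modulo $2$, so \eqref{eqn:IntegralRueda_b} agrees with \eqref{eqn:IntegralRueda_c}.

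The main obstacle I anticipate is the $\bmod\,2$ bookkeeping of logarithm branches, together with the (mild) regularity issue for the eigenvalue labelling $\Lambda$, which need not be smooth at crossings even though $\det\alpha$ is; both are handled by phrasing the winding numbers via continuous argument-lifts and invoking homotopy invariance of the degree, and by the fact that the half-determinant of a matrix with purely even spectrum is well-defined independently of representatives — the same fact that makes the even-degeneracy pairing of $\widetilde\Lambda$ at $k_2=0,-1/2$ compatible with the Kramers pairings of $\alpha(0)$ and $\alpha(1/2)$.
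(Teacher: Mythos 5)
Your proof is correct and follows essentially the same route as the paper's: split the winding of $\det(\widetilde{\Lambda}\,\sharp\,\Lambda)$ at $k_2=0$, use Kramers/even degeneracy to express the $[-1/2,0]$ contribution as twice a half-winding pinned at the endpoints by the halved spectra of $\alpha(0)$ and $\alpha(1/2)$, identify $\prod_j \eu^{\iu\mu_j}=C(0)$ and $\prod_j \eu^{\iu\nu_j}=C(1/2)$ via $\gamma=\eu^{\iu h/2}$, and deduce \eqref{eqn:IntegralRueda_c} from $A(k_2)=C(k_2)\,C(-k_2)$. The only (cosmetic) deviation is that the paper computes with an explicitly constructed doubled extension $\widetilde{\Lambda}$, whereas you pair a general even-degenerate extension into continuous coinciding branches — strictly you should just fix such a doubled extension, which the extension-independence built into \eqref{eqn:rueda} permits; your added observation that $C(0)$, $C(1/2)$ and $\deg([\det\gamma])\bmod 2$ do not depend on the choice of $\gamma$ in \eqref{eqn:factor} is correct and a useful supplement the paper leaves implicit.
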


\begin{remark} \label{rmk:logIntRueda}
Notice that the branch of the logarithm appearing on the right-hand side of \eqref{eqn:IntegralRueda_a} and \eqref{eqn:IntegralRueda_b} should be chosen so that the branch cut does not intersect the unit circle in any of the eigevalues of $\alpha(0)$ or $\alpha(1/2)$. If this restriction is satisfied, then the choice of the branch does not influence the value of $\mathcal{I}(\alpha) \in \Z_2$, as a different choice would shift the logarithm by a multiple of $2 \pi \iu$, and hence the expressions in \eqref{eqn:IntegralRueda_a} and \eqref{eqn:IntegralRueda_b} are well defined $\bmod\:2$.
\end{remark}

\begin{proof}[Proof of Proposition \ref{prop:IntegralRueda}]
The computation of the Graf--Porta index requires an ``even degenerate'' extension $\widetilde{\Lambda}$ of the family of eigenvalues of $\set{\alpha(k_2)}_{k_2 \in [0,1/2]}$ to the interval $[-1/2,0]$, but the value of $\mathcal{I}(\alpha)$ is independent of such an extension. We can therefore perform an appropriate choice of $\widetilde{\Lambda}$. We define $\widetilde{\Lambda}$ as follows: By Kramers degeneracy (Lemma \ref{lemma:Kramers}), the eigenvalues of $\alpha(0)$ and $\alpha(1/2)$ come in pairs. Let $\set{\eu^{\iu \mu_1}, \ldots, \eu^{\iu \mu_n}}$, $n = m/2$, be the even-degenerate eigenvalues of $\alpha(0)$, and respectively $\set{\eu^{\iu \nu_1}, \ldots, \eu^{\iu \nu_n}}$ those of $\alpha(1/2)$, as in the statement; the arguments $\mu_j$ and $\nu_j$, $j \in \set{1, \ldots, n}$, are chosen in $[0,2 \pi)$. Also, let $W_0$ be the unitary map which diagonalizes $\alpha(0)$, namely $\alpha(0) = W_0 \left( \bigoplus_{j=1}^{n} \eu^{\iu \mu_j} \, \Id_2 \right) W_0^*$, and analogously let $W_{1/2}$ be the unitary map diagonalizing $\alpha(1/2)$. Define then, for $k_2 \in [-1/2,0]$,
\[ \widetilde{\alpha}(k_2) := W(k_2) \, \eta(k_2) \, W(k_2)^*, \quad \text{where} \quad \eta(k_2) := \bigoplus_{j=1}^{n} \eu^{\iu \, [(1+2k_2) \, \mu_j - 2 \, k_2 \, \nu_j]} \, \Id_2 \]
and where $W(k_2)$ is any continuous path of unitary matrices interpolating $W_{1/2}$ and $W_0$ for $k_2 \in [-1/2,0]$ (such path exists because the unitary group is path-connected%
\footnote{A possible choice of this path is as follows: Writing $W_{1/2}^{-1} \, W_0 = \eu^{\iu \, X}$ with $X = X^*$, set
\[ W(k_2) := W_{1/2} \, \eu^{\iu \, (1+2k_2) \, X}, \quad k_2 \in [-1/2,0]. \]}%
). The spectrum of $\widetilde{\alpha}(k_2)$, which is completely determined by the one of $\eta(k_2)$, gives then a possible extension $\widetilde{\Lambda}$.

Set $\widetilde{A}(k_2) := \det \widetilde{\alpha}(k_2)$. We compute
\begin{align*}
\int_{-1/2}^{0} \widetilde{A}(k_2)^{-1} \, \widetilde{A}'(k_2) \, \di k_2 & = \prod_{j=1}^{n} \int_{-1/2}^{0} \eu^{-2\iu [(1+2k_2) \mu_j - 2 k_2 \nu_j]} \, \partial_{k_2} \left( \eu^{2\iu [(1+2k_2) \mu_j - 2 k_2 \nu_j]} \right) \, \di k_2 \\
& = 2 \iu \sum_{j=1}^{m} \mu_j - \nu_j \equiv 2 \log \dfrac{\prod_{j=1}^{n} \eu^{\iu \mu_j}}{\prod_{j=1}^{n} \eu^{\iu \nu_j}} \mod (2 \cdot 2 \pi \iu).
\end{align*}
The products appearing on the right-hand side of the above equality count each of the doubly degenerate eigenvalues of $\alpha(0)$ and $\alpha(1/2)$ only once. Since the matrix $\gamma(0)$ (respectively $\gamma(1/2)$) appearing in \eqref{eqn:gammas} has by construction doubly degenerate eigenvalues $\eu^{\iu \mu_j/2}$ (respectively $\eu^{\iu \nu_j/2}$), $j \in \set{1, \ldots, n}$ (compare the proof of Lemma \ref{lemma:factor1d}), the product mentioned before indeed computes its determinant $C(0)$ (respectively $C(1/2)$):
\[ 2 \log \dfrac{\prod_{j=1}^{n} \eu^{\iu \mu_j}}{\prod_{j=1}^{n} \eu^{\iu \nu_j}} = 2 \log \frac{C(0)}{C(1/2)}. \]
This concludes the proof of \eqref{eqn:IntegralRueda_a} and \eqref{eqn:IntegralRueda_b}, since by the definition \eqref{eqn:rueda}
\[ \mathcal{I}(\alpha) = \frac{1}{2 \pi \iu} \left( \int_{0}^{1/2} A(k_2)^{-1} \, A'(k_2) \, \di k_2 + \int_{-1/2}^{0} \widetilde{A}(k_2)^{-1} \, \widetilde{A}'(k_2) \, \di k_2 \right) \mod 2. \]

We come to \eqref{eqn:IntegralRueda_c}. From \eqref{eqn:factor} we deduce at once that
\[ A(k_2) := \det \alpha(k_2) = C(k_2) \, C(-k_2), \quad C(k_2) := \det \gamma(k_2). \]
One can then immediately compute that
\[ A(k_2)^{-1} \, A'(k_2) = C(k_2)^{-1} \, C'(k_2) - C(-k_2)^{-1} \, C'(-k_2). \]
Integrating the above equation for $k_2 \in [0,1/2]$ and using the periodicity of $\alpha(k_2)$ yields
\begin{align*}
\int_{0}^{1/2} A(k_2)^{-1} \, A'(k_2) \, \di k_2 & = \int_{0}^{1/2} C(k_2)^{-1} \, C'(k_2) \, \di k_2 - \int_{0}^{1/2} C(-k_2)^{-1} \, C'(-k_2) \, \di k_2 \\
& = \int_{0}^{1/2} C(k_2)^{-1} \, C'(k_2) \, \di k_2 - \int_{-1/2}^{0} C(k_2)^{-1} \, C'(k_2) \, \di k_2.
\end{align*}
By using now that
\[ \frac{2}{2 \pi \iu} \log \frac{C(0)}{C(1/2)} \equiv \frac{2}{2 \pi \iu} \int_{-1/2}^{0} C(k_2)^{-1} \, C'(k_2) \, \di k_2 \mod 2, \]
we obtain from the formula \eqref{eqn:IntegralRueda_b} we proved above that
\[ \mathcal{I}(\alpha) = \frac{1}{2 \pi \iu} \int_{-1/2}^{1/2} C(k_2)^{-1} \, C'(k_2) \, \di k_2 \mod 2. \]
The index of the matrix $\alpha$ is then equal to the parity of the degree of the determinant of the matrix $\gamma$ appearing in \eqref{eqn:factor}, introduced in Section \ref{sec:DegreeUnitary}. This concludes the proof.
\end{proof}

The above formula \eqref{eqn:IntegralRueda_a} shows in particular in a direct way that the index $\mathcal{I}(\alpha)$ is a homotopy invariant of the family of matrices $\alpha$. Later we will be able to prove that the Graf--Porta index actually gives a \emph{complete homotopy invariant} for families of matrices as in Assumption \ref{assum:alpha} (see Theorem \ref{thm:homotopyI}).

\begin{lemma} \label{lemma:Ihomotopy}
Let $\set{\alpha_s(k_2)}_{k_2 \in \R}$, $s \in [0,1]$, be a continuous homotopy of families of matrices as Assumption \ref{assum:alpha}. Then $\mathcal{I}(\alpha_s)$ is constant in $s$, and in particular
\[ \mathcal{I}(\alpha_0) = \mathcal{I}(\alpha_1) \in \Z_2. \]
\end{lemma}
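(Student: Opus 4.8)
The plan is to exploit the integral formula \eqref{eqn:IntegralRueda_a}, which already expresses $\mathcal{I}(\alpha)$ in terms of quantities (the integral of $A^{-1}A'$ along $[0,1/2]$ and the arguments of the doubly-degenerate eigenvalues at the two high-symmetry points) that vary continuously under a continuous homotopy through families satisfying Assumption \ref{assum:alpha}. Concretely, suppose $\set{\alpha_s(k_2)}_{k_2 \in \R}$, $s \in [0,1]$, is such a homotopy: each $\alpha_s$ is continuous and $\Z$-periodic in $k_2$, satisfies the TRS relation \ref{item:alpha_TRS}, and $(s, k_2) \mapsto \alpha_s(k_2)$ is jointly continuous. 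Write $A_s(k_2) := \det \alpha_s(k_2)$.

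First I would observe that the \emph{integer} $2 \deg([\det \widetilde{\Lambda}_s \, \sharp \, \Lambda_s])$ — equivalently the right-hand side of \eqref{eqn:IntegralRueda_a} before reducing $\bmod\ 2$, read as an integer via a continuous choice of branch — is precisely (twice) the winding number of a continuous $U(1)$-valued periodic map. The key point is that, as $s$ varies, these periodic maps $\det \widetilde{\Lambda}_s \, \sharp \, \Lambda_s \colon \T \to U(1)$ form a continuous family: the eigenvalues of $\alpha_s(k_2)$ on $[0,1/2]$ admit a jointly continuous labelling (by the continuity of the roots of the characteristic polynomial, or by a parametrized version of \cite[Thm.~II.5.2]{Kato66}), while the ``even degenerate'' extension to $[-1/2,0]$ can be built continuously in $s$ using the explicit interpolation $\eta_s(k_2) = \bigoplus_j \eu^{\iu[(1+2k_2)\mu_j(s) - 2 k_2 \nu_j(s)]} \Id_2$ from the proof of Proposition \ref{prop:IntegralRueda}, where $\mu_j(s), \nu_j(s)$ depend continuously on $s$ (Kramers pairs persist under the homotopy because \ref{item:alpha_TRS} holds for every $\alpha_s$). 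Hence $s \mapsto \det \widetilde{\Lambda}_s \, \sharp \, \Lambda_s$ is a continuous path in $C(\T, U(1))$.

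Since the degree (winding number) is a homotopy invariant of continuous maps $\T \to U(1)$ — it is locally constant on $C(\T,U(1))$ by \eqref{eqn:degree} and the fact that a small perturbation of a $U(1)$-valued map does not change its homotopy class — the function $s \mapsto \deg([\det \widetilde{\Lambda}_s \, \sharp \, \Lambda_s]) \in \Z$ is continuous, hence constant on the connected interval $[0,1]$. A fortiori its reduction $\bmod\ 2$, which by \eqref{eqn:rueda} is exactly $\mathcal{I}(\alpha_s)$, is constant; in particular $\mathcal{I}(\alpha_0) = \mathcal{I}(\alpha_1)$. (One should note that $\mathcal{I}(\alpha_s)$ is well-defined independently of the chosen extension by Proposition \ref{prop:IntegralRueda}, so it suffices to exhibit \emph{one} continuous family of extensions, which the above interpolation provides.)

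The main obstacle I anticipate is the continuous (in $s$) choice of the eigenvalue labelling and of the unitaries $W_0(s), W_{1/2}(s)$ diagonalizing $\alpha_s(0)$ and $\alpha_s(1/2)$: eigenvalue crossings of $\alpha_s(k_2)$ as $(s,k_2)$ varies mean one cannot globally order eigenvalues smoothly, and the diagonalizing frame is only determined up to the stabilizer of the eigenvalue pattern. However, this is precisely circumvented by the remark that $\mathcal{I}$ does not depend on the extension: rather than tracking individual eigenvalues across crossings, one only needs the \emph{product} $\det \widetilde{\Lambda}_s \, \sharp \, \Lambda_s$, which is manifestly continuous in $(s,k_2)$ as a product of continuous functions, and the explicit formula \eqref{eqn:IntegralRueda_a} makes the $s$-dependence transparent. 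Thus the apparent difficulty dissolves once one works with determinants rather than with the eigenvalue maps themselves.
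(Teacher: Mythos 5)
Your argument is correct and follows essentially the same route as the paper's proof: both rest on the observation that the integer computing $\mathcal{I}(\alpha_s)$ via \eqref{eqn:rueda}/\eqref{eqn:IntegralRueda_a} varies continuously in $s$ (you make this explicit by building a continuous-in-$s$ family of ``even degenerate'' extensions, the paper by noting the right-hand side of \eqref{eqn:IntegralRueda_a} depends continuously on the spectral data with a locally constant branch cut), hence is constant, with independence of the chosen extension/branch guaranteeing well-definedness $\bmod\ 2$. The only blemish is the inconsequential slip ``$2\deg$'' in your first paragraph; the quantity computed by \eqref{eqn:IntegralRueda_a} is $\deg([\det\widetilde{\Lambda}_s\,\sharp\,\Lambda_s])$ itself, which is what you actually use afterwards.
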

\begin{proof}
The right-hand side of \eqref{eqn:IntegralRueda_a} depends continuously only on the spectral properties (determinant and eigenvalues) of the family itself, and is integer-valued: as such, it must be constant. 

Notice that the branch cut for the logarithm appearing in \eqref{eqn:IntegralRueda_a} can be chosen to be locally constant in $s$; on the other hand, we have already discussed in Remark \ref{rmk:logIntRueda} that a different choice of the branch cut does not affect the parity of the integer which computes $\mathcal{I}(\alpha)$. This concludes the proof.
\end{proof}

The above homotopy invariance can be used to show that families of matrices as in Assumption \ref{assum:alpha} which are sufficiently close to each other have the same $\Z_2$ index.

\begin{proposition} \label{prop:closeness}
Let $\set{\alpha_0(k_2)}_{k_2 \in \R}$ and $\set{\alpha_1(k_2)}_{k_2 \in \R}$ be as in Assumption \ref{assum:alpha}. Assume that
\[ \sup_{k_2 \in \R} \norm{\alpha_1(k_2) - \alpha_0(k_2)} < 2. \]
Then
\[ \mathcal{I}(\alpha_0) = \mathcal{I}(\alpha_1) \in \Z_2. \]
\end{proposition}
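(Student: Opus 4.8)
The plan is to reduce the statement to the homotopy invariance of the index established in Lemma~\ref{lemma:Ihomotopy}: I will construct a continuous homotopy $\set{\alpha_s(k_2)}_{k_2 \in \R}$, $s \in [0,1]$, of families of unitary matrices, each satisfying Assumption~\ref{assum:alpha}, with $\alpha_s|_{s=0} = \alpha_0$ and $\alpha_s|_{s=1} = \alpha_1$; then Lemma~\ref{lemma:Ihomotopy} gives $\mathcal{I}(\alpha_0) = \mathcal{I}(\alpha_1)$ at once. The obvious candidate, $\alpha_s := \alpha_0 \, \eu^{\iu s h}$ with $\eu^{\iu h(k_2)} := \alpha_0(k_2)^{-1}\alpha_1(k_2)$ — which is legitimate by the Cayley transform (Proposition~\ref{prop:Cayley}), since the closeness hypothesis gives $\norm{\alpha_0(k_2)^{-1}\alpha_1(k_2) - \Id} = \norm{\alpha_1(k_2) - \alpha_0(k_2)} < 2$ and hence $-1$ in the resolvent set of $\alpha_0(k_2)^{-1}\alpha_1(k_2)$ — does interpolate between $\alpha_0$ and $\alpha_1$, but it need not satisfy \ref{item:alpha_TRS}, because $\alpha_0^{-1}\alpha_1$ fails to be TRS even when $\alpha_0$ and $\alpha_1$ are. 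The remedy is to symmetrize, by taking the unitary part of the polar decomposition of the straight-line interpolation.

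Concretely, I would proceed as follows. First set $V_s(k_2) := (1-s)\,\alpha_0(k_2) + s\,\alpha_1(k_2) = \alpha_0(k_2)\big[(1-s)\,\Id + s\,U(k_2)\big]$, where $U(k_2) := \alpha_0(k_2)^{-1}\alpha_1(k_2)$ is unitary with $\norm{U(k_2) - \Id} < 2$; since $U(k_2)$ is normal, no eigenvalue of $U(k_2)$ can equal $-1$, and therefore the spectrum of $(1-s)\,\Id + s\,U(k_2)$, consisting of the numbers $(1-s) + s\lambda$ with $|\lambda| = 1$, avoids $0$ for every $s \in [0,1]$. Hence $V_s(k_2)$ is invertible for all $s$ and all $k_2$, and I can define
\[ \alpha_s(k_2) := V_s(k_2)\,\big(V_s(k_2)^{*}\,V_s(k_2)\big)^{-1/2}, \]
the unitary factor in the polar decomposition $V_s(k_2) = \alpha_s(k_2)\,P_s(k_2)$ with $P_s(k_2) := \big(V_s(k_2)^{*}\,V_s(k_2)\big)^{1/2} > 0$. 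One checks immediately that $\alpha_s|_{s=0} = \alpha_0$ and $\alpha_s|_{s=1} = \alpha_1$ (the polar factor of a unitary is itself), that $(s, k_2) \mapsto \alpha_s(k_2)$ is jointly continuous, and that each $\alpha_s$ is real analytic and $\Z$-periodic in $k_2$ — so \ref{item:alpha_analytic} and \ref{item:alpha_periodic} hold — using that $M \mapsto M(M^{*}M)^{-1/2}$ is real analytic on the open set of invertible matrices.

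The only point requiring care is that each $\alpha_s$ satisfies \ref{item:alpha_TRS}. Introduce the linear map $\sigma(M) := \eps^{-1}\,M^{t}\,\eps$ on $m \times m$ matrices; using $\eps^{t} = -\eps$ and the unitarity of $\eps$ one verifies that $\sigma$ reverses products, $\sigma(MN) = \sigma(N)\sigma(M)$, fixes $\Id$, and commutes with the adjoint, so that it sends positive-definite matrices to positive-definite matrices and unitaries to unitaries. Moreover \ref{item:alpha_TRS}, i.e. $\eps\,\alpha(k_2) = \alpha(-k_2)^{t}\,\eps$, is equivalent to $\alpha(-k_2) = \sigma(\alpha(k_2))$. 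Since $V_s$ is a linear combination of $\alpha_0$ and $\alpha_1$, each of which satisfies this, so does $V_s$: $V_s(-k_2) = \sigma(V_s(k_2))$. Applying $\sigma$ to $V_s(k_2) = \alpha_s(k_2)\,P_s(k_2)$ and using product reversal gives $V_s(-k_2) = \sigma(P_s(k_2))\,\sigma(\alpha_s(k_2))$, which I rewrite as $\sigma(\alpha_s(k_2)) \cdot \big[\sigma(\alpha_s(k_2))^{*}\,\sigma(P_s(k_2))\,\sigma(\alpha_s(k_2))\big]$: this is a unitary-times-positive factorization of the invertible matrix $V_s(-k_2)$, so by uniqueness of the polar decomposition $\alpha_s(-k_2) = \sigma(\alpha_s(k_2))$, i.e. $\alpha_s$ satisfies \ref{item:alpha_TRS}. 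Thus $\set{\alpha_s}_{s \in [0,1]}$ is a continuous homotopy of families as in Assumption~\ref{assum:alpha} joining $\alpha_0$ to $\alpha_1$, and Lemma~\ref{lemma:Ihomotopy} concludes the proof.

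The step I expect to be the main obstacle is keeping the homotopy simultaneously \emph{continuous}, \emph{periodic} and \emph{time-reversal symmetric}: it is precisely the fact that both the affine interpolation and the polar decomposition are compatible with the symmetry $\sigma$ that makes the construction work, which is why the more naive logarithmic homotopy has to be abandoned.
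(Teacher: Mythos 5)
Your proof is correct, and while it shares the paper's overall strategy (build a homotopy of families satisfying Assumption \ref{assum:alpha} between $\alpha_0$ and $\alpha_1$, then invoke the homotopy invariance of Lemma \ref{lemma:Ihomotopy}), the construction of the homotopy is genuinely different. The paper first factorizes $\alpha_0(k_2) = \eps^{-1}\gamma(-k_2)^t\,\eps\,\gamma(k_2)$ via Lemma \ref{lemma:factor}, uses $\gamma$ to conjugate $\alpha_1$ into a TRS family $\widetilde{\alpha}$ uniformly within distance $<2$ of $\Id$, extracts a periodic TRS logarithm $\widetilde{h}$ by the Cayley transform (Proposition \ref{prop:Cayley}), and sets $\alpha_s(k_2) := \eps^{-1}\gamma(-k_2)^t\,\eps\,\eu^{\iu s \widetilde{h}(k_2)}\gamma(k_2)$; you instead take the affine interpolation $V_s = (1-s)\alpha_0 + s\alpha_1$, which the closeness hypothesis keeps invertible, and restore unitarity by the polar decomposition, with the symmetry $\alpha(-k_2)=\sigma(\alpha(k_2))$, $\sigma(M)=\eps^{-1}M^t\eps$, propagated to the polar factor via the anti-homomorphism property of $\sigma$ and uniqueness of the polar decomposition. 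Your checks on the key points are all sound: $\|\alpha_0^{-1}\alpha_1-\Id\|<2$ excludes $-1$ from the spectrum, hence $V_s(k_2)$ is invertible for all $s$; $\sigma$ preserves unitaries and positive-definite matrices and commutes with the adjoint; and the rewriting $\sigma(P_s)\sigma(\alpha_s)=\sigma(\alpha_s)\bigl[\sigma(\alpha_s)^*\sigma(P_s)\sigma(\alpha_s)\bigr]$ correctly identifies $\sigma(\alpha_s(k_2))$ as the unitary polar factor of $V_s(-k_2)$. What your route buys: it bypasses both the factorization Lemma \ref{lemma:factor} and the Cayley transform, and since the polar-factor map is real analytic on invertible matrices your $\alpha_s$ are real analytic in $k_2$, so they satisfy Assumption \ref{assum:alpha} verbatim (the paper's homotopy is only continuous in $k_2$, since $\gamma$ and $\widetilde h$ are only continuous — harmless for the index, but less literal). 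What the paper's route buys: it recycles machinery already in place and produces an explicit TRS logarithm along the path, in keeping with the constructive, logarithm-based philosophy used throughout (compare Propositions \ref{prop:GoodLog} and \ref{prop:AlmostLog}).
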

\begin{proof}
Write $\alpha_0(k_2) = \eps^{-1} \, \gamma(-k_2)^t \, \eps \, \gamma(k_2)$ as in Lemma \ref{lemma:factor}. Consider the matrix 
\[ \widetilde{\alpha}(k_2) := \eps^{-1} \, \overline{\gamma(-k_2)} \, \eps \, \alpha_1(k_2) \, \gamma(k_2)^*. \]
Then
\[ \eps \, \widetilde{\alpha}(k_2) = \overline{\gamma(-k_2)} \, \eps \, \alpha_1(k_2) \, \gamma(k_2)^* = \overline{\gamma(-k_2)} \, \alpha_1(-k_2)^t \, \eps \, \gamma(k_2)^* = \widetilde{\alpha}(-k_2)^t \, \eps \]
so that the family $\widetilde{\alpha}$ is continuous and satisfies \ref{item:alpha_periodic} and \ref{item:alpha_TRS}. Moreover
\[ \sup_{k_2 \in \R} \norm{\widetilde{\alpha}(k_2) - \Id} = \sup_{k_2 \in \R} \norm{\alpha_0(k_2) - \alpha_1(k_2)} < 2 \]
and hence the Cayley transform (Proposition \ref{prop:Cayley}) defines a continuous periodic logarithm $\widetilde{h}$ for $\widetilde{\alpha}$ which satisfies also \eqref{eqn:hTRS}. 

Setting then
\[ \alpha_s(k_2) := \eps^{-1} \, \gamma(-k_2)^t \, \eps \, \eu^{\iu \, s \, \widetilde{h}(k_2)} \, \gamma(k_2) \]
yields a continuous homotopy between the families $\alpha_0$ and $\alpha_1$, satisfying Assumption \ref{assum:alpha} for all $s \in [0,1]$. The conclusion then follows from Lemma \ref{lemma:Ihomotopy}.
\end{proof}

To conclude this Subsection, we provide also a useful reinterpretation of the $\Z_2$ index $\mathcal{I}(\alpha)$ of Graf and Porta as an intersection number. 

\begin{proposition} \label{prop:intersect}
Let $f \colon [0,1/2] \to U(1)$ be a continuous map, and denote by $\mathcal{G}$ the corresponding graph. Also, let $\mathcal{L}$ be the union of the graphs of the functions $\lambda_i \colon [0,1/2] \to U(1)$, defined by the eigenvalues of the matching matrices $\set{\alpha(k_2)}_{k_2 \in \R}$. Assume that $\mathcal{G}$ and $\mathcal{L}$ intersect transversely finitely many times. Then $\mathcal{I}(\alpha)$ equals the parity of the number of crossings of $\mathcal{G}$ with $\mathcal{L}$.
\end{proposition}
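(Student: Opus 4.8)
The plan is to show that the parity of $\#(\mathcal{G}\cap\mathcal{L})$ does not depend on the (generic) choice of the map $f$, and then to compute it for the simplest possible $f$, namely a constant, matching the outcome against the formula \eqref{eqn:IntegralRueda_a} for $\mathcal{I}(\alpha)$ already established in Proposition \ref{prop:IntegralRueda}. Throughout I fix a continuous labelling $\sigma(\alpha(k_2)) = \set{\lambda_1(k_2), \ldots, \lambda_m(k_2)}$ on $[0,1/2]$ (possible by \cite[Thm.~II.5.2]{Kato66}), so that $\mathcal{L}$ is the union of the graphs of the $\lambda_i$.

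\textbf{Independence of $f$.} Given two generic maps $f_0, f_1 \colon [0,1/2] \to U(1)$, I would connect them by a generic homotopy $f_t$ (the space of such maps is path-connected). Along the homotopy the integer $\#(\mathcal{G}_t \cap \mathcal{L})$ can jump only at isolated codimension-one events, of which there are three types: (a) $\mathcal{G}_t$ becomes tangent to the graph of some $\lambda_i$ at an interior point, which creates or destroys a pair of crossings; (b) $\mathcal{G}_t$ sweeps through one of the finitely many points where two of the graphs of the $\lambda_i$ meet, where the local picture is an arc passing transversely through an ``$X$'' and the number of intersections stays equal to two; (c) an endpoint of $\mathcal{L}$ at $k_2 \in \set{0,1/2}$ crosses $\mathcal{G}_t$, i.e.\ $f_t(0) \in \sigma(\alpha(0))$ or $f_t(1/2) \in \sigma(\alpha(1/2))$. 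The crucial point in case (c) is that, by \eqref{eqn:alpha_TRS0} and Kramers degeneracy (Lemma \ref{lemma:Kramers}), the eigenvalues of $\alpha(0)$ and of $\alpha(1/2)$ are doubly degenerate, so the endpoints of $\mathcal{L}$ at $k_2=0$ (resp.\ $k_2=1/2$) come in coincident pairs and sweep past $\mathcal{G}_t$ two at a time. In all three cases the parity of $\#(\mathcal{G}_t \cap \mathcal{L})$ is unchanged, so it is the same for every generic $f$; in particular I may take $f \equiv c$ constant, with $c \in U(1)$ chosen away from $\sigma(\alpha(0))$, from $\sigma(\alpha(1/2))$, and from the eigenvalues of $\alpha(k_2)$ at the (finitely many) points where eigenvalues of $\alpha$ collide.

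\textbf{Computation for $f\equiv c$.} Choose continuous lifts $\widetilde{\lambda}_i \colon [0,1/2] \to \R$, $\lambda_i = \eu^{\iu \widetilde{\lambda}_i}$, normalized so that $\widetilde{\lambda}_i(0) - \arg c \in (0,2\pi)$ for all $i$. A transverse crossing of the graph of $\lambda_i$ with $[0,1/2]\times\set{c}$ occurs exactly where $\widetilde{\lambda}_i(k_2) \in \arg c + 2\pi\Z$, and the number of such points has the same parity as the number of elements of $\arg c + 2\pi\Z$ strictly between $\widetilde{\lambda}_i(0)$ and $\widetilde{\lambda}_i(1/2)$, which by the normalization equals $W_i := \lfloor (\widetilde{\lambda}_i(1/2) - \arg c)/(2\pi) \rfloor$. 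Summing over $i$ and using $\sum_{i=1}^m (\widetilde{\lambda}_i(1/2) - \widetilde{\lambda}_i(0)) = \tfrac{1}{\iu}\int_0^{1/2} A(k_2)^{-1} A'(k_2)\,\di k_2$ with $A = \det \alpha$, together with the fact (Kramers degeneracy again) that the eigenvalues of $\alpha(0)$ are the $\set{\eu^{\iu\mu_j}}_{1\le j\le n}$ each counted twice and those of $\alpha(1/2)$ are the $\set{\eu^{\iu\nu_j}}_{1\le j\le n}$ each counted twice, one finds --- choosing the branch of the logarithm with cut through $c$, which is admissible for our choice of $c$ (cf.\ Remark \ref{rmk:logIntRueda}) --- that $\sum_i W_i$ equals precisely the right-hand side of \eqref{eqn:IntegralRueda_a}. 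Hence $\#(\mathcal{G}\cap\mathcal{L}) \equiv \sum_i W_i \equiv \mathcal{I}(\alpha) \bmod 2$, which is the claim.

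\textbf{Main obstacle.} I expect the delicate point to be case (c) of the independence argument: making rigorous that Kramers degeneracy of $\alpha(0)$ and $\alpha(1/2)$ forces the intersection count to change by an \emph{even} number as $\mathcal{G}_t$ passes a boundary eigenvalue --- in particular excluding the possibility that the coincident pair of endpoints is met in a way that contributes an odd crossing count --- and, in the final step, the bookkeeping of the lifts $\widetilde{\lambda}_i$ and of the branch of $\log$ so that the result lands exactly on \eqref{eqn:IntegralRueda_a} and not off by a unit in $\Z_2$. The remaining events (a)--(b) and the transversality needed to make the homotopy generic are routine.
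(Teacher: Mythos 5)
Your proposal is correct in substance, but it follows a genuinely different route from the paper. The paper does not homotope $f$ inside the half-period picture at all: it reduces to a single eigenvalue curve, chooses an ``even degenerate'' extension $\widetilde{\lambda}$ of $\lambda$ to $[-1/2,0]$ together with an even extension $\widetilde{f}$ of $f$, observes that by the even degeneracy every crossing on the added half is counted twice, and then identifies the parity of $\#(\mathcal{G}\cap\mathcal{L})$ with the $\bmod\,2$ intersection number of the periodic map $(\widetilde f\,\sharp\,f)\times(\widetilde\lambda\,\sharp\,\lambda)$ with the diagonal of $U(1)\times U(1)$; homotopy invariance of $\bmod\,2$ intersection numbers (Guillemin--Pollack) plus the vanishing winding of $\widetilde f\,\sharp\,f$ replaces $f$ by a constant, and the resulting count is the $\bmod\,2$ degree of $\widetilde\lambda\,\sharp\,\lambda$, i.e.\ the definition \eqref{eqn:rueda} of $\mathcal{I}(\alpha)$ itself --- no integral formula is needed. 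You instead stay on $[0,1/2]$, prove independence of $f$ by a hands-on wall-crossing analysis, and then match the constant-$f$ count against \eqref{eqn:IntegralRueda_a} of Proposition \ref{prop:IntegralRueda}; your winding-number bookkeeping there is correct (the discrepancy between the log of the ratio and the sum of logs with cut through $c$ is a multiple of $2\pi\iu$, hence harmless $\bmod\,2$ after the factor $2$). Note that your case (c) is exactly where the paper's doubling trick does its work: the clean way to state it is that at each boundary eigenvalue an \emph{even} number of branches (Lemma \ref{lemma:Kramers}) share the value $f_t(k_*)$ sweeps through, so the sign of $f_t-\lambda_{i_l}$ at $k_2=k_*$ flips for all of them simultaneously while the signs at the other end of a small window are stable, whence the local count changes by an even amount. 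What each approach buys: the paper outsources all genericity and stability issues to standard $\bmod\,2$ intersection theory and lands directly on the definition of $\mathcal{I}$, while your argument is more explicit on the computational side but must justify by hand that a homotopy can be chosen meeting only the events (a)--(c) finitely often and that the constant level $c$ is transverse; for this you should invoke the real analyticity of $\alpha$ (property \ref{item:alpha_analytic}) and the Analytic Rellich Theorem, which give analytic eigenvalue branches, hence finitely many branch collisions and finitely many tangency levels (and rule out a branch sitting on the level $c$ without being identically $c$). With that one addition your proof is complete, and it has the small independent merit of tying the intersection picture directly to the formula \eqref{eqn:IntegralRueda_a}.
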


The analogous statement for a \emph{constant} function $f$ was already used in the proof of the bulk-edge correspondence in \cite{GrafPorta13}. A sketch of the argument can be also found in \cite[proof of Prop.~5]{AvilaSchulz-BaldesVillegas-Blas12}.

\begin{proof}
The cardinality $\# \set{\mathcal{G} \cap \mathcal{L}}$ is well-defined in view of the transversality hypothesis. Clearly the number of crossings of the graph of $f$ with all the graphs of the different $\lambda_i$'s equals the sum of the number of crossings of the graph of $f$ with each individual $\lambda_i$, hence we may as well assume that there is only one such curve $\lambda$. Denote this number of crossings by $N$. Also, choose an ``even degenerate'' extension $\widetilde{\lambda}$ of $\lambda$ to $[-1/2,0]$, as in the construction leading to the definition of $\mathcal{I}(\alpha)$, and extend $f$ to $\widetilde{f}$ on $[-1/2,0]$ in order to make the joining function $\widetilde{f} \, \sharp \, f$ even. Then, since each intersection of the graph of $\widetilde{f}$ with the one of $\widetilde{\lambda}$ is counted twice (due to the ``even degeneracy'' property of $\widetilde{\lambda}$), the parity of $N$ equals the parity of the number of intersections of the graph of $\widetilde{f} \, \sharp \, f$ with the one of $\widetilde{\lambda} \, \sharp \, \lambda$, as periodic functions defined on $[-1/2,1/2]$ with values in $U(1)$.

Denote this number of intersections $\bmod\, 2$ as $I_2(\widetilde{f} \, \sharp \, f, \widetilde{\lambda} \, \sharp \, \lambda)$. Then $I_2(\widetilde{f} \, \sharp \, f, \widetilde{\lambda} \, \sharp \, \lambda)$ equals the $\bmod\, 2$ intersection number
\[ I_2 \left( (\widetilde{f} \, \sharp \, f) \times (\widetilde{\lambda} \, \sharp \, \lambda), \triangle_{U(1) \times U(1)} \right) \]
of the graph of the function $(\widetilde{f} \, \sharp \, f) \times (\widetilde{\lambda} \, \sharp \, \lambda) \colon [-1/2, 1/2] \to U(1) \times U(1)$ with the diagonal $\triangle_{U(1) \times U(1)} \subset U(1) \times U(1)$. In view of the results in \cite[Chap.~2, \S4]{GuilleminPollack74}, the above quantity is invariant with respect to homotopic changes of the functions it involves. In particular, since we have chosen an even extension of $f$, we have that $\widetilde{f} \, \sharp \, f$ has vanishing winding number, which in turn implies that it is homotopic to a constant map, say equal to $z \in U(1)$. Hence
\[ I_2(\widetilde{f} \, \sharp \, f, \widetilde{\lambda} \, \sharp \, \lambda) = I_2(\text{const}, \widetilde{\lambda} \, \sharp \, \lambda) = \# (\widetilde{\lambda} \, \sharp \, \lambda)^{-1} (\set{z}) \mod 2. \]
The right-hand side coincides with the $\bmod \: 2$ reduction of the degree of the map $\widetilde{\lambda} \, \sharp \, \lambda$ \cite[Chap.~3, \S3]{GuilleminPollack74}, which by definition is exactly the index $\mathcal{I}(\alpha)$. Putting this chain of equalities together allows us to conclude that $N \equiv \mathcal{I}(\alpha) \bmod 2$, as claimed.
\end{proof}

\subsection{Construction of the approximants}

We come back to the construction of continuous and periodic (respectively symmetric) Bloch frames for a $2$-dimensional family of orthogonal projectors $\set{P(\kk)}_{\kk \in \R^2}$. We saw in Section \ref{sec:d=d} how this problem can be restated in terms of $\alpha$ (see Problem \ref{pbl:alpha}), and is then equivalent to the construction of a family $\beta = \set{\beta(\kk)}_{\kk \in \R^2}$ which satisfies the following properties:
\begin{enumerate}[label=$(\mathrm{B}_\arabic*)$,ref=$(\mathrm{B}_\arabic*)$]
\item \label{item:beta_C0} the map $\R^2 \ni \kk \mapsto \beta(\kk) \in U(m)$ is continuous;
\item \label{item:beta_alpha} the map $(k_1, k_2) \mapsto \beta(k_1, k_2)$ is $\Z$-periodic in $k_2$ and satisfies
\[ \beta(k_1, k_2)^{-1} \, \alpha(k_2) \, \beta(k_1 + 1, k_2) = \Id \]
for all $(k_1, k_2) \in \R^2$;
\item \label{item:beta_TRS} for all $\kk \in \R^2$ 
\[ \beta(-\kk) = \eps^{-1} \, \overline{\beta(\kk)} \, \eps \]
holds.
\end{enumerate}

Moreover, owing to Proposition \ref{prop:AlmostLog}, the construction of $\beta$ with the above properties can be performed whenever $\alpha$ is sufficiently close to a family of unitary matrices which admits a ``good logarithm''. We then employ the strategy of constructing such approximants in order to prove the following result, which is just a reformulation of parts (\ref{item:BetaResult_d=2}) and (\ref{item:BetaResult_d=2_noTRS}) of Theorem \ref{thm:MainResults_Beta}.

\begin{theorem} \label{thm:beta}
Let $\set{\alpha(k_2)}_{k_2 \in \R}$ be as in Assumption \ref{assum:alpha}.
\begin{enumerate}
\setcounter{enumi}{-1}
 \item \label{item:betagap} There exists a family $\set{\beta(\kk)}_{\kk \in \R^2}$ of $m \times m$ unitary matrices satisfying \ref{item:beta_C0} and \ref{item:beta_alpha}. Moreover, this family can be explicitly constructed.
 \item \label{item:I=0->beta} Assume that $\mathcal{I}(\alpha) = 0 \in \Z_2$, where $\mathcal{I}(\alpha)$ is given by \eqref{eqn:rueda}. Then the above family $\beta$ can be constructed so that also \ref{item:beta_TRS} holds.
 \item \label{item:beta->I=0} Conversely, if there exists a family of unitary matrices $\beta$ satisfying \ref{item:beta_C0}, \ref{item:beta_alpha} and \ref{item:beta_TRS}, then $\mathcal{I}(\alpha) = 0 \in \Z_2$.
\end{enumerate}
\end{theorem}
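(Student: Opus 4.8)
The plan is to handle the three statements in turn, all of them resting on the factorization of Lemma~\ref{lemma:factor} — which writes $\alpha(k_2) = \eps^{-1}\,\gamma(-k_2)^t\,\eps\,\gamma(k_2)$ with $\gamma$ continuous and $\Z$-periodic — on Proposition~\ref{prop:IntegralRueda} (formula~\eqref{eqn:IntegralRueda_c}), which identifies $\mathcal{I}(\alpha)$ with the parity of $\deg([\det\gamma])$, and on the logarithm constructions of Propositions~\ref{prop:GoodLog}--\ref{prop:AlmostLog}.

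For part~\ref{item:betagap} I would first note that \ref{item:alpha_TRS} forces $\det\alpha$ to be an even, hence $\Z$-periodic, function of $k_2$, so $\deg([\det\alpha]) = 0$ by the integral formula \eqref{eqn:degree}; thus $\alpha$ is null-homotopic in $U(m)$. It then suffices to exhibit an explicit null-homotopy: setting $\beta(0,k_2) \equiv \Id$, letting $\beta(\cdot,k_2)\colon[0,1]\to U(m)$ run along it from $\Id$ to $\alpha(k_2)^{-1}$, and extending by $\beta(k_1+1,k_2) := \alpha(k_2)^{-1}\beta(k_1,k_2)$ produces a family with \ref{item:beta_C0} and \ref{item:beta_alpha}. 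To make the homotopy explicit I would extract a continuous $\Z$-periodic logarithm of $\det\alpha$ to reduce to the $SU(m)$-valued case and then contract it by applying the local logarithm constructions of Propositions~\ref{prop:GoodLog} and~\ref{prop:AlmostLog} on the subintervals of a sufficiently fine partition of a period of $k_2$, matching them at the overlaps by means of the spectral-perturbation results of Appendix~\ref{sec:ExtraDegen}.

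For part~\ref{item:I=0->beta} I would start from any factorization $\alpha = \eps^{-1}\gamma(-\cdot)^t\eps\,\gamma$ given by Lemma~\ref{lemma:factor}; since $\mathcal{I}(\alpha) = 0$, Proposition~\ref{prop:IntegralRueda} gives that $\deg([\det\gamma]) = 2N$ is even, and replacing $\gamma$ by $u\gamma$ with $u(k_2) := \diag(\eu^{-2\pi\iu N k_2},\eu^{-2\pi\iu N k_2},1,\dots,1)$ leaves \eqref{eqn:factor} unchanged — one checks $u(-k_2)^t\,\eps\,u(k_2) = \eps$ — while making $\deg([\det\gamma]) = 0$. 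The key step is then to approximate $\gamma$ uniformly by a family $\gamma\sub{gap}(k_2) = \eu^{\iu g(k_2)}$ with $g$ continuous, $\Z$-periodic, self-adjoint \emph{and} satisfying the TRS constraint \eqref{eqn:hTRS}. Granting this, the reassembled family
\[ \alpha\sub{gap}(k_2) := \eps^{-1}\,\gamma\sub{gap}(-k_2)^t\,\eps\,\gamma\sub{gap}(k_2) = \eu^{\iu g(k_2)}\eu^{\iu g(k_2)} = \eu^{\iu\,2 g(k_2)} \]
(the second equality using \eqref{eqn:hTRS}) is uniformly close to $\alpha$, still satisfies \ref{item:alpha_TRS}, and admits the TRS-compatible good logarithm $2g$, so Proposition~\ref{prop:AlmostLog} in its TRS form delivers $\beta$ with \ref{item:beta_C0}, \ref{item:beta_alpha} and \ref{item:beta_TRS}. \textbf{The hard part} is exactly the construction of $\gamma\sub{gap}$: it requires resolving the crossings of the eigenvalues of $\gamma$ and pushing them all off a fixed point of the unit circle compatibly with the doubly-degenerate (Kramers) structure at $k_2\in\{0,1/2\}$, and the vanishing of $\deg([\det\gamma])$ is precisely what makes this simultaneously possible and symmetric; the bookkeeping of these degeneracies is what Appendix~\ref{sec:ExtraDegen} is for.

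Part~\ref{item:beta->I=0} I expect to be a short direct argument needing no approximation. Given $\beta$ with \ref{item:beta_C0}--\ref{item:beta_TRS}, I would evaluate \ref{item:beta_alpha} at $k_1 = -1/2$ to get $\alpha(k_2) = \beta(-1/2,k_2)\,\beta(1/2,k_2)^{-1}$ and rewrite $\beta(-1/2,k_2) = \eps^{-1}\,\overline{\beta(1/2,-k_2)}\,\eps$ via \ref{item:beta_TRS}; since $\beta(1/2,\cdot)$ is unitary, $\overline{\beta(1/2,-k_2)} = (\beta(1/2,-k_2)^{-1})^t$, so with $\gamma(k_2) := \beta(1/2,k_2)^{-1}$ (continuous, $\Z$-periodic) one gets $\alpha(k_2) = \eps^{-1}\gamma(-k_2)^t\eps\,\gamma(k_2)$, a factorization as in Lemma~\ref{lemma:factor}. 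By Proposition~\ref{prop:IntegralRueda} (formula~\eqref{eqn:IntegralRueda_c}), $\mathcal{I}(\alpha) \equiv \deg([\det\gamma]) \equiv \deg([\det\beta(1/2,\cdot)]) \pmod 2$. Finally $k_1\mapsto\det\beta(k_1,\cdot)\in C(\T,U(1))$ is continuous and the degree is locally constant there, so $\deg([\det\beta(1/2,\cdot)]) = \deg([\det\beta(0,\cdot)])$; and $\beta(0,\cdot)$ satisfies both $\beta(0,k_2) = \beta(0,k_2+1)$ and $\beta(0,-k_2) = \eps^{-1}\,\overline{\beta(0,k_2)}\,\eps$, i.e.\ the conditions \eqref{eqn:gammaperTR}, so $\deg([\det\beta(0,\cdot)])$ is even by the computation in Remark~\ref{rmk:Gauge} (equivalently, normalize $\beta(0,\cdot)\equiv\Id$ as in Remark~\ref{rmk:beta0} to get degree $0$). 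Hence $\mathcal{I}(\alpha) = 0 \in \Z_2$.
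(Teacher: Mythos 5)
Your part~(\ref{item:beta->I=0}) is essentially the paper's own proof (normalize $\beta(0,\cdot)$, evaluate \ref{item:beta_alpha} at $k_1=-1/2$, set $\gamma=\beta(1/2,\cdot)^{-1}$, use \eqref{eqn:IntegralRueda_c} and the local constancy of the degree in $k_1$), and it is correct. The genuine gap is in part~(\ref{item:I=0->beta}), precisely at the step you flag as the key one. You ask to approximate $\gamma$ \emph{uniformly} by $\gamma\sub{gap}(k_2)=\eu^{\iu g(k_2)}$ with $g$ satisfying \eqref{eqn:hTRS}. But \eqref{eqn:hTRS} makes $\gamma\sub{gap}$ itself a TRS family, $\eps\,\gamma\sub{gap}(k_2)=\gamma\sub{gap}(-k_2)^t\,\eps$, and this relation is closed under uniform limits; so such approximants can exist only if $\gamma$ is already TRS, which via $\eps^{-1}\gamma(-k_2)^t\eps=\gamma(k_2)$ would force $\alpha(k_2)=\gamma(k_2)^2$. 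The $\gamma$ produced by Lemma~\ref{lemma:factor} (or $\gamma=\beta(1/2,\cdot)^{-1}$) carries no symmetry of its own, and your diagonal twist by $u$ only normalizes $\deg([\det\gamma])$ — it does nothing about this obstruction. Relatedly, in your sketch the hypothesis $\mathcal{I}(\alpha)=0$ is never actually used beyond making the degree even, whereas the whole difficulty is to produce a continuous, periodic, \emph{even} branch cut in the resolvent of an approximant of $\alpha$: that is exactly where the index enters, since if $\mathcal{I}=1$ any such curve must cross the eigenvalue branches an odd number of times (Proposition~\ref{prop:intersect}). The Appendix results (lifting accidental degeneracies, analytic smoothing) only prepare the spectrum for that argument; they do not replace it. The working route — the paper's — is to approximate $\alpha$ \emph{itself} by $\alpha\sub{gen}$ with only Kramers degeneracies (Proposition~\ref{prop:GenericForm}), construct the symmetric branch cut using $\mathcal{I}(\alpha)=0$ (Proposition~\ref{prop:BranchCut}), obtain a TRS-compatible logarithm via the Cayley transform (Proposition~\ref{prop:ApproxLog}), and then invoke Proposition~\ref{prop:AlmostLog}.

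Part~(\ref{item:betagap}) is also thinner than the statement requires. Your existence observation is fine: \ref{item:alpha_TRS} makes $\det\alpha$ even, so $\deg([\det\alpha])=0$ and $\alpha$ is null-homotopic, which abstractly gives $\beta$ with \ref{item:beta_C0}--\ref{item:beta_alpha}. But the theorem asks for an explicit construction, and your proposal — local logarithms on a fine partition of a period of $k_2$, ``matched at the overlaps'' — is not an argument: Propositions~\ref{prop:GoodLog} and \ref{prop:AlmostLog} require globally periodic logarithms, and gluing local ones with correction unitaries is exactly where winding can reappear; no mechanism is given for why the matchings can be made consistently. The paper sidesteps patching entirely by building a single global approximant $\alpha\sub{gap}$ with non-degenerate spectrum symmetric under $k_2\to-k_2$ and a single global branch cut threaded between two eigenvalue branches, after which the two-step logarithm of Proposition~\ref{prop:AlmostLog} yields $\beta$ in one stroke.
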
 

In the rest of the Subsection, we prove parts (\ref{item:betagap}) and (\ref{item:I=0->beta}) of Theorem \ref{thm:beta}; part (\ref{item:beta->I=0}) will be proved in the next Subsection. The strategy of the proof can be summarized as follows:
\begin{itemize}
\item we construct approximants for the family $\alpha$ with ``generic'' spectral properties (see Proposition \ref{prop:GenericForm});
\item we show that these spectral conditions allow the possibility to produce a branch cut in the resolvent of the approximating matrices (see Proposition \ref{prop:BranchCut}); this may require a topological obstruction to vanish;
\item we show that the branch cut chosen above defines in turn a ``good logarithm'' for the approximants (see Proposition \ref{prop:ApproxLog});
\item we appeal to Proposition \ref{prop:AlmostLog} to construct the required family $\beta$.
\end{itemize}
We detail the above steps in what follows.

\subsubsection{Removal of extra degeneracies} \label{sec:Removal}

The first step is to put the matrix $\alpha$ in a ``generic form'', in order to remove crossings of eigenvalues of $\alpha$ other than the ones prescribed by Kramers degeneracy (Lemma \ref{lemma:Kramers}). This will define the required approximants.

\begin{proposition} \label{prop:GenericForm}
Let $\set{\alpha(k_2)}_{k_2 \in \R}$ be as in Assumption \ref{assum:alpha}.
\begin{enumerate}
\item \label{item:alphagap} There exists a continuous and $\Z$-periodic family of unitary matrices $\set{\alpha\sub{gap}(k_2)}_{k_2 \in \R}$ with \emph{non-degenerate} even spectrum, $\sigma(\alpha\sub{gap}(k_2)) = \sigma(\alpha\sub{gap}(-k_2))$, and such that
\[ \sup_{k_2 \in \R} \norm{\alpha(k_2) - \alpha\sub{gap}(k_2)} < 2. \]
Moreover, $\alpha\sub{gap}$ can be explicitly constructed.
\item \label{item:alphagen} There exists a continuous and $\Z$-periodic family of unitary matrices $\set{\alpha\sub{gen}(k_2)}_{k_2 \in \R}$, which is \emph{real analytic} around half-integer values of $k_2$, such that
\[ \eps \, \alpha\sub{gen}(k_2) = \alpha\sub{gen}(-k_2)^t \, \eps, \]
with \emph{non-degenerate} spectrum away from half-integer values of $k_2$, \emph{doubly degenerate} spectrum at half-integer values of $k_2$, and such that
\[ \sup_{k_2 \in \R} \norm{\alpha(k_2) - \alpha\sub{gen}(k_2)} < 2. \]
Moreover, $\alpha\sub{gen}$ can be explicitly constructed.
\end{enumerate}
\end{proposition}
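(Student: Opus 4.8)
The plan is to modify $\set{\alpha(k_2)}$ by relocating only its \emph{spectrum}, keeping the (analytic) eigenprojections fixed. If $\alpha(k_2)=\sum_{j=1}^m\eu^{\iu\phi_j(k_2)}\,\Pi_j(k_2)$ and one replaces the phases $\phi_j$ by $\tilde\phi_j$, then $\norm{\alpha(k_2)-\sum_j\eu^{\iu\tilde\phi_j(k_2)}\Pi_j(k_2)}=\max_j\lvert\eu^{\iu\tilde\phi_j(k_2)}-\eu^{\iu\phi_j(k_2)}\rvert$, which is $<2$ (in fact arbitrarily small) as soon as the change of phases is $C^0$-small; thus the norm bound in the statement is automatic, and the real issue is to produce the prescribed degeneracy pattern while preserving the relevant symmetries. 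The first step, then, is to invoke analytic perturbation theory (Rellich's theorem, \cite{Kato66}, together with the technical results of Appendix \ref{sec:ExtraDegen}) to produce, on a fundamental interval, a real-analytic spectral resolution $\alpha(k_2)=\sum_{j=1}^m\eu^{\iu\phi_j(k_2)}\,\Pi_j(k_2)$ with $\phi_j$ real-analytic and $\Pi_j(k_2)$ real-analytic rank-one projections, following the analytic branches through their crossings.

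For part \ref{item:alphagap} I would work on $[-1/2,1/2]$: each $\phi_i-\phi_j$, $i\neq j$, is real-analytic, so after an arbitrarily small preliminary correction removing the degenerate case $\phi_i\equiv\phi_j$ it has only finitely many zeros, and a general-position argument then yields arbitrarily small real-analytic corrections $\psi_j$ with $\tilde\phi_j:=\phi_j+\psi_j$ pairwise distinct everywhere. I would choose the $\psi_j$ compatible with the $k_2\mapsto-k_2$ symmetry of $\sigma(\alpha(\cdot))$ — which holds because $\alpha$ is TRS — and with the monodromy of the analytic branches over one period, so that $\alpha\sub{gap}(k_2):=\sum_j\eu^{\iu\tilde\phi_j(k_2)}\Pi_j(k_2)$ is $\Z$-periodic and satisfies $\sigma(\alpha\sub{gap}(k_2))=\sigma(\alpha\sub{gap}(-k_2))$. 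Since no TRS property is imposed on $\alpha\sub{gap}$, the Kramers pairs of $\alpha(0)$ and $\alpha(1/2)$ may (and have to) be split; the resulting family has everywhere non-degenerate spectrum, is close to $\alpha$ by the remark above, and is explicit because the $\psi_j$ are written down directly from the zero sets of the $\phi_i-\phi_j$.

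For part \ref{item:alphagen} I would run the same scheme while respecting \ref{item:alpha_TRS}. The cleanest route is to perform the spectral perturbation on $[0,1/2]$ only, using a \emph{symmetry-covariant} correction — perturbing $\alpha$ by unitaries $\eu^{\iu s(k_2)}$ with $s(k_2)=s(k_2)^*$ and $\eps\,s(k_2)=s(k_2)^t\,\eps$ — and then setting $\alpha\sub{gen}(k_2):=\eps^{-1}\alpha\sub{gen}(-k_2)^t\eps$ for $k_2\in[-1/2,0]$ and extending $\Z$-periodically. By construction $\alpha\sub{gen}$ is TRS; it is $\Z$-periodic once one arranges $\eps\,\alpha\sub{gen}(k_*)=\alpha\sub{gen}(k_*)^t\eps$ at $k_*\in\set{0,1/2}$ (keep the correction symmetry-covariant up to the invariant points); and it stays within distance $<2$ of $\alpha$, since $\norm{\eps^{-1}\alpha\sub{gen}(-k_2)^t\eps-\alpha(k_2)}=\norm{\alpha\sub{gen}(-k_2)-\alpha(-k_2)}<2$ by the TRS of $\alpha$. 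A symmetry-covariant self-adjoint perturbation splits every degeneracy down to the double degeneracy that \ref{lemma:Kramers} forces at the invariant points, and down to no degeneracy away from them; choosing it real-analytic near $k_*\in\set{0,1/2}$, with the two members of each Kramers pair separating for $k_2$ slightly off $k_*$, is precisely what the lemmas of Appendix \ref{sec:ExtraDegen} supply. This gives $\alpha\sub{gen}$ with doubly-degenerate spectrum exactly at half-integers, non-degenerate elsewhere, and real-analytic near half-integers.

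The main obstacle is not the analytic-perturbation input, which is classical, but the simultaneous control, in part \ref{item:alphagen}, of three requirements: removing \emph{all} the extra degeneracies, preserving both the TRS relation and the resulting $\Z$-periodicity, and keeping exactly the Kramers double degeneracy at the two invariant points while making it disappear immediately off them in a real-analytic fashion. One cannot perturb the eigenvalues independently; the corrections must lie in the class of TRS-covariant perturbations, which is exactly what guarantees that only the Kramers double degeneracies survive, and proving that such corrections can be chosen explicitly and analytically near the invariant points is the technical heart of Appendix \ref{sec:ExtraDegen}.
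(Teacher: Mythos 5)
Your plan hinges on the premise that one can keep the (analytic) eigenprojections of $\alpha$ and relocate only the phases, but this cannot produce the required non-degenerate approximants: it is impossible to remove a transversal (odd-order) eigenvalue crossing this way. If $\alpha\sub{gap}(k_2)=\sum_j\eu^{\iu\tilde\phi_j(k_2)}\Pi_j(k_2)$ with the $\Pi_j$ the fixed analytic rank-one branches, then continuity of $\alpha\sub{gap}$ forces each $\eu^{\iu\tilde\phi_j(k_2)}=\tr\bigl(\alpha\sub{gap}(k_2)\Pi_j(k_2)\bigr)$ to be continuous, so there is no relabelling freedom; and if $\phi_i-\phi_j$ changes sign at a crossing, any uniformly small corrections $\psi_i,\psi_j$ leave a sign change, hence a zero, by the intermediate value theorem. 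So your ``general-position'' step in part \ref{item:alphagap} fails precisely where it is needed most: by \ref{item:alpha_TRS} the branch paired with $\phi_1$ near a half-integer $k_*$ is $k_2\mapsto\phi_1(2k_*-k_2)$, so the Kramers crossings at $k_*\in\set{0,1/2}$ are generically of odd order, and these are exactly the degeneracies $\alpha\sub{gap}$ must split. (Even allowing corrections of size up to the permitted bound $<2$ there are obstructions: for $\alpha(k_2)=\diag(\eu^{2\pi\iu k_2},\eu^{-2\pi\iu k_2})$ the two branches have different winding numbers, and no phase reassignment with fixed projections can avoid all collisions.) Removing a crossing necessarily perturbs the eigenprojections off-diagonally near it; this is exactly what Lemma \ref{lemma:splitting}\ref{item:split3} does, interpolating the eigenvalue arguments \emph{and} rotating the eigenframes via the Kato--Nagy unitary and the matrix $V_s=\eu^{\iu v_s}$, and your scheme has no counterpart of this mechanism. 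Note also that the paper does not build $\alpha\sub{gap}$ directly from $\alpha$: it first constructs $\alpha\sub{gen}$ (Proposition \ref{propo-dec-1}) and then splits its Kramers degeneracies symmetrically around the half-integers (Proposition \ref{propo-dec-2}).

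In part \ref{item:alphagen} the covariance bookkeeping is also not right as stated: if $s(k_2)=s(k_2)^*$ satisfies $\eps\,s(k_2)=s(-k_2)^t\,\eps$, the product $\eu^{\iu s(k_2)}\alpha(k_2)$ need \emph{not} satisfy \ref{item:alpha_TRS}, because the factors do not commute. The paper avoids this by perturbing a \emph{logarithm}: near the invariant points it writes $\alpha(k_2)=\eu^{\iu h(k_2)}$ (via Riesz projections onto the eigenvalue clusters and the Cayley transform) and replaces $h$ by $h+v_s$ with $v_s$ TRS-covariant and commuting with the spectral projections of $\alpha(k_*)$, which is what Lemma \ref{lemma:splitting}\ref{item:split2} actually requires; your outline omits this localization entirely. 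Finally, deferring ``the technical heart'' to Appendix \ref{sec:ExtraDegen} is circular here: Propositions \ref{propo-dec-1} and \ref{propo-dec-2} of that appendix \emph{are} the paper's proof of the present statement, and the genuinely auxiliary inputs (Lemmas \ref{lemma:splitting} and \ref{lemma:ApproxAnalytic}) do not assemble along the lines you describe without redoing the substantive work: the elimination of identically degenerate branches by an interior-point splitting followed by re-analytification, the removal of the finitely many interior crossings with modified projections, and the gluing (TRS reflection to $[-1/2,0]$, matching at $k_*\in\set{0,1/2}$, periodization) that your proposal only gestures at.
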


The proof of Proposition \ref{prop:GenericForm} is rather technical, and is deferred to Appendix \ref{sec:ExtraDegen} (see Propositions \ref{propo-dec-1} and \ref{propo-dec-2}). The existence results of approximants with the required spectral properties could be argued by means of genericity arguments \cite{vonNeumannWigner29}, but we seek an explicit construction, in order to make the argument as ``algorithmic'' as possible, in view of possible numerical implementations.

Figure \ref{fig:alpha1} depicts the spectrum of a family of matrices $\alpha$ as in Assumption \ref{assum:alpha}, while Figure~\ref{fig:alpha2} illustrates the ``generic'' spectrum of $\alpha\sub{gen}$, in the half-period $[0,1/2]$. The family $\alpha\sub{gap}$ is obtained by further lifting the double (Kramers) degeneracies of $\alpha\sub{gen}$ at $k_2=0$ and $k_2=1/2$, in a way that preserves the symmetry of the spectrum under the exchange $k_2 \to -k_2$ (Proposition \ref{propo-dec-2}).

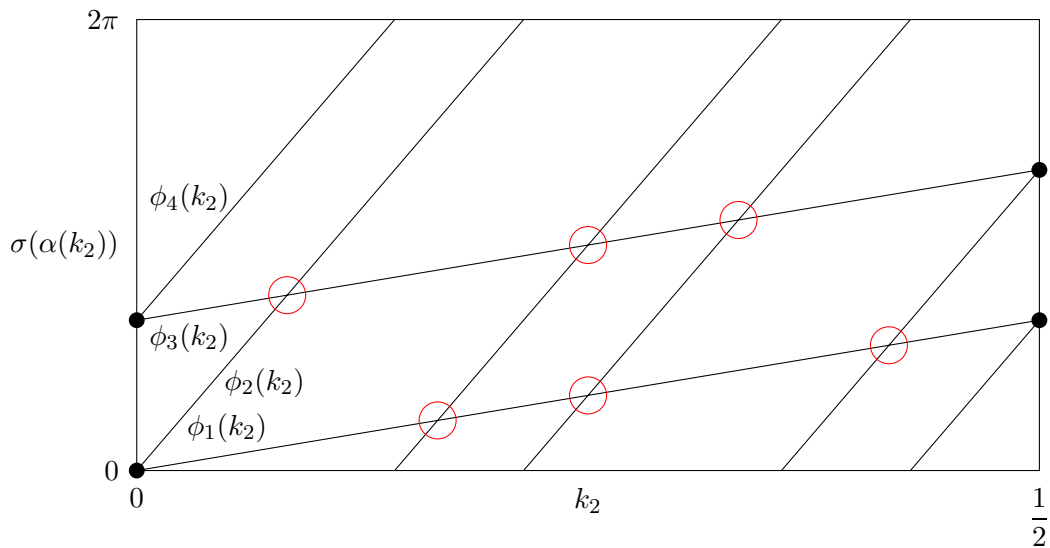
\begin{figure}[ht]
\centering
\begin{tikzpicture}
\draw (0,0) -- (12,0) -- (12,6) -- (0,6) -- cycle;
\draw (0,0) -- (12,2)
      (0,0) -- (36/7,6)
      (36/7,0) -- (72/7,6)
      (72/7,0) -- (12,2)
      (0,2) -- (12,4)
      (0,2) -- (24/7,6)
      (24/7,0) -- (60/7,6)
      (60/7,0) -- (12,4);
\draw [red] (4,2/3) circle (7pt)
            (10,5/3) circle (7pt)
            (6,1) circle (7pt)
            (2,7/3) circle (7pt)
            (8,10/3) circle (7pt)
            (6,3) circle (7pt);
\fill (0,0) circle (3pt)
      (0,2) circle (3pt)
      (12,2) circle (3pt)
      (12,4) circle (3pt);
\draw (1.2,.9) node [anchor = north] {$\phi_1(k_2)$}
      (1.7,1.5) node [anchor = north] {$\phi_2(k_2)$}
      (.7,2.1) node [anchor = north] {$\phi_3(k_2)$}
      (.7,3.3) node [anchor = south] {$\phi_4(k_2)$}
      (0,-.1) node [anchor = north] {$0$}
      (6,-.1) node [anchor = north] {$k_2$}
      (12,-.1) node [anchor = north] {$\dfrac{1}{2}$}
      (-.1,0) node [anchor = east] {$0$}
      (-.1,3) node [anchor = east] {$\sigma(\alpha(k_2))$}
      (-.1,6) node [anchor = east] {$2 \pi$};
\end{tikzpicture}
\caption{The spectrum of $\alpha(k_2)$. In each red circle, an eigenvalue crossing occurs.}
\label{fig:alpha1}
\end{figure}

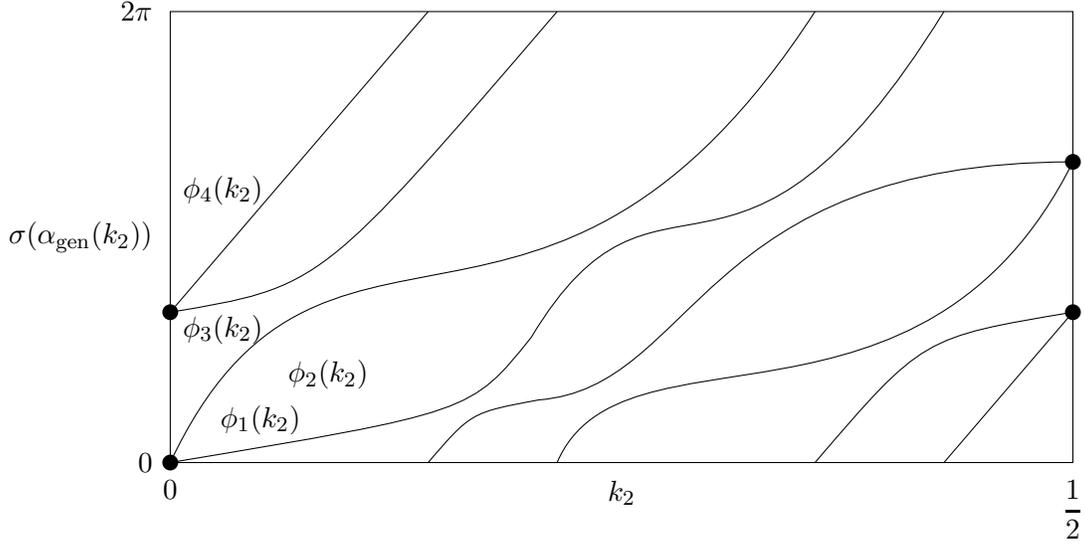
\begin{figure}[ht]
\centering
\begin{tikzpicture}
\draw (0,0) -- (12,0) -- (12,6) -- (0,6) -- cycle;
\draw (0,2) -- (24/7,6)
      (0,2) .. controls (2,7/3) .. (36/7,6)
      (0,0) .. controls (2,4.25) and (5,.5) .. (60/7,6)
      (0,0) .. controls (4,2/3) .. (4.8,5/3) .. controls (6.5,4.5) and (7.5,1.5) .. (72/7,6)
      (24/7,0) .. controls (4,2/3) .. (4.9,5/6) .. controls (7,1) and (7,4) .. (12,4)
      (36/7,0) .. controls (6,2) and (10,0) .. (12,4)
      (60/7,0) .. controls (10,5/3) .. (12,2)
      (72/7,0) -- (12,2);
\fill (0,0) circle (3pt)
      (0,2) circle (3pt)
      (12,2) circle (3pt)
      (12,4) circle (3pt);
\draw (1.2,.9) node [anchor = north] {$\phi_1(k_2)$}
      (2.1,1.5) node [anchor = north] {$\phi_2(k_2)$}
      (.7,2.1) node [anchor = north] {$\phi_3(k_2)$}
      (.7,3.3) node [anchor = south] {$\phi_4(k_2)$}
      (0,-.1) node [anchor = north] {$0$}
      (6,-.1) node [anchor = north] {$k_2$}
      (12,-.1) node [anchor = north] {$\dfrac{1}{2}$}
      (-.1,0) node [anchor = east] {$0$}
      (-.1,3) node [anchor = east] {$\sigma(\alpha\sub{gen}(k_2))$}
      (-.1,6) node [anchor = east] {$2 \pi$};
\end{tikzpicture}
\caption{The spectrum of $\alpha\sub{gen}(k_2)$ is non-degenerate for $k_2 \in (0,1/2)$, and only doubly degenerate at $0$ and $1/2$.}
\label{fig:alpha2}
\end{figure}

\subsubsection{The branch cut}

We now define a ($k_2$-dependent) branch cut to compute the logarithms of $\alpha\sub{gap}$ and $\alpha\sub{gen}$.

\begin{proposition} \label{prop:BranchCut}
Let $\set{\alpha(k_2)}_{k_2 \in \R}$ be a family of unitary matrices as in Assumption \ref{assum:alpha}.
\begin{enumerate}
\item \label{item:alphagap_cut} Let $\set{\alpha\sub{gap}(k_2)}_{k_2 \in \R}$ be as in Proposition \ref{prop:GenericForm}(\ref{item:alphagap}). Then one can construct a continuous, $\Z$-periodic and even function $\phi\sub{gap} \colon \R \to \R$ such that $\eu^{\iu \, \phi\sub{gap}(k_2)}$ lies in the resolvent set of $\alpha\sub{gap}(k_2)$ for all $k_2 \in \R$.
\item \label{item:alphagen_cut} Assume that $\mathcal{I}(\alpha) = 0 \in \Z_2$, where $\mathcal{I}(\alpha)$ is defined in \eqref{eqn:rueda}. Let $\set{\alpha\sub{gen}(k_2)}_{k_2 \in \R}$ be as in Proposition \ref{prop:GenericForm}(\ref{item:alphagen}). Then one can construct a continuous, $\Z$-periodic and even function $\phi\sub{gen} \colon \R \to \R$ such that $\eu^{\iu \, \phi\sub{gen}(k_2)}$ lies in the resolvent set of $\alpha\sub{gen}(k_2)$ for all $k_2 \in \R$.
\end{enumerate}
\end{proposition}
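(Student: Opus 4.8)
The plan is to handle both parts by the same reduction and then to study the gap structure of the spectrum on a fundamental domain.

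\emph{Reduction to $[0,1/2]$.} The spectra of $\alpha\sub{gap}(k_2)$ and of $\alpha\sub{gen}(k_2)$ are both $\Z$-periodic and invariant under $k_2 \mapsto -k_2$. It therefore suffices to construct, on the compact interval $[0,1/2]$, a continuous function $\phi$ with $\eu^{\iu \phi(k_2)} \notin \sigma(\,\cdot\,(k_2))$ for all $k_2 \in [0,1/2]$: extending it by $\phi(-k_2) := \phi(k_2)$ and $\phi(k_2+1) := \phi(k_2)$ produces a continuous, even, $\Z$-periodic function on $\R$ (continuity at $k_2 = 0$ is automatic from evenness, and at $k_2 = \pm 1/2$ the even and periodic prescriptions are consistent), and $\eu^{\iu \phi}$ still avoids the spectrum for every $k_2$ because the spectrum is even and periodic. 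For part (\ref{item:alphagap_cut}), on $[0,1/2]$ the matrix $\alpha\sub{gap}(k_2)$ has $m$ distinct eigenvalues; picking a continuous labelling (possible by \cite[Thm.~II.5.2]{Kato66}, as used earlier) and using that on an interval non-colliding eigenvalue curves keep their order, I get continuous functions $\phi_1(k_2) < \phi_2(k_2) < \cdots < \phi_m(k_2) < \phi_1(k_2) + 2\pi$. Then $\phi\sub{gap}(k_2) := \tfrac{1}{2}\bigl(\phi_1(k_2) + \phi_2(k_2)\bigr)$ is continuous and lies strictly inside the arc $(\phi_1(k_2),\phi_2(k_2))$, which contains no eigenvalue argument modulo $2\pi$; hence $\eu^{\iu \phi\sub{gap}(k_2)}$ lies in the resolvent set, and the reduction finishes the part (and, together with Propositions \ref{prop:AlmostLog} and \ref{prop:rotate}, the non-TRS statement Theorem \ref{thm:MainResults_BB}(\ref{item:Result_d=2_noTRS})).

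\emph{Part (\ref{item:alphagen_cut}).} Order the eigenvalues continuously on the open interval $(0,1/2)$, where $\alpha\sub{gen}$ is non-degenerate, and let $G_i(k_2)$ be the open arc between $\eu^{\iu\phi_i(k_2)}$ and $\eu^{\iu\phi_{i+1}(k_2)}$ (indices mod $m$). Since the eigenvalue curves never collide on $(0,1/2)$, the $m$ gaps $G_1,\ldots,G_m$ vary continuously and none closes in the interior. At $k_2 = 0$ and $k_2 = 1/2$, Kramers degeneracy together with the genericity of $\alpha\sub{gen}$ (exactly doubly degenerate, real analytic near the half-integers, so the two branches bifurcating from a double eigenvalue stay adjacent to it and hence consecutive in cyclic order) forces the eigenvalues to merge in \emph{consecutive} pairs; thus exactly $n = m/2$ gaps shrink to a point and the other $n$ survive. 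Normalising the labelling by a cyclic shift so that the external (surviving) gaps at $k_2 = 0^+$ are $\{G_2, G_4, \ldots, G_m\}$, the consecutive-pair matching at $k_2 = 1/2^-$ is then one of exactly two possibilities: the external gaps at $k_2 = 1/2^-$ are again $\{G_2, G_4, \ldots, G_m\}$, or they are the complementary set $\{G_1, G_3, \ldots, G_{m-1}\}$. In the first case I fix an $i$ for which $G_i$ is external at both endpoints and let $\phi\sub{gen}(k_2)$ be the midpoint of $G_i(k_2)$ on $(0,1/2)$, extended at $k_2 = 0, 1/2$ to the midpoint of the corresponding external gap; this is continuous on $[0,1/2]$, avoids the spectrum, and the reduction concludes. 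Since $\mathcal{I}$ is invariant under small continuous deformations preserving periodicity and the TRS condition \eqref{eqn:CS'} (cf.\ Proposition \ref{prop:closeness}, applicable as $\|\alpha - \alpha\sub{gen}\| < 2$), we have $\mathcal{I}(\alpha\sub{gen}) = \mathcal{I}(\alpha) = 0$, so it remains to rule out the second possibility, i.e.\ to show that the ``matching type'' (first vs.\ second possibility) equals $\mathcal{I}(\alpha\sub{gen}) \bmod 2$.

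\emph{Main obstacle.} This last identification is the only nontrivial step. I would prove it via the closed formula \eqref{eqn:IntegralRueda_a}: writing $A := \det\alpha\sub{gen}$, one has $\frac{1}{2\pi\iu}\int_{0}^{1/2} A^{-1} A'\,\di k_2 = \frac{1}{2\pi}\bigl(\sum_i \phi_i(1/2) - \sum_i \phi_i(0)\bigr)$ for the continuous lift $\sum_i \phi_i$ of $\arg A$ along $[0,1/2]$, and one expresses $\sum_i \phi_i(0)$ and $\sum_i \phi_i(1/2)$ through the doubly degenerate arguments $\mu_j,\nu_j$ entering \eqref{eqn:IntegralRueda_a} and the number of eigenvalue branches winding past the base point of the labelling. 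Carrying out this bookkeeping, the right-hand side of \eqref{eqn:IntegralRueda_a} becomes an integer whose parity is $0$ exactly when the consecutive-pair matchings at $k_2 = 0$ and $k_2 = 1/2$ coincide, which is precisely what is needed. (Alternatively one may invoke Proposition \ref{prop:intersect} with $f$ constant equal to a point of an external gap at $k_2 = 0$, slightly perturbed for transversality, and count its crossings with the eigenvalue curves; this is shorter but packages the same combinatorics inside Proposition \ref{prop:intersect}.) Everything else in the argument is soft: the reduction to $[0,1/2]$, the continuity of eigenvalue labellings on an interval, and the midpoint-of-a-gap construction.
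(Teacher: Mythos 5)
Part (\ref{item:alphagap_cut}), the reduction to $[0,1/2]$ by even and periodic extension, the endpoint dichotomy (the closing gaps alternate in cyclic order, so the surviving sets at $k_2=0$ and $k_2=1/2$ are either equal or complementary), and the midpoint construction in the ``same'' case are all correct, and this is a clean reorganization of the problem. But the decisive step---that $\mathcal{I}(\alpha\sub{gen})=\mathcal{I}(\alpha)=0$ excludes the complementary matching, i.e.\ that the matching type equals $\mathcal{I}$ modulo $2$---is exactly the content of part (\ref{item:alphagen_cut}), and you do not prove it: both routes you offer are sketches. For the route via \eqref{eqn:IntegralRueda_a}, the assertion that ``carrying out this bookkeeping'' produces an integer whose parity detects the matching is where all the work lies (one must control the $2\pi$ ambiguities between the continuous lifts $\phi_i$ at the two endpoints, the choice of logarithm branch, and the branches winding past the base point of the labelling), and none of it is done.

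The shortcut via Proposition \ref{prop:intersect} with a constant $f$ has a concrete obstruction: that proposition requires the graph of $f$ to meet the eigenvalue curves transversally in finitely many points, and away from half-integers the branches of $\alpha\sub{gen}$ are only continuous, so a constant (or ``slightly perturbed'') line need not intersect them transversally, nor even in a finite set; there is no Sard-type argument for merely continuous curves. This is precisely the difficulty the paper's construction is engineered to avoid: $\alpha\sub{gen}$ is real analytic near half-integers with strictly decreasing cusp widths, and the cut runs parallel to the top eigenvalue branch at a distance $b_0$ smaller than the minimal interior gap, so there are no crossings on $[0,1/2-\epsilon_1]$ at all and, in the bad scenario, exactly one transversal crossing near $k_2=1/2$, to which Proposition \ref{prop:intersect} legitimately applies and contradicts $\mathcal{I}=0$. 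To complete your argument you need the same input: e.g.\ follow one fixed gap $G_{i_0}$ surviving at $k_2=0$ (no interior crossings by construction) and, using the analyticity and monotone shrinking of the cusps from Proposition \ref{prop:GenericForm}(\ref{item:alphagen}), show that if $G_{i_0}$ closes at $k_2=1/2$ the curve crosses the spectrum exactly once, transversally; note the bare midpoint curve does not cross transversally---it converges to the double eigenvalue at the endpoint---so an offset like the paper's $\phi_m(k_2)+b_0$ is needed. With that supplied, your gap-index parity count does give matching type $=\mathcal{I} \bmod 2$; as written, this key step is missing.
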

\begin{proof}
We first prove \eqref{item:alphagap_cut}. Since $\alpha\sub{gap}(k_2)$ has non-degenerate spectrum for all $k_2 \in \R$, we can choose continuous arguments for its eigenvalues:
\[ \sigma(\alpha\sub{gap}(k_2)) = \set{\eu^{\iu \, \phi_j(k_2)}}_{j =1, \ldots, m}, \quad \phi_j \colon \R \to \R \text{ continuous}. \]
Such arguments can be also chosen to be even functions of $k_2$, due to the symmetry of the spectrum of $\alpha\sub{gap}$. Let now
\[ \phi\sub{gap}(k_2) := \frac{\phi_1(k_2) + \phi_2(k_2)}{2}, \quad k_2 \in \R. \]
Then $\eu^{\iu \, \phi\sub{gap}(k_2)}$ indeed lies in the resolvent set of $\alpha\sub{gap}(k_2)$ and satisfies all the other required properties.

The proof of \eqref{item:alphagen_cut} is more involved. Let $a \in \R$ be such that all the arguments of the $n$ distinct eigenvalues of $\alpha\sub{gen}(0)$ lie in $(a, a+2\pi)$. To streamline the notation, we assume without loss of generality that $a=0$. By continuity, from each of the distinct eigenvalues of $\alpha\sub{gen}(0)$ stem two eigenvalues for small positive $0< k_2 \le \epsilon_0$, since $\alpha\sub{gen}(k_2)$ has non-degenerate spectrum in $(0,1/2)$. Moreover, if $\epsilon_0$ is small enough, one can choose the arguments of these eigenvalues to lie in $[0,2\pi)$ for all $k_2 \in [0, \epsilon_0]$ (\ie the eigenvalues do not wind around the unit circle in this interval); they will also be distinct in view of the non-degeneracy of the spectrum of $\alpha\sub{gen}(k_2)$. We then label these eigenvalues $\set{\lambda_i(k_2) = \eu^{\iu \phi_i(k_2)}}_{1 \le i \le m} = \sigma(\alpha\sub{gen}(k_2))$, for $k_2 \in [0, \epsilon_0]$, with respect to the increasing order of their arguments: $0 \le \phi_1(k_2) < \cdots < \phi_m(k_2) < 2 \pi$. This labelling can be employed in the whole $[0, 1/2]$, and leads to a continuous choice of the definition of the arguments $\phi_i(k_2)$; the order condition may of course be lost. 

We fix a small $\phi_{1/2} > 0$ and choose $\epsilon_1 > 0$ so small that the following hold. With respect to a \emph{possibly different} increasing labelling, the arguments $\set{\phi_j(k_2)}_{1 \le j \le m}$ of the eigenvalues of $\alpha\sub{gen}(k_2)$ can be chosen in $[0, 2\pi)$ for $k_2 \in [1/2-\epsilon_1, 1/2]$, and depend analytically on $k_2$ on the same interval: this can be achieved in view of the (local) real analyticity of $\alpha\sub{gen}(k_2)$ around half-integer values of $k_2$, which implies by the Analytic Rellich Theorem \cite[Sec.~2.7.4]{CorneanHerbstNenciu15} that also its eigenvalues can be chosen to be real analytic. We also require that 
\begin{equation} \label{eqn:at1/2}
\phi_{2j+1}(k_2) - \phi_{2j}(k_2) > \frac{\phi_{1/2}}{2}, \quad j \in \set{1, \ldots, n}, \quad k_2 \in \left[\frac{1}{2}-\epsilon_1, \frac{1}{2}\right], \quad \phi_{2n+1}(k_2) := 2 \pi.
\end{equation}
Moreover, we require that the gaps $\Delta_j(k_2) := \phi_{2j}(k_2) - \phi_{2j-1}(k_2)$, for $j \in \set{1, \ldots, n}$, be strictly monotonic decreasing in the same interval (with $\Delta_j(1/2) = 0$ due to Kramers degeneracy of $\alpha\sub{gen}(1/2)$): this is possible again in view of the real analyticity condition, at the price of taking possibly a smaller $\epsilon_1$. The region
\[ \Gamma_j := \bigcup_{k_2 \in [1/2-\epsilon_1, 1/2]} \left[\phi_{2j-1}(k_2),  \phi_{2j}(k_2)\right] \]
between two subsequent eigenvalues of $\alpha\sub{gen}(k_2)$ joining in the same eigenvalue of $\alpha\sub{gen}(1/2)$ will be called the \emph{$j$-th cusp}. Notice that there are exactly $n$ cusps and that they are all disjoint, if $\epsilon_1$ is small enough, because $\alpha\sub{gen}(1/2)$ has exactly $n$ distinct eigenvalues and $\alpha\sub{gen}(k_2)$ has non-degenerate spectrum for $k_2 \in (0,1/2)$. The width of the $j$-th cusp is given exactly by the value of the function $\Delta_j$, and is then shrinking to $0$.

Let $\phi_0>0$ be such that $\phi_m(k_2) < \phi_m(k_2) + \phi_0 < 2 \pi$ for all $k_2 \in [0, \epsilon_0]$. Also let $g > 0$ denote the minimal gap between the eigenvalues of $\alpha\sub{gen}(k_2)$ for $k_2 \in [\epsilon_0, 1/2-\epsilon_1]$. Define
\[ \phi\sub{gen}(k_2) := \phi_m(k_2) + b_0, \quad \text{where} \quad b_0 := \min \set{\phi_0, \frac{\phi_{1/2}}{3}, \frac{g}{3}}. \]
The choice of the curve lying in the resolvent of $\alpha\sub{gen}$ will then be $f(k_2) := \eu^{\iu \, \phi\sub{gen}(k_2)}$, for $k_2 \in [0, 1/2]$ (Figure \ref{fig:cut}). This curve will indeed not intersect the spectrum of $\alpha\sub{gen}(k_2)$ for $k_2 \in [0, \epsilon_0]$ because its argument is greater than the arguments of any of the eigenvalues (the difference with the largest of them being at least $\phi_0$), and neither will it intersect the spectrum in $[\epsilon_0, 1/2-\epsilon_1]$ because of the gap condition.

\begin{figure}[ht]
\centering
\begin{tikzpicture}
\draw (0,0) -- (12,0) -- (12,6) -- (0,6) -- cycle;
\draw (0,2) -- (24/7,6)
      (0,2) .. controls (2,7/3) .. (36/7,6)
      (0,0) .. controls (2,4.25) and (5,.5) .. (60/7,6)
      (0,0) .. controls (4,2/3) .. (4.8,5/3) .. controls (6.5,4.5) and (7.5,1.5) .. (72/7,6)
      (24/7,0) .. controls (4,2/3) .. (4.9,5/6) .. controls (7,1) and (7,4) .. (12,4)
      (36/7,0) .. controls (6,2) and (10,0) .. (12,4)
      (60/7,0) .. controls (10,5/3) .. (12,2)
      (72/7,0) -- (12,2);
\draw [dashed] (0,2+1/4) -- (24/7-3/14,6)
               (24/7-3/14,0) .. controls (4-3/14,2/3) .. (4.9-3/14,5/6+1/12) .. controls (7-3/14,1) and (7-3/14,4) .. (12,4+1/8);
\fill (0,0) circle (3pt)
      (0,2) circle (3pt)
      (12,2) circle (3pt)
      (12,4) circle (3pt);
\draw (1.2,.9) node [anchor = north] {$\phi_1(k_2)$}
      (2.1,1.5) node [anchor = north] {$\phi_2(k_2)$}
      (.7,2.1) node [anchor = north] {$\phi_3(k_2)$}
      (.7,3.5) node [anchor = south] {$\phi_4(k_2)$}
      (0,-.1) node [anchor = north] {$0$}
      (6,-.1) node [anchor = north] {$k_2$}
      (12,-.1) node [anchor = north] {$\dfrac{1}{2}$}
      (-.1,0) node [anchor = east] {$0$}
      (-.1,3) node [anchor = east] {$\sigma(\alpha\sub{gen}(k_2))$}
      (-.1,6) node [anchor = east] {$2 \pi$};
\end{tikzpicture}
\caption{The logarithm branch cut (dashed line).}
\label{fig:cut}
\end{figure}
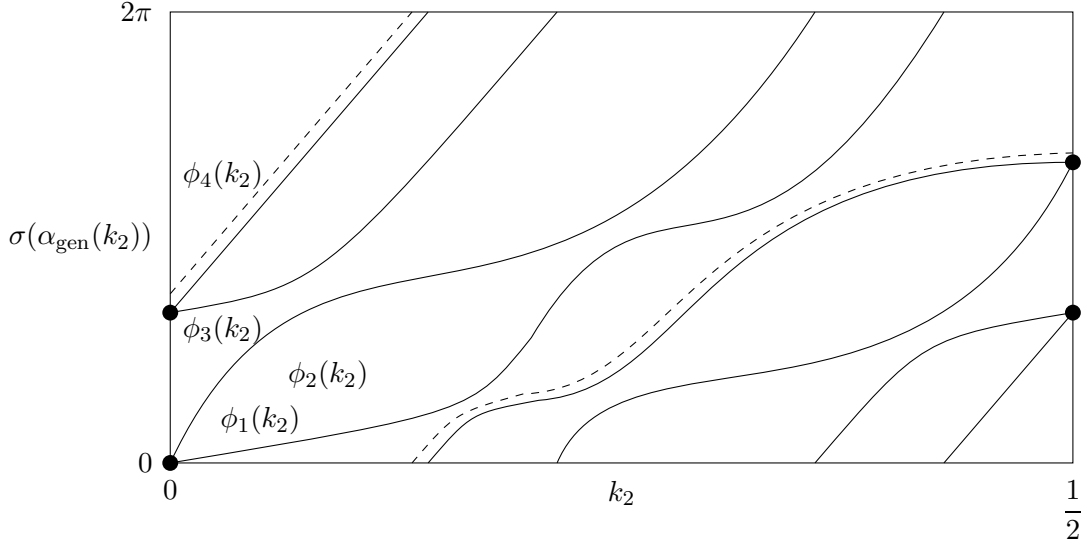

We claim that the curve traced by $f(k_2)$ for $k_2 \in [1/2-\epsilon_1, 1/2]$ still continues not to intersect the spectrum. Indeed, if $f(1/2-\epsilon_1)$ is outside all the cusps, then its graph is parallel to one of the ``even'' eigenvalues $\phi_{2j}$, and will end up on top of $\phi_{2j}(1/2)$ without ever touching the spectrum. If instead $f(1/2-\epsilon_1)$ were inside one of the cusps, say in the $j$-th one, then the graph of $f$ would be parallel to the one of $\phi_{2j-1}$, namely
\[ f(k_2) = \phi_{2j-1}(k_2) + b_0. \]
Define $\widetilde{\Delta}(k_2) := \phi_{2j}(k_2) - f(k_2) = \Delta_j(k_2) - b_0$. Since $\Delta_j$ is by assumption monotone decreasing, then so is $\widetilde{\Delta}$; moreover, $\widetilde{\Delta}$ changes sign in the interval $[1/2-\epsilon_1,1/2]$, passing from being positive to negative. Notice that the choice of $b_0$ also guarantees that the curve traced by the graph of $f$ does not enter $\Gamma_{j+1}$ after exiting $\Gamma_j$, in view of \eqref{eqn:at1/2} and the fact that $b_0 \le \phi_{1/2}/3$. It follows that $f(k_2)$ crosses the spectrum of $\alpha(k_2)$ (more precisely the eigenvalue $\phi_{2j}(k_2)$) exactly once (Figure \ref{fig:cut1}). This would produce an example of a continuous curve intersecting the graphs of the eigenvalues of $\alpha\sub{gen}(k_2)$ an odd number of times: this contradicts the hypothesis that $\mathcal{I}(\alpha) = \mathcal{I}(\alpha\sub{gen}) = 0 \in \Z_2$ in view of Proposition \ref{prop:intersect}. Notice that indeed $\mathcal{I}(\alpha\sub{gen}) = \mathcal{I}(\alpha)$ in view of Proposition \ref{prop:closeness}, because the two families are uniformly norm-close to each other.

\begin{figure}[ht]
\centering
\begin{tikzpicture}
\path [name path = spectrum1] (0,1) .. controls (3,1) and (4,4) .. (12,4);
\path [name path = spectrum2] (2,0) .. controls (3,1) .. (12,4);
\path [name path = spectrum3] (8,0) .. controls (9,1) .. (12,1);
\path [name path = spectrum4] (10,0) -- (12,1);
\path [name path = epsilon] (10.5,0) -- (10.5,6);
\draw (0,0) -- (12,0) -- (12,6) -- (0,6) -- cycle;
\fill [color = gray!30, name intersections = {of = spectrum1 and epsilon, by = e1}, name intersections = {of = spectrum2 and epsilon, by = e2}, name intersections = {of = spectrum3 and epsilon, by = e3}, name intersections = {of = spectrum4 and epsilon, by = e4}] (12,4) -- (e1) -- (e2) -- cycle
(12,1) -- (e3) -- (e4) -- cycle;
\draw (0,4) -- (2,6)
      (0,4) .. controls (6,4) .. (8,6)
      (0,1) -- (10,6);
\draw (0,1) .. controls (3,1) and (4,4) .. (12,4);
\draw (2,0) .. controls (3,1) .. (12,4);
\draw (8,0) .. controls (9,1) .. (12,1);
\draw (10,0) -- (12,1);
\draw [dotted] (10.5,0) -- (10.5,6);
\draw [dashed] (0,4+1/3) -- (2-1/3,6)
      [name path = cut] (2-1/3,0) .. controls (3-1/3,1+1/3) .. (12,4+1/3);
\draw [red, name intersections = {of = spectrum1 and cut, by = c1}] (c1) circle (7pt);
\fill (0,4) circle (3pt)
      (0,1) circle (3pt)
      (12,4) circle (3pt)
      (12,1) circle (3pt);
\draw (0,-.1) node [anchor = north] {$0$}
      (6,-.1) node [anchor = north] {$k_2$}
      (12,-.1) node [anchor = north] {$\dfrac{1}{2}$}
      (10.5,-.1) node [anchor = north] {$\dfrac{1}{2}-\epsilon_1$}
      (-.1,0) node [anchor = east] {$0$}
      (-.1,3) node [anchor = east] {$\sigma(\alpha\sub{gen}(k_2))$}
      (-.1,6) node [anchor = east] {$2 \pi$};
\end{tikzpicture}
\caption{The logarithm branch cut (dashed line) must cross the eigenvalues when $\mathcal{I}(\alpha\sub{gen}) = 1$. The shaded area are the cusps around the doubly-degenerate eigenvalues of $\alpha\sub{gen}(1/2)$.}
\label{fig:cut1}
\end{figure}
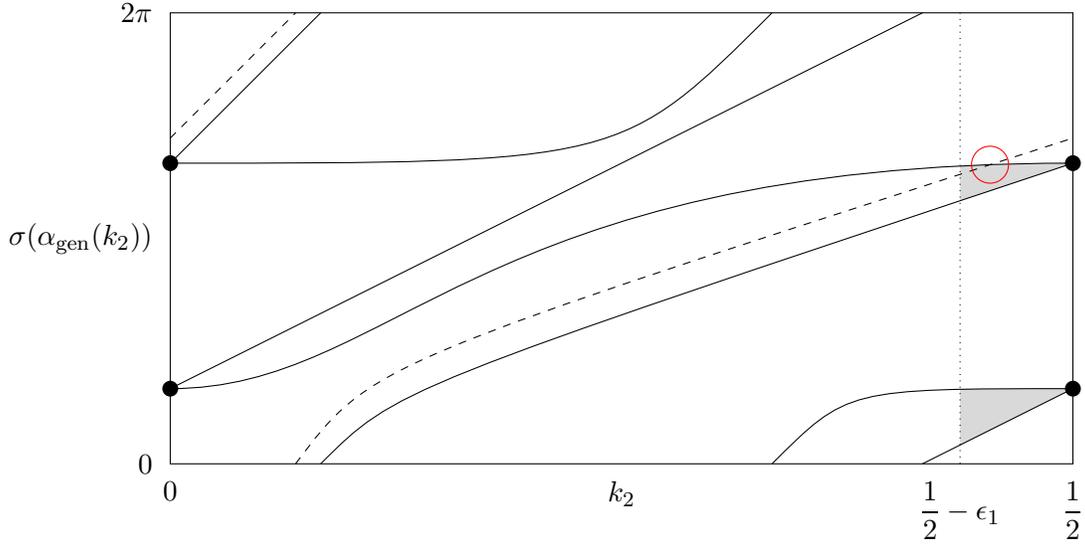

Since the spectrum of $\alpha\sub{gen}(k_2)$ is even-symmetric in $k_2$ (as follows from \ref{item:alpha_TRS}), we can extend $f(k_2)$ to negative $k_2 \in [-1/2,0]$ by taking the even extension, and its graph will still not intersect the spectrum of $\alpha\sub{gen}$. The periodicity of the spectrum of $\alpha\sub{gen}$ implies that the graph of the $\Z$-periodic extension of $f$ (that is, of $\phi\sub{gen}$) to $\R$ will still always lie in the resolvent set. This concludes the proof.
\end{proof}

\subsubsection{Logarithms for the approximants}

We conclude the argument leading to the proof of parts (\ref{item:betagap}) and (\ref{item:I=0->beta}) from Theorem \ref{thm:beta} (or equivalently of parts (\ref{item:Result_d=2}\ref{item:Result_d=2_I=0}) and (\ref{item:Result_d=2_noTRS}) from Theorem \ref{thm:MainResults_BB}, in view of Proposition \ref{prop:rotate}) by using the Cayley transform method. 

\begin{proposition} \label{prop:ApproxLog}
Let $\set{\alpha\sub{gap}(k_2)}_{k_2 \in \R}$ and $\set{\alpha\sub{gen}(k_2)}_{k_2 \in \R}$ be as in Proposition \ref{prop:GenericForm}. Then it is possible to construct continuous families $\set{h\sub{gap}(k_2)}_{k_2 \in \R}$ and $\set{h\sub{gen}(k_2)}_{k_2 \in \R}$ of matrices, with
\begin{gather*}
h\sub{gap}(k_2) = h\sub{gap}(k_2)^* = h\sub{gap}(k_2+1), \\
h\sub{gen}(k_2) = h\sub{gen}(k_2)^* = h\sub{gen}(k_2+1), \quad \eps \, h\sub{gen}(k_2) = h\sub{gen}(-k_2)^t \, \eps,
\end{gather*}
such that
\[ \alpha\sub{gap}(k_2) = \eu^{\iu \, h\sub{gap}(k_2)} \quad \text{and} \quad \alpha\sub{gen}(k_2) = \eu^{\iu \, h\sub{gen}(k_2)}. \]
\end{proposition}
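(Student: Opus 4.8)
The plan is to obtain the self-adjoint logarithms by combining the $k_2$-dependent branch cuts constructed in Proposition \ref{prop:BranchCut} with the Cayley transform of Proposition \ref{prop:Cayley}. The point is that a branch cut sitting at a moving point $\eu^{\iu\phi(k_2)}$ on the unit circle can be turned into the \emph{standard} branch cut at $-1$ simply by multiplying the whole family by a scalar phase, which commutes with everything and can be stripped off again at the end.

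For $\alpha\sub{gap}$, I would take the continuous, $\Z$-periodic, even function $\phi\sub{gap}$ provided by Proposition \ref{prop:BranchCut}(1), so that $\eu^{\iu\phi\sub{gap}(k_2)}$ lies in the resolvent set of $\alpha\sub{gap}(k_2)$ for every $k_2$, and set $\widehat\alpha\sub{gap}(k_2) := -\eu^{-\iu\phi\sub{gap}(k_2)}\,\alpha\sub{gap}(k_2)$. Multiplication by the unimodular scalar $-\eu^{-\iu\phi\sub{gap}(k_2)} = \eu^{\iu(\pi-\phi\sub{gap}(k_2))}$ rotates the spectrum so that the forbidden eigenvalue location $\eu^{\iu\phi\sub{gap}(k_2)}$ is sent to $-1$; hence $-1$ lies in the resolvent set of $\widehat\alpha\sub{gap}(k_2)$, and $\widehat\alpha\sub{gap}$ is still continuous and $\Z$-periodic. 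Proposition \ref{prop:Cayley} then yields a continuous, $\Z$-periodic family of self-adjoint matrices $\widehat h\sub{gap}$ with $\widehat\alpha\sub{gap}(k_2) = \eu^{\iu\widehat h\sub{gap}(k_2)}$. Since $(\phi\sub{gap}(k_2)+\pi)\Id$ is a scalar matrix, it commutes with $\widehat h\sub{gap}(k_2)$, so defining $h\sub{gap}(k_2) := (\phi\sub{gap}(k_2)+\pi)\,\Id + \widehat h\sub{gap}(k_2)$ produces a self-adjoint, continuous, $\Z$-periodic family with $\eu^{\iu h\sub{gap}(k_2)} = \eu^{\iu(\phi\sub{gap}(k_2)+\pi)}\,\eu^{\iu\widehat h\sub{gap}(k_2)} = -\eu^{\iu\phi\sub{gap}(k_2)}\,\widehat\alpha\sub{gap}(k_2) = \alpha\sub{gap}(k_2)$, as required.

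For $\alpha\sub{gen}$ I would run the identical construction with the function $\phi\sub{gen}$ from Proposition \ref{prop:BranchCut}(2) (whose existence needs $\mathcal{I}(\alpha) = 0 \in \Z_2$), putting $\widehat\alpha\sub{gen}(k_2) := -\eu^{-\iu\phi\sub{gen}(k_2)}\,\alpha\sub{gen}(k_2)$ and $h\sub{gen}(k_2) := (\phi\sub{gen}(k_2)+\pi)\,\Id + \widehat h\sub{gen}(k_2)$. The one extra point to verify is the TRS condition. Using $\eps\,\alpha\sub{gen}(k_2) = \alpha\sub{gen}(-k_2)^t\,\eps$ together with the evenness of $\phi\sub{gen}$, a one-line computation gives $\eps\,\widehat\alpha\sub{gen}(k_2) = -\eu^{-\iu\phi\sub{gen}(k_2)}\,\alpha\sub{gen}(-k_2)^t\,\eps = \widehat\alpha\sub{gen}(-k_2)^t\,\eps$, so $\widehat\alpha\sub{gen}$ is itself TRS in the sense of \eqref{eqn:CS'}. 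The second part of Proposition \ref{prop:Cayley} then lets us choose $\widehat h\sub{gen}$ with $\eps\,\widehat h\sub{gen}(k_2) = \widehat h\sub{gen}(-k_2)^t\,\eps$; and since the added scalar matrix $(\phi\sub{gen}(k_2)+\pi)\Id$ is real (hence equal to its transpose) and even in $k_2$, it obviously intertwines $\eps$ the same way, whence $h\sub{gen}$ inherits $\eps\,h\sub{gen}(k_2) = h\sub{gen}(-k_2)^t\,\eps$.

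I do not expect any real difficulty here: the only things to be careful about are purely formal — checking that the scalar phase $-\eu^{-\iu\phi(k_2)}$ is exactly the rotation that brings the chosen branch-cut point to $-1$ (so that Proposition \ref{prop:Cayley} is applicable), that stripping the phase off again recovers $\alpha\sub{gap}$, resp. $\alpha\sub{gen}$ (both relying on scalar matrices commuting with all matrices), and that the evenness of $\phi\sub{gen}$ is precisely what makes the rotation compatible with the TRS structure. All genuine content is already packaged in Propositions \ref{prop:BranchCut} and \ref{prop:Cayley}.
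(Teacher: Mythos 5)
Your proposal is correct and is essentially the paper's own proof: you multiply $\alpha\sub{gap}$, $\alpha\sub{gen}$ by the scalar phase $\eu^{\iu(\pi-\phi(k_2))}$ with $\phi$ from Proposition \ref{prop:BranchCut}, apply the Cayley transform of Proposition \ref{prop:Cayley} to the resulting families (whose resolvent sets contain $-1$, and which remain TRS in the \textsl{gen} case thanks to the evenness of $\phi\sub{gen}$), and then restore the scalar part by adding a multiple of $\Id$. The only cosmetic difference is that you add $(\phi(k_2)+\pi)\Id$ where the paper adds $(\phi(k_2)-\pi)\Id$; these differ by $2\pi\Id$ and yield the same exponential, so both choices are valid.
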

\begin{proof}
Define for $k_2 \in \R$
\[
\widetilde{\alpha}\sub{gap}(k_2) := \eu^{\iu \, (\pi - \phi\sub{gap}(k_2))} \, \alpha\sub{gap}(k_2), \quad
\widetilde{\alpha}\sub{gen}(k_2) := \eu^{\iu \, (\pi - \phi\sub{gen}(k_2))} \, \alpha\sub{gen}(k_2),
\]
with $\phi\sub{gap}$ and $\phi\sub{gen}$ as in Proposition \ref{prop:BranchCut}. Then $-1$ lies in the resolvent set of $\widetilde{\alpha}\sub{gap}(k_2)$ and $\widetilde{\alpha}\sub{gen}(k_2)$ for all $k_2 \in \R$, and moreover $\eps \, \widetilde{\alpha}\sub{gen}(k_2) = \widetilde{\alpha}\sub{gen}(-k_2)^t \, \eps$ because $\phi\sub{gen}(k_2) = \phi\sub{gen}(-k_2)$. Thus, Proposition \ref{prop:Cayley} can be applied to these two families of matrices, and yields
\begin{align*}
\widetilde{\alpha}\sub{gap}(k_2) & = \eu^{\iu \, \widetilde{h}\sub{gap}(k_2)}, \quad \widetilde{h}\sub{gap}(k_2) = \widetilde{h}\sub{gap}(k_2)^* = \widetilde{h}\sub{gap}(k_2+1), \\
\widetilde{\alpha}\sub{gen}(k_2) & = \eu^{\iu \, \widetilde{h}\sub{gen}(k_2)}, \quad \widetilde{h}\sub{gen}(k_2) = \widetilde{h}\sub{gen}(k_2)^* = \widetilde{h}\sub{gen}(k_2+1), \quad \eps \, \widetilde{h}\sub{gen}(k_2) = \widetilde{h}\sub{gen}(-k_2)^t \, \eps.
\end{align*}

Setting
\[ h\sub{gap}(k_2)  := (-\pi + \phi\sub{gap}(k_2)) \Id + \widetilde{h}\sub{gap}(k_2), \quad
h\sub{gen}(k_2)  := (-\pi + \phi\sub{gen}(k_2)) \Id + \widetilde{h}\sub{gen}(k_2),\]
we obtain
\[ \alpha\sub{gap}(k_2) = \eu^{\iu \, h\sub{gap}(k_2)}, \quad \alpha\sub{gen}(k_2) = \eu^{\iu \, h\sub{gen}(k_2)}, \]
with logarithms which satisfy all the required properties.
\end{proof}

\subsection{$\mathcal{I}$ as a topological obstruction}

In the previous Subsection we showed how the condition $\mathcal{I}(\alpha) = 0 \in \Z_2$, with $\alpha$ the family of matching matrices and $\mathcal{I}(\alpha)$ as in \eqref{eqn:rueda}, is sufficient for the existence of a family of unitary matrices $\beta$ satisfying \ref{item:beta_C0}, \ref{item:beta_alpha} and \ref{item:beta_TRS}. We show now that the condition is also necessary. This concludes the proof of Theorem \ref{thm:beta} (or equivalently of Theorem \ref{thm:MainResults_BB}), characterizing completely the condition $\mathcal{I}=0 \in \Z_2$ as the (topological) obstruction to a continuous and symmetric Bloch frame in $d=2$.

\begin{proof}[Proof of Theorem \ref{thm:beta}(\ref{item:beta->I=0})]
In view of Remark \ref{rmk:beta0}, we can without loss of generality assume that $\beta(0,k_2) \equiv \Id$ for all $k_2 \in \R$. Evaluating \ref{item:beta_alpha} at $k_1=-1/2$ and using \ref{item:beta_TRS} yields
\[ \alpha(k_2) = \beta(-1/2, k_2) \, \beta(1/2, k_2)^{-1} = \eps^{-1} \, \overline{\beta(1/2,-k_2)} \, \eps \, \beta(1/2,k_2)^{-1}. \]
By setting 
\[ \gamma(k_2) := \beta(1/2, k_2)^{-1} \]
the above can then be rewritten as
\begin{equation} \label{eqn:gamma(k2)}
\alpha(k_2) = \eps^{-1} \, \gamma(-k_2)^t \, \eps \, \gamma(k_2).
\end{equation}
Notice that this definition agrees at $k_2=0$ and $k_2=1/2$ with the matrices $\gamma(0)$ and $\gamma(1/2)$ appearing in \eqref{eqn:gammas}, and gives an alternative proof of Lemma \ref{lemma:factor} under the additional hypothesis of the existence of $\beta$.

Define now $B(k_1, k_2) := \det \beta(k_1, k_2)^{-1}$ and 
\[ d(k_1) := \frac{1}{2 \pi \iu} \int_{-1/2}^{1/2} B(k_1, k_2)^{-1} \, B'(k_1, k_2) \, \di k_2, \quad k_1 \in \R. \]
The function $d$ is integer-valued, because for fixed $k_1$ it computes the degree of the determinant of $\beta(k_1, \cdot)^{-1}$, and continuous in $k_1$, as $\beta(\kk)$ depends continuously on $\kk$: as such, it is constant. Since $\beta(0,k_2) \equiv \Id$, we have that $d(0)=0$. On the other hand, in view of our definition of $\gamma$ we have that $d(1/2)$ computes the degree of $\det \gamma$, as in \eqref{eqn:degree}. Since the latter appears on the right-hand side of formula \eqref{eqn:IntegralRueda_c} for $\mathcal{I}(\alpha)$, we conclude that $\mathcal{I}(\alpha) = 0 \in \Z_2$, as wanted.
\end{proof}

\subsection{$\mathcal{I}$ as a complete homotopy invariant}  \label{sec:Itopology}

As a corollary of Theorem \ref{thm:beta}(\ref{item:I=0->beta}), we can prove that the Graf--Porta $\Z_2$ index completely characterizes families of matrices as in Assumption \ref{assum:alpha} up to homotopy. The next statement is just a reformulation of Theorem \ref{thm:MainResults_GPhomotopy}.

\begin{theorem} \label{thm:homotopyI}
Let $\alpha_0=\set{\alpha_0(k_2)}_{k_2 \in \R}$ and $\alpha_1=\set{\alpha_1(k_2)}_{k_2 \in \R}$ be as in Assumption \ref{assum:alpha}. Then there exists a continuous homotopy  $\alpha_s=\set{\alpha_s(k_2)}_{k_2 \in \R}$, $s \in [0,1]$, which connects $\alpha_0$ and $\alpha_1$ via families as in Assumption \ref{assum:alpha} if and only if
\begin{equation} \label{eqn:SameIndex}
\mathcal{I}(\alpha_0) = \mathcal{I}(\alpha_1) \in \Z_2.
\end{equation}
\end{theorem}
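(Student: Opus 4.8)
The forward implication is immediate and requires no new work: if $\alpha_0$ and $\alpha_1$ are joined by a continuous homotopy $\set{\alpha_s}_{s \in [0,1]}$ of families as in Assumption \ref{assum:alpha}, then Lemma \ref{lemma:Ihomotopy} shows that $s \mapsto \mathcal{I}(\alpha_s)$ is constant, which gives \eqref{eqn:SameIndex}. The substance lies in the converse, and the plan is to reduce it to the already-established Theorem \ref{thm:beta}(\ref{item:I=0->beta}) by passing to a suitable "twisted difference" of $\alpha_0$ and $\alpha_1$.

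Assume $\mathcal{I}(\alpha_0) = \mathcal{I}(\alpha_1)$. By Lemma \ref{lemma:factor} I would first write $\alpha_i(k_2) = \eps^{-1} \gamma_i(-k_2)^t \, \eps \, \gamma_i(k_2)$ with $\gamma_i$ continuous and $\Z$-periodic, $i = 0,1$. Then set
\[ \widetilde{\alpha}(k_2) := \eps^{-1} \, \overline{\gamma_0(-k_2)} \, \eps \, \alpha_1(k_2) \, \gamma_0(k_2)^*. \]
A computation in the spirit of the proof of Proposition \ref{prop:closeness} shows that $\widetilde{\alpha}$ is continuous, $\Z$-periodic and TRS; moreover, inserting the factorization of $\alpha_1$ and using the elementary identity $\overline{G} \, H^t = (H G^*)^t$, one finds that $\widetilde{\alpha}$ itself factors as $\widetilde{\alpha}(k_2) = \eps^{-1} \widetilde{\gamma}(-k_2)^t \, \eps \, \widetilde{\gamma}(k_2)$ with $\widetilde{\gamma}(k_2) := \gamma_1(k_2) \, \gamma_0(k_2)^*$. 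Hence $\det \widetilde{\gamma} = (\det \gamma_1)(\det \gamma_0)^{-1}$ as a $U(1)$-valued periodic map, and formula \eqref{eqn:IntegralRueda_c} (together with the identification of $\mathcal{I}$ with $\deg([\det \gamma]) \bmod 2$ at the end of the proof of Proposition \ref{prop:IntegralRueda}) gives
\[ \mathcal{I}(\widetilde{\alpha}) \equiv \deg([\det \gamma_1]) - \deg([\det \gamma_0]) \equiv \mathcal{I}(\alpha_1) - \mathcal{I}(\alpha_0) \equiv 0 \bmod 2. \]

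Next I would apply Theorem \ref{thm:beta}(\ref{item:I=0->beta}) to $\widetilde{\alpha}$ (if one insists on a real-analytic input, first replace $\widetilde{\alpha}$ by a norm-close analytic family, which by Proposition \ref{prop:closeness} has the same $\Z_2$ index and, by the construction in its proof, is homotopic to $\widetilde{\alpha}$ through admissible families). This produces a continuous family $\set{\beta(\kk)}_{\kk \in \R^2}$ of unitary matrices satisfying \ref{item:beta_C0}, \ref{item:beta_alpha} (with $\widetilde{\alpha}$ in place of $\alpha$) and \ref{item:beta_TRS}. As in the remark following Proposition \ref{prop:rotate}, the prescription
\[ \widetilde{\alpha}_s(k_2) := \beta(-s/2, k_2) \, \beta(s/2, k_2)^{-1}, \quad s \in [0,1], \]
defines a continuous, $\Z$-periodic and TRS homotopy from $\widetilde{\alpha}_0 \equiv \Id$ to $\widetilde{\alpha}_1 = \widetilde{\alpha}$ (the endpoint identifications use $\beta(0,k_2) \equiv \Id$, which one may assume by Remark \ref{rmk:beta0}, and \ref{item:beta_alpha} evaluated at $k_1 = -1/2$). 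Finally I would transport this back by conjugation: put
\[ \alpha_s(k_2) := \eps^{-1} \, \gamma_0(-k_2)^t \, \eps \, \widetilde{\alpha}_s(k_2) \, \gamma_0(k_2), \quad s \in [0,1]. \]
At $s = 0$ this equals the original $\alpha_0$ by \eqref{eqn:factor} for $\alpha_0$, while at $s = 1$, inverting the definition of $\widetilde{\alpha}$ shows it equals $\alpha_1$. Each $\alpha_s$ is visibly continuous and $\Z$-periodic, and the relation $\eps \, \alpha_s(k_2) = \alpha_s(-k_2)^t \, \eps$ follows from $\eps^t = -\eps$, the periodicity of $\gamma_0$ and the TRS of $\widetilde{\alpha}_s$ after all transposes cancel. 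This is the desired homotopy.

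The conceptual heavy lifting — manufacturing $\beta$ from a logarithm of a generic approximant — is already contained in Theorem \ref{thm:beta}(\ref{item:I=0->beta}); what is genuinely new here is only the passage to the twisted difference $\widetilde{\alpha}$, and I expect the main thing to get right to be the bookkeeping with complex conjugates versus transposes: both the identity $\mathcal{I}(\widetilde{\alpha}) = \mathcal{I}(\alpha_1) - \mathcal{I}(\alpha_0)$ and the invariance of the TRS condition under conjugation by $\gamma_0$ hinge on handling these carefully.
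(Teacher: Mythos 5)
Your proposal is correct and follows essentially the same route as the paper's own proof: the same twisted family $\widetilde{\alpha}(k_2) = \eps^{-1}\,\overline{\gamma_0(-k_2)}\,\eps\,\alpha_1(k_2)\,\gamma_0(k_2)^*$ with factor $\gamma_1\gamma_0^*$, the same computation $\mathcal{I}(\widetilde{\alpha}) \equiv \mathcal{I}(\alpha_1)-\mathcal{I}(\alpha_0) \equiv 0$ via \eqref{eqn:IntegralRueda_c}, and the same appeal to Theorem \ref{thm:beta} to produce $\beta$. Your two-step finish (homotopy $\beta(-s/2,\cdot)\,\beta(s/2,\cdot)^{-1}$ from $\Id$ to $\widetilde{\alpha}$, then conjugation back by $\gamma_0$) is exactly the paper's single formula $\alpha_s(k_2) = \eps^{-1}\,\gamma_0(-k_2)^t\,\eps\,\beta(-s/2,k_2)\,\beta(s/2,k_2)^{-1}\,\gamma_0(k_2)$.
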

\begin{proof}
The ``only if'' statement was already proved in Lemma \ref{lemma:Ihomotopy} (and used in the proof of Proposition \ref{prop:BranchCut}, leading to Theorem \ref{thm:beta}(\ref{item:I=0->beta})). Thus we only need to prove the converse statement.

Write
\[ \alpha_0(k_2) = \eps^{-1} \, \gamma_0(-k_2)^t \, \eps \, \gamma_0(k_2), \quad \alpha_1(k_2) = \eps^{-1} \, \gamma_1(-k_2)^t \, \eps \, \gamma_1(k_2), \]
as in Lemma \ref{lemma:factor}. Define
\[ \alpha(k_2) := \eps^{-1} \, \overline{\gamma_0(-k_2)} \, \eps \, \alpha_1(k_2) \, \gamma_0(k_2)^*. \]
Then an easy computation leads to
\[ \alpha(k_2) = \eps^{-1} \, \gamma(-k_2)^t \, \eps \, \gamma(k_2), \quad \text{with} \quad \gamma(k_2) := \gamma_1(k_2) \, \gamma_0(k_2)^*. \]
It follows again from Lemma \ref{lemma:factor} that $\alpha$ defines a continuous, $\Z$-periodic and TRS family of unitary matrices. We compute its $\Z_2$ index by means of \eqref{eqn:IntegralRueda_c}: we obtain
\[ \mathcal{I}(\alpha) \equiv \deg([\det \gamma]) = \deg([\det \gamma_1]) - \deg([\det \gamma_0]) \equiv \mathcal{I}(\alpha_1) - \mathcal{I}(\alpha_0) \equiv 0 \bmod 2 \]
in view of the hypothesis \eqref{eqn:SameIndex}.

It now follows from Theorem \ref{thm:beta}(\ref{item:I=0->beta}) that there exists a family $\beta$ of unitary matrices which satisfies \ref{item:beta_C0}, \ref{item:beta_alpha} and \ref{item:beta_TRS} with respect to $\alpha$. In particular, evaluating \ref{item:beta_alpha} at $k_1=-1/2$ we obtain
\[ \alpha(k_2) = \beta(-1/2,k_2) \, \beta(1/2, k_2)^{-1}. \]
It is then immediately realized that
\[ \alpha_s(k_2) := \eps^{-1} \, \gamma_0(-k_2)^t \, \eps \, \beta(-s/2,k_2) \, \beta(s/2, k_2)^{-1} \, \gamma_0(k_2) \]
gives a continuous homotopy between $\alpha_0$ and $\alpha_1$ with all the required properties.
\end{proof}

\goodbreak


\section{More on the $\Z_2$ invariant} \label{sec:Z2}

This Section is devoted to a more thorough investigation of the r\^ole of the $\Z_2$ invariant which appeared in the construction of continuous and symmetric Bloch frames for a $2$-dimensional family of projectors $\set{P(\kk)}_{\kk \in \R^2}$ as in Assumption \ref{assum:proj}, and in its reformulation in terms of the family $\alpha$ of matching matrices as in Assumption \ref{assum:alpha}. We saw indeed that the existence of such frames is equivalent to the condition $\mathcal{I} = 0 \in \Z_2$, with $\mathcal{I} = \mathcal{I}(\alpha)$ as in \eqref{eqn:rueda}. This condition seems however to be dependent on the choice of a continuous symmetric Bloch frame for the $1$-dimensional restriction $\set{P(0,k_2)}_{k_2 \in \R}$, which then defines the associated family of matching matrices by \eqref{eqn:alpha_vs_Tk2}.

To show that, instead, the Graf--Porta index $\mathcal{I} \in \Z_2$ can be recognized as a true topological invariant of the family of projectors, we prove in Theorem \ref{thm:GP_vs_FMP} that it agrees numerically with the $\Z_2$ invariant $\delta$ defined in \cite{FiorenzaMonacoPanati16} exactly as the invariant quantifying the topological obstruction to the existence of a continuous and symmetric Bloch frame in $d=2$. The index $\mathcal{I}$ hence ``inherits'' from $\delta$ all its properties, characterizing in particular the isomorphism class of its associated Bloch bundle. In turn, by means of the relation between $\delta$ and $\mathcal{I}$ we will be able to prove a formula for the $\Z_2$ invariant (compare \eqref{eqn:deltaGeom}) which depends purely on geometric terms, namely the \emph{Berry connection} and \emph{Berry curvature} associated to the family of projectors.

\subsection{The Fiorenza--Monaco--Panati invariant $\delta$} \label{sec:FMP}

For the reader's convenience, we briefly recall how the $\Z_2$ invariant $\delta$ introduced in \cite{FiorenzaMonacoPanati16} is defined.

Given a family of projectors $\set{P(\kk)}_{\kk \in \R^2}$ satisfying Assumption \ref{assum:proj}, the strategy employed in \cite{FiorenzaMonacoPanati16} to construct a continuous symmetric Bloch frame consists in considering any continuous frame $\Psi(\kk)$, defined for $\kk$ in the \emph{effective unit cell}
\[ \Beff := \set{\kk = (k_1, k_2) \in \R^2 : 0 \le k_1 \le \frac{1}{2}, \: - \frac{1}{2} \le k_2 \le \frac{1}{2}}, \]
and try to modify it into a frame $\Phi(\kk)$ which satisfies certain conditions on the boundary $\partial \Beff$ (compare (V) and (E) in \cite[Prop.~1]{FiorenzaMonacoPanati16}), that allow to impose TRS and periodicity to obtain a continuous, symmetric frame on the whole $\R^2$. This modification is performed on high-symmetry points (\emph{vertices}) of the effective unit cell, and then extended to the \emph{edges} on the boundary of $\Beff$ which connect them. Such a frame can then be constructed on $\partial \Beff$, but its extension to the interior of $\Beff$ is in general topologically obstructed. To compute such obstruction, one defines the unitary-matrix-valued map $U \colon \partial \Beff \to U(m)$ by setting
\begin{equation} \label{eqn:healer}
\Phi(\kk) = \Psi(\kk) \act U(\kk), \quad \kk \in \partial \Beff.
\end{equation}
Thus by definition $U(\kk)$ maps the input frame $\Psi(\kk)$ to the modified, symmetric frame $\Phi(\kk)$ for $\kk \in \partial \Beff$. The extension of the frame $\Phi$ on the inside of $\Beff$ can be then performed in a continuous way and preserving all the symmetries if and only if
\begin{equation} \label{eqn:delta}
\delta := \deg([\det U]) \mod 2
\end{equation}
vanishes (see \cite[Thm.~2]{FiorenzaMonacoPanati16}). The quantity $\delta \in \Z_2$ defined by the above relation can be shown to be a topological invariant of the family of projectors $\set{P(\kk)}_{\kk \in \R^2}$, in the sense that its value does not depend on the choices performed in the construction sketched above, and that it remains unchanged under homotopic deformations which preserve the conditions stated in Assumption \ref{assum:proj}.

\subsubsection{The unitary $U(\kk)$}

We recall also the definition of $U$, as this will be convenient in what follows. Given the frame $\Psi(\kk)$, one computes the \emph{obstruction unitaries} $U\sub{obs}(\kk_*)$ by
\begin{equation} \label{eqn:Uobs}
\theta \Psi(\kk_*) \act \eps = \Psi(\kk_*) \act U\sub{obs}(\kk_*), \quad \text{for } \kk_* \in V, 
\end{equation}
where
\begin{equation} \label{eqn:TRIM}
V := \set{\left( 0, 0 \right), \, \left( 0, -\frac{1}{2} \right), \, \left( \frac{1}{2}, 0 \right), \, \left( \frac{1}{2}, - \frac{1}{2} \right)}.
\end{equation}
One can show that $U\sub{obs}(\kk_*)$ satisfies
\[ \eps \, U\sub{obs}(\kk_*) = U\sub{obs}(\kk_*)^t \, \eps \]
and that this implies the existence of a matrix $U(\kk_*)$ such that
\[ U\sub{obs}(\kk_*) = U(\kk_*) \, \eps^{-1} \, U(\kk_*)^t \, \eps \]
(see \cite[Lemma 1]{FiorenzaMonacoPanati16}; compare also \cite[Thm.~1]{Schulz-Baldes15} and \cite{Hua44}). To obtain $U(\kk_*)$ explicitly, one writes $U\sub{obs}(\kk_*) = \eu^{\iu M(\kk_*)}$, with $M(\kk_*) = M(\kk_*)^*$ a self-adjoint matrix whose spectrum is contained in $[0, 2 \pi)$, and sets
\begin{equation} \label{eqn:U(k*)}
U(\kk_*) := \eu^{\iu M(\kk_*)/2}.
\end{equation}

Once $U(\kk_*)$ is defined at the four points $\kk_* \in V$, one chooses an arbitrary interpolation $\widetilde{U}$ between these matrices defined on $S := \partial \Beff \cap \set{k_2 \le 0}$, \ie on the four edges joining these four points in $\partial \Beff$. This can always be done because the unitary group $U(m)$ is path-connected. Once this is done, one defines 
\[ \widetilde{\Phi}(\kk) := \Psi(\kk) \act \widetilde{U}(\kk), \quad \kk \in S, \]
and extends the definition of $\widetilde{\Phi}$ to the whole $\partial \Beff$ by setting
\begin{equation} \label{eqn:extend}
\Phi(\kk) := \begin{cases}
\widetilde{\Phi}(\kk) & \text{if } \kk \in S, \\
\widetilde{\Phi}(k_1,-1/2) & \text{if } \kk = (k_1, 1/2), \: k_1 \in [0, 1/2], \\
\theta \widetilde{\Phi}(k_*,-k_2) \act \eps & \text{if } \kk = (k_*, k_2), \: k_* \in \set{0, 1/2}, \: k_2 \in [0, 1/2].
\end{cases}
\end{equation}
This definition of $\Phi(\kk)$ guarantees that it is periodic and TRS, where it is defined. One can then extend the definition of the unitary $U$ from $S$ to the whole $\partial \Beff$, by comparing the frame $\Phi(\kk)$ and the frame $\Psi(\kk)$ as in \eqref{eqn:healer}. With this $U \colon \partial \Beff \to U(m)$, one can then compute the $\Z_2$ invariant $\delta$ as in \eqref{eqn:delta}.

\subsubsection{Relation between $U$ and $\alpha$}

We now relate the matching matrix $\alpha$ to the unitary $U$ that can be computed, via the procedure illustrated in Section \ref{sec:d=d}, from the frame $\Psi(\kk)$ constructed in \eqref{eqn:Psi} by means of parallel transport. Notice first that by \eqref{eqn:healer} and \eqref{eqn:extend} it follows that $U(1/2,k_2) = \widetilde{U}(1/2,k_2)$ if $k_2 \in [-1/2,0]$. We study now how $U(1/2,k_2)$ is defined for $k_2 \in [0,1/2]$. We have, in view of the TRS of $\Psi(\kk)$,
\begin{align*}
\Phi(1/2,k_2) & = \theta \widetilde{\Phi}(1/2,-k_2) \act \eps = \theta \left( \Psi(1/2,-k_2) \act \widetilde{U}(1/2,-k_2) \right) \act \eps \\
 & = \theta \Psi(1/2,-k_2) \act \left( \overline{\widetilde{U}(1/2,-k_2)} \, \eps \right) \\
 & = \Psi(-1/2,k_2) \act \left( \eps^{-1} \, \overline{\widetilde{U}(1/2,-k_2)} \, \eps \right) \\
 & = \Psi(1/2,k_2) \act \left( \alpha(k_2)^{-1} \, \eps^{-1} \, \overline{\widetilde{U}(1/2,-k_2)} \, \eps \right).
\end{align*}
By comparing the above relation with the defining identity
\[ \Phi(1/2,k_2) = \Psi(1/2,k_2) \act U(1/2,k_2) \]
and the already proved relation $U(1/2,-k_2) = \widetilde{U}(1/2,-k_2)$ for $k_2 \in [0,1/2]$, we conclude by the freeness of the $U(m)$-action on $m$-frames that
\[ U(1/2,k_2) = \alpha(k_2)^{-1} \, \eps^{-1} \, \overline{U(1/2,-k_2)} \, \eps, \] 
or equivalently
\begin{equation} \label{eqn:alpha_vs_U}
\alpha(k_2) = \eps^{-1} \, \overline{U(1/2,-k_2)} \, \eps \, U(1/2,k_2)^{-1}.
\end{equation}
Notice also that by setting
\begin{equation} \label{eqn:gamma_vs_U}
\gamma(k_2) := U(1/2,k_2)^{-1}
\end{equation}
then \eqref{eqn:alpha_vs_U} reduces exactly to \eqref{eqn:factor}. This gives an alternative proof of Lemma \ref{lemma:factor}.

\subsection{Equivalence between $\mathcal{I}$ and $\delta$} \label{sec:GPappendix}

Having reviewed the definition of both the Graf--Porta index $\mathcal{I}$ and the Fiorenza--Monaco--Panati invariant $\delta$, both associated to $2$-dimensional continuous periodic families of projectors enjoying a fermionic time-reversal symmetry, we are finally able to prove that two $\Z_2$ indices agree numerically. 

\begin{theorem} \label{thm:GP_vs_FMP}
Let $\set{P(\kk)}_{\kk \in \R^2}$ be a family of projectors satisfying Assumption \ref{assum:proj}. Let $\Psi$ be the Bloch frame for $\set{P(\kk)}_{\kk \in \R^2}$ defined in \eqref{eqn:Psi}. Define the corresponding matching matrix $\alpha$ as in \eqref{eqn:alpha} and the unitary matrix $U$ as in \eqref{eqn:healer}. Then
\[ \mathcal{I}(\alpha) \equiv \deg([\det U]) \bmod 2, \quad \text{or equivalently} \quad \mathcal{I}(\alpha) = \delta \in \Z_2. \]
\end{theorem}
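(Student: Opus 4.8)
The plan is to compute both $\Z_2$ indices via the factorization $\alpha(k_2) = \eps^{-1}\,\gamma(-k_2)^t\,\eps\,\gamma(k_2)$ and to identify the degree of $\det\gamma$ with the degree of $\det U$. The starting observation is the relation \eqref{eqn:alpha_vs_U}, which was derived precisely for the frame $\Psi$ of \eqref{eqn:Psi}: along the edge $k_1=1/2$ of $\partial\Beff$, the healing unitary $U(1/2,k_2)$ and the matching matrix $\alpha$ are linked by
\[ \alpha(k_2) = \eps^{-1}\,\overline{U(1/2,-k_2)}\,\eps\,U(1/2,k_2)^{-1}, \]
so that, setting $\gamma(k_2) := U(1/2,k_2)^{-1}$ as in \eqref{eqn:gamma_vs_U}, one recovers exactly the factorization \eqref{eqn:factor}. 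By Proposition \ref{prop:IntegralRueda}, formula \eqref{eqn:IntegralRueda_c}, the Graf--Porta index is then the parity of $\deg([\det\gamma])$, which equals $-\deg([\det U(1/2,\cdot)])$ computed as a periodic map in $k_2$ over $[-1/2,1/2]$. Since $\deg$ is $\bmod\,2$ insensitive to signs, it remains to show that the degree of $\det U$ around the boundary curve $\partial\Beff$ coincides $\bmod\,2$ with the degree of $k_2\mapsto\det U(1/2,k_2)$ over one period in $k_2$.

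The second step is therefore to analyze the contributions of $\det U$ along the four pieces of $\partial\Beff$. Parametrize $\partial\Beff$ as the concatenation of the four edges: the bottom edge $k_2=-1/2$, the right edge $k_1=1/2$, the top edge $k_2=1/2$, and the left edge $k_1=0$. I would use the explicit extension formula \eqref{eqn:extend} for $\Phi$ together with \eqref{eqn:healer} to read off $U$ on each edge. On the right edge $k_1=1/2$ with $k_2\in[0,1/2]$ we have the already-derived identity $U(1/2,k_2)=\alpha(k_2)^{-1}\eps^{-1}\overline{U(1/2,-k_2)}\eps$; on the top edge, $\Phi(k_1,1/2)=\widetilde\Phi(k_1,-1/2)=\Phi(k_1,-1/2)$, which will relate $\det U$ on the top edge to $\det U$ on the bottom edge (this is where periodicity of $\Psi$ in $k_2$, inherited from parallel transport via \ref{item:T-periodic}, enters). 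On the two vertical edges $k_1\in\{0,1/2\}$ the TRS clause of \eqref{eqn:extend}, $\Phi(k_*,k_2)=\theta\widetilde\Phi(k_*,-k_2)\act\eps$, expresses $U(k_*,k_2)$ for $k_2>0$ in terms of $\overline{U(k_*,-k_2)}$; taking determinants and using $\det\overline{M}=\overline{\det M}$, one finds that the winding of $\det U$ along the upper half of each vertical edge is the negative of that along the lower half after complex conjugation, so the contributions of the two halves of the left edge cancel in the total degree, while the two halves of the right edge add up. Assembling these, the total degree of $\det U$ around $\partial\Beff$ reduces, $\bmod\,2$, to the degree of $\det U$ along the full right edge $k_1=1/2$, $k_2\in[-1/2,1/2]$, which is what \eqref{eqn:IntegralRueda_c} computes as $\mathcal I(\alpha)$.

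The main obstacle I anticipate is bookkeeping the orientations and the interaction of the antiunitary $\theta$ with the extension formula: one must be careful that $\det U$, although $U$ itself is only defined up to the $\pm$ ambiguity in the square root \eqref{eqn:U(k*)} at the vertices, has a well-defined winding number around the closed loop $\partial\Beff$, and that the relations \eqref{eqn:extend} are being applied with the correct parametrization of each edge (in particular whether an edge is traversed in the direction of increasing or decreasing coordinate). A clean way to handle this is to work with the logarithmic derivative $\tfrac{1}{2\pi\iu}\oint \tr(U^*\partial U)$ as in \eqref{eqn:DegUn}, split the contour integral into the four edges, and use the conjugation identities to pair up segments; the vertex ambiguities then drop out automatically because the integral only sees $\partial_s\log\det U$. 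One should also note that $\widetilde U$ on the bottom edge $S$ is an \emph{arbitrary} interpolation between the vertex values, but since $\delta=\deg([\det U])\bmod 2$ is independent of this choice (as recalled in Section \ref{sec:FMP}), and since by Proposition \ref{prop:closeness} and Lemma \ref{lemma:Ihomotopy} $\mathcal I(\alpha)$ is likewise robust, the equality of the two indices is not affected by these choices. Once the edge contributions are matched, the statement $\mathcal I(\alpha)=\delta\in\Z_2$ follows immediately, and with it the identification $\mathcal I(\alpha)\equiv\deg([\det U])\bmod 2$.
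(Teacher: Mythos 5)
Your route is essentially the paper's: you identify $\gamma(k_2)=U(1/2,k_2)^{-1}$ through \eqref{eqn:alpha_vs_U}--\eqref{eqn:gamma_vs_U}, compute $\mathcal{I}(\alpha)$ by \eqref{eqn:IntegralRueda_c}, cancel the two short edges against each other using $U(k_1,1/2)=\widetilde U(k_1,-1/2)$, and reduce $\deg([\det U])$ along $\partial\Beff$ to the winding of $\det U(1/2,\cdot)$ over one period. The genuine gap is your treatment of the left edge $k_1=0$. There the extension \eqref{eqn:extend} combined with the TRS of $\Psi$ gives $U(0,k_2)=\eps^{-1}\,\overline{U(0,-k_2)}\,\eps$, hence $\det U(0,k_2)=\overline{\det U(0,-k_2)}$, so $\det U(0,k_2)\,\det U(0,-k_2)=1$ and any continuous argument of $\det U(0,\cdot)$ is odd up to an additive constant. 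Consequently the phase increments over the two halves of the edge, traversed in the same direction along $\partial\Beff$, are \emph{equal}: the left edge contributes \emph{twice} the winding along its lower half, it does not cancel. Since that lower half carries the arbitrary interpolation $\widetilde U$ and the vertex values $\det U(\kk_*)=\eu^{\iu\,\tr M(\kk_*)/2}$ are in general not $\pm 1$, this doubled contribution is a generically nonzero real number, so the conjugation symmetry alone does not justify dropping the left edge, and your reduction of the total degree to the right edge does not go through as written. (The same sign bookkeeping shows that one cannot have simultaneously ``the left-edge halves cancel'' and ``the right-edge halves add up'' from the conjugation relation; on the right edge the relation carries the extra factor $\alpha(k_2)^{-1}$ anyway, and you only ever use the full right edge, so that part is harmless.)

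What actually kills the left edge --- and what the paper uses --- is the specific frame $\Psi$ of \eqref{eqn:Psi}: since $\Psi(0,k_2)=\Xi(0,k_2)$ is already continuous, periodic and TRS, the obstruction unitaries at the two left vertices are the identity, so one may take $U(0,k_2)\equiv\Id$ and the left edge contributes nothing to $\deg([\det U])$; alternatively, $\det U(0,0)=\det U(0,-1/2)=1$ forces the doubled left-edge winding to be an even integer, which also suffices $\bmod\,2$. Once this observation is inserted, the remainder of your argument --- exact cancellation of the edges $k_2=\pm 1/2$, periodicity and continuity of $U(1/2,\cdot)$, and $\deg([\det U])\equiv-\deg([\det\gamma])\equiv\mathcal{I}(\alpha)\bmod 2$ via \eqref{eqn:IntegralRueda_c} --- coincides with the paper's proof.
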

\begin{proof}

Observe that, if the frame $\Psi(\kk)$ that is used to compute the family of matching matrices $\alpha$ is the one appearing in \eqref{eqn:Psi}, then $U(0,k_2) \equiv \Id$, as $\Psi(0,k_2) = \Xi(0,k_2)$ is already continuous and symmetric. Moreover, the interpolation of $\widetilde{U}(k_1,-1/2)$ between $U(0,-1/2)$ and $U(1/2,-1/2)$, the latter being computed in terms of the obstruction matrices $U\sub{obs}(\kk_*)$ via \eqref{eqn:U(k*)}, is not actually relevant for the evaluation of the degree of $\det U(\kk)$ along $\partial \Beff$, since periodicity in $k_2$ of the frames $\Phi(\kk)$ and $\Psi(\kk)$, which are mapped to each other through $U(\kk)$, dictate that also $U(k_1,1/2) = \widetilde{U}(k_1,-1/2)$ for $k_1 \in [0,1/2]$ (compare \eqref{eqn:extend}). This implies in particular that, using the integral formula in \eqref{eqn:degree} to evaluate $\deg([\det U])$, the contributions coming from integrating along the two short edges $\set{k_1 \in [0,1/2], \, k_2 = \pm 1/2}$ of $\partial \Beff$ cancel each other out. We deduce that, with the choice of $\Psi$ that we did above,
\begin{equation} \label{eqn:degUk2}
\deg([\det U]) = \frac{1}{2 \pi \iu} \int_{-1/2}^{1/2} (\det U(1/2,k_2))^{-1} \, \partial_{k_2} \det U(1/2,k_2) \, \di k_2.
\end{equation}

We can conclude then, in view of the formula \eqref{eqn:IntegralRueda_c} for the Graf--Porta index and owing to \eqref{eqn:gamma_vs_U}, that
\[ \mathcal{I}(\alpha) \equiv - \deg([\det U]) \equiv \delta \mod 2 \]
as claimed.
\end{proof}

\subsection{Geometric formula for the $\Z_2$ invariant} \label{sec:GeoFormula}

In view of the equality between the Fiorenza--Monaco--Panati and the Graf--Porta $\Z_2$ invariants (Theorem \ref{thm:GP_vs_FMP}), we can express the $\Z_2$ invariant $\delta$ defined in \eqref{eqn:delta} in terms of the family of matching matrices $\alpha$ appearing in \eqref{eqn:alpha}. The latter is in turn related to the parallel transport operator $T_{k_2}(1,0)$ via \eqref{eqn:alpha_vs_Tk2}: we want to exploit this connection to express $\delta$ in more geometric terms.

In order to do so, we first need to express \eqref{eqn:alpha_vs_Tk2} in a somewhat more intrinsic form. To this end, we begin by noticing that the parallel transport unitary $\mathcal{T}(k_2) := T_{k_2}(1,0) \in \U(\Hi)$ maps the range of the projector $P(0,k_2)$ into that of $P(1,k_2) = P(0,k_2)$, and thus commutes with $P(0,k_2)$. According to \eqref{eqn:alpha_vs_Tk2}, $\alpha(k_2)$ is nothing but the matrix which represents the restriction of this unitary map to $P(0,k_2)$, in the basis given by $\Xi(0,k_2)$. 

If we denote by $A(k_2) = \det \alpha(k_2)$, it follows from Lemma \ref{lemma:DegUnitary} that
\[ A(k_2)^{-1} \, A'(k_2) = \tr \left( \alpha(k_2)^* \, \partial_{k_2} \alpha(k_2) \right). \]
We want to write the right-hand side of the above equality directly in terms of the parallel transport operator $\mathcal{T}(k_2)$. In order to do so, we exploit the following Lemma.

\begin{lemma}
Let $\set{\alpha(k_2)}_{k_2 \in \R}$ be as in \eqref{eqn:alpha_vs_Tk2}. Then
\begin{equation} \label{eqn:A->S}
\tr \left( \alpha(k_2)^* \, \partial_{k_2} \alpha(k_2) \right) = \Tr \left( P(0,k_2) \, \mathcal{T}(k_2)^* \, \partial_{k_2} \mathcal{T}(k_2) \, P(0,k_2) \right),
\end{equation}
where $\tr(\cdot)$ denotes the trace in $\C^m$ and $\Tr(\cdot)$ denotes the trace in the Hilbert space $\Hi$.
\end{lemma}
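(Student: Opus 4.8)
The plan is to unfold \eqref{eqn:A->S} at the level of matrix entries. Fix $k_2$ and abbreviate $P := P(0,k_2)$, $\mathcal{T} := \mathcal{T}(k_2) = T_{k_2}(1,0)$, $\Xi_a := \Xi_a(0,k_2)$; recall from \eqref{eqn:alpha_vs_Tk2} that $\alpha(k_2)_{ab} = \scal{\Xi_a}{\mathcal{T}\,\Xi_b}$, that $\mathcal{T}$ is real analytic in $k_2$ by \ref{item:T-smooth}, and that $\mathcal{T}$ commutes with $P$ (since by \ref{item:T-intertwine} together with periodicity $\mathcal{T}$ maps $\Ran P$ onto $\Ran P(1,k_2) = \Ran P$). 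First I would write $\tr(\alpha(k_2)^*\,\partial_{k_2}\alpha(k_2)) = \sum_{a,b} \overline{\alpha(k_2)_{ab}}\,\partial_{k_2}\alpha(k_2)_{ab}$ and apply the product rule to the three-factor inner product, obtaining a sum of three contributions according to whether $\partial_{k_2}$ hits $\Xi_a$, $\mathcal{T}$, or $\Xi_b$. The only two facts I would use repeatedly are: (i) partial sums of the rank-one projections $\ket{\Xi_c}\bra{\Xi_c}$ collapse to $P$, because $\set{\Xi_c}$ is an orthonormal basis of $\Ran P$; (ii) for any bounded operator $X$ one has $\sum_c \scal{\Xi_c}{X\,\Xi_c} = \Tr(P X)$, since the directions orthogonal to $\Ran P$ contribute nothing ($P = P^*$ annihilates them).

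For the term where $\partial_{k_2}$ falls on $\mathcal{T}$, summing over $a$ via (i) turns it into $\sum_b \scal{\Xi_b}{\mathcal{T}^*P\,(\partial_{k_2}\mathcal{T})\,\Xi_b}$, which by (ii) equals $\Tr\big(P\,\mathcal{T}^*P\,\partial_{k_2}\mathcal{T}\big)$; using $[\mathcal{T},P]=0$, $P^2=P$ and cyclicity of the trace this is exactly $\Tr\big(P\,\mathcal{T}^*\,\partial_{k_2}\mathcal{T}\,P\big)$, the right-hand side of \eqref{eqn:A->S}. It then remains to show that the other two contributions cancel. For the term where $\partial_{k_2}$ falls on $\Xi_a$, summing over $b$ via (i) — this time $\sum_b \ket{\mathcal{T}\Xi_b}\bra{\mathcal{T}\Xi_b} = \mathcal{T}P\mathcal{T}^* = P$ — yields $\sum_a \scal{P\,\partial_{k_2}\Xi_a}{\Xi_a}$; for the term where $\partial_{k_2}$ falls on $\Xi_b$, summing over $a$ and using $\mathcal{T}^*P\mathcal{T} = P$ yields $\sum_b \scal{\Xi_b}{P\,\partial_{k_2}\Xi_b}$. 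Differentiating $P\Xi_a = \Xi_a$ gives $P\,\partial_{k_2}\Xi_a = \partial_{k_2}\Xi_a - (\partial_{k_2}P)\Xi_a$, so the sum of these two contributions equals
\[ \sum_a \partial_{k_2}\scal{\Xi_a}{\Xi_a} - \sum_a \scal{(\partial_{k_2}P)\Xi_a}{\Xi_a} = 0 - \Tr\big(P\,\partial_{k_2}P\big), \]
using $\norm{\Xi_a} \equiv 1$, the self-adjointness of $\partial_{k_2}P$, and (ii). Finally $\Tr(P\,\partial_{k_2}P) = 0$: differentiating $P^2 = P$ and multiplying by $P$ on both sides gives $P(\partial_{k_2}P)P = 0$, while $\Tr(P\,\partial_{k_2}P) = \Tr\big(P(\partial_{k_2}P)P\big)$ by cyclicity. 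This would establish \eqref{eqn:A->S}.

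I do not expect any genuine obstacle here: the computation is routine, and the only point requiring care is the bookkeeping of which index-sum collapses to $P$, together with the consistent use of $[\mathcal{T},P] = 0$ and $P^2 = P$. Conceptually, $\alpha(k_2)$ is the matrix of the restriction $\mathcal{T}|_{\Ran P(0,k_2)}$ in the frame $\Xi(0,k_2)$, and the terms in which the derivative hits the frame vectors only record the $k_2$-motion of that frame \emph{inside} the subspace $\Ran P(0,k_2)$; being basis artefacts, they disappear once one passes to a trace, which is what \eqref{eqn:A->S} makes explicit.
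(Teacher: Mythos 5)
Your proof is correct and follows essentially the same route as the paper's: expand $\tr\left(\alpha(k_2)^*\,\partial_{k_2}\alpha(k_2)\right)$ through the matrix entries $\alpha(k_2)_{ab} = \scal{\Xi_a(0,k_2)}{\mathcal{T}(k_2)\,\Xi_b(0,k_2)}$, apply the product rule, collapse the index sums via completeness of the frame in $\Ran P(0,k_2)$ (together with $\mathcal{T}(k_2)\,P(0,k_2)\,\mathcal{T}(k_2)^* = P(0,k_2)$), note that the two terms where the derivative hits the frame combine to $\sum_a \partial_{k_2}\scal{\Xi_a(0,k_2)}{\Xi_a(0,k_2)} = 0$, and identify the surviving term with $\Tr\left(P(0,k_2)\,\mathcal{T}(k_2)^*\,\partial_{k_2}\mathcal{T}(k_2)\,P(0,k_2)\right)$. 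The only blemish is your detour through $P\,\partial_{k_2}\Xi_a = \partial_{k_2}\Xi_a - (\partial_{k_2}P)\Xi_a$: it is unnecessary, since $\scal{P\,\partial_{k_2}\Xi_a}{\Xi_a} = \scal{\partial_{k_2}\Xi_a}{P\,\Xi_a} = \scal{\partial_{k_2}\Xi_a}{\Xi_a}$ directly, and if the substitution is made in both cross terms the correction is $-2\Tr\left(P\,\partial_{k_2}P\right)$ rather than $-\Tr\left(P\,\partial_{k_2}P\right)$ --- a harmless slip, because that quantity vanishes, as you correctly show.
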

\begin{proof}
The defining relation \eqref{eqn:alpha_vs_Tk2} can be recast in the form
\[ \alpha(k_2)_{ab} = \scal{\Xi_a(0,k_2)}{T_{k_2}(1,0) \, \Xi_b(0,k_2)}, \quad a,b \in \set{1, \ldots, m}. \]
The above implies in particular that
\begin{align*}
\alpha(k_2)^*_{ab} & = \scal{\mathcal{T}(k_2) \, \Xi_a(0,k_2)}{\Xi_b(0,k_2)}, \\
\partial_{k_2} \alpha(k_2)_{ba} & = \scal{\partial_{k_2} \Xi_b(0,k_2)}{\mathcal{T}(k_2) \, \Xi_a(0,k_2)} + \scal{\Xi_b(0,k_2)}{\mathcal{T}(k_2) \, (\partial_{k_2} \Xi_a(0,k_2))} \\
& \quad + \scal{\Xi_b(0,k_2)}{(\partial_{k_2} \mathcal{T}(k_2)) \, \Xi_a(0,k_2)}.
\end{align*}
Using the fact that 
\[ P(0,k_2) = \sum_{a=1}^{m} \ket{\Xi_a(0,k_2)} \bra{\Xi_a(0,k_2)} = \sum_{a=1}^{m} \ket{\mathcal{T}(k_2) \, \Xi_a(0,k_2)} \bra{\mathcal{T}(k_2) \, \Xi_a(0,k_2)} \]
(as $\mathcal{T}(k_2)$ is unitary on $\Ran P(0,k_2)$), we put the two equalities above together and obtain
\begin{align*}
\tr \left( \alpha(k_2)^* \, \partial_{k_2} \alpha(k_2) \right) & = \sum_{a,b=1}^{m} \alpha(k_2)^*_{ab} \, \partial_{k_2} \alpha(k_2)_{ba} \\
& = \sum_{a,b=1}^{m} \big\{ \scal{\partial_{k_2} \Xi_b(0,k_2)}{\mathcal{T}(k_2) \, \Xi_a(0,k_2)} \scal{\mathcal{T}(k_2) \, \Xi_a(0,k_2)}{\Xi_b(0,k_2)}  \\
& \quad + \scal{\mathcal{T}(k_2) \, \Xi_a(0,k_2)}{\Xi_b(0,k_2)} \scal{\Xi_b(0,k_2)}{\mathcal{T}(k_2) \, (\partial_{k_2} \Xi_a(0,k_2))} \\
& \quad + \scal{\mathcal{T}(k_2) \, \Xi_a(0,k_2)}{\Xi_b(0,k_2)} \scal{\Xi_b(0,k_2)}{(\partial_{k_2} \mathcal{T}(k_2)) \, \Xi_a(0,k_2)} \big\} \\
& = \sum_{a=1}^{m} \big\{ \scal{\partial_{k_2} \Xi_a(0,k_2)}{\Xi_a(0,k_2)} + \scal{\Xi_a(0,k_2)}{\partial_{k_2} \Xi_a(0,k_2)} \\
& \quad + \scal{\mathcal{T}(k_2) \, \Xi_a(0,k_2)}{(\partial_{k_2} \mathcal{T}(k_2)) \, \Xi_a(0,k_2)} \big\} \\
& = \sum_{a=1}^{m} \partial_{k_2} \left( \scal{\Xi_a(0,k_2)}{\Xi_a(0,k_2)} \right) \\
& \quad + \sum_{a=1}^{m} \scal{\Xi_a(0,k_2)}{(\mathcal{T}(k_2)^* \, \partial_{k_2} \mathcal{T}(k_2)) \, \Xi_a(0,k_2)} \\
& = \Tr \left( P(0,k_2) \, \mathcal{T}(k_2)^* \, \partial_{k_2} \mathcal{T}(k_2) \, P(0,k_2) \right)
\end{align*}
where in the last equality we employed the fact that $\set{\Xi_a(0,k_2)}_{1 \le a \le m}$ gives an orthonormal basis in $\Ran P(0,k_2)$. This concludes the proof.
\end{proof}

To compute the right-hand side of \eqref{eqn:A->S}, we appeal to the following result.

\begin{lemma} \label{lemma:variation_of_constants}
Let $T_{k_2}(k_1,0)$ be the parallel transport unitary defined in Section \ref{sec:parallel}, and $G(k_1,k_2) := [\partial_{k_1} P(k_1,k_2), P(k_1,k_2)]$. Then
\begin{equation} \label{eqn:VoC}
\partial_{k_2} T_{k_2}(k_1,0) = T_{k_2}(k_1,0) \, \int_{0}^{k_1} T_{k_2}(s,0)^* \, \partial_{k_2} G(s, k_2) \, T_{k_2}(s,0) \, \di s.
\end{equation}
\end{lemma}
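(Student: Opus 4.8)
The plan is to read \eqref{eqn:VoC} as the Duhamel (variation of constants) solution of the inhomogeneous linear ODE obtained by differentiating the defining equation of the parallel transport with respect to the parameter $k_2$. Since $\set{P(\kk)}_{\kk \in \R^d}$ satisfies \ref{item:smooth}, property \ref{item:T-smooth} in Proposition \ref{prop:parallel} guarantees that $(k_1,k_2) \mapsto T_{k_2}(k_1,0)$ is real analytic, hence jointly $C^2$; this legitimizes all the differentiations below, in particular the interchange of $\partial_{k_1}$ and $\partial_{k_2}$.

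First I would record two elementary facts. Because $A_{k_2}(k_1) = \iu\,[\partial_{k_1}P(k_1,k_2),P(k_1,k_2)] = \iu\,G(k_1,k_2)$, the Cauchy problem \eqref{eqn:parallel_def} reads $\partial_{k_1} T_{k_2}(k_1,0) = G(k_1,k_2)\,T_{k_2}(k_1,0)$ with $T_{k_2}(0,0) = \Id$. Moreover $G(k_1,k_2)$ is skew-adjoint, $G(k_1,k_2)^* = -G(k_1,k_2)$, since $P$ and $\partial_{k_1}P$ are self-adjoint; consequently $\partial_{k_1}\bigl(T_{k_2}(k_1,0)^*\bigr) = \bigl(G(k_1,k_2)\,T_{k_2}(k_1,0)\bigr)^* = -\,T_{k_2}(k_1,0)^*\,G(k_1,k_2)$.

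Then I would introduce $V(k_1) := T_{k_2}(k_1,0)^*\,\partial_{k_2} T_{k_2}(k_1,0)$ and differentiate in $k_1$. Using the product rule, the formula for $\partial_{k_1}(T^*)$ just recorded, and $\partial_{k_1}\partial_{k_2}T_{k_2}(k_1,0) = \partial_{k_2}\partial_{k_1}T_{k_2}(k_1,0) = \partial_{k_2}G(k_1,k_2)\cdot T_{k_2}(k_1,0) + G(k_1,k_2)\,\partial_{k_2}T_{k_2}(k_1,0)$, one obtains
\[ \partial_{k_1} V(k_1) = -\,T_{k_2}(k_1,0)^*\,G(k_1,k_2)\,\partial_{k_2}T_{k_2}(k_1,0) + T_{k_2}(k_1,0)^*\bigl[\partial_{k_2}G(k_1,k_2)\cdot T_{k_2}(k_1,0) + G(k_1,k_2)\,\partial_{k_2}T_{k_2}(k_1,0)\bigr], \]
and the two terms containing $G\,\partial_{k_2}T$ cancel, leaving $\partial_{k_1} V(k_1) = T_{k_2}(k_1,0)^*\,\partial_{k_2}G(k_1,k_2)\,T_{k_2}(k_1,0)$. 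Since $T_{k_2}(0,0) = \Id$ does not depend on $k_2$ we have $V(0) = \Id\cdot\partial_{k_2}\Id = 0$, so integrating this identity from $0$ to $k_1$ gives
\[ T_{k_2}(k_1,0)^*\,\partial_{k_2}T_{k_2}(k_1,0) = \int_{0}^{k_1} T_{k_2}(s,0)^*\,\partial_{k_2}G(s,k_2)\,T_{k_2}(s,0)\,\di s, \]
and left-multiplying by the unitary $T_{k_2}(k_1,0)$ yields exactly \eqref{eqn:VoC}.

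The computation is short, and the only point requiring any care is the regularity needed to differentiate under the integral and to commute the mixed partials — which is precisely what \ref{item:T-smooth} supplies, so there is no genuine obstacle. Should one wish to avoid invoking joint smoothness of $T$, the same formula follows by verifying that the right-hand side $R(k_1)$ of \eqref{eqn:VoC} satisfies $R(0) = 0$ and $\partial_{k_1}R(k_1) = \partial_{k_2}G(k_1,k_2)\cdot T_{k_2}(k_1,0) + G(k_1,k_2)\,R(k_1)$, i.e.\ the same linear Cauchy problem (in $k_1$) as $\partial_{k_2}T_{k_2}(\cdot,0)$, whence equality by uniqueness.
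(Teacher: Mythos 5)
Your proof is correct, and its core computation is a sound variant of the paper's Duhamel argument rather than a literally identical one. The paper proceeds by \emph{verification and uniqueness}: it calls $R(k_1,k_2)$ the right-hand side of \eqref{eqn:VoC}, checks $R(0,k_2)=0$ and $\partial_{k_1}R = G\,R + (\partial_{k_2}G)\,T_{k_2}(k_1,0)$, observes (by exchanging mixed partials in $\partial_{k_1}T=G\,T$) that $\partial_{k_2}T_{k_2}(k_1,0)$ solves the same linear Cauchy problem in $k_1$, and concludes by uniqueness --- exactly the fallback you sketch in your last paragraph. Your main route instead \emph{derives} the formula by integrating $\partial_{k_1}\bigl(T_{k_2}(k_1,0)^*\,\partial_{k_2}T_{k_2}(k_1,0)\bigr)$, which dispenses with the uniqueness step at the mild cost of invoking unitarity of $T$ and the skew-adjointness $G^*=-G$ (the latter is immediate from $P=P^*$); the cancellation you exhibit is right and the initial condition $V(0)=0$ is correctly justified by $T_{k_2}(0,0)=\Id$ for all $k_2$. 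One small caveat: your closing claim that the Cauchy-problem route lets one ``avoid invoking joint smoothness of $T$'' is not quite accurate, since establishing that $\partial_{k_2}T$ satisfies the mixed-derivative equation still requires commuting $\partial_{k_1}$ and $\partial_{k_2}$ (or an equivalent argument on differentiability of solutions in parameters); both routes rest on the same regularity, which \ref{item:T-smooth} indeed supplies.
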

\begin{proof}
Recall that the parallel transport $T_{k_2}(k_1,0)$ satisfies the Cauchy problem \eqref{eqn:parallel_def}, which in this case reads
\[ \partial_{k_1} T_{k_2}(k_1,0) = G(k_1, k_2) \, T_{k_2}(k_1,0), \quad T_{k_2}(0,0) = \Id. \]
Call $R(k_1,k_2)$ the expression on the right-hand side of \eqref{eqn:VoC}; clearly $R(0,k_2) = 0$. Taking a derivative with respect to $k_1$, one then obtains
\begin{align*} 
\partial_{k_1} R(k_1, k_2) & = \partial_{k_1} T_{k_2}(k_1,0) \, \left( \int_{0}^{k_1} T_{k_2}(s,0)^* \, \partial_{k_2} G(s, k_2) \, T_{k_2}(s,0) \, \di s \right) + \\
& \quad + T_{k_2}(k_1,0) \, \left( T_{k_2}(k_1,0)^* \, \partial_{k_2} G(k_1, k_2) \, T_{k_2}(k_1,0) \right) \\
& = G(k_1, k_2) \, T_{k_2}(k_1,0) \, \left( \int_{0}^{k_1} T_{k_2}(s,0)^* \, \partial_{k_2} G(s, k_2) \, T_{k_2}(s,0) \, \di s \right) + \\
& \quad + \partial_{k_2} G(k_1, k_2) \, T_{k_2}(k_1,0) \\
& = G(k_1, k_2) \, R(k_1, k_2) + \partial_{k_2} G(k_1, k_2) \, T_{k_2}(k_1,0).
\end{align*}

On the other hand, the operator $\partial_{k_2} T_{k_2}(k_1,0)$ also vanishes when $k_1 = 0$ and satisfies the differential equation
\[ \partial_{k_1} \partial_{k_2} T_{k_2}(k_1,0) = G(k_1, k_2) \partial_{k_2} T_{k_2}(k_1,0) + \partial_{k_2} G(k_1,k_2) \, T_{k_2}(k_1,0) . \]
It follows that both sides of \eqref{eqn:VoC} solve the same Cauchy problem, and hence they must coincide.
\end{proof}

Evaluating \eqref{eqn:VoC} at $k_1 = 1$, we obtain
\begin{multline} \label{eqn:T*dT}
P(0,k_2) \, \mathcal{T}(k_2)^* \, \partial_{k_2} \mathcal{T}(k_2) \, P(0,k_2) \\ = \int_{0}^{1} P(0,k_2) \, T_{k_2}(k_1,0)^* \, \partial_{k_2} G(k_1, k_2) \, T_{k_2}(k_1,0) \, P(0,k_2) \, \di k_1.
\end{multline}
We observe that the operator $T_{k_2}(k_1,0)$ maps $\Ran P(0,k_2)$ to $\Ran P(k_1,k_2)$, and hence
\begin{gather*}
T_{k_2}(k_1,0) \, P(0,k_2) = P(k_1,k_2) \, T_{k_2}(k_1,0) \, P(0,k_2), \\
P(0,k_2) \, T_{k_2}(k_1,0)^* = P(0,k_2) \, T_{k_2}(k_1,0)^* \, P(k_1,k_2).
\end{gather*}
Thus we can freely insert the projector $P(k_1,k_2)$ on the right-hand side of \eqref{eqn:T*dT} before and after $\partial_{k_2} G(k_1, k_2)$. Using now the definition of $G(k_1, k_2) = [\partial_{k_1} P(k_1,k_2), P(k_1,k_2)]$, we obtain
\[ P(\kk) \, \partial_{k_2} G(\kk) \, P(\kk) = P(\kk) \left[ \partial^2_{k_1 \, k_2} P(\kk), P(\kk) \right] P(\kk) + P(\kk) \, [ \partial_{k_1} P(\kk), \partial_{k_2} P(\kk) ] \, P(\kk) \]
with $\kk = (k_1, k_2)$. As $P(\kk)^2 = P(\kk)$, one immediately sees that the first summand on the right-hand side of the above equality actually vanishes. We conclude that
\begin{multline*}
P(0,k_2) \, \mathcal{T}(k_2)^* \, \partial_{k_2} \mathcal{T}(k_2) \, P(0,k_2) \\ = \int_{0}^{1} P(0,k_2) \, T_{k_2}(k_1,0)^* \, P(k_1,k_2) \, [ \partial_{k_1} P(k_1,k_2), \partial_{k_2} P(k_1,k_2) ] \, P(k_1,k_2) \, T_{k_2}(k_1,0) \, P(0,k_2) \, \di k_1.
\end{multline*}

We now take the trace of both sides of the above equality. In view of the fact that $P(k_1,k_2) \, T_{k_2}(k_1,0) \, P(0,k_2)$ is a unitary transformation of $\Ran P(0, k_2)$ into $\Ran P(k_1, k_2)$, the expression simplifies to
\begin{equation} \label{eqn:BerryCurv}
\Tr \left( P(0,k_2) \, \mathcal{T}(k_2)^* \, \partial_{k_2} \mathcal{T}(k_2) \, P(0,k_2) \right) = \int_{0}^{1} \Tr \left( P(\kk) \, [ \partial_{k_1} P(\kk), \partial_{k_2} P(\kk) ] \right) \, \di k_1.
\end{equation}

We are now ready to prove the following

\begin{theorem} \label{thm:deltaGeom}
Let $\set{P(\kk)}_{\kk \in \R^2}$ be a family of projectors satisfying Assumption \ref{assum:proj}. The associated $\Z_2$ invariant $\delta = \delta(P)$ is given by
\begin{equation} \label{eqn:deltaGeom}
\delta = \frac{1}{2 \pi } \int_{0}^{1/2} \di k_2 \int_{0}^{1} \di k_1 \, \mathcal{F}(k_1, k_2) - \frac{1}{2 \pi}\left( \oint_{\Gamma_{1/2}} \A - \oint_{\Gamma_{0}} \A\right) \mod 2
\end{equation}
where:
\begin{itemize}
\item $\mathcal{F}$ is the \emph{Berry curvature}
\begin{equation} \label{eqn:BerryCurvature}
\mathcal{F}(\kk) = - \iu \, \Tr \left( P(\kk) \, [ \partial_{k_1} P(\kk), \partial_{k_2} P(\kk) ] \right);
\end{equation}
\item $\Gamma_{0}$ (respectively $\Gamma_{1/2}$) is the positively-oriented loop in the \emph{Brillouin torus} $\T^2 := \R^2/\Z^2$ given by $\set{k_2=0}$ (respectively $\set{k_2=1/2}$);
\item $\A$ is the trace of the Berry connection \eqref{eqn:ABerry}, computed with respect to a continuous and symmetric Bloch frame on $\Gamma_0$ and $\Gamma_{1/2}$. 
\end{itemize}
\end{theorem}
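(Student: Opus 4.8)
The plan is to obtain \eqref{eqn:deltaGeom} by combining the identification $\delta=\mathcal{I}(\alpha)$ from Theorem \ref{thm:GP_vs_FMP} with the integral formula \eqref{eqn:IntegralRueda_b} for the Graf--Porta index, and then recognising its two summands as the curvature term and the connection term of \eqref{eqn:deltaGeom}, respectively. Throughout, $\Psi$, $\alpha$, $\mathcal{T}(k_2)=T_{k_2}(1,0)$, $A(k_2)=\det\alpha(k_2)$, $\gamma(k_2)$ and $C(k_2)=\det\gamma(k_2)$ are as in Sections \ref{sec:d=2} and \ref{sec:GeoFormula}. By \eqref{eqn:IntegralRueda_b} and Theorem \ref{thm:GP_vs_FMP},
\[ \delta \;\equiv\; \frac{1}{2\pi\iu}\int_{0}^{1/2} A(k_2)^{-1}\,A'(k_2)\,\di k_2 \;+\; \frac{2}{2\pi\iu}\,\log\frac{C(0)}{C(1/2)} \pmod{2}, \]
so it suffices to treat the two terms separately.

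For the curvature term I would simply chain together results already established in Section \ref{sec:GeoFormula}: Lemma \ref{lemma:DegUnitary} gives $A(k_2)^{-1}A'(k_2)=\tr\bigl(\alpha(k_2)^*\,\partial_{k_2}\alpha(k_2)\bigr)$, then \eqref{eqn:A->S} rewrites this as $\Tr\bigl(P(0,k_2)\,\mathcal{T}(k_2)^*\,\partial_{k_2}\mathcal{T}(k_2)\,P(0,k_2)\bigr)$, and \eqref{eqn:BerryCurv} together with the definition \eqref{eqn:BerryCurvature} of $\mathcal{F}$ turns it into $\iu\int_{0}^{1}\mathcal{F}(k_1,k_2)\,\di k_1$. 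Integrating over $k_2\in[0,1/2]$ then yields $\frac{1}{2\pi\iu}\int_{0}^{1/2}A^{-1}A'\,\di k_2=\frac{1}{2\pi}\iint_{\B\sub{half}}\mathcal{F}$, which is the first summand of \eqref{eqn:deltaGeom}.

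For the connection term, the key observation is that the one-dimensional restrictions $\set{P(k_1,0)}_{k_1\in\R}$ and $\set{P(k_1,1/2)}_{k_1\in\R}$ both satisfy Assumption \ref{assum:proj} --- here periodicity \ref{item:periodic} enters through $P(-k_1,1/2)=P(-k_1,-1/2)=\theta P(k_1,1/2)\theta^{-1}$ --- while $\Xi(0,0)$ and $\Xi(0,1/2)$ are symplectic frames in $\Ran P(0,0)$ and $\Ran P(0,1/2)$ (again by periodicity of the one-dimensional frame $\Xi(0,\cdot)$ at $k_2=\pm 1/2$). Reading \eqref{eqn:alpha_vs_Tk2} at $k_2\in\set{0,1/2}$ then identifies $\alpha(0)$ and $\alpha(1/2)$ with the matching matrices \eqref{eqn:alpha_vs_T1d} of these restrictions, with input frames $\Xi(0,0)$ and $\Xi(0,1/2)$; moreover $C(0)$ and $C(1/2)$ depend only on $\alpha(0)$ and $\alpha(1/2)$, since by Kramers degeneracy $\det\gamma=\prod_{\lambda}\lambda^{\,m_\lambda/2}$ is an unambiguous function of a TRS unitary with eigenvalue multiplicities $m_\lambda$. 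Applying Proposition \ref{prop:BerryConnection} to each restriction --- with the real analytic symmetric Bloch frame produced from $\Xi(0,0)$ (resp.\ $\Xi(0,1/2)$) by the one-dimensional parallel-transport construction of Section \ref{sec:d=1} --- gives $\frac{2}{2\pi\iu}\log C(0)\equiv\frac{1}{2\pi}\oint_{\Gamma_0}\A$ and $\frac{2}{2\pi\iu}\log C(1/2)\equiv\frac{1}{2\pi}\oint_{\Gamma_{1/2}}\A \pmod{2}$, where Remark \ref{rmk:Gauge} ensures that $\frac{1}{2\pi}\oint_{\Gamma_*}\A$ is well defined $\bmod\,2$ independently of the chosen continuous symmetric frame on $\Gamma_*$. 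Subtracting, the connection term becomes $-\frac{1}{2\pi}\bigl(\oint_{\Gamma_{1/2}}\A-\oint_{\Gamma_0}\A\bigr)$, and adding the two contributions reproduces \eqref{eqn:deltaGeom}.

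I expect the main obstacle to lie in the bookkeeping of the connection term rather than in any new estimate: one must verify that the $\gamma$-matrices appearing at different stages (the interpolation of Lemma \ref{lemma:factor}, the explicit diagonal choice in the proof of Proposition \ref{prop:IntegralRueda}, and the recipe $\gamma=\eu^{\iu h/2}$ of Proposition \ref{prop:BerryConnection}) all feed the same determinants $C(0)$, $C(1/2)$ up to ambiguities that are invisible $\bmod\,2$ --- controlled by Kramers degeneracy together with Remarks \ref{rmk:logIntRueda} and \ref{rmk:Gauge}. The curvature term, by contrast, is a direct concatenation of lemmas already available in Section \ref{sec:GeoFormula}.
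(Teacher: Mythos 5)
Your proposal is correct and follows essentially the same route as the paper's proof: identify $\delta=\mathcal{I}(\alpha)$ via Theorem \ref{thm:GP_vs_FMP}, use \eqref{eqn:IntegralRueda_b}, convert the first summand into the curvature integral via \eqref{eqn:A->S} and \eqref{eqn:BerryCurv}, and handle the $\log(C(0)/C(1/2))$ term by viewing $\alpha(0)$, $\alpha(1/2)$ as the $1$-dimensional matching matrices of the restrictions along $\Gamma_0$, $\Gamma_{1/2}$ and invoking \eqref{eqn:gammaBerry1d} (Proposition \ref{prop:BerryConnection}). You in fact spell out a few points the paper leaves implicit (that the restricted families satisfy Assumption \ref{assum:proj}, and that the various $\gamma$'s yield the same determinants up to ambiguities invisible $\bmod\,2$), which is consistent with the paper's argument.
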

\begin{proof}
In view of Theorem \ref{thm:GP_vs_FMP}, we have that $\delta = \mathcal{I}(\alpha) \in \Z_2$, where $\alpha$ is as in \eqref{eqn:alpha_vs_Tk2}. We use the expression \eqref{eqn:IntegralRueda_b} for $\mathcal{I}(\alpha)$: the first summand there equals the first summand of \eqref{eqn:deltaGeom} in view of \eqref{eqn:A->S} and \eqref{eqn:BerryCurv}. 

It remains to show that also the second term on the right-hand side of \eqref{eqn:IntegralRueda_b} agrees with the second term on the right-hand side of \eqref{eqn:deltaGeom}, \ie that
\begin{equation} \label{eqn:gammaBerry} 
2 \left(\frac{1}{2 \pi \iu} \, \log \frac{\det \gamma(0)}{\det \gamma(1/2)}\right) \equiv - \frac{1}{2 \pi}\left( \oint_{\Gamma_{1/2}} \A - \oint_{\Gamma_{0}} \A\right) \bmod 2
\end{equation}
with $\gamma(0)$ and $\gamma(1/2)$ are as in \eqref{eqn:gammas}. Arguing as in Section \ref{sec:matching1d}, since $\theta \, \mathcal{T}(0) \, \theta^{-1} = \mathcal{T}(0)$ we can write $\mathcal{T}(0) = \eu^{\iu \, M}$ with $M = M^* \in \BH$ commuting with $P(0,0)$. Correspondingly, $\alpha(0) = \eu^{\iu \, h}$, with $h = h^*$ the self-adjoint matrix associated to the self-adjoint operator $P(0,0) \, M \, P(0,0)$ on $\Ran P(0,0)$. The analogous statement can be formulated for $\mathcal{T}(1/2)$, owing to \ref{item:T-periodic} and \ref{item:T-TRS}. It follows then that \eqref{eqn:gammaBerry1d} holds for $\gamma(0)$ and $\gamma(1/2)$, on the specified loops $\Gamma_0$ and $\Gamma_{1/2}$, respectively: this implies \eqref{eqn:gammaBerry}.
\end{proof}

The formula \eqref{eqn:deltaGeom} expresses the $\Z_2$ invariant $\delta$ purely in terms of geometric data, namely the Berry curvature and the trace of the Berry connection $1$-form, integrated along the loops $\set{k_2 = 0}$ and $\set{k_2 = 1/2}$ in the Brillouin torus $\T^2$. As was already mentioned in the Introduction, this formula coincides with the one appearing in \cite[Eqn.~(A8)]{FuKane06} for the Fu--Kane $\Z_2$ index associated to TRS topological insulators and quantum spin Hall systems. Thus, Theorem~\ref{thm:deltaGeom} gives a proof of the first equality in Theorem~\ref{thm:MainResults_Z2}.

\subsection{Evaluation of the $\Z_2$ invariant on TRIMs} \label{sec:TRIM}

As was early realized \cite{FuKane06}, one of the main peculiarities of the $\Z_2$ invariant of $2$-dimensional TRS topological insulators is that it can be evaluated by considering appropriate quantities which are defined only at the four \emph{time-reversal invariant momenta} (TRIMs) in a unit cell, which are fixed by the combined action of the integer shifts and of the involution $\kk \to -\kk$. These TRIMS are the points in the set $V$ appearing in \eqref{eqn:TRIM}. Geometrically, this feature was recently understood with the use of an appropriate equivariant cohomology \cite{DeNittisGomi15}.

We illustrate this characteristic of the $\Z_2$ invariant by recalling one of the possible definitions of the Fu--Kane index \cite[Eqn.~(3.26)]{FuKane06}. Given a continuous and periodic Bloch frame $\Phi$, define the \emph{sewing matrix}
\[ w(\kk)_{ab} := \scal{\Phi_a(-\kk)}{\theta \, \Phi_b(\kk)}, \quad a, b \in \set{1, \ldots, m}. \]
The matrix $w(\kk)$ is unitary, and gives a measure of the failure of the frame $\Phi$ to be TRS. Notice indeed that if $\Phi$ is a symmetric frame, then $w(\kk) \equiv \eps^{-1}$ for all $\kk \in \R^2$.

One easily realizes that $w(\kk + \mathbf{n}) = w(\kk)$ for $\kk \in \R^2$ and $\mathbf{n} \in \Z^2$, and that moreover $w(-\kk) = - w(\kk)^t$ for $\kk \in \R^2$. In particular, at the four TRIMs in \eqref{eqn:TRIM} we have that $w(\kk_*) = - w(\kk_*)^t$, $\kk_* \in V$, namely $w(\kk_*)$ is skew-symmetric. As a consequence, it has a well defined Pfaffian, squaring to its determinant. Upon an adequate choice of the branch of the square root, the Fu--Kane $\Z_2$ index can be defined as%
\footnote{Notice that we switch from the ``additive'' definition of $\Z_2 = \set{0,1}$ to the ``multiplicative'' one $\Z_2 = \set{-1, +1}$.}%
\[ (-1)^\Delta := \prod_{\kk_* \in V} \frac{\sqrt{\det w(\kk_*)}}{\Pf w(\kk_*)} \quad \in \Z_2. \]

A similar formulation holds also for the Fiorenza--Monaco--Panati invariant $\delta$: indeed
\[ (-1)^\delta = \prod_{\kk_* \in V} \frac{\sqrt{\det U\sub{obs}(\kk_*)}}{\det U(\kk_*)} \]
(compare \cite[Sec.~5]{FiorenzaMonacoPanati16}), where the obstruction matrices $U\sub{obs}(\kk_*)$ and the matrices $U(\kk_*)$ are defined in \eqref{eqn:Uobs} and \eqref{eqn:U(k*)}, respectively. This expression can be used to show that $\delta = \Delta \in \Z_2$ \cite[Thm.~5]{FiorenzaMonacoPanati16}.

We show now that, in the same spirit, also the Graf--Porta index $\mathcal{I}(\alpha) \in \Z_2$ of a family of matrices $\alpha$ as in Assumption \ref{assum:alpha} can be expressed in terms of the endpoint matrices $\alpha(0)$ and $\alpha(1/2)$. In order to do so, it will be useful to appeal to the following fact.

\begin{lemma} \label{lemma:detalpha=1}
Let $\set{\alpha(k_2)}_{k_2 \in \R}$ be as in Assumption \ref{assum:alpha}. Then there exists a continuous, $\Z$-periodic and even function $\varphi \colon \R \to \R$ such that
\[ \det \alpha(k_2) = \eu^{\iu \, \varphi(k_2)}, \quad k_2 \in \R. \]
\end{lemma}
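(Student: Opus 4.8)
The plan is to recognize that $B(k_2) := \det\alpha(k_2)$ is a continuous (in fact real analytic, by \ref{item:alpha_analytic}) $\Z$-periodic map from $\R$ into the unit circle $U(1)$, and to produce an even real analytic periodic lift $\varphi$ of it. The first step is to extract from the time-reversal condition \ref{item:alpha_TRS} the fact that $B$ is an \emph{even} function: taking determinants in $\eps\,\alpha(k_2) = \alpha(-k_2)^t\,\eps$ and cancelling $\det\eps \ne 0$, while using $\det\big(\alpha(-k_2)^t\big) = \det\alpha(-k_2)$, gives
\[ B(k_2) = B(-k_2), \qquad k_2 \in \R. \]
(This is exactly the observation already recorded at the beginning of Section \ref{sec:GrafPorta}, which in particular forces $\deg([\det\alpha]) = 0$.)

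Next I would build the lift by integration, in the spirit of Lemma \ref{lemma:DegUnitary}. Since $|B(k_2)| \equiv 1$, the expression $B(k_2)^{-1}\,B'(k_2)$ is purely imaginary, so
\[ \varphi(k_2) := \varphi_0 + \frac{1}{\iu}\int_{0}^{k_2} B(s)^{-1}\,B'(s)\,\di s, \qquad \eu^{\iu\,\varphi_0} = B(0),\ \ \varphi_0 \in \R, \]
defines a real analytic function $\varphi \colon \R \to \R$. Differentiating $B(k_2)\,\eu^{-\iu\,\varphi(k_2)}$ one finds the derivative vanishes identically, so the function is constant and equal to $B(0)\,\eu^{-\iu\,\varphi_0} = 1$; hence $B(k_2) = \eu^{\iu\,\varphi(k_2)}$ for all $k_2 \in \R$. (Alternatively one could simply invoke simple connectedness of $\R$ to get a continuous lift, but the integral formula is more in keeping with the constructive flavour of the paper.)

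It then remains to verify that this particular $\varphi$ is even and $\Z$-periodic. For evenness: $\varphi(k_2) - \varphi(-k_2)$ is continuous and, since $\eu^{\iu(\varphi(k_2) - \varphi(-k_2))} = B(k_2)\,B(-k_2)^{-1} = 1$ by the evenness of $B$, it takes values in $2\pi\Z$; a continuous $2\pi\Z$-valued function on the connected set $\R$ is constant, and its value at $k_2 = 0$ is $0$, so $\varphi(k_2) = \varphi(-k_2)$. For periodicity, the same argument applied to $\varphi(k_2+1) - \varphi(k_2)$ (using $B(k_2+1) = B(k_2)$ from \ref{item:alpha_periodic}) shows this difference is a constant $2\pi n$ with $n \in \Z$; evaluating it at $k_2 = -1/2$ and using the evenness just established gives $2\pi n = \varphi(1/2) - \varphi(-1/2) = 0$, whence $n = 0$ and $\varphi(k_2 + 1) = \varphi(k_2)$. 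The only point that needs the slightest care is this last step, where evenness and periodicity must be reconciled — which is precisely where the vanishing of the winding number of $\det\alpha$ enters.
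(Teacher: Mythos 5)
Your proof is correct and follows essentially the same route as the paper: the evenness of $\det\alpha$ extracted from \ref{item:alpha_TRS} forces the winding number to vanish, which is what makes a continuous, periodic and even argument possible. The only difference is that you carry out explicitly (via the integral lift and the standard ``continuous $2\pi\Z$-valued function is constant'' argument) the construction that the paper delegates to \cite[Lemma~2.13]{CorneanHerbstNenciu15}, which is a perfectly fine, self-contained substitute.
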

\begin{proof}
It was already observed in the discussion at the beginning of Section \ref{sec:GrafPorta} that the TRS condition \ref{item:alpha_TRS} implies that  $k_2 \mapsto \det \alpha(k_2)$ is an even function. As such, it has a vanishing winding number, which implies the existence of a continuous, periodic and even argument $\varphi$ as required in the statement (see also \cite[Lemma~2.13]{CorneanHerbstNenciu15} for an explicit construction of $\varphi$).
\end{proof}

\begin{proposition}
Let $\set{\alpha(k_2)}_{k_2 \in \R}$ be as in Assumption \ref{assum:alpha}. Then its Graf--Porta index $\mathcal{I}(\alpha)$ can be computed by
\begin{equation} \label{eqn:-1I}
(-1)^{\mathcal{I}(\alpha)} = \frac{\sqrt{\det \alpha(0)}}{\det \gamma(0)} \, \frac{\sqrt{\det \alpha(1/2)}}{\det \gamma(1/2)},
\end{equation}
where $\gamma(0), \gamma(1/2) \in U(m)$ are as in \eqref{eqn:gammas}.
\end{proposition}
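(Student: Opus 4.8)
The plan is to read off \eqref{eqn:-1I} directly from the formula \eqref{eqn:IntegralRueda_b} for $\mathcal{I}(\alpha)$, by exponentiating it and simplifying its two summands separately.

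First I would invoke Lemma \ref{lemma:detalpha=1} to write $A(k_2) := \det\alpha(k_2) = \eu^{\iu\,\varphi(k_2)}$ for a continuous, $\Z$-periodic and even function $\varphi\colon\R\to\R$. Then $A(k_2)^{-1}A'(k_2) = \iu\,\varphi'(k_2)$, so the integral term in \eqref{eqn:IntegralRueda_b} equals $(\varphi(1/2)-\varphi(0))/(2\pi)$. The right-hand side of \eqref{eqn:IntegralRueda_b} is an integer congruent mod $2$ to $\deg([\det\widetilde{\Lambda}\,\sharp\,\Lambda])$, hence to $\mathcal{I}(\alpha)$; therefore $(-1)^{\mathcal{I}(\alpha)} = \exp(\iu\pi\cdot(\text{that right-hand side}))$, which using $\iu\pi\cdot\tfrac{2}{2\pi\iu}\log\tfrac{C(0)}{C(1/2)} = \log\tfrac{C(0)}{C(1/2)}$ and $C(k_2) = \det\gamma(k_2)$ gives
\[ (-1)^{\mathcal{I}(\alpha)} = \exp\!\left(\frac{\iu\,(\varphi(1/2)-\varphi(0))}{2}\right)\cdot\frac{\det\gamma(0)}{\det\gamma(1/2)} . \]
Here $\gamma(0)$ and $\gamma(1/2)$ are the matrices in \eqref{eqn:gammas}, as the construction in the proof of Lemma \ref{lemma:factor} shows.

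Next I would set $\sqrt{\det\alpha(k_*)} := \eu^{\iu\,\varphi(k_*)/2}$ for $k_* \in \{0, 1/2\}$. This is a unit-modulus square root of $\det\alpha(k_*) = \eu^{\iu\,\varphi(k_*)}$, so $\eu^{-\iu\,\varphi(0)/2} = (\sqrt{\det\alpha(0)})^{-1}$ and the identity above reads $(-1)^{\mathcal{I}(\alpha)} = \sqrt{\det\alpha(1/2)}\,(\sqrt{\det\alpha(0)})^{-1}\,\det\gamma(0)\,(\det\gamma(1/2))^{-1}$. Finally, from \eqref{eqn:gammas} one has $\det\alpha(k_*) = \det(\gamma(k_*)^t)\det\gamma(k_*) = (\det\gamma(k_*))^2$, so $\det\gamma(k_*)$ is one of the two square roots of $\det\alpha(k_*)$; thus the sign $\det\gamma(k_*)/\sqrt{\det\alpha(k_*)}$ coincides with its reciprocal $\sqrt{\det\alpha(k_*)}/\det\gamma(k_*)$, and regrouping the four factors gives exactly \eqref{eqn:-1I}.

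The one point needing care is the branch of the square roots in \eqref{eqn:-1I}: each factor $\sqrt{\det\alpha(k_*)}/\det\gamma(k_*)$ is a sign that flips under a change of branch at $k_*$, so \eqref{eqn:-1I} must be read with $\sqrt{\det\alpha(k_*)}$ taken as $\eu^{\iu\,\varphi(k_*)/2}$ for one common even periodic argument $\varphi$ of $\det\alpha$. I would note that the resulting right-hand side is nevertheless well defined: $\varphi$ is unique up to an additive constant in $2\pi\Z$, which multiplies both $\eu^{\iu\,\varphi(0)/2}$ and $\eu^{\iu\,\varphi(1/2)/2}$ by the same sign and thus leaves the product in \eqref{eqn:-1I} invariant. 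This is precisely the ``adequate choice of the branch of the square root'' already used for the Pfaffian-type formulas in Section \ref{sec:TRIM}, and there is no substantive difficulty beyond this bookkeeping.
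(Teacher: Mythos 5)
Your proposal is correct and follows essentially the same route as the paper's own proof: it exponentiates formula \eqref{eqn:IntegralRueda_b}, using the continuous, periodic, even argument $\varphi$ of $\det\alpha$ from Lemma \ref{lemma:detalpha=1} to define $\sqrt{\det\alpha(k_*)} := \eu^{\iu\varphi(k_*)/2}$ and to evaluate the integral term, and $\exp(\log(C(0)/C(1/2)))=C(0)/C(1/2)$ for the second term. Your explicit regrouping step --- using $(\det\gamma(k_*))^2=\det\alpha(k_*)$ from \eqref{eqn:gammas} so that the sign $\det\gamma(k_*)/\sqrt{\det\alpha(k_*)}$ equals its reciprocal --- and your remark on the common choice of branch simply spell out what the paper leaves implicit in its concluding ``combining the above two equalities'' sentence.
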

\begin{proof}
The conclusion follows from the use of formula \eqref{eqn:IntegralRueda_b} to compute $\mathcal{I}(\alpha)$. Indeed, denoting again $A(k_2) := \det \alpha(k_2)$, and letting $\varphi$ be as in Lemma \ref{lemma:detalpha=1}, we have
\[ A'(k_2) = \iu \, \varphi'(k_2) \, A(k_2) \quad \Longrightarrow \quad \frac{1}{2 \pi \iu} \int_{0}^{1/2} A(k_2)^{-1} \, A'(k_2) \, \di k_2 = \frac{\varphi(1/2) - \varphi(0)}{2 \pi}. \]
One then immediately deduces that
\[ (-1)^{\int_{0}^{1/2} A(k_2)^{-1} \, A'(k_2) \, \di k_2} = (\eu^{\iu \pi})^{\{\varphi(1/2) - \varphi(0)\}/2\pi} = \eu^{\iu \, \varphi(1/2)/2} \, \eu^{\iu \, \varphi(0)/2} = \frac{\sqrt{A(1/2)}}{\sqrt{A(0)}}. \]

On the other hand, arguing similarly using the principal branch of the complex logarithm, we have that
\[ (-1)^{2 \, \log \{\det \gamma(1/2) / \det \gamma(0)\}/2\pi \iu} = \frac{\det \gamma(1/2)}{\det \gamma(0)}. \]
Combining the above two equalities we obtain exactly \eqref{eqn:-1I}, in view of \eqref{eqn:IntegralRueda_b}.
\end{proof}

The above expression for the Graf--Porta $\Z_2$ index can be now compared to another formulation for the invariant of TRS topological insulators, proposed by Prodan in \cite{Prodan11_PRB}. We review Prodan's definition for the reader's convenience, adapting it to our notation. Using properties \ref{item:T-periodic}, \ref{item:T-TRS} and \ref{item:T-group} for the parallel transport $T_{k_2}(1,0)$, we notice that at $k_2 = 0$
\[ T_0(1,0) = T_0(1,1/2) \, T_0(1/2, 0) = T_0(0,-1/2) \, \theta \, T_0(0,-1/2)^{-1} \, \theta^{-1}, \]
and similarly at $k_2=1/2$
\[ T_{1/2}(1,0) = T_{1/2}(0,-1/2) \, \theta \, T_{1/2}(0,-1/2)^{-1} \, \theta^{-1}. \]
If $k_* \in \set{0,1/2}$, the above two equalities imply that
\begin{multline} \label{eqn:mess}
P(0,k_*) \, T_{k_*}(1,0) \, P(0,k_*) \\
= P(0,k_*) \, T_{k_*}(0,-1/2) \, P(-1/2,k_*) \, \theta \, P(-1/2,k_*) \,  T_{k_*}(0,-1/2)^{-1} \, P(0,k_*) \, \theta^{-1} \, P(0,k_*)
\end{multline}
in view of the intertwining property \ref{item:T-intertwine}. Arguing as in Section \ref{sec:d=0}, we can choose bases $\Xi(0,k_*)$ in $\Ran P(0,k_*)$ and $\Xi(-1/2,k_*)$ in $\Ran P(-1/2,k_*)$ which satisfy
\[ \theta \, \Xi(0,k_*) \act \eps = \Xi(0,k_*), \quad \theta \, \Xi(-1/2,k_*) \act \eps = \Xi(-1/2,k_*). \]
Thus, in these bases both the operators
\begin{gather*}
P(0,k_*) \, \theta \, P(0,k_*) \colon \Ran P(0,k_*) \to \Ran P(0,k_*) \quad \text{and} \\
P(-1/2,k_*) \, \theta \, P(-1/2,k_*) \colon \Ran P(-1/2,k_*) \to \Ran P(-1/2,k_*) 
\end{gather*}
act as $\eps C$, where $C$ is the complex conjugation operator. Notice also that, in view of \eqref{eqn:alpha_vs_Tk2}, the restriction of the operator on the left-hand side of \eqref{eqn:mess} is represented by the matrix $\alpha(k_*)$ in the basis $\Xi(0,k_*)$. The equality in \eqref{eqn:mess} implies then
\begin{equation} \label{eqn:gammahat}
\alpha(k_*) = \widehat{\gamma}(k_*) \, \eps \, \widehat{\gamma}(k_*)^t \, \eps^{-1}, \quad k_* \in \set{0,1/2},
\end{equation}
where $\widehat{\gamma}(k_*)$ is the matrix representing to the operator 
\[ P(0,k_*) \, T_{k_*}(0,-1/2) \, P(-1/2,k_*) \colon \Ran P(-1/2,k_*) \to \Ran P(0,k_*) \]
upon choosing the basis $\Xi(-1/2,k_*)$ in the domain and the basis $\Xi(0,k_*)$ in the image. The Prodan $\Z_2$ invariant $\xi \in \Z_2$ is then defined (see \cite[Eqn.~(33)]{Prodan11_PRB}) by the relation
\begin{equation} \label{eqn:Prodan}
(-1)^\xi := \frac{\sqrt{\det \alpha(0)}}{\det \widehat{\gamma}(0)} \, \frac{\sqrt{\det \alpha(1/2)}}{\det \widehat{\gamma}(1/2)}.
\end{equation}

It is now immediate to see that the following statement holds.
\begin{theorem} \label{thm:GP=Prodan}
Let $\set{P(\kk)}_{\kk \in \R^2}$ be a family of projectors as in Assumption \ref{assum:proj}. Let $\alpha$ be as in \eqref{eqn:alpha_vs_Tk2}. Then
\[ \mathcal{I}(\alpha) = \xi \in \Z_2, \]
where $\mathcal{I}(\alpha)$ is defined in \eqref{eqn:rueda} and $\xi$ is defined in \eqref{eqn:Prodan}.
\end{theorem}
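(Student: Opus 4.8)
The strategy is to compare the two closed-form expressions already available for the indices: formula \eqref{eqn:-1I} from the previous Proposition, which computes $(-1)^{\mathcal{I}(\alpha)}$ in terms of the square-root matrices $\gamma(0),\gamma(1/2)$ of \eqref{eqn:gammas}, and the defining formula \eqref{eqn:Prodan} for $(-1)^{\xi}$, written in terms of the matrices $\widehat\gamma(0),\widehat\gamma(1/2)$ of \eqref{eqn:gammahat}. Both are products over the two inequivalent TRIMs $k_{*}\in\{0,1/2\}$ of the ratio $\sqrt{\det\alpha(k_{*})}$ over the determinant of a ``square root'' of $\alpha(k_{*})$. Since $\mathcal{I}(\alpha),\xi\in\Z_{2}$ and $(-1)^{(\cdot)}$ is injective on $\Z_{2}$, it is enough to show that the right-hand sides of \eqref{eqn:-1I} and \eqref{eqn:Prodan} coincide. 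I would split this into two facts: that the scalar square roots $\sqrt{\det\alpha(k_{*})}$ may be taken identical in both formulas, and that $\det\widehat\gamma(k_{*})=\det\gamma(k_{*})$ for $k_{*}\in\{0,1/2\}$.

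For the first fact, by Lemma \ref{lemma:detalpha=1} the map $k_{2}\mapsto\det\alpha(k_{2})$ admits a continuous, $\Z$-periodic and even argument $\varphi$, and the ``adequate'' choice of square root entering both formulas is $\sqrt{\det\alpha(k_{*})}=\eu^{\iu\,\varphi(k_{*})/2}$. Any two such $\varphi$ differ by a constant in $2\pi\Z$, so the two resulting choices of $\sqrt{\det\alpha(k_{*})}$ differ by the \emph{same} sign at both TRIMs; hence the products $\sqrt{\det\alpha(0)}\,\sqrt{\det\alpha(1/2)}$ agree, both quantities $(-1)^{\mathcal{I}(\alpha)}$ and $(-1)^{\xi}$ are unambiguous, and they can be computed with the same square roots.

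The second fact is a short Pfaffian identity. At $k_{*}\in\{0,1/2\}$ the matrix $\eps\,\alpha(k_{*})$ is skew-symmetric by \eqref{eqn:alpha_TRS0} and $\eps^{t}=-\eps$, and \eqref{eqn:gammas} gives $\eps\,\alpha(k_{*})=\gamma(k_{*})^{t}\,\eps\,\gamma(k_{*})$, so $\det\gamma(k_{*})=\Pf(\eps\,\alpha(k_{*}))$ using $\Pf(\eps)=1$ in the canonical form \eqref{eqn:eps=J} (the computation already recorded in Section \ref{sec:matching1d}). On the other side, \eqref{eqn:gammahat} reads $\alpha(k_{*})\,\eps=\widehat\gamma(k_{*})\,\eps\,\widehat\gamma(k_{*})^{t}$, and $\alpha(k_{*})\,\eps$ is also skew-symmetric — this follows from \eqref{eqn:alpha_TRS0} together with $\eps^{2}=-\Id$ and $\eps^{t}=-\eps$ — hence $\Pf(\alpha(k_{*})\,\eps)=\det\widehat\gamma(k_{*})$. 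Finally the two Pfaffians agree: since $(\eps^{-1})^{t}=\eps$ one has $\alpha(k_{*})\,\eps=\eps^{-1}\bigl(\eps\,\alpha(k_{*})\bigr)(\eps^{-1})^{t}$, so $\Pf(\alpha(k_{*})\,\eps)=\det(\eps^{-1})\,\Pf(\eps\,\alpha(k_{*}))=\Pf(\eps\,\alpha(k_{*}))$ as $\det\eps=1$. Therefore $\det\widehat\gamma(k_{*})=\Pf(\eps\,\alpha(k_{*}))=\det\gamma(k_{*})$.

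Combining the two facts, \eqref{eqn:-1I} and \eqref{eqn:Prodan} have equal right-hand sides, so $(-1)^{\mathcal{I}(\alpha)}=(-1)^{\xi}$ and hence $\mathcal{I}(\alpha)=\xi\in\Z_{2}$. I expect the only genuinely delicate point to be the first one: the two definitions arrive at superficially different square roots of $\alpha(k_{*})$ — the factorizations \eqref{eqn:gammas} and \eqref{eqn:gammahat} are not literally the same — and one must check that the accompanying scalar square roots $\sqrt{\det\alpha(k_{*})}$ are pinned down compatibly; once the branches are fixed to be continuous and even, the rest is the elementary Pfaffian manipulation above, so the theorem is indeed ``immediate'' modulo this bookkeeping.
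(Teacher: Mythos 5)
Your proof is correct, and its overall strategy --- reducing the statement to the equality of the right-hand sides of \eqref{eqn:-1I} and \eqref{eqn:Prodan} --- is the same as the paper's. The difference lies in how the key identity $\det\widehat\gamma(k_*)=\det\gamma(k_*)$, $k_*\in\set{0,1/2}$, is justified. The paper's proof asserts in one line that $\gamma(k_*):=\eps^{-1}\,\widehat\gamma(k_*)\,\eps$ satisfies \eqref{eqn:gammas} and then reads off the equality of determinants; you instead note that the determinant of \emph{any} solution of \eqref{eqn:gammas} is pinned down, being equal to $\Pf(\eps\,\alpha(k_*))$, and that \eqref{eqn:gammahat} forces $\det\widehat\gamma(k_*)=\Pf(\alpha(k_*)\,\eps)=\Pf(\eps\,\alpha(k_*))$ as well, all of your Pfaffian manipulations being correct. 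Your route is arguably the more robust one: with $\gamma:=\eps^{-1}\widehat\gamma\eps$ one computes $\eps^{-1}\gamma^{t}\eps\,\gamma=\widehat\gamma^{\,t}\,\eps\,\widehat\gamma\,\eps^{-1}$, and matching this with $\alpha(k_*)=\widehat\gamma\,\eps\,\widehat\gamma^{\,t}\,\eps^{-1}$ would require $\widehat\gamma^{\,t}\eps\,\widehat\gamma=\widehat\gamma\,\eps\,\widehat\gamma^{\,t}$, which is not an identity valid for a general unitary matrix; however, only the determinant of $\widehat\gamma(k_*)$ enters the comparison, and that is exactly what your Pfaffian argument (equivalently, the remark $\det\gamma=\Pf(\eps\alpha)$ already recorded at the end of Section \ref{sec:d=1}, combined with the fact that two solutions of \eqref{eqn:gammas} differ by a unitary symplectic matrix of unit determinant) supplies. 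Your preliminary bookkeeping on the branch of $\sqrt{\det\alpha(k_*)}$ is also fine and consistent with how \eqref{eqn:-1I} was derived --- within the paper's conventions the same square roots appear in both formulas, and as you observe a different continuous even periodic argument changes both of them by a common sign, which cancels in the product.
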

\begin{proof}
In view of \eqref{eqn:-1I}, it suffices to notice that if $\widehat{\gamma}(k_*)$, $k_* \in \set{0,1/2}$, is as in \eqref{eqn:gammahat}, then 
\[ \gamma(k_*) := \eps^{-1} \, \widehat{\gamma}(k_*) \, \eps, \quad k_* \in \set{0,1/2}, \]
satisfies \eqref{eqn:gammas}.
\end{proof}


\appendix

\goodbreak


\section{Lifting extra degeneracies of the matching matrix} \label{sec:ExtraDegen}

In this Appendix, we prove a few auxiliary results, which were used in Section \ref{sec:Removal}. The proofs adapt the argument in \cite[Lemma~2.18]{CorneanHerbstNenciu15}.

\begin{lemma}[Local splitting lemma] \label{lemma:splitting}
Let $R>0$ and $k_0 \in \R$. Denote by $B_R(k_0)$ the interval $[k_0 - R, k_0 + R]$. Let $\set{\alpha(k) = \eu^{\iu h(k)}}_{k \in B_R(k_0)}$ be a family of $m \times m$ unitary matrices such that $k \mapsto h(k) = h(k)^*$ is smooth on $B_R(k_0)$.
\begin{enumerate}[label=$(\roman*)$,ref=$(\roman*)$]
 \item \label{item:split1} Under the above assumptions, it is possible to construct a sequence $\set{\alpha_j(k)=\eu^{\iu h_j(k)}}_{k \in B_R(k_0)}$ of families of unitary matrices, with $k \mapsto h_j(k)$ \emph{smooth} and 
 \[ \lim_{j \to \infty} \sup_{k_2 \in B_R(k_0)} \norm{\alpha(k_2) - \alpha_j(k_2)} = 0, \]
 such that the spectrum of $\alpha_j(k_0)$ is non-degenerate.
 \item \label{item:split2} Let $k_0 = 0$, and assume that $h(k)$ is \emph{time-reversal symmetric}, \ie
 \begin{equation} \label{hTRS}
 \eps \, h(k) = h(-k)^t \, \eps, \quad \text{and in particular} \quad \eps \, \alpha(k) = \alpha(-k)^t \, \eps.
 \end{equation}
 Then it is possible to construct a sequence $\set{\alpha_j(k)=\eu^{\iu h_j(k)}}_{k \in B_R(k_0)}$ of families of unitary matrices, with $k \mapsto h_j(k)$ \emph{smooth} and 
 \[ \lim_{j \to \infty} \sup_{k_2 \in B_R(0)} \norm{\alpha(k_2) - \alpha_j(k_2)} = 0, \]
 which are also time-reversal symmetric and such that all eigenvalues of $\alpha_j(0)$ are \emph{doubly} degenerate.
 \item \label{item:split3} Assume that the spectrum of $\alpha(k)$ is non-degenerate for all $k \ne k_0$. Then it is possible to construct a sequence $\set{\alpha_j(k)=\eu^{\iu h_j(k)}}_{k \in B_R(k_0)}$ of families of unitary matrices, with $k \mapsto h_j(k)$ \emph{continuous} and
 \[ \lim_{j \to \infty} \sup_{k_2 \in B_R(k_0)} \norm{\alpha(k_2) - \alpha_j(k_2)} = 0, \]
 such that the spectrum of $\alpha_j(k)$ is non-degenerate for all $k \in B_R(k_0)$.
\end{enumerate}
\end{lemma}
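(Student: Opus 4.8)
The plan is to prove all three parts by perturbing the self-adjoint exponent $h$ and using the elementary bound $\norm{\eu^{\iu A}-\eu^{\iu B}}\le\norm{A-B}$ for self-adjoint $A,B$, so that a uniformly small change of $h$ produces a uniformly small change of $\alpha=\eu^{\iu h}$. For \ref{item:split1} I would fix an orthonormal eigenbasis of $h(k_0)$, let $S$ be the operator diagonal in that basis with $m$ pairwise distinct real eigenvalues $s_1,\dots,s_m$, and set $h_j(k):=h(k)+t_j\,S$ with $t_j\downarrow 0$; then $h_j$ is smooth in $k$ and $\sup_k\norm{\alpha(k)-\alpha_j(k)}\le t_j\norm{S}\to0$. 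Writing $\lambda_1,\dots,\lambda_m$ for the eigenvalues of $h(k_0)$, the eigenvalues of $\alpha_j(k_0)$ are $\eu^{\iu(\lambda_i+t_j s_i)}$, and two of them coincide for $i\ne l$ only if $(\lambda_i-\lambda_l)+t_j(s_i-s_l)\in 2\pi\Z$, which is impossible once $t_j$ is smaller than the positive distance from $\lambda_i-\lambda_l$ to $(2\pi\Z)\setminus\{\lambda_i-\lambda_l\}$ (the case $\lambda_i=\lambda_l$ being handled by $s_i\ne s_l$); hence $\alpha_j(k_0)$ has non-degenerate spectrum for $j$ large.

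For \ref{item:split2} the only extra point is to make $S$ compatible with the symmetry. Since here $k_0=0$ and $h(0)$ is TRS, the operator $\Theta:=\eps C$ (with $\Theta^2=-\Id$) commutes with $h(0)$, so each eigenspace of $h(0)$ is $\Theta$-invariant and even-dimensional; choosing in each eigenspace an orthonormal basis of the form $\{u,\Theta u,\dots\}$ and letting $S$ act on each pair $\{u,\Theta u\}$ as multiplication by a real scalar, these scalars being pairwise distinct over all pairs, one gets $S=S^*$, $[S,h(0)]=0$ and $\Theta S\Theta^{-1}=S$, the last relation being equivalent to $\eps\,S=S^t\,\eps$. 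Then $h(k)+t_j S$ again satisfies \eqref{hTRS}, so $\alpha_j(k):=\eu^{\iu(h(k)+t_j S)}$ is time-reversal symmetric, and exactly as in \ref{item:split1} the eigenvalues of $\alpha_j(0)$ are, for $j$ large, pairwise distinct across the $\Theta$-pairs, i.e.\ every eigenvalue of $\alpha_j(0)$ is exactly doubly degenerate; uniform convergence is as before.

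Part \ref{item:split3} is the delicate one, and I expect it to be the main obstacle. Outside a fixed small interval $|k-k_0|\le\rho$ the family $\alpha$ already has simple spectrum, hence a uniform spectral gap on $\{\rho\le|k-k_0|\le R\}$, so it suffices to replace $\alpha$ on $[k_0-\rho,k_0+\rho]$ by a nearby continuous family with everywhere-simple spectrum that still equals $\alpha(k_0\pm\rho)$ at the two endpoints; any continuous self-adjoint logarithm of the resulting family (which exists, $B_R(k_0)$ being contractible) then gives the required $h_j$. For $\rho$ small I would group the spectrum of $\alpha(k)$ according to the eigenvalues $\zeta$ of $\alpha(k_0)$: the spectral projection $P_\zeta(k)$ of $\alpha(k)$ onto the eigenvalues near $\zeta$ is continuous on $|k-k_0|\le\rho$, and Kato's intertwining unitary $V(k)$ (continuous, $V(k_0)=\Id$, $V(k)P_\zeta(k_0)V(k)^*=P_\zeta(k)$) conjugates $\alpha(k)$ to a family block-diagonal over the fixed decomposition $\bigoplus_\zeta\Ran P_\zeta(k_0)$, the $\zeta$-block being $\zeta$ times a small unitary $\eu^{\iu\tilde C_\zeta(k)}$ with $\tilde C_\zeta(k_0)=0$, $\sup_{|k-k_0|\le\rho}\norm{\tilde C_\zeta(k)}\to0$ as $\rho\to0$, and $\tilde C_\zeta(k)$ of simple spectrum for $k\ne k_0$. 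On each block I would keep $\tilde C_\zeta$ unchanged away from a shrinking subinterval $[k_0-\delta_j,k_0+\delta_j]$ and there replace it: using a continuous diagonalisation of $\tilde C_\zeta$ on each side of $k_0$ together with path-connectedness of the unitary groups and of the set of ordered real diagonal matrices, one deforms $\tilde C_\zeta$ (first moving its eigenvalues, with eigenvectors frozen, to those of a fixed small diagonal matrix with distinct entries, then rotating the eigenvectors across $k_0$ through unitary conjugates of that matrix) into a family whose spectrum is simple throughout $[k_0-\delta_j,k_0+\delta_j]$, in particular at $k_0$, where the degeneracy is thereby lifted. All modifications have norm $O\!\big(\sup_{|k-k_0|\le\delta_j}\norm{\tilde C_\zeta(k)}\big)\to0$, and for $\rho$ small the blocks stay mutually isolated, so after reassembling and conjugating back by $V(k)$ one obtains a continuous family $\alpha_j$ with simple spectrum for all $k$, coinciding with $\alpha$ off $[k_0-\delta_j,k_0+\delta_j]$, and converging uniformly to $\alpha$.

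The obstacle this construction is designed to overcome is that the obvious attempt for \ref{item:split3} — a constant perturbation $h(k)+t\,S$ localised by a cutoff — is only \emph{borderline} near $k_0$: the natural spread of the eigenvalues colliding at $k_0$ is, at distance of order $\rho$ from $k_0$, comparable to the splitting one must create at $k_0$, so the perturbation can reintroduce a crossing in between. The way around it is to pass to the collapsing eigencluster, via the intertwiner that freezes the relevant subspace, and there \emph{replace} rather than perturb the within-cluster part by an explicitly simple model; everything else is routine bookkeeping, and the whole argument is an adaptation of \cite[Lemma~2.18]{CorneanHerbstNenciu15}.
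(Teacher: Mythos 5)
Your argument is correct, and at its core it follows the same strategy as the paper: in \ref{item:split1}--\ref{item:split2} one adds a small constant self-adjoint splitting operator compatible with the Kramers structure, and in \ref{item:split3} one isolates the collapsing eigenvalue cluster by its Riesz projection and an intertwining unitary, then replaces the within-cluster block by an explicit interpolating family with simple spectrum matching the endpoint data. The differences are implementational. For \ref{item:split1}--\ref{item:split2} the paper builds the perturbation from a rank-one (respectively $\Theta$-invariant rank-two) decomposition of the spectral projections of $\alpha(k_0)$ and multiplies it by a cutoff $g_s$ supported near $k_0$, so that its approximant coincides with $\alpha$ outside a small interval; your global perturbation $h_j(k)=h(k)+t_j S$ with $[S,h(k_0)]=0$ does prove the statement as written (the bound $\norm{\eu^{\iu A}-\eu^{\iu B}}\le\norm{A-B}$ and the distance-to-$2\pi\Z$ argument are fine, and your $S$ in \ref{item:split2} indeed satisfies $\eps\, S=S^t\,\eps$), but it modifies $\alpha$ on all of $B_R(k_0)$, whereas the locality of the paper's modification is tacitly used when the approximants are glued with $\alpha$ outside small intervals in Propositions \ref{propo-dec-1} and \ref{propo-dec-2}; replacing $S$ by $g_s(k)\,S$ (with $g_s$ even in case \ref{item:split2}) restores this at no cost, since the splitting at $k_0$ only uses $g_s(k_0)=1$. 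For \ref{item:split3}, your three-stage replacement (interpolate the block's ordered eigenvalues to a fixed small diagonal model with distinct entries while freezing the eigenframe, rotate the frame across $k_0$ at constant model spectrum, then interpolate back), conjugated back by the intertwiner, is a sound alternative to the paper's single-step interpolation via the Kato--Nagy unitary $W(k)$, the frame path $\eu^{\iu\, x\, v_s}$ and the linear interpolation of the ordered arguments $\phi_l^{(s)}(k)$; both yield a continuous family with simple spectrum agreeing with $\alpha$ at the ends of the modified interval and uniformly close to it. One small point: the existence of the continuous self-adjoint logarithm $h_j$ of your $\alpha_j$ in \ref{item:split3} should be justified by the everywhere-simple spectrum (continuous rank-one eigenprojections together with continuous lifts of the eigenvalue arguments over the interval), rather than by contractibility of $B_R(k_0)$ alone, since the matrix exponential is not a covering map of $U(m)$.
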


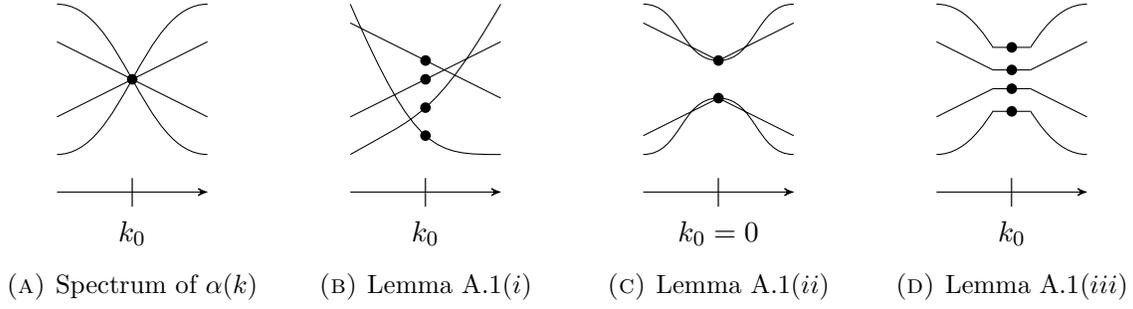
\begin{figure}[ht]
\begin{subfigure}[b]{.25\textwidth}
\centering
\begin{tikzpicture}
\draw (-1,-1) .. controls (-.125,-1) and (.125,1) .. (1,1)
      (-1,1) .. controls (-.125,1) and (.125,-1) .. (1,-1)
      (-1,.5) -- (1,-.5)
      (-1,-.5) -- (1,.5);
\fill (0,0) circle (2pt);
\draw [-stealth'] (-1,-1.5) -- (1,-1.5);
\draw (0,-1.75) node [anchor=north] {$k_0$};
\draw (0,-1.5) node {$|$};
\end{tikzpicture}
\caption{Spectrum of $\alpha(k)$}
\end{subfigure}%
\begin{subfigure}[b]{.25\textwidth}
\centering
\begin{tikzpicture}
\path [name path = line] (0,-1) -- (0,1);
\path [name path = lambda1] (-1,-1) .. controls (-.125,-.5) and (.125,-.5) .. (1,1);
\path [name path = lambda2] (-1,1) .. controls (-.125,-1) and (.125,-1) .. (1,-1);
\path [name path = lambda3] (-1,.75) -- (1,-.25);
\path [name path = lambda4] (-1,-.5) -- (1,.5);
\draw (-1,-1) .. controls (-.125,-.5) and (.125,-.5) .. (1,1)
      (-1,1) .. controls (-.125,-1) and (.125,-1) .. (1,-1)
      (-1,.75) -- (1,-.25)
      (-1,-.5) -- (1,.5);
\draw [-stealth'] (-1,-1.5) -- (1,-1.5);
\draw (0,-1.75) node [anchor=north] {$k_0$};
\draw (0,-1.5) node {$|$};
\fill [name intersections = {of = line and lambda1, by = c1}] (c1) circle (2pt);
\fill [name intersections = {of = line and lambda2, by = c2}] (c2) circle (2pt);
\fill [name intersections = {of = line and lambda3, by = c3}] (c3) circle (2pt);
\fill [name intersections = {of = line and lambda4, by = c4}] (c4) circle (2pt);
\end{tikzpicture}
\caption{Lemma \ref{lemma:splitting}\ref{item:split1}}
\end{subfigure}%
\begin{subfigure}[b]{.25\textwidth}
\centering
\begin{tikzpicture}
\draw (-1,1) .. controls (-.5,1) and (-.5,.25) .. (0,.25)
      (0,.25) .. controls (.5,.25) and (.5,1) .. (1,1)
      (-1,.75) -- (0,.25) -- (1,.75)
      (-1,-1) .. controls (-.5,-1) and (-.5,-.25) .. (0,-.25)
      (0,-.25) .. controls (.5,-.25) and (.5,-1) .. (1,-1)
      (-1,-.75) -- (0,-.25) -- (1,-.75);
\fill (0,.25) circle (2pt)
      (0,-.25) circle (2pt);
\draw [-stealth'] (-1,-1.5) -- (1,-1.5);
\draw (0,-1.75) node [anchor=north] {$k_0=0$};
\draw (0,-1.5) node {$|$};
\end{tikzpicture}
\caption{Lemma \ref{lemma:splitting}\ref{item:split2}}
\end{subfigure}%
\begin{subfigure}[b]{0.25\textwidth}
\centering
\begin{tikzpicture}
\path [name path = lambda1] (-1,-1) .. controls (-.125,-1) and (.125,1) .. (1,1);
\path [name path = lambda4] (-1,1) .. controls (-.125,1) and (.125,-1) .. (1,-1);
\path [name path = lambda3] (-1,.5) -- (1,-.5);
\path [name path = lambda2] (-1,-.5) -- (1,.5);
\path [name path = line1] (-.25,-1) -- (-.25,1);
\path [name path = line2] (.25,-1) -- (.25,1);
\draw (-1,-1) .. controls (-.125,-1) and (.125,1) .. (1,1)
      (-1,1) .. controls (-.125,1) and (.125,-1) .. (1,-1)
      (-1,.5) -- (1,-.5)
      (-1,-.5) -- (1,.5);
\fill [white] (-.25,-1) rectangle (.25,1);
\draw [name intersections = {of = lambda4 and line1, by = P41}, name intersections = {of = lambda1 and line2, by = P12}] (P41) -- (P12);
\draw [name intersections = {of = lambda3 and line1, by = P31}, name intersections = {of = lambda2 and line2, by = P22}] (P31) -- (P22);
\draw [name intersections = {of = lambda2 and line1, by = P21}, name intersections = {of = lambda3 and line2, by = P32}] (P21) -- (P32);
\draw [name intersections = {of = lambda1 and line1, by = P11}, name intersections = {of = lambda4 and line2, by = P42}] (P11) -- (P42);
\fill (P11)+(0.25,0) circle (2pt)
      (P21)+(0.25,0) circle (2pt)
      (P31)+(0.25,0) circle (2pt)
      (P41)+(0.25,0) circle (2pt);
\draw [-stealth'] (-1,-1.5) -- (1,-1.5);
\draw (0,-1.75) node [anchor=north] {$k_0$};
\draw (0,-1.5) node {$|$};
\end{tikzpicture}
\caption{Lemma \ref{lemma:splitting}\ref{item:split3}}
\end{subfigure}%
\caption{Local splitting of eigenvalues.}
\label{fig:split}
\end{figure}

\begin{proof}
By a shift $k \to k-k_0$, we can always assume $k_0 = 0$. We prove the first two statements together, assuming first time-reversal symmetry and highlighting which modifications are necessary in absence of it. The third statement requires a different proof. Figure~\ref{fig:split} illustrates the typical shape of the spectrum for the approximants we construct.

\smallskip

\noindent \textsl{Step 1: \ref{item:split1} and \ref{item:split2}.} Let us use the generic notation $\Pi$ for any of the spectral projections onto one of the eigenvalues of $\alpha(0)$; let $p$ denote the dimension of the range of $\Pi$ (\ie the degeneracy of the eigenvalue). When no time-reversal symmetry is present, the choice of any orthonormal basis for $\Pi$ gives a decomposition
\begin{equation}\label{noTRS}
\Pi=\bigoplus_{j=1}^{p} P_{j},\quad  \dim \Ran P_{j}=1,\quad P_{j}=P_{j}^*=P_{j}^2.
\end{equation}
When we have instead time-reversal symmetry, we proceed as follows. Denote by $C$ the complex conjugation with respect to the frame that makes $\eps$ into the form \eqref{eqn:eps=J}, and by $\Theta:=\eps C$; then $\Theta$ is an antiunitary operator satisfying $\Theta^2=-\Id$ in $\Ran \Pi \simeq \C^{m}$. From \eqref{hTRS} we see that we have the property
\[ \Theta \Pi =\Pi \Theta,\quad \text{or}\quad \eps \Pi= \Pi^t\eps,\quad \dim\Ran \Pi =2r. \]
We want to prove the following decomposition formula:
\begin{equation}\label{now4}
\Pi=\bigoplus_{j=1}^{r} P_{j},\quad  \dim \Ran P_{j}=2,\quad P_{j}=P_{j}^*=P_{j}^2,\quad \eps P_{j}=P_{j}^t\eps.
\end{equation}
Start by choosing an arbitrary unit vector $\mathbf{v}_1\in \Ran \Pi$. Define $\mathbf{v}_2=\Theta \mathbf{v}_1$. We have that $\Pi \mathbf{v}_2=\Pi\Theta \mathbf{v}_1=\Theta \Pi \mathbf{v}_1=\mathbf{v}_2$, hence $\mathbf{v}_2$ also belongs to $\Ran \Pi$. We also know that $\scal{\mathbf{v}_1}{\mathbf{v}_2}=0 $ and $\mathbf{v}_1=-\Theta \mathbf{v}_2$. Define 
 $$P_1=|\mathbf{v}_1\rangle \langle \mathbf{v}_1| +|\mathbf{v}_2\rangle \langle \mathbf{v}_2|.$$
Let $\mathbf{f}\in \C^{m}$. Then
\begin{align*}
\Theta P_1 \mathbf{f} &=\mathbf{v}_2 \overline{\scal{\mathbf{v}_1}{\mathbf{f}} }-\mathbf{v}_1\overline{\scal{ \mathbf{v}_2}{\mathbf{f}} }=
-\mathbf{v}_1\scal{\mathbf{f}}{\mathbf{v}_2} +\mathbf{v}_2 \scal{\mathbf{f}}{\mathbf{v}_1}=-\mathbf{v}_1\scal{ \Theta \mathbf{v}_2}{\Theta \mathbf{f}} +\mathbf{v}_2\scal{ \Theta \mathbf{v}_1}{\Theta \mathbf{f}}\\
&=\mathbf{v}_1\scal{ \mathbf{v}_1}{\Theta \mathbf{f}}+\mathbf{v}_2\scal{ \mathbf{v}_2}{\Theta \mathbf{f}}=P_1\Theta \mathbf{f}.
\end{align*}
This shows that $\Theta P_1=P_1\Theta$, or $\eps P_{1}=P_{1}^t\eps$. If $r>1$ we continue inductively. Let $\mathbf{v}_3$ be an arbitrary unit vector orthogonal to $\Ran P_1$ in $\Ran \Pi$. Define $\mathbf{v}_4=\Theta \mathbf{v}_3$. As before, we can show that $\Pi \mathbf{v}_4=\mathbf{v}_4$ and $\scal{ \mathbf{v}_3}{\mathbf{v}_4} =0$.  Moreover, using the properties of $\Theta$ we can show that $\mathbf{v}_4$ is also orthogonal on both $\mathbf{v}_1$ and $\mathbf{v}_2$. Define $P_2=|\mathbf{v}_3 \rangle \langle \mathbf{v}_3| + |\mathbf{v}_4 \rangle \langle \mathbf{v}_4|.$ The proof of $\Theta P_2=P_2\Theta$ is the same as the one for $P_1$. We now continue inductively until we exhaust the range of $\Pi$. We conclude that \eqref{now4} is proved. 

Now let us apply now \eqref{noTRS} (respectively \eqref{now4}) to each spectral projector $\Pi_j(0)$ of $\alpha(0)$. Assume that the multiplicity of the eigenvalue $\lambda_j(0)$ (\ie the dimension of $\Ran \Pi_j(0)$) equals $p_j(0)$, where $1\leq p_j(0)\leq m$. In presence of time-reversal symmetry, we have that this multiplicity is even, $p_j(0) = 2 r_j(0)$.
 
Then \eqref{noTRS} leads to
\begin{equation} \label{zumba9noTRS}
\Pi_j(0)=\bigoplus_{l_j=1}^{p_j(0)} P_{j,l_j}(0),\quad \dim\Ran P_{j,l_j}(0)=1,\quad P_{j,l_j}=P_{j,l_j}^*=P_{j,l_j}^2.
\end{equation}
Respectively \eqref{now4} leads to
\begin{align}\label{zumba9}
\Pi_j(0)=\bigoplus_{l_j=1}^{r_j(0)} P_{j,l_j}(0),\quad \dim\Ran P_{j,l_j}(0)=2,\quad P_{j,l_j}=P_{j,l_j}^*=P_{j,l_j}^2,\quad \eps P_{j,l_j}=P_{j,l_j}^t\eps.
\end{align}
Define
\[ A_j(0):=\sum_{l_j=1}^{m_j(0)} (l_j-1)\; P_{j,l_j}(0) \]
where $m_j(0) := p_j(0)$ in the non-symmetric case, and $m_j(0) := r_j(0)$ in the time-reversal symmetric case. From \ref{zumba9}, in the latter case we also have $\eps A_j(0)=A_j(0)^t\eps$. Seen as an operator acting on $\Ran\Pi_j(0)$, the spectrum of $A_j(0)$ consists of (doubly degenerate in the time-reversal symmetric case) eigenvalues given by $\{0,1,...,m_j(0)-1\}$. Of course, if $r_j(0)=1$ then $A_j(0)=0$. 

Now let $g_s\colon\R\to [0,1]$ be smooth, even, $g_s(0)=1$, and $\mathrm{supp}(g_s)\subset [-s/2,s/2]$. Define
\[ v_s(k):= s \, g_s(k) \left (\sum_{j=1}^{p_0} A_j(0) \right) \]
where $p_0$ is the total number of distinct eigenvalues of $\alpha(0)$. Then the support of $v_s$ is contained in $B_s(0)$ and obeys time-reversal symmetry. Define $\alpha_s(k)$ to be $ \eu^{\iu(h(k)+v_s(k))}$ if $k\in B_s(0)$, and let  $\alpha_s(k)=\alpha(k)$ outside $B_s(0)$. The family $\alpha_s$ obeys all the required properties, at the price of taking possibly a smaller $s$ to avoid overlapping of the eigenvalues at $0$, and converges uniformly in norm to $\alpha$ when $s\to 0$. 

\smallskip

\noindent \textsl{Step 2: \ref{item:split3}.} By assumption, the matrix $\alpha(0)$ has $p_0 \le m$ distinct eigenvalues, each having multiplicity $p_j \ge 1$. If $p_j>1$, the $j$-th eigenvalue is then associated to a \emph{cluster} of $p_j$ eigenvalues of the matrix $\alpha(k)$ for $k \ne 0$, which remain distinct due to the non-degeneracy condition and are well-separated from the others, but cross each other at $k=0$. Let $C_j$ be a simple, positively oriented contour, independent of $|k|<R$, enclosing the $j$-th cluster and no other eigenvalues. Let $\Pi_j(k)$ be the spectral projection of $\alpha(k)$ corresponding to the $j$-th cluster. We have
\begin{align}\label{zumba4}
\Pi_j(k)=\frac{\iu}{2\pi}\int_{C_j}(\alpha(k)-z)^{-1}\,\di z,\quad \eps \, \Pi_j(k)= \Pi_j(-k)^t\,\eps,\quad |k|<R.
\end{align}
These spectral projectors depend smoothly on $k$: in particular, $\lim_{k\to 0}\norm{\Pi_j(k)-\Pi_j(0)}=0$.

Let $\Pi = \Pi(k)$ denote any of the spectral projections for which $p_j > 1$, \ie which is associated to an eigenvalue cluster in which there is a crossing; if the $j$-th eigenvalue is already non-degenerate, \ie if $p_j = 1$, we do not need to operate on it. We label the eigenvalues in the cluster associated to $\Pi$ in increasing order of their principal arguments%
\footnote{If the crossing occurs at $\lambda = -1$, then one can easily adapt the following argument taking arguments for the eigenvalues in $[0, 2\pi)$.}%
: $\mathrm{Arg}(\lambda_l(k)) =: \phi_l(k) \le \phi_{l+1}(k)$ for all $l \in \set{1, \ldots, p}$ and $k$ in a possibly smaller interval $B_s(0) \subset B_R(0)$, where principal arguments of the eigenvalues of $\alpha(k)$ are well defined. We can moreover assume that $s>0$ is so small that $\phi_l(k)$ depends continuously on $k$; notice moreover that all $\phi_l(k)$ are close to a common value $\phi(0)$ (the argument of the eigenvalue at the crossing). Corresponding to this labelling, there is a splitting $\Pi(k) = \bigoplus_{l=1}^{p} P_l(k)$, with $\dim \Ran P_l = 1$, which is continuous for $k \ne 0$ (\ie the individual $P_l$'s become ill-defined at $k = 0$).
 
By taking a possibly smaller $s>0$, we may assume that
\[ \norm{\Pi(k) - \Pi(0)} < 1 \quad \text{for } |k| \le s. \]
Then the two projectors are interpolated by the Kato-Nagy unitary $W(k)$ \cite[Sec.~I.6.8]{Kato66}, \ie
\[ \Pi(k) = W(k) \, \Pi(0) \, W(k)^*, \quad W(k)^{-1} = W(k)^*. \]
Let $\set{\Psi_l(\pm s)}_{l = 1, \ldots, p}$ be orthonormal bases in $\Ran \Pi(\pm s)$, such that $P_l(\pm s) \Psi_l(\pm s) = \Psi_l(\pm s)$. Then $\set{W(\pm s)^* \, \Psi_l(\pm s)}_{l = 1 \ldots, p}$ determine two bases in $\Ran \Pi(0)$. Denote by $V_s$ the change-of-basis matrix associated to the unitary transformation of $\Ran \Pi(0)$ which maps one into the other:
\[ W(s)^* \, \Psi(s) =\left( W(-s)^* \, \Psi(-s) \right) \act V_s, \quad V_s \in U(p). \]
Write $V_s = \eu^{\iu v_s}$, with respect to any choice of the logarithm. Let also $\Phi^{(s)}(x)$ be defined by
\[ \Phi^{(s)}(x) :=  \left( W(-s)^* \, \Psi(-s) \right) \act \eu^{\iu \, x \, v_s}, \] 
\ie $\Phi^{(s)}(x)$ is the basis in $\Ran \Pi(0)$ whose coefficients in the basis $W(-s)^* \, \Psi(-s)$ are the entries of the matrix $\eu^{\iu \, x \, v_s}$. Set finally
\[ A^{(s)}(k) = \sum_{l=1}^{p} \phi_l^{(s)}(k) \, P_l^{(s)}(k), \]
where
\begin{gather*}
\phi_l^{(s)}(k) := \phi_l(-s) + \frac{k+s}{2s} \left( \phi_l(s) - \phi_l(-s) \right), \\
P_l^{(s)}(k) := W(k) \left|\Phi_l^{(s)} \left( \frac{k+s}{2s} \right) \right\rangle \left\langle \Phi_l^{(s)} \left( \frac{k+s}{2s} \right) \right| W(k)^*.
\end{gather*}
Notice that $\sum_{l=1}^{p} P_l^s(k) = \Pi(k)$ for all $k \in B_s(0)$. The operator $A^{(s)}(k)$ is then self-adjoint, and has non-degenerate spectrum for all $|k|<s$, as its eigenvalues interpolate linearly the arguments of the ones at $k=-s$ and at $k=s$ (which are assumed to be distinct and labelled in increasing order).

We apply this procedure to any of the spectral projections $\Pi_j(k)$ of $\alpha(k)$ which have $p_j > 1$. In a self-explanatory notation, set for a sufficiently small $s>0$
\[ h_s(k) := \sum_{j=1}^{p_0} A_j^{(s)}(k), \quad h_s(\pm s) = h(\pm s). \]
Define finally $\alpha_s(k) = \eu^{\iu \, h_s(k)}$ for $|k| < s$, and $\alpha_s(k) = \alpha(k)$ elsewhere. Observe that, since all the eigenvalues in a cluster are close to the common value at the crossing, $\norm{h_s(k) - h(k)} \xrightarrow[s \to 0]{} 0$ uniformly in $k$. The families $\alpha_s$ define the required continuous approximants of the family $\alpha$, which have non-degenerate spectrum in $B_R(k_0)$.
\end{proof}

\begin{lemma}[Analytic approximation] \label{lemma:ApproxAnalytic}
Let $\alpha = \set{\alpha(k)}_{k \in \R}$ be a continuous family of $m\times m$ unitary matrices satisfying \ref{item:alpha_periodic} and \ref{item:alpha_TRS}, \ie periodicity and TRS.
Then it is possible to construct a sequence $\alpha_j$ of \emph{real analytic} families of unitary matrices also satisfying the above properties, and such that
\[ \lim_{j\to\infty}\sup_{k\in \R}\norm{\alpha_j(k)-\alpha(k)}=0. \]
\end{lemma}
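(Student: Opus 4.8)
The plan is to reduce the problem, by means of the factorization in Lemma~\ref{lemma:factor}, to the much easier task of approximating a \emph{plain} continuous periodic family of unitaries by real analytic ones, which is then handled by mollification followed by unitarization via the polar decomposition.

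First I would apply Lemma~\ref{lemma:factor}: there exists a continuous, $\Z$-periodic family $\set{\gamma(k)}_{k\in\R}$ of unitary matrices with $\alpha(k)=\eps^{-1}\,\gamma(-k)^t\,\eps\,\gamma(k)$. Suppose we can produce real analytic, $\Z$-periodic families $\set{\gamma_j(k)}_{k\in\R}\subset U(m)$ with $\sup_{k\in\R}\norm{\gamma_j(k)-\gamma(k)}\to0$. Then $\alpha_j(k):=\eps^{-1}\,\gamma_j(-k)^t\,\eps\,\gamma_j(k)$ does the job: it is a product of unitary matrices (the transpose of a unitary is unitary), hence unitary; it is real analytic in $k$, since $k\mapsto\gamma_j(\pm k)$ is real analytic and entrywise transposition and multiplication by the fixed matrices $\eps^{\pm1}$ preserve real analyticity; by the converse part of Lemma~\ref{lemma:factor} it is $\Z$-periodic and satisfies \ref{item:alpha_TRS}; and, using unitarity of all matrices involved,
\[ \norm{\alpha_j(k)-\alpha(k)}\le\norm{\gamma_j(-k)-\gamma(-k)}+\norm{\gamma_j(k)-\gamma(k)}\xrightarrow[j\to\infty]{}0 \]
uniformly in $k$. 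In this way the TRS constraint is dispatched for free, and only the unconstrained periodic case remains.

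For the latter, let $F_N$ be the Fejér kernel on $\T\simeq\R/\Z$ (nonnegative, of integral $1$, a trigonometric polynomial). Since $\gamma$ is continuous and $\Z$-periodic, hence uniformly continuous, the convolution $\widetilde\gamma_N:=\gamma*F_N$ is a $\Z$-periodic, trigonometric-polynomial-valued (in particular real analytic) family with $\sup_k\norm{\widetilde\gamma_N(k)-\gamma(k)}\to0$; it is not unitary in general. For $N$ large, $\sup_k\norm{\widetilde\gamma_N(k)^*\,\widetilde\gamma_N(k)-\Id}<1$, so the positive matrix $P_N(k):=\widetilde\gamma_N(k)^*\,\widetilde\gamma_N(k)$ has spectrum contained, uniformly in $k$, in a fixed compact subset of $(0,+\infty)$. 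Fixing a positively oriented contour $\mathcal{C}$ around this set and disjoint from $(-\infty,0]$, the Riesz integral $P_N(k)^{-1/2}:=\tfrac{1}{2\pi\iu}\oint_{\mathcal{C}}z^{-1/2}\,(P_N(k)-z)^{-1}\,\di z$ defines a real analytic, $\Z$-periodic family of positive matrices (real analyticity in $k$ because for each fixed $z\in\mathcal{C}$ the integrand is, and the contour is fixed). Then $\gamma_N(k):=\widetilde\gamma_N(k)\,P_N(k)^{-1/2}$ — the unitary part of the polar decomposition of $\widetilde\gamma_N(k)$ — is real analytic, $\Z$-periodic, and unitary, and $\gamma_N\to\gamma$ uniformly, since $\norm{P_N(k)-\Id}\to0$ uniformly forces $\norm{P_N(k)^{-1/2}-\Id}\to0$ uniformly. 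Relabelling a subsequence gives the family $\set{\gamma_j}$, hence $\set{\alpha_j}$.

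The only point that is not completely routine is that taking the unitary part of the polar decomposition preserves \emph{real analyticity} in $k$; this is exactly what the holomorphic functional calculus on the fixed contour $\mathcal{C}$ provides. It is also worth stressing that the reduction through Lemma~\ref{lemma:factor} is what keeps the argument short: approximating and unitarizing $\alpha$ directly would force one to check that the polar-decomposition unitarization is compatible with the symmetry $\eps\,(\cdot)(k)=(\cdot)(-k)^t\,\eps$, which is considerably more awkward to verify by hand.
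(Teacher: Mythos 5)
Your proof is correct, but it handles the one delicate point of this lemma---compatibility of smoothing and unitarization with the TRS constraint---by a genuinely different route than the paper. The paper mollifies $\alpha$ itself with the even, strip-analytic kernel $f_\nu(k)=\nu\,\pi^{-1}(\nu^2+k^2)^{-1}$, so that $\mu_\nu:=f_\nu\ast\alpha$ already satisfies $\eps\,\mu_\nu(k)=\mu_\nu(-k)^t\,\eps$, and then verifies by a power-series argument that the unitarization $\alpha_\nu=(\mu_\nu\mu_\nu^*)^{-1/2}\mu_\nu$ preserves this relation, obtaining analyticity from the holomorphic extension of the convolution together with a resolvent/contour-integral bound. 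You instead discharge the symmetry once and for all through the factorization $\alpha(k)=\eps^{-1}\gamma(-k)^t\,\eps\,\gamma(k)$ of Lemma \ref{lemma:factor}, approximate the \emph{unconstrained} periodic family $\gamma$ by Fej\'er means followed by polar decomposition (your contour-integral argument for the real analyticity of $P_N(k)^{-1/2}$ is the same device the paper uses for $(\mu_\nu\mu_\nu^*)^{-1/2}$), and reassemble; periodicity and TRS of $\alpha_j$ then come for free from the algebraic form of the formula (equivalently, from the converse part of Lemma \ref{lemma:factor}), and your estimate $\norm{\alpha_j(k)-\alpha(k)}\le\norm{\gamma_j(-k)-\gamma(-k)}+\norm{\gamma_j(k)-\gamma(k)}$ is valid because transposition and multiplication by unitaries preserve the operator norm. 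What your route buys is exactly what you claim: no need to check that the inverse-square-root unitarization intertwines with $k\mapsto -k$ and transposition; what it costs is the appeal to the factorization lemma.

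Two remarks would make that appeal airtight. Lemma \ref{lemma:factor} is stated under Assumption \ref{assum:alpha}, which includes real analyticity \ref{item:alpha_analytic}, whereas your $\alpha$ is only continuous; you should observe that its proof (pointwise factorization at $k=0,1/2$ via Lemma \ref{lemma:factor1d}, interpolation by $\gamma(0)\,\eu^{2\iu kN}$, and reflection plus periodic extension) uses only continuity, periodicity and TRS, so it applies verbatim in your setting. Moreover, since Lemma \ref{lemma:factor} is proved independently in Section \ref{sec:matching2d} while the present lemma is only used afterwards in Propositions \ref{propo-dec-1} and \ref{propo-dec-2}, there is no circularity; with these observations your argument is complete.
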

\begin{proof}
Let $f(k)=\pi^{-1}(1+k^2)^{-1}$. If $\nu>0$ we define $f_\nu(k):=\nu^{-1}f(k/\nu)$. The function $f$ is positive for $k \in \R$, even, $f_\nu$ is an approximation of the Dirac distribution and is analytic on the strip $\{|{\rm Im}(k)|<\nu\}$. Then the family of matrices
\begin{equation}\label{zumba12}
\mu_\nu(k):=\int_\R f_\nu (k-k') \alpha(k') \, \di k'
\end{equation}
is $\Z$-periodic and real analytic%
\footnote{%
Due to periodicity, it is enough to find an analytic extension to a bounded open neighbourhood $O$ of $[0,1]$ of the form
\[ O := \set{z = k + \iu h \in \C : |k|<2, \: |h|< \nu' < \nu}. \]
The existence of such an extension in $O$ then follows from Fubini's theorem and the Cauchy integral formula for $f_\nu$, in view of the estimate 
\[ |f_\nu(z-k')| \le \frac{\nu}{\pi} \, \frac{1}{\nu^2 - (\nu')^2 + (k-k')^2}, \quad z = k + \iu h \in O, \: k' \in \R. \]
}%
. {\it A priori} $\mu_\nu(k)$ is neither self-adjoint nor unitary, but it obeys  $\eps \, \mu_\nu(k)=\mu_\nu(-k)^t \, \eps$. Also, $\mu_\nu$ converges uniformly to $\alpha$ when $\nu\to 0$, which implies
\begin{equation}\label{zumba13}
\lim_{\nu\to 0}\sup_{k\in\R}\norm{\mu_\nu(k)\mu_\nu(k)^*-{\Id}}=0.
\end{equation}
If $\nu$ is small enough, the operator $\mu_\nu(k)\mu_\nu(k)^*$ is self-adjoint and positive, hence we can define
\begin{equation}\label{zumba14}
\alpha_\nu(k):=\{\mu_\nu(k)\mu_\nu(k)^*\}^{-1/2} \mu_\nu(k),\quad \alpha_\nu(k)\alpha_\nu(k)^*=\Id.
\end{equation}
Clearly, $\alpha_\nu$ is a continuous $\Z$-periodic family of unitary matrices, and converges uniformly to $\alpha$ when $\nu\to 0$. We still need to prove that $\alpha_\nu$ satisfies TRS and is real analytic.

If $\nu$ is small enough we can write
\[ \alpha_\nu(k)=\frac{\iu}{2\pi} \int_{|z-1|=1/2} z^{-1/2}(\mu_\nu(k)\mu_\nu(k)^*-z)^{-1}\mu_\nu(k)\,\di z \]
Now the family of matrices
$$\gamma_\nu(k):=\mu_\nu(k)\mu_\nu(k)^*=\int_{\R^2} f_\nu (k-k')f_\nu (k-k'') \alpha(k')\alpha(k'')^* \, \di k' \di k''$$
has a holomorphic extension to the strip $\{|{\rm Im}(k)|<\nu\}$. If $\nu$ is small enough, due to \eqref{zumba13} it follows that $(\mu_\nu(k)\mu_\nu(k)^*-z)^{-1}$ exists for all $|z-1|=1/2$ and is uniformly bounded on $\{|{\rm Im}(k)|<\eta\}$ with $0<\eta\ll \nu$. Together with \eqref{zumba12} this proves that $\alpha_\nu$ has a holomorphic extension to $\{|{\rm Im}(k)|<\eta\}$.

The last thing we need to prove is $\eps \, \alpha_\nu(k)=\alpha_\nu(-k)^t \, \eps$, namely TRS. From \eqref{zumba14} we see that if $\nu$ is small enough one can write
$$ \alpha_\nu(k)=\sum_{l\geq 0}a_l [\mu_\nu(k)\mu_\nu(k)^*-\Id]^l\mu_\nu(k),\quad (1+x)^{-1/2}=\sum_{l\geq 0} a_lx^l.$$
We have already seen that $\eps \, \mu_\nu(k)=\mu_\nu(-k)^t \, \eps$ and this leads to $\eps \, \mu_\nu(k)^*= \overline{\mu_\nu(-k)} \, \eps$. Then
\begin{align*}
\eps \, [\mu_\nu(k)\mu_\nu(k)^*-\Id]^l\mu_\nu(k)&=[\mu_\nu(-k)^t\overline{\mu_\nu(-k)}-\Id]^l \mu_\nu(-k)^t \, \eps \\
& = \mu_\nu(-k)^t[\overline{\mu_\nu(-k)}\mu_\nu(-k)^t-\Id]^l \, \eps
\end{align*}
where the second identity can be proved by induction with respect to $l$. The proof of the Lemma is over.
\end{proof}

After the above auxiliary Lemmas, we come to the main technical results of the Appendix.

\begin{proposition}\label{propo-dec-1}
Let $\alpha$ be a continuous family of $m\times m$ matrices, $m=2n$, satisfying \ref{item:alpha_periodic} and \ref{item:alpha_TRS}. Then it is possible to construct a sequence $\alpha_j$ of families of matrices for which the following hold:
\begin{itemize}
\item they depend \emph{continuously} on $k$, and are \emph{real analytic} in small balls of fixed ($j$-dependent) radius around every half-integer;
\item they satisfy \ref{item:alpha_periodic} and \ref{item:alpha_TRS};
\item they have \emph{doubly} degenerate eigenvalues at half-integer values of $k$, and \emph{non-degenerate} eigenvalues away from half-integer values of $k$;
\item they approximate $\alpha$ in the sense that
\[ \lim_{j\to\infty}\sup_{k\in \R}\norm{\alpha_j(k)-\alpha(k)}=0. \] 
\end{itemize}
\end{proposition}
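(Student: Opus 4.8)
First I would set up the construction as a succession of stages, each repairing one required feature without spoiling the ones already in place; since $\alpha$ is $\Z$-periodic, only finitely many local modifications (near $0$, near $1/2$, and on a compact part of $(0,1/2)$) have to be coordinated. As Stage~1, I would invoke Lemma~\ref{lemma:ApproxAnalytic} to replace $\alpha$ by a real-analytic, $\Z$-periodic, TRS family $\widehat\alpha$ with $\sup_k\norm{\widehat\alpha(k)-\alpha(k)}$ arbitrarily small, and work with $\widehat\alpha$ from then on; every later perturbation will carry a parameter tending to $0$, so the final $\alpha_j$ stays uniformly close to $\alpha$.

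Stage~2 produces exactly double degeneracies at the half-integers. Every half-integer is a $\Z$-translate of $k_*\in\set{0,1/2}$; fix one. By \ref{item:alpha_periodic} and \ref{item:alpha_TRS}, $\eps\,\widehat\alpha(k_*)=\widehat\alpha(k_*)^t\,\eps$, so $\widehat\alpha(k_*)$ has Kramers-even degeneracies (Lemma~\ref{lemma:Kramers}). On a small ball $B_\rho(k_*)$ a branch cut survives in the resolvent set of $\widehat\alpha$, so Proposition~\ref{prop:Cayley} writes $\widehat\alpha=\eu^{\iu\,\widehat h}$ there with $\widehat h$ real-analytic, self-adjoint and TRS. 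Using the $2$-dimensional TRS splitting \eqref{now4} of the eigenprojectors of $\widehat\alpha(k_*)$, I would form the constant self-adjoint TRS matrix $A_*:=\sum_{j,l}(l-1)\,P_{j,l}(k_*)$ as in the proof of Lemma~\ref{lemma:splitting}(\ref{item:split2}), pick a cutoff $\chi$ even about $k_*$, identically $1$ on $B_{\rho/2}(k_*)$ and supported in $B_\rho(k_*)$, and replace $\widehat h(k)$ by
\[ \widehat h(k) + c\,\chi(k)\,A_* + c'\,\chi(k)\,(k-k_*)\,M, \]
with $M$ a generic constant \emph{anti-TRS} self-adjoint matrix ($\eps\,M=-M^t\,\eps$), and exponentiate; the family coincides with $\widehat\alpha$ outside $B_\rho(k_*)$, so it extends $\Z$-periodically. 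The term $c\,A_*$ lifts every degeneracy at $k_*$ to an exactly double one (generic small $c$), while $c'(k-k_*)M$ is compatible with the reflection $k\mapsto 2k_*-k$ precisely because $M$ is anti-TRS — the sign change under the reflection matching the transpose — and separates each Kramers pair immediately off $k_*$. Because $\chi\equiv1$ on $B_{\rho/2}(k_*)$ the modified family is real-analytic there, is TRS on $\R$, and (shrinking $\rho$) has double spectrum at $k_*$ and simple spectrum on $B_{\rho/2}(k_*)\setminus\set{k_*}$; call it $\alpha^{(1)}$, noting $\eps\,\alpha^{(1)}(k_*)=\alpha^{(1)}(k_*)^t\,\eps$ is preserved.

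Stage~3 makes the spectrum simple away from the half-integers and restores the global symmetry. The open interval $(0,1/2)$ is a fundamental domain for the group generated by integer shifts and $\kk\mapsto-\kk$ and carries no TRS constraint, so I would cover $[\rho/2,1/2-\rho/2]$ by finitely many intervals and, on each, either do nothing (if $\alpha^{(1)}$ already has simple spectrum) or remove isolated degeneracies with Lemma~\ref{lemma:splitting}(\ref{item:split1}) and Lemma~\ref{lemma:splitting}(\ref{item:split3}), all perturbations supported inside $(0,1/2)$ and small enough not to disturb Stage~2. This gives a continuous family on $[0,1/2]$, real-analytic near $0$ and $1/2$, with double spectrum at the endpoints and simple spectrum on $(0,1/2)$; I then extend it by $\alpha^{(2)}(k):=\eps^{-1}\,\alpha^{(2)}(1-k)^t\,\eps$ on $[1/2,1]$ and $\Z$-periodically to $\R$. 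Continuity at $k=1/2$ and $k=1\equiv0$ is exactly the identity $\eps\,\alpha^{(2)}(k_*)=\alpha^{(2)}(k_*)^t\,\eps$ from Stage~2, and since $\alpha^{(2)}(k)$ and $\alpha^{(2)}(1-k)$ have the same spectrum, the degeneracy pattern and the analyticity near half-integers propagate to all of $\R$. Choosing all parameters small enough yields $\sup_k\norm{\alpha_j(k)-\alpha(k)}\to0$, since each correction is parameter-controlled and the spectral conditions, once achieved, are open.

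The main obstacle is Stage~2: at the time-reversal invariant point $k_*$ the symmetry forbids splitting a Kramers pair, and among TRS-preserving perturbations only scalar ones are available there, so a naive splitting either breaks the symmetry or fails to lift the degeneracy off $k_*$. The point to exploit is that the only constraint inside $B_\rho(k_*)$ is the reflection $k\mapsto 2k_*-k$, so a perturbation linear in $k-k_*$ times an anti-TRS constant matrix is admissible and does separate the eigenvalue branches for $k\neq k_*$; here one uses that on a $2$-dimensional Kramers block the TRS self-adjoint matrices are scalar while the anti-TRS ones span a $3$-dimensional space, so a suitable $M$ exists, and that by Rellich's theorem a real-analytic self-adjoint family of a real variable has real-analytic eigenvalues, which then meet transversally at $k_*$. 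Everything else is a bookkeeping combination of the three parts of Lemma~\ref{lemma:splitting} with the symmetric-extension device used repeatedly in Sections~\ref{sec:d=d}--\ref{sec:d=2}.
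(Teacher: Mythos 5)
Your overall architecture (analytic smoothing via Lemma~\ref{lemma:ApproxAnalytic}, local modifications through logarithms obtained from the Cayley transform, and a final TRS reflection plus periodic extension) matches the paper's, and your Stage~2 is a genuinely different but viable device at the time-reversal invariant points: the perturbation $c\,\chi(k)A_*+c'\chi(k)(k-k_*)M$ with $\eps M=-M^t\eps$ is indeed compatible with the reflection constraint (since $\chi$ is even), and degenerate perturbation theory on each $\Theta$-invariant two-dimensional block shows that, for generic $M$ and $c'$, the compression of $\widehat h'(k_*)+c'M$ to each Kramers block is a nonzero traceless self-adjoint matrix, so the two branches separate at first order and the spectrum is simple in a small punctured neighbourhood of $k_*$. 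The paper achieves the analogous goal differently, by splitting the spectrum at an interior point $k=a$ and invoking the Analytic Rellich theorem to exclude branches that stay doubly degenerate on all of $[0,1/2]$.

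The genuine gap is in Stage~3. After Stage~2 your family is real analytic only on $B_{\rho/2}(k_*)$ and outside $B_\rho(k_*)$; on the annuli $B_\rho(k_*)\setminus B_{\rho/2}(k_*)$ the cutoff makes it merely smooth. Hence there is no reason for the degeneracies of $\alpha^{(1)}$ in $[\rho/2,1/2-\rho/2]$ to be \emph{isolated}, let alone finite in number: for a non-analytic family two eigenvalue branches may coincide on a whole subinterval or touch on an infinite set, and the simplicity gained near $k_*$ does not propagate into the bulk analytic region by analytic continuation, because the non-analytic annulus separates the two. Lemma~\ref{lemma:splitting}\ref{item:split1} only makes the spectrum simple at a single point, and Lemma~\ref{lemma:splitting}\ref{item:split3} explicitly requires the crossing to be isolated with simple spectrum around it, so the phrase ``remove isolated degeneracies'' presupposes exactly what must be established. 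The missing ingredient is the paper's re-analytification step: after the cutoff modifications one applies Lemma~\ref{lemma:ApproxAnalytic} again, notes that exact double degeneracy at the half-integers survives a small TRS perturbation --- Kramers degeneracy (Lemma~\ref{lemma:Kramers}) forces even multiplicities, while uniform closeness keeps the size-two spectral clusters separated, so multiplicities cannot exceed two --- and that simplicity at chosen interior points is likewise an open condition; the Analytic Rellich theorem then yields analytic branches, hence finitely many interior crossings, to which Lemma~\ref{lemma:splitting}\ref{item:split3} applies. Without this re-analytification, and without the accompanying openness argument at the TRIMs (which you never make), Stage~3 does not go through as stated.
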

\begin{proof} 
\noindent \textsl{Step 0.} Using Lemma \ref{lemma:ApproxAnalytic}, we may assume without loss of generality that $\alpha$ is actually a real analytic family, and thus satisfies Assumption \ref{assum:alpha}.

\noindent  \textsl{Step 1.} We begin by fixing the spectrum at $0$ and $1/2$, and by periodicity at all half-integer values of $k$. 

Assume that the spectrum of $\alpha(0)$ consists of $1\leq p_0\leq  n$ even degenerate eigenvalues labeled as $\{\lambda_1(0),\ldots,\lambda_{p_0}(0)\}$ in the increasing order of their principal arguments. The even degeneracy is due to Kramers degeneracy, compare Lemma \ref{lemma:Kramers}. If $s>0$ is sufficiently small and $|k|<s$, due to the continuity of $k \mapsto \alpha(k)$ we know that the spectrum of $\alpha(k)$ will also consist of well separated clusters of eigenvalues. Let $\Pi_j(k)$ be the spectral projection of $\alpha(k)$ corresponding to the $j$-th cluster, as in \eqref{zumba4}. The matrix $\alpha(k)$ is block diagonal with respect to the decomposition $\mathbb{C}^{m}=\bigoplus_{j=1}^{p_0} \Pi_j(k)\mathbb{C}^{m}$, \ie $\alpha(k)=\sum_{j=1}^{p_0} \Pi_j(k)\alpha(k)\Pi_j(k)$, $|k|<s$. 

Define $$\widetilde{\alpha}(k):=\sum_{j=1}^{p_0} \lambda_j(0)\Pi_j(k)=\eu^{\iu\sum_{j=1}^{p_0} {\rm Arg}(\lambda_j(0))\Pi_j(k)},\quad |k|<s.$$
The matrix $\widetilde{\alpha}(k)$ is unitary, commutes with $\alpha(k)$, and $\eps \, \widetilde{\alpha}(k) = \widetilde{\alpha}(-k)^t \,\eps$ if $|k|<s$. Define $\gamma(k):=\widetilde{\alpha}^{-1}(k)\alpha(k)$; we have that $\gamma(k)$ is unitary, commutes with $\alpha(k)$,  $\eps \, \gamma(k) = \gamma(-k)^t \, \eps$ if $|k|<s$ and $\lim_{k\to 0}\gamma(k)=\Id$. In particular, $-1$ is never in the spectrum of $\gamma(k)$. Going through the Cayley transform (Proposition \ref{prop:Cayley}) we can find a self-adjoint matrix $\widetilde{h}(k)$ such that 
\[ \gamma(k)=\eu^{\iu \widetilde{h}(k)},\quad \widetilde{h}(0)=0,\quad \eps \, \widetilde{h}(k) = \widetilde{h}(-k)^t \, \eps,\quad [\Pi_j(k),\widetilde{h}(k)]=0,\quad |k|<s. \]
Using that $\alpha(k)=\alpha(k+r)$ for all $r\in\Z$ we obtain
\begin{align}\label{zumba6}
\alpha(k+r)=\eu^{\iu\left(\sum_{j=1}^{p_0} {\rm Arg}(\lambda_j(0))\Pi_j(k) +\widetilde{h}(k)\right)},\quad |k|<s,\quad \forall r\in\Z.
\end{align}
We proceed similarly for $k\in\{|k\pm 1/2|<s\}$. Remember that due to the periodicity of $\alpha$, the spectrum will also have Kramers degeneracy at $k=\pm 1/2$. Furthermore, there will be the same number of distinct eigenvalues at $k=+ 1/2$ and $k=-1/2$, and we can choose the same ordering for them, \ie $\lambda_j(1/2)=\lambda_j(-1/2)$ for all $1\leq j\leq p_{1/2}\leq n$; note that there is no connection with the numbering at $k=0$. From the Riesz integral formula, choosing a contour $\tilde{C}_j$ around each $\lambda_j(1/2)=\lambda_j(-1/2)$ and using that $\alpha(k)=\alpha(k+1)$ we obtain $\Pi_j(-1/2)=\Pi_j(1/2)$ and moreover
\[ \eps \Pi_j(k)=\Pi_j(-k)^t\eps,\quad \eps \Pi_j(-1/2)=\eps \Pi_j(1/2) = \Pi_j(1/2)^t\eps,\qquad  |k\pm 1/2|<s. \]
As in \eqref{zumba6} we get
\begin{align}\label{zumba8}
\alpha(k+r)=\eu^{\iu\left(\sum_{j=1}^{p_{1/2}} {\rm Arg}(\lambda_j(1/2))\Pi_j(k) +\widetilde{h}(k)\right)},\quad |k\pm 1/2|<s,\quad \forall r\in\Z.
\end{align}

From \eqref{zumba6} and \eqref{zumba8} we see that $\alpha(k)=\eu^{\iu h(k)}$ on the set
$$\Omega_s:=\bigcup_{r\in\Z} \{|k-r|<s\}\cup \{|k-r-1/2|<s\}.$$
We see that $\Omega_s$ is symmetric with respect to the origin and consists of a union of disjoint open intervals centred around all integers and half-integers. We apply the local splitting Lemma \ref{lemma:splitting}\ref{item:split2} in each of these intervals, obtaining an approximant $\widehat{\alpha}_s$ with the required spectral degeneracies. Define $\alpha_s(k)$ to be $\widehat{\alpha}_s(k)$ if $k\in\Omega_s$, and let  $\alpha_s(k)=\alpha(k)$ outside $\Omega_s$. The family $\alpha_s$ obeys all the required properties and converges uniformly in norm to $\alpha$ when $s\to 0$. 

\smallskip

\noindent  \textsl{Step 2.} Now we will show how we can make $\alpha_s$ to also have non-degenerate spectrum in the interval $(0,1/2)$. In order to simplify notation, we drop the subscript $s$ and we assume that $\alpha$ is a $\Z$-periodic TRS family of unitary matrices and has doubly degenerate spectrum at $0$ and $1/2$ (hence at all integers and half-integers). By applying the analytic approximation Lemma \ref{lemma:ApproxAnalytic}, we may moreover assume that $\alpha(k)$ depends analytically on $k$.

By the Analytic Rellich Theorem (compare \cite[Sec.~2.7.4]{CorneanHerbstNenciu15}), one can always choose the eigenvalues of $\alpha(k)$ to also depend analytically on $k$. These eigenvalues can cross, but by analyticity they either cross at isolated points, or they coincide throughout $[0,1/2]$. In particular, there might be eigenvalues which stay doubly degenerate after stemming from one of the doubly degenerate eigenvalues at $k=0$.

We first lift this degeneracy, as follows. Let $a>0$ be such that all eigenvalue clusters of $\alpha(k)$ are well-separated for $k \in [0,2a]$. In particular, we may assume that for $k \in [0,2a]$ there exists a $\phi_0$ such that $-1$ is always in the resolvent of $\alpha(k) \, \eu^{\iu (\phi_0 + \pi)}$. By going through the Cayley transform (Proposition \ref{prop:Cayley}), we can then write $\alpha(k) = \eu^{\iu h(k)}$ in this interval. We apply the local splitting Lemma \ref{lemma:splitting}\ref{item:split1} to separate all eigenvalues at $k=a$, and then the analytic approximation argument of Lemma \ref{lemma:ApproxAnalytic}. The resulting approximating family $\widetilde{\alpha}$ will have non-degenerate spectrum at $k=a$, and hence cannot have eigenvalues which stay doubly degenerate throughout $[0,1/2]$. 

By compactness of the interval $[0, 1/2]$ and analyticity of the eigenvalues of $\widetilde{\alpha}(k)$, there can only by a finite number of points $\set{k_1, \ldots, k_N} \subset (0,1/2)$ where eigenvalues intersect. Let $s>0$ be so small that the intervals $B_s(k_i) = [k_i - s, k_i+s]$, $i \in \set{1, \ldots, N}$, do not overlap. We also assume that $k_1 -s > 0$ and $k_N + s < 1/2$. Applying the local splitting Lemma \ref{lemma:splitting}\ref{item:split3} in each of these intervals, we obtain there a smooth approximant $\widetilde{\alpha}_s(k)$ for $k \in B:= \bigcup_{i=1}^{N} B_s(k_i)$ with the desired degeneracies. Set finally
\[ \alpha_s(k) := \begin{cases} 
\widetilde{\alpha}_s(k) & \text{if } k \in B, \\
\alpha(k) & \text{if } k \in [0,1/2] \setminus B,
\end{cases}  \quad k \in [0,1/2]. \]
Extend the definition of $\alpha_s(k)$ for $k \in [-1/2,0]$ by setting $\alpha_s(k) := (\eps \, \alpha_s(-k) \, \eps^{-1})^t$, and then to the whole $\R$ by periodicity, namely $\alpha_s(k+r) := \alpha_s(k)$ for all $k \in [-1/2, 1/2]$ and $r \in \Z$. Then $\alpha_s$ satisfies all the required properties. This concludes the proof of the Proposition.
\end{proof}

\begin{proposition}\label{propo-dec-2}
Let $\alpha$ be a continuous family of $m\times m$ matrices, $m=2n$, satisfying \ref{item:alpha_periodic} and \ref{item:alpha_TRS}. Then it is possible to construct a sequence $\alpha_j$ of families of unitary matrices for which the following hold:
\begin{itemize}
\item they depend \emph{continuously} on $k$, and are \emph{real analytic} in small balls of fixed ($j$-dependent) radius around every half-integer;
\item they have \emph{non-degenerate} eigenvalues for all $k \in \R$, and the spectrum is \emph{symmetric} with respect to the exchange $k \to -k$;
\item they approximate $\alpha$ in the sense that
\[ \lim_{j\to\infty}\sup_{k\in \R}\norm{\alpha_j(k)-\alpha(k)}=0. \] 
\end{itemize}
\end{proposition}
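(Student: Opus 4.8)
The plan is to feed the output of Proposition~\ref{propo-dec-1} into a further degeneracy‑lifting step localized near $\tfrac12\Z$. So first I would invoke Proposition~\ref{propo-dec-1} to reduce to the case in which $\alpha$ already satisfies \ref{item:alpha_periodic}, \ref{item:alpha_TRS}, is real analytic in a fixed‑radius neighbourhood of every point of $\tfrac12\Z$, is non‑degenerate away from $\tfrac12\Z$, and has exactly doubly degenerate (Kramers, cf.\ Lemma~\ref{lemma:Kramers}) spectrum at each point of $\tfrac12\Z$. It then remains to lift those double degeneracies by a modification supported in small neighbourhoods of $\tfrac12\Z$, which is \emph{even} about each such point, stays analytic there, and keeps the family close to $\alpha$; by $\Z$‑periodicity I would only carry this out near $k_\ast=0$ and $k_\ast=1/2$, the even‑ness about $k_\ast$ together with periodicity then forcing the spectrum of the modified family to be symmetric under $k\to-k$ globally (outside the chosen neighbourhoods the modified family equals $\alpha$, which already has this property by \ref{item:alpha_TRS}).

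For the local step near a fixed $k_\ast$, I would choose $s>0$ so small that on $B_s(k_\ast)$ the spectrum of $\alpha(k)$ splits into $n$ disjoint $2$‑dimensional clusters, mutually separated by a fixed gap $g_0>0$ and from a fixed branch cut; let $\Pi_j(k)$ be the associated (real analytic, since $\alpha$ is analytic there) Riesz projections, and write the restriction of $\alpha(k)$ to $\Ran\Pi_j(k)$ as $\eu^{\iu H_j(k)}$ with $H_j$ self‑adjoint real analytic and its two eigenvalue‑arguments close to the common value at $k_\ast$. Setting $\psi_j(k):=\tfrac12\tr H_j(k)$ and $\widetilde H_j(k):=H_j(k)-\psi_j(k)\Pi_j(k)$, one has $\widetilde H_j(k)^2=\delta_j(k)^2\Pi_j(k)$ with $\delta_j^2=\tfrac12\tr\widetilde H_j^2$ real analytic; since \ref{item:alpha_TRS} and periodicity make the unordered eigenvalue pair of the $j$‑th cluster symmetric about $k_\ast$, both $\psi_j$ and $\delta_j^2$ are even about $k_\ast$, and $\delta_j$ vanishes only at $k_\ast$. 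The heart of the construction is to produce an analytic direction transverse to $\widetilde H_j$ that does not itself vanish at $k_\ast$: conjugating $\widetilde H_j(k)$ by the Kato--Nagy unitary $W_j(k)$ interpolating $\Pi_j(k_\ast)$ and $\Pi_j(k)$ \cite[Sec.~I.6.8]{Kato66}, I would apply the Analytic Rellich Theorem \cite[Sec.~2.7.4]{CorneanHerbstNenciu15} to obtain a real analytic orthonormal eigenbasis $u_j^\pm(k)$ \emph{through} the crossing and set $J_j(k):=W_j(k)\big(\ket{u_j^+(k)}\bra{u_j^-(k)}+\text{h.c.}\big)W_j(k)^\ast$, which is real analytic, self‑adjoint and traceless on $\Ran\Pi_j(k)$, satisfies $J_j(k)^2=\Pi_j(k)$ (so $J_j(k)\neq 0$ for every $k$, in particular at $k_\ast$) and $\{J_j(k),\widetilde H_j(k)\}=0$.

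Then, with a smooth even cut‑off $g$ that is $\equiv 1$ on $B_r(k_\ast)$ and supported in $B_s(k_\ast)$ and with $0<\epsilon<g_0/4$, I would define $\widehat H_j(k):=\psi_j(k)\Pi_j(k)+\widetilde H_j(k)+\epsilon\,g(k)^2 J_j(k)$ and $\widehat\alpha(k):=\sum_{j}\eu^{\iu\widehat H_j(k)}$ on $B_s(k_\ast)$, glued to $\alpha$ outside and extended periodically. Because $J_j$ anticommutes with $\widetilde H_j$ and $\tr J_j^2\equiv 2$, the eigenvalues of $\widehat H_j(k)$ are $\psi_j(k)\pm\sqrt{\delta_j(k)^2+\epsilon^2 g(k)^4}$; one then checks routinely that $\widehat\alpha$ is continuous, equals $\alpha$ off the chosen neighbourhoods, is real analytic near every half‑integer (the radicand is $\geq\epsilon^2>0$ where $g\equiv 1$), has non‑degenerate spectrum everywhere (the radicand is $\geq\epsilon^2 g(k)^4$, $\delta_j\neq 0$ once $g(k)<1$, and $\epsilon<g_0/4$ prevents inter‑cluster collisions), has spectrum even about $k_\ast$ and hence symmetric under $k\to-k$ globally, and satisfies $\|\widehat\alpha-\alpha\|_\infty\leq\epsilon$. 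Running the Proposition~\ref{propo-dec-1} reduction and the parameter $\epsilon$ along sequences tending to $0$ gives the desired $\alpha_j$. I expect the construction of $J_j$ to be the main difficulty: an even, TRS‑respecting perturbation of $\widetilde H_j$ cannot break the Kramers degeneracy (Lemma~\ref{lemma:Kramers}), so one must sacrifice \ref{item:alpha_TRS} while retaining the weaker $k\to-k$ symmetry, and every transverse operator built \emph{out of} $\widetilde H_j$ degenerates at $k_\ast$; the way out is that the Analytic Rellich Theorem keeps the eigenvectors analytic across the crossing, which is exactly what makes $J_j$ both analytic and nowhere zero.
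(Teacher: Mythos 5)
Your construction is correct, but it takes a genuinely different route from the paper's. The paper, after reducing (exactly as you do) to the output of Proposition~\ref{propo-dec-1}, simply re-applies the local splitting Lemma~\ref{lemma:splitting}\ref{item:split3} on small symmetric intervals around each half-integer: there the approximant's eigenvalue arguments are obtained by linear interpolation between their values at $k_*\pm s$, and since by \ref{item:alpha_TRS} and periodicity these endpoint values coincide, the interpolation is constant, so the Kramers degeneracy at $k_*$ is lifted while the evenness of the spectrum about $k_*$ (hence the global $k\to-k$ symmetry) is preserved for free. You instead build an explicit gap-opening perturbation inside each two-dimensional cluster: the analytic, nowhere-vanishing, anticommuting $J_j$ obtained from the Kato--Nagy unitary plus an analytic Rellich eigenbasis through the crossing, yielding the perturbed eigenvalues $\psi_j\pm\sqrt{\delta_j(k)^2+\epsilon^2 g(k)^4}$ and the quantitative bound $\norm{\widehat\alpha(k)-\alpha(k)}\le\epsilon$. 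What your route buys is an explicit formula for the perturbed spectrum, a perturbation supported strictly inside the chosen neighbourhoods (so the family is untouched elsewhere, and analyticity near half-integers is manifest where $g\equiv 1$); what the paper's route buys is brevity, since the interpolation machinery is already packaged in Lemma~\ref{lemma:splitting}\ref{item:split3}, and your closing diagnosis is exactly right: TRS must be sacrificed, only the set-symmetry of the spectrum survives. Two small fixes: the evenness of $\psi_j$ and $\delta_j^2$ about $k_*$ should be justified by noting that the reflected spectrum is the same set and is cut into the same clusters by the fixed contours defining $\Pi_j$, so each unordered cluster pair is even; and your formula $\widehat\alpha(k)=\sum_j \eu^{\iu\widehat H_j(k)}$ should read $\widehat\alpha(k)=\sum_j\Pi_j(k)\,\eu^{\iu\widehat H_j(k)}\,\Pi_j(k)=\eu^{\iu\sum_j\widehat H_j(k)}$, since each unrestricted exponential acts as the identity off its block and the plain sum is not unitary.
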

\begin{proof}
Arguing as in the proof of the above Proposition \ref{propo-dec-1}, we obtain a family $\alpha_j(k)$ which satisfies all the required properties, but has (doubly) degenerate spectrum at half-integer values of $k \in \R$. Applying Lemma \ref{lemma:splitting}\ref{item:split3} to small symmetric intervals of width $2s>0$ around such values, we can open gaps between these eigenvalues. Since the procedure of Lemma \ref{lemma:splitting}\ref{item:split3} interpolates linearly the arguments of the eigenvalues between $n/2 - s$ and $n/2 + s$, $n \in \Z$, the symmetry of the spectrum of $\alpha_j$, which is implied by TRS, is preserved, since the linear interpolation is then constant.
\end{proof}


\goodbreak


\bigskip \bigskip

{\footnotesize

\begin{tabular}{rl}
(H.D. Cornean) & \textsc{Department of Mathematical Sciences, Aalborg University} \\
 &  Fredrik Bajers Vej 7G, 9220 Aalborg, Denmark \\
 &  \textsl{E-mail address}: \href{mailto:cornean@math.aau.dk}{\texttt{cornean@math.aau.dk}} \\
 \\
(D. Monaco) & \textsc{Fachbereich Mathematik, Eberhard Karls Universit\"{a}t T\"{u}bingen} \\
 &  Auf der Morgenstelle 10, 72076 T\"{u}bingen, Germany \\
 &  \textsl{E-mail address}: \href{mailto:domenico.monaco@uni-tuebingen.de}{\texttt{domenico.monaco@uni-tuebingen.de}} \\
 \\
(S. Teufel) & \textsc{Fachbereich Mathematik, Eberhard Karls Universit\"{a}t T\"{u}bingen} \\
 &  Auf der Morgenstelle 10, 72076 T\"{u}bingen, Germany \\
 &  \textsl{E-mail address}: \href{mailto:stefan.teufel@uni-tuebingen.de}{\texttt{stefan.teufel@uni-tuebingen.de}} \\
\end{tabular}

}


\begin{thebibliography}{OOOO}

\bibitem[AZ]{AltlandZirnbauer97}
\textsc{Altland, A.; Zirnbauer, M.} : Non-standard symmetry classes in mesoscopic normal-superconducting hybrid structures, {\it Phys. Rev. B} {\bf 55} (1997), 1142--1161.

\bibitem[An]{Ando13} 
\textsc{Ando, Y.} : Topological insulator materials, {\it J. Phys. Soc. Jpn.} {\bf 82} (2013), 102001.

\bibitem[ASV]{AvilaSchulz-BaldesVillegas-Blas12}
\textsc{Avila, J.C.; Schulz-Baldes, H.; Villegas-Blas, C.} : Topological invariants of edge states for periodic two-dimensional models, {\it Mathematical Physics, Analysis and Geometry} {\bf 16} (2013), 136--170.

\bibitem[BHZ]{BernevigHughesZhang06}
\textsc{Bernevig, B.A.; Hughes, T.L.; Zhang, Sh.-Ch.} : Quantum Spin Hall Effect and Topological Phase Transition in HgTe Quantum Wells, {\it Science} {\bf 15} (2006), 1757--1761.

\bibitem[Be]{Berry84}
\textsc{Berry, M.V.} : Quantal phase factors accompanying adiabatic changes, {\it Proc. R. Lond.} {\bf A392} (1984), 45--57.

\bibitem[BCR]{BourneCareyRennie15}
\textsc{Bourne, C.; Carey, A.L.; Rennie, A.} : A noncommutative framework for topological insulators, {\it Rev. Math. Phys.} {\bf 28} (2016), 1650004.

\bibitem[CLPS]{CancesLevittPanatiStoltz16}
\textsc{Canc\'es, \`E.; Levitt, A.; Panati, G.; Stoltz, G.} : Robust determination of maximally-localized Wannier functions, preprint available at \href{http://arxiv.org/abs/1605.07201}{\texttt{arXiv:1605.07201}} (2016).

\bibitem[CHN]{CorneanHerbstNenciu15}
\textsc{Cornean, H.D.; Herbst, I.; Nenciu, G.} : On the construction of composite Wannier functions, {\it Ann. Henri Poincar\'{e}}  {\bf 17} (2016),  3361-–3398.

\bibitem[CM]{CorneanMonaco}
\textsc{Cornean, H.D.; Monaco, D.}, in preparation.

\bibitem[DG]{DeNittisGomi15}
\textsc{De Nittis, G.; Gomi, K.} : Classification of ``Quaternionic'' Bloch bundles, {\it Commun. Math. Phys.} {\bf 339} (2015), 1--55.

\bibitem[DNF]{DubrovinNovikovFomenko85}
\textsc{Dubrovin, B.A.; Novikov, S.P.; Fomenko, A.T.} : Modern Geometry -- Methods and Applications. Part II: The Geometry and Topology of Manifolds. No. 93 in Graduate Texts in Mathematics. Springer-Verlag, New York, 1985.

\bibitem[FMP$_1$]{FiorenzaMonacoPanati16_B}
\textsc{Fiorenza, D.; Monaco, D.; Panati, G.} : Construction of real-valued localized composite Wannier functions for insulators, {\it Ann. Henri Poincar\'{e}} {\bf 17} (2016), 63--97.

\bibitem[FMP$_2$]{FiorenzaMonacoPanati16}
\textsc{Fiorenza, D.; Monaco, D.; Panati, G.} : $\Z_2$ invariants of topological insulators as geometric obstructions, {\it Commun. Math. Phys.\,}  {\bf 343} (2016), 1115--1157.

\bibitem[FSS]{FiorenzaSatiSchreiber15}
\textsc{Fiorenza, D.; Sati, H.; Schreiber, U.} : A Higher Stacky Perspective on Chern-Simons Theory. Chapter in {\it Mathematical Aspects of Quantum Field Theories}. Mathematical Physics Studies. Springer, Basel, 2015.

\bibitem[FT]{FreundTeufel16}
\textsc{Freund, S.; Teufel, S.} : Peierls substitution for magnetic Bloch bands, {\it Analysis \& PDE} {\bf 9} (2016), 773--811.

\bibitem[Fu]{Fu11}
\textsc{Fu, L.} : Topological crystalline insulators, {\it Phys. Rev. Lett.} {\bf 106} (2011), 106802.

\bibitem[FC]{FruchartCarpentier13}
\textsc{Fruchart, M. ; Carpentier, D.} : An introduction to topological insulators, {\it Comptes Rendus Phys.}\ {\bf 14} (2013), 779--815.

\bibitem[FK]{FuKane06}
\textsc{Fu, L.; Kane, C.L.} : Time reversal polarization and a $\Z_2$ adiabatic spin pump, {\it Phys. Rev. B} {\bf 74} (2006), 195312.

\bibitem[FKM]{FuKaneMele07}
\textsc{Fu, L.; Kane, C.L.; Mele, E.J.} : Topological insulators in three dimensions, {\it Phys. Rev. Lett.} {\bf 98} (2007), 106803.

\bibitem[GP]{GrafPorta13}
\textsc{Graf, G.M.; Porta, M.} : Bulk-edge correspondence for two-dimensional topological insulators, {\it Commun. Math. Phys.} {\bf 324} (2013), 851--895.

\bibitem[GS]{GrossmannSchulz-Baldes16}
\textsc{Gro}\textrm{\ss}\textsc{mann, J.; Schulz-Baldes, H.} : Index pairings in presence of symmetries with applications to topological insulators, {\it Commun. Math. Phys.} {\bf 343} (2016), 477–-513.

\bibitem[GuPo]{GuilleminPollack74}
\textsc{Guillemin, V.; Pollack, A.} : Differential Topology. Prentice Hall, Englewood Cliffs, New Jersey (1974).

\bibitem[HK]{HasanKane10}
\textsc{Hasan, M.Z.; Kane, C.L.} :  Colloquium: Topological Insulators, {\it Rev. Mod. Phys.} {\bf 82} (2010), 3045--3067.

\bibitem[Hua]{Hua44}
\textsc{Hua, L.-K.}: On the theory of automorphic functions of a matrix variable I -- Geometrical basis, {\it Am. J. Math.} {\bf 66} (1944), 470--488.

\bibitem[Hus]{Husemoller94}
\textsc{Husemoller, D.}: {\it Fibre bundles}, 3rd edition. No. 20 in Graduate Texts in Mathematics. Springer-Verlag, New York, 1994.

\bibitem[KM]{KaneMele05}
\textsc{Kane, C.L.; Mele, E.J.} : $\Z_2$ Topological Order and the Quantum Spin Hall Effect, {\it Phys. Rev. Lett.} \textbf{95} (2005), 146802.

\bibitem[KMR]{KarpMansouriRno00}
\textsc{Karp, R.L.; Mansouri, F.; Rno, J.S.} : Product integral representations of Wilson lines and Wilson loops and non-abelian Stokes Theorem, {\it Turk. J. Phy.} {\bf 24} (2000), 365--384.

\bibitem[Ka]{Kato66}
\textsc{Kato, T.} : Perturbation theory for linear operators. Springer, Berlin (1966).

\bibitem[KG]{KennedyGuggenheim15}
\textsc{Kennedy, R.; Guggenheim, C.} : Homotopy theory of strong and weak topological insulators, {\it Phys. Rev. B} {\bf 91} (2015), 245148.

\bibitem[KZ]{KennedyZirnbauer16}
\textsc{Kennedy, R.; Zirnbauer, M.R.} : Bott periodicity for $\Z_2$ symmetric ground states of gapped free-fermion systems, {\it Commun. Math. Phys.} {\bf 342}, Issue 3 (2016), 909--963.

\bibitem[Ki]{Kitaev09}
\textsc{Kitaev, A.} : Periodic table for topological insulators and superconductors, {\it AIP Conf. Proc.} {\bf 1134} (2009), 22.

\bibitem[MP]{MonacoPanati15}
\textsc{Monaco, D.; Panati, G.} : Symmetry and localization in periodic crystals: triviality of Bloch bundles with a fermionic time-reversal symmetry. Proceedings of the conference {\it  ``SPT2014 -- Symmetry and Perturbation Theory'', Cala Gonone, Italy}, {\it Acta App. Math.} {\bf 137} (2015), 185--203.

\bibitem[Ne]{Nenciu91} 
\textsc{Nenciu, G.} : Dynamics of band electrons in electric and magnetic fields: Rigorous justification of the effective Hamiltonians, {\it Rev. Mod. Phys.} {\bf 63} (1991), 91--127.

\bibitem[NN]{NenciuNenciu98}
\textsc{Nenciu, A.; Nenciu, G.} : The existence of generalised Wannier functions for one-dimensional systems, {\it Commun. Math. Phys.} {\bf 190} (1998), 541--548.

\bibitem[Pa]{Panati07}
\textsc{Panati, G.}: Triviality of Bloch and Bloch-Dirac bundles, {\it Ann. Henri Poincar\'e} {\bf 8} (2007),  995--1011.

\bibitem[Pr$_1$]{Prodan11_JPA}
\textsc{Prodan, E.} : Disordered topological insulators: A non-commutative geometry perspective, {\it J. Phys. A} {\bf 44} (2011), 113001.

\bibitem[Pr$_2$]{Prodan11_PRB}
\textsc{Prodan, E.} : Manifestly gauge-independent formulations of the $\Z_2$ invariants, {\it Phys. Rev. B} {\bf 83} (2011),  235115.

\bibitem[PS]{ProdanSchulz-Baldes16}
\textsc{Prodan, E.; Schulz-Baldes, H.}: {\it Bulk and Boundary Invariants for Complex Topological Insulators: From $K$-theory to Physics}. Mathematical Physics Studies. Springer-Verlag, Basel, 2016.

\bibitem[RSFL]{Ryu_et_al10}
\textsc{Ryu, S.;  Schnyder, A.P.;  Furusaki, A.; Ludwig, A.W.W.} :  
Topological insulators and superconductors: Tenfold way and dimensional hierarchy, {\it New J. Phys.}\ {\bf 12} (2010), 065010.

\bibitem[Sch$_1$]{Schulz-Baldes13}
\textsc{Schulz-Baldes, H.} : Persistence of spin edge currents in disordered Quantum Spin Hall systems, {\it Commun. Math. Phys.} {\bf 324} (2013), 589--600.

\bibitem[Sch$_2$]{Schulz-Baldes15} 
\textsc{Schulz-Baldes, H.} : $\Z_2$ indices and factorization properties of odd symmetric Fredholm operators, {\it Documenta Mathematica} {\it 20} (2015), 1481--1500.

\bibitem[SV]{SoluyanovVanderbilt12}
\textsc{Soluyanov, A.A.; Vanderbilt, D.} : Smooth gauge for topological insulators, {\it Phys. Rev. B} {\bf 85} (2012), 115415.

\bibitem[Th]{Thiang16}
\textsc{Thiang, G.C.} : On the $K$-Theoretic Classification of Topological Phases of Matter, {\it Ann. Henri Poincar\'e} {\bf 17} (2016), 757–-794.

\bibitem[vNW]{vonNeumannWigner29}
\textsc{von Neumann, J.; Wigner, E.} : On the behaviour of eigenvalues in adiabatic processes, {\it Phys. Z.} {\bf 30} (1929), 467. Republished in: Hettema, H. (ed.), Quantum Chemistry: Classic Scientific Papers. World Scientific Series in 20th Century Chemistry, vol. 8. World Scientific, 2000.

\bibitem[WST]{WinklerSoluyanovTroyer16}
\textsc{Winkler, G.W.; Soluyanov, A.A.; Troyer, M.} : Smooth gauge and Wannier functions for topological band structures in arbitrary dimensions, {\it Phys. Rev. B} {\bf 93} (2016), 035453.

\end{thebibliography}
\end{document}